\theoremstyle{plain}
\newtheorem{prop}{Proposition}
\newtheorem{theorem}{Theorem}[section]
\newtheorem{lemma}[theorem]{Lemma}
\theoremstyle{definition}
\newcommand{\e}{{\rm e}}
\newcommand{\D}{{\rm d}}
\newcommand{\iid}{\stackrel{\mbox{\scriptsize iid}}{\sim}}
\newcommand{\ind}{\stackrel{\mbox{\scriptsize ind}}{\sim}}
\newcommand{\indic}{\mathbb{I}}
\newcommand{\indep}{\perp \!\!\! \perp}
\newcommand{\virgolette}[1]{`#1'}
\renewcommand{\P}{ \mathbb{P} } % <--- Modified
\newcommand{\E}{E}
\newcommand{\X}{\mathbb{X}}
\newcommand{\N}{\mathbb{N}}
\newcommand{\R}{\mathbb{R}}
\newcommand{\bmn}{ \bm{n} } % <--- Modified
\newcommand{\bmm}{ \bm{m} } % <--- Modified
\newcommand{\bmw}{ \bm{w} } % <--- Modified
\newcommand{\bmgamma}{ \bm{\gamma} } % <--- Modified
\newcommand{\bmX}{ \bm{X} } % <--- Modified
\newcommand{\Lcr}{\mathscr{L}}
\newcommand{\Pcr}{\mathscr{P}}
\newcommand{\Qcr}{\mathscr{Q}}
\newcommand{\Ccr}{\mathscr{C}}
\newcommand{\mbar}{\overline{m}}
\newcommand{\mstar}{m^{\star}}
\newcommand{\nstar}{n^{\star}}
\newcommand{\App}[1]{{\color{black}{#1}}}
\begin{document}

%%%%
\title{\textbf{Bayesian discovery of species in multiple areas}}

\author[1]{Alessandro Colombi}
\author[2]{Raffaele Argiento}
\author[3]{Federico Camerlenghi}
\author[4]{Lucia Paci}

\affil[1]{Bocconi Institute for Data Science and Analytics, Bocconi University}
\affil[2]{Department of Economics, Università degli studi di Bergamo}
\affil[3]{Department of Economics, Management and Statistics, University of Milano-Bicocca}
\affil[4]{Department of Statistical Sciences, Università Cattolica del Sacro Cuore}

\date{}

\maketitle
	
\begin{abstract}
In ecology, the description of species composition and biodiversity calls for statistical methods that involve estimating features of interest in unobserved samples based on an observed one. In the last decade, the Bayesian nonparametrics literature has thoroughly investigated the case where data arise from a homogeneous population. In this work, we propose a novel framework to address heterogeneous populations, specifically dealing with scenarios where data arise from two areas. This setting significantly increases the mathematical complexity of the problem and, as a consequence, it has received limited attention in the literature. While early approaches leverage computational methods, we provide a distributional theory for the in-sample analysis of any observed sample and enable out-of-sample prediction for the number of unseen distinct and shared species in additional samples of arbitrary sizes. The latter also extends the frequentist estimators, which solely deal with one-step-ahead prediction. Furthermore, our results can be applied to address sample size determination in sampling problems aimed at detecting distinct and shared species. Our results are illustrated in a real-world dataset concerning a population of ants in the city of Trieste.
\end{abstract}

\textbf{Keywords:} 
Abundance data; Bayesian nonparametrics; Shared species; Vector of finite Dirichlet processes.

\section{Introduction}
\label{section:intro}
% \subsection{Problem setting}
% \label{section:intro_problsett}
In ecology, there is often a strong interest in describing species composition and biodiversity and, when the study involves more than a single area, in comparing them. Given that the concept of biodiversity lacks a single universally accepted mathematical definition \citep{colwell2004}, many similarity indexes have been proposed in the literature. For a single assemblage of data, the most widely used indices are Shannon's and Simpson's diversity, while Jaccard's and Sørensen's indices are classic choices when comparisons across multiple areas are investigated. A detailed overview of these and many other alternatives can be found in \cite{Chao2006indici}.
Regardless of the number of areas considered, such investigations generally follow a two-step process. The first step is to choose a sampling strategy to detect as many species as possible. Since both time and financial constraints often limit sampling efforts, it is crucial to develop statistical tools that, based on the current sample, can estimate how many additional observations are needed to discover new species. These predictions enable researchers to assess whether additional sampling efforts are feasible and justified in terms of time and cost.
Due to these practical limitations, collected samples rarely capture the full diversity of the population, leaving some species undetected. Consequently, biodiversity indices estimated from sample data are often negatively biased. Therefore, the second step is usually to estimate, based on the observed data,
the species richness, i.e., the total number of species, both observed and unobserved.

This work extends beyond ecology and is relevant to many other disciplines. Thus, the term “species” can be interpreted in a broader sense. For instance, it can refer to fields such as topic modelling \citep{efron76}, typos in texts \citep{nayak89}, bugs in computer code \citep{chao93}, or genomics \citep{Mao2004}.

\subsection{Estimators for a single area}
\label{section:intro_1area}
The case in which the investigation focuses on a single area is a long-established and extensively discussed problem in statistics, dating back to \cite{Fisher43}. Here, the $n\geq 1$ observed individuals represent a random sample $(X_1,\ldots,X_n)$ drawn from an unknown discrete distribution $P$. In this setting, the celebrated Good-Turing estimator \citep{goodturing53} provides an estimate of the probability that the $(n+1)$th observation coincides with a species whose frequency in the original sample is exactly equal to $f$, with $f\geq0$. In particular, for $f=0$, this estimates the probability of observing a new, previously undetected species as the relative frequency of the species observed exactly once in the sample, commonly known as singletons.
The Good-Toulmin estimator \citep{GoodToulmin} represents a $m$-steps ahead generalisation for the probability of discovering a new species.
These estimators are nonparametric since they do not rely on any assumption for the unknown generating distribution $P$, making them flexible and easy to use both for deriving stopping rules in sampling strategies \citep{Rasmussen79} and for estimating species richness \citep{Chao1est}. See \cite{Orlitsky16} for a review of generalisations and improvements of the Good-type estimators.

An alternative Bayesian nonparametric approach has been introduced by \cite{Lij(07)}. In this framework, the authors only require $(X_1,\ldots, X_n)$ to be an exchangeable sequence. By de Finetti's representation theorem, this hypothesis is equivalent to assuming that the unknown distribution $P$ is a random probability measure governed by a prior distribution $\Pcr$.
Given $P$, the $n$ observations are assumed to be independent and identically distributed according to $P$.
In particular, \cite{Lij(07)} studied priors $\Pcr$ belonging to the broad class of Gibbs-type priors \citep{deblasi2015}, which encompasses notable examples such as the Dirichlet process \citep{ferguson73} and the two parameters Poisson-Dirichlet model \citep{pitman95}.
In this setting, \cite{favaro2012} presents a Bayesian generalisation of the Good-Turing and Good-Toulimin estimators. Based on an observed sample of size $n$ and an additionally unobserved sample of size $m\geq0$, they provide an estimator for the probability that the $(n+m+1)$th observation coincides with a species whose frequency, within the sample of size $n+m$, is exactly $f$. Thus, for $m=0$ and $f=0$, this is a Bayesian analogue of the Good-Turing and Good-Toulmin estimators, respectively. As for their frequentist counterparts, these estimators are crucial in sampling problems.

Regarding the estimation of species richness, \cite{favaro2009} approached the problem as an extrapolation of the unseen species problem. Specifically, for each $m\geq0$, they derive an estimator for the number of new distinct species that would be observed in an additional unobserved sample $(X_{n+1},\ldots, X_{n+m})$. As $m$ varies, this estimate forms the so-called extrapolation curve \citep{Gotelli2001}, and species richness can be considered the limit for $m\rightarrow\infty$. However, whenever $\Pcr$ is chosen in the family of Gibbs-type priors, this limit may either be infinite or finite, dividing the family into two subgroups. The choice of $\Pcr$ depends on which hypothesis is more realistic for the specific application.
Notable examples of the first group, characterised by unlimited growth, include the Dirichlet process and the two-parameter Poisson–Dirichlet model. In contrast, the second group, namely the Gibbs-type priors with negative parameters, encompasses Gnedin's model \citep{gnedin2010} and the finite Dirichlet process \citep{argiento2022annals}.
Moving beyond the Gibbs-type priors framework, \cite{zito2023} proposed a tunable Bayesian nonparametric approach between these two regimes.
We refer to \cite{BalocchiFavaro24} for an exhaustive review on this topic.

\subsection{Estimators for multiple areas}
\label{section:intro_2area}
The study of estimators for multiple areas is much less extensive than for a single population. Even with just $d=2$ different areas,
the mathematical complexity increases significantly because we need to estimate the number of shared species, i.e., those occurring in both areas. Indeed, this statistic is crucial for evaluating the similarity or dissimilarity between the two areas.
To get an idea of the increased difficulty, we point out that
the total number of distinct species can be expressed as the sum of two quantities: the number of observed distinct species (a known quantity) and the number of unobserved species (an unknown quantity). Rather, for shared species, there are three distinct unknowns: species that are shared but undetected in both areas and species that are shared but observed in only one of the two areas.
The exponential increase in the number of unknowns (which is $2^d-1$) explains why the case of two areas is the most studied in the literature. For scenarios with $d>2$ areas, comparisons are made between all possible pairs. In fact, commonly used multi-area indices are designed for the two-area case, since summarising comparisons across three or more areas into a single index is challenging, see \cite{ChaoLB2009}. 

In the frequentist framework, the model assumes two discrete and unknown probability distributions, $P_1$ and $P_2$, with potentially different supports, whose intersection is not empty and is ordered in such a way that the first $S$ species are the shared ones. 
Then, two random samples $\bmX_j=\left(X_{j,1},\ldots,X_{j,n_j}\right)$, $j=1,2$, of sizes $n_1$ and $n_2$ are taken from $P_1$ and $P_2$, respectively.
In this setting, \cite{Yue2012}, \cite{Chuang2015} and \cite{chao2017} proposed Turing-type estimators for the probability of discovering a new shared species in the next pair of observations. However, the three studies addressed only the case of one-step-ahead prediction, despite \cite{Yue2012} highlighting the need for sampling strategies where $m>1$ pairs of observations are considered. Consequently, a generalisation of the Good-Toulmin estimator for the shared species problem remains open.
On the other hand, various methods have been proposed to estimate shared species richness.

The seminal work by \cite{Chao2000} introduced a method based on sample coverage, which has been improved in two ways. The first improvement relies on Laplace's approximations \citep{Chao2006Laplace}, while the second version presents a lower bound of the initial method \citep{ChaoLB2009}, which proves to be more useful in practice. Finally, \cite{Chuang2015} presents a jackknife estimator for the richness of shared species. All these contributions are inferential in nature: they aim to estimate an unknown population-level quantity and rely on asymptotic assumptions, such as sufficiently large sample sizes. In other words, they answer the question, ``How many species that we have not yet observed actually exist?'', even though, under their assumptions, some of these species may never be encountered.

In the Bayesian nonparametric framework, a natural extension of the exchangeability assumption to the case of two areas is to assume that observations are partially exchangeable, meaning they are exchangeable within each group but not across the two groups.
As in the single-group case, by virtue of de Finetti's representation theorem, this condition implies that the unknown distributions $\left(P_1, P_2\right)$ form a vector of dependent random probability measures
governed by a prior distribution $\Qcr_2$. Given $\left(P_1, P_2\right)$, the observations are independent between groups and independent and identically distributed according to $P_1$ and $P_2$, respectively.
Although Gibbs-type priors are an established choice for exchangeable data, the range of choices for $\Qcr_2$ is much broader in the two-group case.
A key difficulty in this setting is that it is not enough to specify solely the marginal priors for $P_1$ and $P_2$, but modelling the dependence between these two distributions is crucial. Numerous constructions have been proposed in the literature, 
see e.g., \cite{Muller2004}, \cite{teh2006hierarchical}, \cite{LijoiNipotiPrunster2014}, \cite{camerlenghi2019distribution}, and \cite{casarin2020}. For a comprehensive overview, see \cite{Quintana2022}, while we refer to the recent work by \cite{franzolini25} for a unified framework of multivariate species sampling models, offering new directions for this area of research.
These approaches have been employed as Bayesian nonparametric priors in model-based clustering via mixture models. However, their application in species sampling problems has been limited by the lack of closed-form estimators, particularly for predicting shared species across areas.
To the best of our knowledge, the only exceptions are \cite{bacallado2015}  and \cite{cam2017_multisample}, where posterior inference relies mainly on computational methods.
%\ACnote{ Ho aggiunto anche il multiarmed, vediamo se tenerlo qua o in discussion } \FCnote{OK, si può  mettere anche in discussion}

\subsection{Summary of the contribution}
\label{section:intro_summary}
In this paper, we take a step forward and develop a Bayesian nonparametric methodology to study the problem of unseen distinct and shared species when observations are collected in two different areas. Our approach is predictive in nature and aims to answer the question: ``How many species not yet observed in the two areas will be discovered in a future sample?''
The proposed method is based on the Vector of Finite Dirichlet Process ($\operatorname{Vec-FDP}$), recently introduced by \cite{colombi2023mixture}, and provides closed-form expressions for estimating the quantity of interest. The $\operatorname{Vec-FDP}$ prior assumes 
a common species composition in the two areas, and that the number of species is finite, yet random, while allowing for variation in their proportions.
% a common and finite, yet random, total number of species \ACtext{and species labels} 
Working with finite samples, not all species will necessarily be shared between areas, allowing for the existence of area-specific species.

% {Hence, our approach is suitable for pairs of areas where this underlying assumption is reasonable.}

Specifically, the main goals of this work are to derive a distributional theory for: (i) in-sample analysis for any observed samples of sizes $n_1$ and $n_2$, and (ii) out-of-sample predictions of the number of unseen distinct and shared species in additional unobserved samples of sizes $m_1$ and $m_2$. Hence, our results can be used to address sample size determination in sampling problems designed to detect distinct and shared species.
Moreover, we are able to evaluate the sample coverage for distinct and shared species, i.e., the proportion of species observed in a sample \citep{goodturing53}, both in-sample and out-of-sample.
A remarkable finding is that our result holds for any finite future sample sizes $m_1$ and $m_2$, thereby extending the results in the frequentist literature, which only provides solutions for one-step ahead predictions. 
%\LPcanc{Therefore, our approach is predictive in nature and aims to answer the question, ``How many species that we have not yet observed will we find in the future?''}\LPnote{era già stata scritta}
Unlike previous Bayesian contributions, our methodology provides a joint description of all the relevant quantities and yields an efficient computational strategy for evaluating the out-of-sample quantities of interest without relying on expensive and approximate computational methods. %\ACnote{ Rileggere questa frase! }
The proposed estimators are illustrated using both synthetic data, highlighting similarities and differences with frequentist approaches, and a real-world dataset. Specifically, we analyse an ant population in Trieste, sampled from two parks: one just outside the city and the other in its centre.

The rest of the paper is organized as follows. Section \ref{section:VecFDP} describes the Vec-FDP prior model and its properties. The main theoretical results about the in-sample analysis and the out-of-sample prediction are presented in Section \ref{section:prior} and Section \ref{section:posterior}, respectively. Section \ref{subsection:indici_intro} provides an interpretation of the model parameters in terms of diversity indices and outlines two strategies for their estimation. A simulation study is presented in Section \ref{section:simulation_summary} while the results of the analysis of a real dataset are given in Section \ref{section:application}. We conclude with a discussion in Section \ref{section:discussion}. 

%\section{Proposed model}
%\label{section:model}
\section{Vector of Finite Dirichlet Processes}
\label{section:VecFDP}
% ------------------------------------------------
% Model specification
% ------------------------------------------------
Consider a sample $\bmX = (\bmX_1,\bmX_2)$ from $d=2$ partially exchangeable sequences, that is, $\bmX_j~=~\left(X_{j,1},\dots,X_{j,n_j}\right)$, for $j=1,2$, each taking values in a Polish space $\X$.
This is equivalent to assuming that the distribution of $\bmX$ is invariant under permutations occurring within elements of vectors $\bmX_1$ and $\bmX_2$, but it is not invariant under permutations among them; see \cite{camerlenghi2019distribution} for a recent contribution.
Hereafter, we refer to observations coming from two areas or groups, keeping in mind that the definition of a \virgolette{group} is problem-specific and can be interpreted in a broader sense.

By virtue of de Finetti's representation theorem, assuming partial exchangeability is equivalent to saying that there exists a vector $\left(P_1,P_2\right)$ of random probability measures such that
\begin{equation}
	\label{eqn:partial_ex}
	\begin{split}
		(X_{1,i_1}, X_{2, i_2}) \mid \left(P_1,P_2\right)
		&\ind
		P_1 \otimes P_2,
		\quad
		\left(i_1, i_2\right) \in \{1,\dots,n_1\}\times\{1,\dots,n_2\} \\
		\left(P_1,P_2\right) &\sim \Qcr_2,
	\end{split}
\end{equation}
where $\Qcr_2$ denotes the prior distribution of the vector in a Bayesian setting.
The definition of $\Qcr_2$ is a crucial issue as it governs the %fundamental
properties of the statistical model that generates the data $\bmX = \left(\bmX_1,\bmX_2\right)$. In this work, we study the model in Equation \eqref{eqn:partial_ex} under the Vector of Finite Dirichlet Process (Vec-FDP) prior, as introduced in \cite{colombi2023mixture}.
In this framework, the random probability measures $P_1$ and $P_2$ are constructively defined as follows,
\begin{equation}
	\label{eqn:Pj_def}
	% P_j(\,\cdot\,)
	P_j(A)
	\, \stackrel{a.s.}{=} \,
	\sum_{m=1}^{M} w_{j,m}\delta_{\tau_m}
	(A),
	% (\,\cdot\,),
	\quad
	j = 1,2,
\end{equation}
where $A$ is a measurable set, $\delta_{\tau_m}$ stands for the delta-Dirac mass at $\tau_m$,
and the nonnegative weights
$\bmw_j~=~\left(w_{1,1},\dots,w_{j,M}\right)$ are the species' proportions specific to population $j$, which sum up to one almost surely.
Then, defining a prior for $\left(P_1,P_2\right)$ is equivalent to place a prior over $\left(M,\tau_1,\dots,\tau_M,\bmw_1,\bmw_2\right)$.
Here, conditionally on $M$, $\left(\tau_1,\dots,\tau_M\right)$ are common random atoms across the two random probability measures, which are assumed to be independent and identically distributed with a common distribution $P_0$, that is a diffuse probability measure on $\X$.
Moreover, we assume $\bmw_1$ independent of $\bmw_2$ and such that
$\bmw_j \mid M \sim \operatorname{Dir}_M\left(\gamma_j,\dots,\gamma_j\right)$, for $j=1,2$, where $\operatorname{Dir}_M\left(\gamma_j,\dots,\gamma_j\right)$ denotes the $M$-dimensional symmetric Dirichlet distribution with group-specific parameter $\gamma_j$.

Finally, $M$ is supposed to be a positive integer-valued random variable whose probability mass function is denoted as $q_M$.
For the sake of simplicity, we follow \cite{colombi2023mixture} and choose a $1$-shifted Poisson distribution, namely,
$ q_M(m) =  e^{-\Lambda}\Lambda^{m-1}/(m-1)!$, for all integers $m\geq1$, and denote it as $M\sim\operatorname{Pois}_1\left(\Lambda\right)$.
Nevertheless, we point out that all our results hold for any distribution on positive integers.
This completes the prior specification, and we write
\begin{equation}
	\label{eqn:VecFDP_prior}
	\begin{split}
		\left(P_1,P_2\right) \sim \operatorname{Vec-FDP}\left(\Lambda,\bmgamma,P_0\right),
	\end{split}
\end{equation}
where $\bmgamma = \left(\gamma_1,\gamma_2\right)$. 
By placing a $\operatorname{Vec-FDP}$ prior on $\left(P_1,P_2\right)$, we assume that the two groups share the same finite, yet random, number of species $M$ that appear with different frequencies $w_{j,m}$ in the two groups, i.e., in the two areas. Note that the assumption of full sharing of atoms is a popular strategy in Bayesian nonparametrics, see e.g., \cite{teh2006hierarchical}, \cite{CAM2021} and many examples in \cite{franzolini25}. Since a Dirichlet prior is chosen for the weights, the same $M$ species would eventually be observed in the two groups if we were able to obtain an infinite sample from $P_1$ and $P_2$. However, in practice, we always work with finite samples of sizes $n_1\geq1$ and $n_2\geq1$. 

Let $n~=~n_1+n_2$ be the total number of observations. Let the integers label the observations according to their order of arrival by group, that is,
observations are indexed first by group $j = 1, 2$, and then by within-group order of arrival.
Within each group $j$, a random number of $K_{j,n_j}$ distinct species will be observed, which we label $\bmX^*_j~=~\left\{X^*_{j,1},\dots,X^*_{j,K_{j,n_j}}\right\}$. Let $K_{j,n_j}=r_j$ be a realisation of this random variable. %Per quanto riguarda a joint description of the sample,
Since $P_1$ and $P_2$ share the same support, ties between the two groups are expected, i.e., $\P\left(X^*_{1,k} = X^*_{2,k^\prime}\right)>0$.
Thus, the set of labels for the distinct species in the entire sample is obtained as
$\bmX^{**}=\bmX^*_1\cup\bmX^*_2=\left\{X^{**}_1,\dots,X^{**}_{\mathcal K_{n_1,n_2}}\right\}$, where $\mathcal K_{n_1,n_2}$ denotes the overall number of distinct species, which is, in general, smaller than the sum of $K_{1,n_1}$ and $K_{2,n_2}$. Such a difference is due to the random number of species shared between the two groups, namely
\begin{equation}
	\label{eqn:linear_rel_prior}
	\mathcal S_{n_1,n_2}=K_{1,n_1}+K_{2,n_2}-\mathcal K_{n_1,n_2}.
\end{equation}
In the following, we let $\mathcal K_{n_1,n_2}=r$ and $\mathcal S_{n_1,n_2}=t$ denote the realisations of the distinct and shared number of species, respectively. Moreover, we refer to group-specific quantities as \textit{local} quantities, while we call \textit{global} quantities those related to the joint sequence $\bmX$. Therefore, $K_{j,n_j}$ is also named the local number of distinct species in group $j$ while $\mathcal K_{n_1,n_2}$ is the global number of distinct species. \App{We refer to Section \ref{app:prior_quantities} for a more detailed description of these quantities.}

\subsection{pEPPF and predictive distribution}
\label{section:predictive}
The marginal law of the sample $\bmX = (\bmX_1,\bmX_2)$ from model \eqref{eqn:partial_ex} with $\Qcr_2$ chosen as in Equation \eqref{eqn:VecFDP_prior} is uniquely determined by the species labels $\bmX^{**}$ and their abundances.
The latter are defined by the vectors of frequency counts $\bmn_j = (n_{j,1},\ldots,n_{j,r})$, where $n_{j,l}$ represents the number of observations in the $j$ th group
that coincide with the $l$ distinct value, indexed according to the order of arrival by groups. %are defined by vectors $\bmn_j = (n_{j,1},\ldots,n_{j,r})$, such that $n_{j,l}$ represents the abundance of the $(l)$th species in the $(j)$th group, for $j=1,2$ and $l=1,\ldots,r$.
These counts must satisfy the following constraints:
\begin{equation}
	\label{eqn:cluster_vincoli}
	n_{j,l}\geq0,
	\quad
	n_{1,l}+n_{2,l} >0,
	\quad
	\sum_{l=1}^{r} n_{j,l} = n_j
	\quad
	l = 1,\dots,r;\ j = 1,2.
\end{equation}
Consistent with the description in Section \ref{section:VecFDP}, some of the counts $n_{j,l}$ may also be zero.

Extending the results of \cite{pitman1996}, one obtains that the marginal likelihood $\Lcr\left(\bmX\right)$, which is obtained by integrating $\left(P_1,P_2\right)$ out of the model \eqref{eqn:partial_ex}, admits the following factorisation:
$ \Lcr\left(\bmX\right)=\Lcr\left(\bmn_1,\bmn_2,\bmX^{**}\right)=\Lcr\left(\bmn_1,\bmn_2\right)\prod_{l=1}^{r} P_0(\D X^{**}_{l}),
$ see \cite{franzolini25}.
In what follows, we drop the value of the labels $\bmX^{**}$ as it is not relevant for our purposes. Consequently, the main object of interest is the law of abundances $\Lcr\left(\bmn_1,\bmn_2\right)$. %Nevertheless,
%\cite{camerlenghi2019distribution} shows that 
The law of the random partition induced by a partially exchangeable sequence $\left(\bmX_1,\bmX_2\right)$ is described through a probabilistic object called the partially Exchangeable Partition Probability Function (pEPPF); see, e.g., \cite{LijoiNipotiPrunster2014} and \cite{camerlenghi2019distribution}.
% .  The notion of pEPPF first appeared in \cite{leisen2011} and \cite{LijoiNipotiPrunster2014} for specific instances of dependent processes,
% and it started being leveraged in a systematic way for other subclasses of dependent processes
% only recently in, e.g., \cite{cam2017_multisample}, \cite{camerlenghi2019distribution}, Camerlenghi, Dunson, et al. (2019), Beraha et al. (2021), Lijoi, Pr¨unster, et al. (2023), and Denti et al. (2023).}\LPnote{questi sono quelli citati in franzolini et al, possiamo fare una selezione e aggiungere Bassetti?}
Under a $\operatorname{Vec-FDP}$ prior, the pEPPF equals
\begin{equation}
	\label{eqn:peppf}
	\Pi_r^{(n)}\left(\bmn_1,\bmn_2\right)  \ = \
	V_{n_1,n_2}^r \
	\prod_{j=1}^d \prod_{l=1}^r  (\gamma_j)_{n_{j,l}} \, ,
\end{equation}
where $d=2$, $\left(\bmn_1,\bmn_2\right)$ satisfy the constraints given in Equation \eqref{eqn:cluster_vincoli} and
\begin{equation}
	\label{eqn:Vprior}
	\begin{split}
		V_{n_1,n_2}^r \ = \
		\sum_{m=1}^\infty
		\
		(m)_{r\downarrow}
		\
		q_M(m)
		\
		\prod_{j=1}^d  \frac{1}{(\gamma_jm)_{n_j}}.
	\end{split}
\end{equation}
\App{
	See Section \ref{app:proof_peppf} and \cite{colombi2023mixture} for an alternative expression.
}
In this work, we let $(m)_{r\downarrow}~=~m(m-1)\dots(m-r+1)$ denote the falling factorial of order $r$ and $(x)_{n}~=~\Gamma(x+n)/\Gamma(x)$ is the Pochhammer symbol, also known as the rising factorial when $n$ is a natural number. The sum in Equation \eqref{eqn:Vprior} can start from $m=r$ since $(m)_{r\downarrow} = 0$ for all $m<r$.
See Section \ref{section:Vcoeff} for further analysis and properties of the $V$ coefficients, such as convergence, asymptotics, and a recurrence relationship.
%\ACtext{
	%In particular, the case of $\gamma_1, \gamma_2 \rightarrow +\infty$ corresponds to assuming a uniform distribution for the species frequencies, that is, $w_{j,m} = 1/M$ for $m = 1, \ldots, M$ and $j = 1, 2$. Under this setting, the marginal likelihood depends on the observed frequencies only through the values of $n_1$, $n_2$, and $r$. Namely,
	%\begin{equation}
	%	\label{eqn:peppf_liminf}
	%	\lim_{\gamma_1,\gamma_2\rightarrow\infty}
	%	\Pi_r^{(n)}\left(\bmn_1,\bmn_2\right) \, = \, 
	%	\sum_{m=1}^\infty
	%	\frac{(m)_{r\downarrow}}{m^{n_1}m^{n_2}} \, q_M(m) 
	%\end{equation}
	%In particular, setting $n = n_1 + n_2$, this model coincides with the coupon collecting model introduced by \cite{pitman2006combi} when the underlying number of species is unknown. As a consequence, any inference based on this model in such a limiting case disregards the species frequencies $n_{j,l}$ observed in the sample. Moreover, the sampling model no longer depends on local statistics, and thus any differences between the two areas are effectively ignored.
	%}
%\ACnote{Perché non ho messo il limite per $\gamma\rightarrow0$? Perché rappresenta il caso di una spike su una singola specie, quindi la peppf deve dare massa solo al caso $r=1$ e non mi sembrava interessante.}

The pEPPF in Equation \eqref{eqn:peppf} represents the sampling model we assume generates the data. However, since its form is rather involved and hard to interpret, it is common to describe the data-generating mechanism by inspecting the predictive distributions, which are a generalisation of the well-known Chinese restaurant franchise process introduced in \cite{teh2006hierarchical}.
Confining our attention to the first group, \cite{colombi2023mixture} showed that, taken $(\bmn_1,\bmn_2)$ observations as in Equation \eqref{eqn:cluster_vincoli}, the $(n_1+1)$th observation can either be equal to one of the \virgolette{old} (already observed) species, with probability proportional to $V_{n_1+1,n_2}^r \left(n_{1,l}+\gamma_1\right)$, or to a \virgolette{new} (never observed before) species, whose label is drawn from $P_0$, with probability proportional to $V_{n_1+1,n_2}^{r+1}\gamma_1$. Namely,
\begin{equation*}
	\label{eqn:predictive_j}
	\P\left( X_{1,n_1+1}\in\,A\,\mid \bmX \right)
	=
	\frac{V_{n_1+1,n_2}^r}{V_{n_1,n_2}^r}
	\sum_{l=1}^r \left(n_{1,l}+\gamma_1\right)\delta_{X^{**}_{l}}(\,A\,)
	+
	\frac{V_{n_1+1,n_2}^{r+1}}{V_{n_1,n_2}^r}\gamma_1
	P_0(\,A\,),
\end{equation*}
% where $A$ is a generic measurable set,
The case of a new observation in group $2$ trivially follows.
	In Section \ref{app:joint_CRFP}, we generalise this result as we report the predictive distribution for a new pair of observations, one in each group.
For the sake of brevity, we present here a concise version of the probability of a new pair of observations that highlights
the distinction between old and new species, thereby neglecting which of the $r$ species is selected in the case of an old species.
The unnormalised probabilities are
summarised in Table \ref{tab:joint_CRFP}, where %we have defined, 
for each group $j$, $q_j^{\text{old}} = \sum_{l=1}^r(n_{j,l} + \gamma_j)$ the weight %associated with
of generating
an observed species, and $q_j^{\text{new}} = \gamma_j$ as the weight associated with %generating
a new species. The normalising constant is $V^{r}_{n_1+1,n_2+1}$.
\begin{table}
	\centering
	\caption{Unnormalised probabilities of observing an old species and a new species in each group when a new pair of observations is considered.}
	\label{tab:joint_CRFP}
	\renewcommand{\arraystretch}{1.2}
	\begin{tabular}{lcccl}
		\hline
		& \multicolumn{1}{l}{} & \multicolumn{1}{l}{} & \multicolumn{1}{l}{Group 1} &  \\
		&  & Old & \multicolumn{2}{c}{New} \\ \hline
		\multicolumn{1}{l}{} & Old & $V^{r}_{n_1+1,n_2+1}q_1^{\text{old}}q_2^{\text{old}}$ & \multicolumn{2}{c}{$V^{r+1}_{n_1+1,n_2+1}q_1^{\text{new}}q_2^{\text{old}}$} \\
		\multicolumn{1}{l}{Group 2} & \multicolumn{1}{l}{} & \multicolumn{1}{l}{} & \multicolumn{2}{l}{} \\
		\multicolumn{1}{l}{} & New & $V^{r+1}_{n_1+1,n_2+1}q_1^{\text{old}}q_2^{\text{new}}$ & \multicolumn{2}{c}{$\left(V^{r+1}_{n_1+1,n_2+1} + V^{r+2}_{n_1+1,n_2+1}\right)q_1^{\text{new}}q_2^{\text{new}}$} \\ \hline
	\end{tabular}
\end{table}

Table \ref{tab:joint_CRFP} includes all four possible cases, which involve either generating a new observation or not in each of the two groups. The apex of each coefficient $V$ indicates the total number of distinct species in the larger sample of size $(n_1+1,n_2+1)$.
    Specifically, we highlight that $r$ can be increased by a single unit even in the scenario where a new species is observed in both groups. This is because the new species could be the same in both groups, i.e., a previously unobserved shared species.
%\FCnote{la parte qui sopra con la tabella secondo me non è così essenziale e nuova: se serve spazio possiamo anche toglierla o spostarla}\LPnote{vediamo alla fine}
	\section{In-sample analysis}
	\label{section:prior}
	% \noindent \ACnote{
		% ----------------------------------------------------------------------------------\\
		% Correlation \\
		% ----------------------------------------------------------------------------------
		% }\\
	\subsection{Correlation}
	\label{subsection:correlation}
	A natural question that arises when moving from analysing a single group to multiple groups is quantifying the interaction between these two.
	Here, we aim to provide a quantitative answer by examining the statistical dependence between the data generating models for the two groups, namely between $P_1$ and $P_2$.
	Expanding the result of \cite{colombi2023mixture}, we obtain a closed-form expression for the correlation between $P_1$ and $P_2$ when evaluated on the same measurable set $A$, namely
	\begin{equation}
		\label{eqn:correlation}
		\text{cor}\left(P_1(A),P_2(A)\right)
		\ = \
		\frac{\E\left(1/M\right)}
		{
			\sqrt{\left(1+\gamma_1\right)\left(1+\gamma_2\right)}
			\sqrt{\E\left(\frac{1}{1+\gamma_1M}\right)
				\E\left(\frac{1}{1+\gamma_2M}\right)}
		} \, .
	\end{equation}
	The expression \eqref{eqn:correlation} does not depend on the choice of the set $A$.
	Thus, it may be considered an overall measure of dependence between the two random probability measures.
	Furthermore, if $M\sim\operatorname{Pois}_1\left(\Lambda\right)$, then the numerator in Equation \eqref{eqn:correlation} is available in closed-form and it equals
	\begin{equation}
		\label{eqn:Exp1overM_Pois}
		\E\left(1/M\right)= \Lambda^{-1}\left(1-\e^{-\Lambda}\right) \, .
	\end{equation}
	Additionally, the following limiting values hold:
	\begin{equation*}
		\label{eqn:corr_limits}
		\begin{split}
			\lim_{\gamma_1,\gamma_2\rightarrow0}
			\text{cor}\left(P_1(A),P_2(A)\right)
			\ = \
			\E\left(1/M\right)
			,\quad
			\lim_{\gamma_1,\gamma_2\rightarrow+\infty}
			\text{cor}\left(P_1(A),P_2(A)\right) \ = \ 1 \, .
		\end{split}
	\end{equation*}
	The above limits suggest an interpretation of the $\gamma_j$'s as homogeneity parameters:
	large values of $\gamma_j$'s indicate similar groups that share most of the distinct species. Conversely, small values of $\gamma_j$'s result in the minimum value of Equation \eqref{eqn:correlation}. Interestingly, this value is not zero but depends on the choice of the prior distribution of $M$. Intuitively, larger expected values of $M$ drive the correlation between the two populations toward zero.
	
	% \noindent \ACnote{
		% ----------------------------------------------------------------------------------\\
		% In-sample statistics (formule a priori)\\
		% ----------------------------------------------------------------------------------
		% }\\
	\subsection{In-sample statistics}
	\label{subsection:formule_priori}
	In this section, we investigate the distributions of the most relevant in-sample statistics for a sample $\bmX = (\bmX_1,\bmX_2)$ of sizes $n_1$ and $n_2$ from model \eqref{eqn:partial_ex} under the $\operatorname{Vec-FDP}$ prior given in Equation \eqref{eqn:VecFDP_prior}.
	Mathematically, this is equivalent to studying the properties of the Bayesian nonparametric prior. We recall the quantities of interest introduced in Section \ref{section:VecFDP} \App{and further described in \ref{app:prior_quantities}}:
	the local number of distinct species, $\mathcal K_{j,n_j}$, one for each group, the global number of distinct species $\mathcal K_{n_1,n_2}$, and the number of shared species $\mathcal S_{n_1,n_2}$.
	Previous works derived the marginal distribution of both $K_{j,n_j}$ \citep{Lij(07),argiento2022annals} and $\mathcal K_{n_1,n_2}$ \citep{colombi2023mixture}, \App{ as
		reported in Equations \eqref{eqn:K12_prior} and \eqref{eqn:Kj_prior}}, respectively.
	Conversely, the distribution of $\mathcal S_{n_1,n_2}$ has not yet been derived, although it is linearly related to the number of local and global distinct species by Equation \eqref{eqn:linear_rel_prior}.
	This is also because the joint distribution of $K_{1,n_1}$, $K_{2,n_2}$ and $\mathcal K_{n_1,n_2}$ is required to derive the distribution of $\mathcal S_{n_1,n_2}$.
	Theorem \ref{thm:jointprior_d2} overcomes this limitation.
	\begin{theorem}
		\label{thm:jointprior_d2}
		Let $\bmX=\left(\bmX_1,\bmX_2\right)$ be a sample of sizes $n_1$ and $n_2$ from model \eqref{eqn:partial_ex} under the $\operatorname{Vec-FDP}$ prior in Equation \eqref{eqn:VecFDP_prior}.
		Then, the joint distribution of the local number of distinct species $K_{1,n_1}$ and $K_{2,n_2}$ and the global number of distinct species $\mathcal K_{n_1,n_2}$ equals
		\begin{equation}
			\label{eqn:K1K2K12_prior_d2}
			\begin{split}
				\P\left(\,\mathcal K_{n_1, n_2} = r,\,
				K_{1, n_1} = r_1,\,
				K_{2, n_2} = r_2 \,\right)
				= V^r_{n_1,n_2}\,
				\frac{r_1!r_2!}{r^*_1!r^*_2!t!}
				\,
				\prod_{j=1}^2
				\lvert C(n_j, r_j; -\gamma_j )\rvert,
			\end{split}
		\end{equation}
		for $r\in \left\{1,\ldots,r_1+r_2\right\}$ and
		$ r_j \in\left\{\{1, \ldots, \min\{r,n_j\} \right\}\,(j=1,2)$ and where we defined $r^*_1~=~r~-~r_2$, $r^*_2~=~r~-~r_1$ and $t~=~r_1~+~r_2-r$.
		The coefficient $C (\cdot,\cdot;\cdot)$ in Equation \eqref{eqn:K1K2K12_prior_d2} denotes the generalised factorial coefficient, as defined in \cite{chara2002}.
		% for positive integers $r$, $r_1$, $r_2$ such that
		% $1\leq r \leq r_1+r_2$ and $1\leq r_j \leq \min\{r,n_j\}$.
	\end{theorem}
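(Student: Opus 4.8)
The plan is to obtain the joint law by summing the pEPPF \eqref{eqn:peppf} over every abundance configuration compatible with the three prescribed counts. Because the atoms are i.i.d.\ from a diffuse $P_0$, the factorization $\Lcr(\bmX)=\Lcr(\bmn_1,\bmn_2)\prod_{l}P_0(\D X^{**}_l)$ lets me integrate out the distinct labels (which are a.s.\ distinct) and reduce the computation to a sum of $\Pi_r^{(n)}(\bmn_1,\bmn_2)$ over \emph{unordered} partitions of the combined index set. Since $\Pi_r^{(n)}$ depends on $(\bmn_1,\bmn_2)$ only through $V^r_{n_1,n_2}$ and the products $(\gamma_j)_{n_{j,l}}$, and $V^r_{n_1,n_2}$ depends on the configuration only through $r$, it factors out; what remains is a purely combinatorial sum weighted by the multinomial counts that convert ordered column fillings into unordered partitions, i.e.\ a global $1/r!$.

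The key step is to organize the columns $l=1,\dots,r$ by type. With $t=r_1+r_2-r$, $r^*_1=r-r_2$ and $r^*_2=r-r_1$, the constraints \eqref{eqn:cluster_vincoli} force each column to be either \emph{group-1 only} ($n_{1,l}>0,\,n_{2,l}=0$), \emph{group-2 only} ($n_{1,l}=0,\,n_{2,l}>0$), or \emph{shared} ($n_{1,l}>0,\,n_{2,l}>0$); the event $\{K_{n_1,n_2}=r,\,K_{1,n_1}=r_1,\,K_{2,n_2}=r_2\}$ pins the number of columns of each type to exactly $r^*_1$, $r^*_2$ and $t$, respectively (so $r^*_1,r^*_2,t\geq0$, which yields the stated support). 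The number of ways to distribute the three types over the $r$ ordered columns is $\binom{r}{r^*_1,r^*_2,t}=r!/(r^*_1!\,r^*_2!\,t!)$, and the matching of species between groups is automatically encoded by a shared column carrying a positive abundance in each group, so \emph{no additional pairing factor is needed}.

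Having fixed the type pattern, the sum factorizes across the two groups, because $\prod_{j,l}(\gamma_j)_{n_{j,l}}$ splits over $j$ and the abundance constraints on $\bmn_1$ and $\bmn_2$ decouple. For group $j$ the positive abundances are spread over its $r_j$ active columns and sum to $n_j$, so I must evaluate
\begin{equation*}
	\sum_{\substack{a_1,\dots,a_{r_j}\geq1\\ a_1+\cdots+a_{r_j}=n_j}}\binom{n_j}{a_1,\dots,a_{r_j}}\prod_{i=1}^{r_j}(\gamma_j)_{a_i}=r_j!\,\lvert C(n_j,r_j;-\gamma_j)\rvert,
\end{equation*}
the standard generating-function identity for the generalized factorial coefficients of \citet{chara2002}. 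Collecting the pieces gives
\begin{equation*}
	\P\left(K_{n_1,n_2}=r,K_{1,n_1}=r_1,K_{2,n_2}=r_2\right)=\frac{V^r_{n_1,n_2}}{r!}\,\frac{r!}{r^*_1!\,r^*_2!\,t!}\,\bigl(r_1!\,\lvert C(n_1,r_1;-\gamma_1)\rvert\bigr)\bigl(r_2!\,\lvert C(n_2,r_2;-\gamma_2)\rvert\bigr),
\end{equation*}
which after cancelling $r!$ is exactly \eqref{eqn:K1K2K12_prior_d2}. As a consistency check, setting $n_2=0$ collapses it to $V^{r_1}_{n_1,0}\lvert C(n_1,r_1;-\gamma_1)\rvert$, recovering the known marginal law of $K_{1,n_1}$.

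The main obstacle is the bookkeeping in the second step: correctly reducing the sum over labelled index-fillings to unordered partitions (the $1/r!$) while simultaneously tracking the three column types, so that the shared count $t$ enters through $\binom{r}{r^*_1,r^*_2,t}$ without any spurious pairing multiplicity. Once the configuration space is parametrized as (type pattern) $\times$ (within-group abundances), the analytic content reduces to the generalized factorial coefficient identity, which can be quoted; the delicate part is purely the combinatorics of how shared columns are counted.
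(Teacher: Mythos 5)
Your proposal is correct and follows essentially the same route as the paper's proof: sum the pEPPF over admissible abundance configurations with the $1/r!$ relabelling factor, classify the $r$ columns into shared, group-1-only and group-2-only types (your single multinomial $\binom{r}{r^*_1,r^*_2,t}$ equals the paper's two-step $\binom{r}{t}\binom{r-t}{r^*_1}$), decouple the two groups, and evaluate each group sum via the generalized factorial coefficient identity \eqref{eqn:gen_formula}. No gaps.
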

	
	\App{The proof of Theorem \ref{thm:jointprior_d2} is provided in Section \ref{app:proof_joint_prior}.}
	The generalised factorial coefficients in Equation \eqref{eqn:K1K2K12_prior_d2} can be computed via the triangular recurrence relationships described in \cite{chara2002}.
	The quantities $r^*_1$ and $r^*_2$ represent the number of species observed only in groups $1$ and $2$, respectively, while $t$ is the number of shared species between the two groups.
	\App{See Section \ref{app:Cnumbers} for further details on generalized factorial coefficients. }
	Equation \eqref{eqn:K1K2K12_prior_d2} is a joint distribution and enables the evaluation of all other linearly dependent in-sample statistics. For instance, the distribution of the shared species, $\P\left(\mathcal S_{n_1,n_2} = t\right)$, is obtained by summing the expression in Equation \eqref{eqn:K1K2K12_prior_d2} for all $r,r_1,r_2$ such that $t=r_1+r_2-r$, $t = 0,\ldots,\min\{n_1,n_2\}$. Alternatively, one may draw Monte Carlo samples from Equation \eqref{eqn:K1K2K12_prior_d2} to obtain a Monte Carlo estimation of the distribution of interest.
	Furthermore, since Equation \eqref{eqn:linear_rel_prior} is a linear relationship, the choice of $\mathcal S_{n_1,n_2}$ as a dependent variable is arbitrary. If one is interested in only the global quantities, it is possible to derive the joint prior $\P\left(\mathcal K_{n_1,n_2} = r, \mathcal S_{n_1,n_2} = t\right)$, integrating out one of the local quantities. \App{Such expression is reported in Section \ref{app:insample_marginal}}. Then, the marginal distribution of $\mathcal S_{n_1,n_2}$ is computed exactly as
	$
	\P\left(\mathcal S_{n_1,n_2} = t\right) =
	\sum_{r=t}^{n}
	\P\left(\mathcal K_{n_1,n_2} = r, \mathcal S_{n_1,n_2} = t\right)
	$.

	\section{Out-of-sample prediction}
	\label{section:posterior}
	%\subsection{Preliminaries}
	%\label{subsection:preliminaries}
	The present section addresses the task of out-of-sample prediction of new distinct and shared species.
	Given $\bmX=\left(\bmX_1,\bmX_2\right)$, an additional sample comprising $m_1$ and $m_2$ individuals is considered, resulting in enlarged samples of sizes $n_1+m_1$ and $n_2+m_2$, namely $\left(X_{j,1},\dots,X_{j,n_j+m_j}\right)\,$, for $j=1,2$.
	Specifically, our interest lies in assessing the probability of discovering new local and global distinct species, as well as new shared species. Regarding the new local species, we follow the definition of \cite{Lij(07)}, i.e., $K^{(n_j)}_{j,m_j} = K_{j,n_j+m_j} - K_{j,n_j}$, for $j=1,2$. This definition can be extended to the global number of new distinct species, here defined as
	$\mathcal K^{(n_1,n_2)}_{m_1,m_2} = \mathcal K_{n_1+m_1,n_2+m_2} - \mathcal K_{n_1,n_2}$. We point out that the latter counts only the species that are unobserved in both groups. For example, observing a species that has already been spotted in the first group but was missing in the second would increase $K^{(n_2)}_{2,m_2}$ by one and leave $\mathcal K^{(n_1,n_2)}_{m_1,m_2}$ unchanged.
	Further caution is required when interpreting the number of new shared species, defined as $\mathcal S^{(n_1,n_2)}_{m_1,m_2} = \mathcal S_{n_1+m_1,n_2+m_2} - \mathcal S_{n_1,n_2}$. This not only takes into account the species that are new in both groups, but also the species that were first observed in one group only and belong to the additional sample of the other group.
	The relationship between such posterior quantities mimics that of their prior counterparts %see Equation \eqref{eqn:linear_rel_post}.
	namely,
	\begin{equation}
		\label{eqn:linear_rel_post}
		\mathcal S^{(n_1,n_2)}_{m_1,m_2} = K^{(n_1)}_{1,m_1} + K^{(n_2)}_{2,m_2} - \mathcal K^{(n_1,n_2)}_{m_1,m_2}.
	\end{equation}
	\App{We refer to Section \ref{app:post_quantities} for a more detailed description of these predictive quantities.}
	
	\subsection{Posterior of the total number of species}
	\label{subsection:qM_post}
	The model introduced in Section \ref{section:VecFDP} assumes a finite number of species $M$, which is usually unknown and is assumed to be random. As data are observed, at least $\mathcal K_{n_1,n_2} = r$ species must exist, hence it is convenient to reparametrise the total number of species as
    $M \ = \ r + M^\star$, where $M^\star$ is interpreted as the random number of yet not discovered species. Following the posterior representation theorem for the $\operatorname{Vec-FDP}$ prior in \cite{colombi2023mixture}, the posterior distribution of $M^\star$ has a probability mass function $q^{\star}_{M\mid \bmX}$ defined as
	\begin{equation}
		\label{eqn:qMpost}
		\begin{split}
			q^{\star}_{M\mid \bmX}(\mstar) \ = \
			\frac{1}{V_{n_1,n_2}^r}
			(\mstar+r)_{r\downarrow}
			q_M(\mstar+r)
			\prod_{j=1}^d \frac{1}{(\gamma_j (\mstar+r))_{n_j}},
		\end{split}
	\end{equation}
	where  $\mstar= 0, 1 , 2\ldots$. Specifically, $M^{\star}$ may equal zero since $q^{\star}_{M\mid \bmX}(0)>0$. See Section \ref{app:proof_qMpost} for the equivalence between Equation \eqref{eqn:qMpost} and the corresponding formulation presented in \cite{colombi2023mixture}. Furthermore, \App{in Section \ref{app:proof_ExpqMpost}, we show that posterior expected value of $M^\star$ equals}
	\begin{equation}
		\label{eqn:ExpqMpost}
		\E\left(M^\star\mid\bmX\right) \ = \ \frac{V_{n_1,n_2}^{r+1}}{V_{n_1,n_2}^{r}}
		\, .
	\end{equation}
	In particular, it admits the following asymptotic approximation for large sample sizes:
	% values of $n_1$ and $n_2$.
	\begin{equation}
		\label{eqn:ExpqMpost_approx}
		\E\left(M^\star\mid\bmX\right)
		\ = \
		(r+1)\,\frac{q_M(r+1)}{q_M(r)}\,(\gamma_1 r)_{\gamma_1}(\gamma_2 r)_{\gamma_2}\,n_1^{-\gamma_1}n_2^{-\gamma_2}\,\left(1+o(1)\right) \, .
	\end{equation}
	Equation \eqref{eqn:ExpqMpost_approx} shows that $\E\left(M^\star\mid\bmX\right)$ goes to zero when $n_1,n_2\rightarrow\infty$.  This aligns with our modelling assumption, i.e., with an infinite amount of data, all possible species would have already been observed, leaving no room for further discoveries. Moreover, this also highlights the crucial role of the parameters $\gamma_1$ and $\gamma_2$ in governing the discovery rate of new species. If the values are greater than one, the expression quickly approaches zero. This means that after only a few observations, the expectation of discovering new species rapidly decreases. Conversely, values much smaller than one enable the discovery of new species even when a large number of observations is considered.
	
	\subsection{Joint predictive distribution}
	\label{subsection:joint_predictive}
	% \noindent \ACnote{
		% ----------------------------------------------------------------------------------\\
		% Joint predictive distribution\\
		% ----------------------------------------------------------------------------------
		% }\\
	
	Following the same approach as in Section \ref{subsection:formule_priori}, it is of primary interest to derive both the marginal and joint distributions of the random variables on the right side of Equation \eqref{eqn:linear_rel_post}. These are stated in the following theoretical results.
	
	\begin{theorem}
		\label{thm:jointpost_d2}
		Let $\bmX=\left(\bmX_1,\bmX_2\right)$ be a sample of sizes $n_1$ and $n_2$ from model \eqref{eqn:partial_ex} under the $\operatorname{Vec-FDP}$ prior given in Equation \eqref{eqn:VecFDP_prior}.
		Let $\mathcal K_{n_1,n_2} = r$ and $\mathcal S_{n_1,n_2} = t$ be the observed number of global distinct and shared species, and let $K_{j,n_j} = r_j$, for $j=1,2$ be the observed local distinct species. Then,
		\begin{equation}
			\label{eqn:joint_post_d2}
			\begin{split}
				& \P\left(\mathcal K^{(n_1,n_2)}_{m_1,m_2} = k,
				\ K^{(n_1)}_{1, m_1} = k_1,
				\ K^{(n_2)}_{2,m_2} = k_2
				\mid \bmX \right) = \\
				& 
				\frac{V^{r+k}_{n_1+m_1,n_2+m_2}}{V^{r}_{n_1,n_2}} 
				\prod_{j=1}^2
				\lvert C(m_j,k_j; -\gamma_j,-(\gamma_j r_j + n_j) )\rvert
				\sum_{s^*=0}^{k}
				\sum_{k_1^*=0}^{k-s^*}
				\frac{k_1!k_2!}{s^*!k^*_1!k^*_2!}
				% \binom{k}{s^*}
				% \binom{k-s^*}{k^*_1}
				\binom{r^*_1}{s_{12}}
				\binom{r^*_2}{s_{21}},
			\end{split}
		\end{equation}
		for non-negative integers $k$, $k_1$, $k_2$ such that
		$0\leq k \leq k_1+k_2$ and $0\leq k_j \leq m_j$ for $j=1,2$. Specifically, in Equation \eqref{eqn:joint_post_d2} we set
		$k^*_2~=~k~-~s^*~-~k^*_1$, $s_{12}~=~k_2~+~k^*_1~-~k$ and $s_{21}~=~k_1~-~k^*_1~-~s^*$.  Finally, the coefficient $C (\, \cdot \, , \, \cdot\, ; \, \cdot \, , \, \cdot \, )$ in Equation \eqref{eqn:joint_post_d2} denotes the non-central generalised factorial coefficient, as defined in \cite{chara2002}.
	\end{theorem}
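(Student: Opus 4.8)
The plan is to recover the conditional law \eqref{eqn:joint_post_d2} as a weighted sum of pEPPF ratios in which a single $V$-coefficient factor, a per-group factor, and a cross-group matching factor can be separated. First I would decompose the event $\{K^{(n_1,n_2)}_{m_1,m_2}=k,\,K^{(n_1)}_{1,m_1}=k_1,\,K^{(n_2)}_{2,m_2}=k_2\}$ into the disjoint configurations through which the enlarged sample of sizes $(n_1+m_1,n_2+m_2)$ can extend the observed one. By consistency of the partially exchangeable partition structure, the conditional probability of each such extension given $\bmX$ is the ratio of the enlarged pEPPF \eqref{eqn:peppf} to the observed one, namely $\Pi^{(n+m)}_{r+k}(\tilde\bmn_1,\tilde\bmn_2)/\Pi^{(n)}_{r}(\bmn_1,\bmn_2)$, where $\tilde\bmn_j$ collects the enlarged group-$j$ abundances. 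Every configuration compatible with the event has exactly $r+k$ global distinct species, so the coefficient $V^{r+k}_{n_1+m_1,n_2+m_2}$ from \eqref{eqn:Vprior} is common to all terms and factors out together with $1/V^{r}_{n_1,n_2}$, yielding the prefactor in \eqref{eqn:joint_post_d2}. It then remains to sum the rising-factorial ratios $\prod_{j}\prod_{l}(\gamma_j)_{\tilde n_{j,l}}/(\gamma_j)_{n_{j,l}}$ over all compatible configurations.

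Second, I would factorize this residual sum across the two groups. In group $j$ the $m_j$ additional observations either augment one of the $r_j$ species already present---contributing $(\gamma_j+n_{j,l})_{a_{j,l}}$ for some $a_{j,l}\geq0$---or populate one of $k_j$ species not previously seen in that group, each contributing a factor $(\gamma_j)_{b}$ with $b\geq1$. Summing over how the old-species observations are distributed collapses this to a dependence on the total mass $\gamma_j r_j+n_j$ alone, so the individual abundances $n_{j,l}$ disappear; by the defining recurrence of the non-central generalized factorial coefficients of \citet{chara2002}, the within-group sum equals $\lvert C(m_j,k_j;-\gamma_j,-(\gamma_j r_j+n_j))\rvert$. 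This is the single-area out-of-sample count, with the noncentrality encoding the absorbed mass of the observed species, and it is blind to whether a new-to-group species is globally new or a reappearance of an atom seen only in the other area, since both enter through the same factor $(\gamma_j)_{b}$.

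Third---and this is where the genuine work lies---I would perform the cross-group bookkeeping that turns the pair $(k_1,k_2)$ into the prescribed global total $k$. Each of the $k_1$ species new to group $1$ is typed as a globally new shared species, a globally new group-$1$-only species, or a reappearance of one of the $r^*_1=r-r_2$ species previously confined to group $2$; symmetrically, each of the $k_2$ species new to group $2$ is shared-new, group-$2$-only-new, or a reappearance of one of the $r^*_2=r-r_1$ species previously confined to group $1$. Writing $s^*$ for the number of globally new shared species, $k^*_1,k^*_2$ for the globally new species appearing in a single additional sample, and $s_{21},s_{12}$ for the reappearances, the balance equations $k=s^*+k^*_1+k^*_2$, $k_1=s^*+k^*_1+s_{21}$ and $k_2=s^*+k^*_2+s_{12}$ reproduce exactly the substitutions $k^*_2=k-s^*-k^*_1$, $s_{12}=k_2+k^*_1-k$, $s_{21}=k_1-k^*_1-s^*$ in the statement. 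The number of admissible typings is the product of the two multinomials $k_1!/(s^*!\,k^*_1!\,s_{21}!)$ and $k_2!/(s^*!\,k^*_2!\,s_{12}!)$ that split the local-new species by type, the factor $s^*!$ matching the shared-new species across groups, and two falling factorials assigning the reappearances to distinct previously-single-group atoms; a short simplification identifies this product with $\tfrac{k_1!\,k_2!}{s^*!\,k^*_1!\,k^*_2!}\binom{r^*_1}{s_{12}}\binom{r^*_2}{s_{21}}$. Summing over the two free indices $s^*$ and $k^*_1$ then gives the double sum of \eqref{eqn:joint_post_d2}.

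The step I expect to be the main obstacle is the third one: reconciling what the local coefficients $\lvert C(m_j,k_j;\cdot,\cdot)\rvert$ enumerate---partitions of the new observations into new species within a single area, with no notion of identity across areas---with the global constraint that some group-$1$ and group-$2$ new species must be the same atom (a shared discovery) or must coincide with specific previously-single-group atoms. Keeping the labelled-versus-unlabelled accounting coherent, so that the matching factor $s^*!$ and the falling factorials combine with the two multinomials to give precisely the binomial-multinomial kernel without over- or under-counting, is delicate. A cleaner alternative would be to condition on the latent $M^\star$ of \eqref{eqn:qMpost}: given $M^\star=\mstar$, the two areas sample independently from posterior symmetric Dirichlet weights, $s^*$ is the overlap of the two independently discovered subsets of the $\mstar$ unobserved atoms, and re-summing the resulting hypergeometric-type expression over $\mstar$ against $q^\star_{M\mid\bmX}$ rebuilds the coefficient $V^{r+k}_{n_1+m_1,n_2+m_2}$; the combinatorics are the same but organised around $\mstar$.
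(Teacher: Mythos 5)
Your proposal is correct and follows essentially the same route as the paper's proof: writing the conditional law as a sum of pEPPF ratios, pulling out $V^{r+k}_{n_1+m_1,n_2+m_2}/V^{r}_{n_1,n_2}$, collapsing each group's sum to $\lvert C(m_j,k_j;-\gamma_j,-(\gamma_j r_j+n_j))\rvert$ via Vandermonde's identity and the relation between central and non-central generalized factorial coefficients, and then handling the cross-group typing through the auxiliary variables $s^*$ and $k^*_1$ exactly as the paper does (your multinomial-times-matching count reduces algebraically to the paper's $\tfrac{1}{k!}\binom{k}{s^*}\binom{k-s^*}{k^*_1}\binom{r^*_1}{s_{12}}\binom{r^*_2}{s_{21}}\prod_j k_j!$ kernel). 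The step you flag as delicate is indeed where the paper spends most of its effort, resolving it by fixing orderings of the new and previously-single-group species before decoupling the constraint sets.
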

	The proof of Theorem \ref{thm:jointpost_d2} is provided in Section \ref{app:proof_thm_post}.
	The non-central generalised factorial coefficient satisfies a specific recursive relation that helps with its evaluation. See \cite{chara2002} for details.
	As discussed in Section \ref{app:post_quantities}, all auxiliary quantities involved in Equation \eqref{eqn:joint_post_d2} are interpretable. In fact, $s^*$ is the number of new shared species among the $k$ new distinct species, then $k^*_1$ is the number of new distinct species in group $1$ but missing in group $2$ while $s_{12}$ is the number of species that were first only observed in group $1$ and then also observed in group $2$. Finally, $k^*_2$ and $s_{21}$ are defined accordingly.
	The marginal distributions for the local and global number of distinct species are reported in Equations \eqref{eqn:Kpost_d2} and \eqref{eqn:Kj_post}. These are the posterior counterparts of Equations \eqref{eqn:K12_prior} and \eqref{eqn:Kj_prior}.
	\begin{prop}
		\label{thm:marginal_post_global}
		Under the same hypothesis of Theorem \ref{thm:jointpost_d2}, the marginal distribution of the global number of new distinct species $\mathcal K_{m_1,m_2}^{(n_1,n_2)}$ is
		\begin{equation}
			\label{eqn:Kpost_d2}
			\begin{split}
				&\P\left(\mathcal K_{m_1,m_2}^{(n_1,n_2)} =k \mid \bmX\right)  = 	\frac{V^{r+k}_{n_1+m_1,n_2+m_2}}{V^r_{n_1,n_2}}
				\\
				& 
				\qquad  \times \sum_{k^*_1=0}^k  \,
				\sum_{k^*_2=k-k^*_1}^{k}
				\frac{(k^*_1+s^*)!(k^*_2+s^*)!}{k^*_1!k^*_2!s^*!}
				\prod_{j=1}^2
				\lvert C(m_j, k^*_j+s^*; -\gamma_j, -(\gamma_j r + n_j))\rvert ,
			\end{split}
		\end{equation}
		for $k\in\{0,\dots,m_1+m_2\}$ and where $s^* = k - k^*_1 - k^*_2$.
	\end{prop}
	The proof of Proposition \ref{thm:marginal_post_global} is provided in Section \ref{app:proof_K12_post}. The marginal distribution of the local number of new distinct species in group $j$, $K_{j,m_j}^{(n_j)}$, follows from Equation \eqref{eqn:Kpost_d2} and it equals
	\begin{equation}
		\label{eqn:Kj_post}
		\begin{split}
			\P\left(K^{(n_j)}_{j,m_j} = k_j \mid \bmX\right) \ = \
			\frac{V^{r_j+k_j}_{n_j+m_j}}{V^{r_j}_{n_j}} \, \lvert C(m_j, k_j; -\gamma_j, -(\gamma_j r_j + n_j))\rvert,
		\end{split}
	\end{equation}
	for $k_j\in\{0,\dots,m_j\}$.
	The latter coincides with the findings in \cite{deblasi2015} about Gibbs-type priors with negative parameters.
	Furthermore, we highlight that Theorem \ref{thm:jointprior_d2} requires %the number of local distinct species to be smaller than or equal to the number of global distinct species
	$K_{j,n_j} \leq \mathcal K_{n_1,n_2}$. This condition, however, is not necessary in Theorem \ref{thm:jointpost_d2}, as it is possible for the number of global discoveries to exceed the number of local discoveries, i.e., $K^{(n_j)}_{j, m_j} > \mathcal K^{(n_1,n_2)}_{m_1,m_2}$.
	An example illustrating this scenario is presented in Section \ref{app:post_quantities}.

	\subsection{Discovering shared species}
	\label{subsection:PrShared_post}
	% \noindent \ACnote{
		% ----------------------------------------------------------------------------------\\
		% Previsioni sulle shared species, tutto quello che possiamo dire\\
		% ----------------------------------------------------------------------------------
		% }\\
	Similarly to Section \ref{subsection:formule_priori}, Equations \eqref{eqn:Kj_post} and \eqref{eqn:Kpost_d2} allow us to compute the posterior expected values of the number of new distinct species and, by means of Equation \eqref{eqn:linear_rel_post}, to derive the Bayesian estimator of the number of new shared species, that is
	\begin{equation}
		\label{eqn:shared_estimator}
		\E\left[\mathcal S^{(n_1,n_2)}_{m_1,m_2}\mid\bmX\right]  =  
		\E\left[K^{(n_1)}_{1, m_1}\mid\bmX\right]  +  
		\E\left[K^{(n_2)}_{2,m_2}\mid\bmX\right]  -  
		\E\left[\mathcal K^{(n_1,n_2)}_{m_1,m_2}\mid\bmX\right] \, .
	\end{equation}
	The associated uncertainty is quantified through the posterior distribution, namely, $\P\left(\mathcal S^{(n_1,n_2)}_{m_1,m_2} = s\mid\bmX\right) $ that is obtained by summing the expression in Equation \eqref{eqn:joint_post_d2} for all $k,k_1,k_2$ such that $s=k_1+k_2-k$ and for all $s=\{0,\ldots,m_1+m_2\}$. Alternatively, it can be estimated via Monte Carlo sampling, drawing samples from Equation \eqref{eqn:joint_post_d2}.
	
	The shared species sample coverage is defined as the proportion of shared species that are observed in the sample. Being able to estimate it allows us to assess the number of shared species that have been observed, and therefore, to decide whether it is worth continuing the experiment and sampling more observations. Moreover, the shared species sample coverage on $m$-steps ahead facilitates determining the size of the additional sample that ensures the coverage exceeds a specified threshold.
	In our setting, the shared coverage probability for $m$-steps ahead, i.e., the probability of not discovering new shared species in the additional sample, follows from Equation \eqref{eqn:joint_post_d2} and equals
	
	\begin{equation}
		\label{eqn:Shzero_m_steps}
		\P\left(\mathcal S^{(n_1,n_2)}_{m_1,m_2} = 0\mid\bmX\right) = 
		\sum_{k_1=0}^{m_1}
		\sum_{k_2=0}^{m_2}
		\frac{V_{n_1+m_1,n_2+m_2}^{r+k_1+k_2}}{V_{n_1,n_2}^{r}}
		\prod_{j=1}^d 
		\lvert C(m_j,k_j; -\gamma_j,-(\gamma_j r_j + n_j) )\rvert.
	\end{equation}
	In particular, the one-step ahead coverage probability 
is obtained from Equation \eqref{eqn:Shzero_m_steps} by setting $m_1 = 1$ and $m_2 = 1$.
	In this case, we can explicitly write the whole distribution $\P\left(\mathcal S^{(n_1,n_2)}_{1,1} = s\mid\bmX\right)$ for each $s\in\{0,1,2\}$ and not just for $s=0$ as it happens for $m_j>1$.
	For the sake of brevity, such distribution is reported in Section \ref{app:species_discovery}, alongside the probability of discovering new local and global distinct species.
	
	Here, we report the probability of discovering at least one new shared species; that is,
	\begin{equation}
		\label{eqn:Sh0_1_step}
		\begin{split}
			& \P\left(\mathcal S^{(n_1,n_2)}_{1,1} > 0\mid\bmX\right) \,= \,
			1 -
			\frac{V_{n_1+1,n_2+1}^{r}}{V_{n_1,n_2}^r}
			(\gamma_1r_1+n_1)(\gamma_2r_2+n_2)
			\\
			& \qquad -
			\frac{V_{n_1+1,n_2+1}^{r+1}}{V_{n_1,n_2}^r}
			\left\{
			\gamma_1(\gamma_2r_2+n_2) + \gamma_2(\gamma_1r_1+n_1)
			\right\}
			-
			\frac{V_{n_1+1,n_2+1}^{r+2}}{V_{n_1,n_2}^r}
			\gamma_1\gamma_2.
			% +
			% \frac{V_{n_1+1,n_2+1}^{r+1}}{V_{n_1,n_2}^r}(r^*_1+r^*_2+1)\gamma_1\gamma_2 \\
			% & \qquad \qquad +
			% \frac{V_{n_1+1,n_2+1}^{r}}{V_{n_1,n_2}^r}
			% r^*_1r^*_1\gamma_1\gamma_2.
		\end{split}
	\end{equation}
	The ratios of $V$ coefficients in Equation \eqref{eqn:Sh0_1_step} represent the three different scenarios where the new pair of observations yields none, one, or two new distinct global species. It is worth noting that increasing values of the probability in Equation \eqref{eqn:Sh0_1_step} indicate that the observed sample is sufficiently exhaustive, suggesting that further data collection may not be necessary.
    Equation \eqref{eqn:Sh0_1_step} can be compared with the Good-Turing estimators proposed in the frequentist literature, as presented in Section \ref{subsection:SS_exp2} where a simulation study is carried out.
    Nevertheless, to the best of our knowledge, neither the frequentist nor the Bayesian literature provides an analogue of the $m$-step-ahead discovery probability in Equation \eqref{eqn:Shzero_m_steps}; we illustrate its practical use in Section \ref{section:application}.

	%\section{Diversity-based estimation strategy}
	%\label{section:parameters}
	\section{Diversity indices}
	\label{subsection:indici_intro}
	In ecology, the concept of diversity is tied not only to the number of distinct species present in an area but also to their heterogeneity. For instance, having ten equally represented species is intuitively very different from having one highly abundant species and the remaining nine extremely rare \citep{colwell2004}. In the literature, there is no unique quantitative definition of diversity; instead, a variety of indices have been proposed to measure it. Among these, we focus on Simpson's diversity index \citep{Simpson49}, which captures both richness and evenness in species distributions.
	Assuming that the unknown discrete distribution $P_j$ that generates the population is made up of $M$ distinct species with species proportions $w_{j,m}$, Simpson's diversity index is
	\begin{equation}
		\label{eq:simpson}
		\rho_j = \sum_{m=1}^M w^2_{j,m}.
	\end{equation}
	Useful alternatives are suitable transformations of $\rho_j$, such as the Gini-Simpson index, defined as $1-\rho_j$, or the inverse-Simpson index, $1/\rho_j$ \citep{colwell2004}.
	The index $\rho_j$ ranges between $1/M$ (when all species are uniformly distributed) and one (when one species is abundant and the remaining are negligible).
	These extreme cases correspond to the cases of maximum and minimum heterogeneity, respectively.
	
	%Plug-in estimators for the diversity and similarity indices are obtained by estimating the species probabilities through normalisation of the observed counts; see, e.g., \cite{archer2014}. Namely, 
	%\begin{align}
	%	\label{eqn:ssj_plugin}
	%	&\hat{\rho}_{j,\text{plug-in}}  \, = \, \sum_{l=1}^r\left(\frac{n_{j,l}}{n_j}\right)^2 \, ,\\
	%	\label{eqn:sp1p2_plugin}
	%	&\hat{\rho}_{12,\text{plug-in}} \, = \, \sum_{l=1}^t\frac{n_{1,l}}{n_1}\frac{n_{2,l}}{n_2} \, .
	%\end{align}
	%\FCnote{non capisco cosa sia $r$ e cosa sia $t$, poi la consistenza che dici non capisco che tipo di consisteza sia. Vuoi dire che per $n_j$ grande stimatore Bayesiano converge a quello frequentista? La consistenza a me sembra un po' diversa, mi \e' poco chiaro.}
	%In particular, $\hat{\rho}_{j,\text{plugin}}$ solely depends on the frequencies of the $r_j$ observed species in area $j$, since $n_{j,l} = 0$ for each species $l$ that is only observed in the other area $j^\prime\neq j$. Similarly, $\hat{\rho}_{12,\text{plugin}}$ depends on the counts of the $t$ observed shared species. 
	%\cite{antos2001} shows that the plug-in estimator $\hat{\rho}_{j,\text{plug-in}}$ coincides with the maximum-likelihood estimator under a categorical likelihood, but it is biased in the under-sampled regime, i.e., when there are few observations relative to the total number of species.

	%By definition, the Simpson's diversity index quantifies diversity within a single area, not across two areas. Instead, 
    When we
	assume the population is generated from a vector $\left(P_1,P_2\right)$ of unknown discrete distribution, each having $M$ different species whose proportions are $w_{j,m}$, for 
	$j=1,2$ and $m=1,\dots,M$, the Morisita index \citep{Morisita59} can be used to quantify the similarity between the two areas, namely, $2\rho_{12}/(\rho_1+\rho_2)$, where 
	\begin{equation}
		\label{eq:rho12}
		\rho_{12}=\sum_{m=1}^M w_{1,m}w_{2,m}.
	\end{equation}
	%
	%\begin{equation*}
	%\label{eqn:Morisita_def}
	%\text{Mor} = 2 \, \frac{\sum_{m=1}^M w_{1,m}w_{2,m}}{\sum_{m=1}^M w^2_{1,m} + \sum_{m=1}^M w^2_{2,m}}.
	%\end{equation*}
	Increasing values of the Morisita index show evidence of identical communities, i.e., with the same species proportions. \cite{chao2017} highlighted an important probabilistic interpretation of the Morisita index. The numerator in Equation \eqref{eq:rho12} represents the probability of selecting the same shared species when two observations, one from each group, are randomly drawn from this population. The denominator of the Morisita index is instead the sum of the Simpson's diversity indices in the two groups. 
    In Section \ref{section:posterior_diversity}, we present the Bayesian estimators of $\rho_j$, $j=1,2$, and $\rho_{12}$ given an observed sample $\bmX=(\bmX_1,\bmX_2)$ of sizes $n_1$ and $n_2$, with $r$ distinct species and $t$ shared species.
    
    \subsection{Parameter estimation}
	\label{subsection:param_estimation}
    The Bayesian model in Equation \eqref{eqn:partial_ex} under the $\operatorname{Vec-FDP}$ prior in Equation \eqref{eqn:VecFDP_prior} is governed by the parameters $\bmgamma = (\gamma_1,\gamma_2)$ and $\Lambda$. 
    These parameters are unknown and must be estimated from the data. To this end, we propose two different estimation strategies.
    The first strategy, following \cite{MasoeroCamScaled}, is a plug-in approach based on the maximum marginal likelihood estimator. Namely, we find $\hat\bmgamma$ and $\hat \Lambda$ that maximise the pEPPF in Equation \eqref{eqn:peppf}. In the following, we refer to \textit{Bayes I} estimators as the estimators proposed in the previous sections for the global and local quantities  %those proposed in the previous sections
    when the parameters $\bm{\gamma}$ and $\Lambda$ are estimated via the maximum marginal likelihood estimator.
    
       In addition, we provide an alternative estimation strategy that relies on a diversity-based interpretation of $\left(\bmgamma,\Lambda\right)$. %, as defined in Section \ref{subsection:indici_intro}. 
       To this end, note that Equation \eqref{eqn:partial_ex} does not assume the existence of a unique, unknown vector of probability distributions $\left(P_1,P_2\right)$ but rather assumes that $\left(P_1,P_2\right)$ is random. Therefore, we can integrate out this source of randomness by taking the expected values, obtaining
	\begin{align}
		\label{eqn:ssj}
		&\E\left(\rho_j\right) \, = \, 
		\E\left(\sum_{m=1}^M w_{j,m}^2\right)
		\, = \,
		\left(1 + \gamma_j\right)\,\E\left(\frac{1}{1 + \gamma_jM}\right) \, , \\
		&\E\left(\rho_{12}\right) \, = \, 
		\E\left(\sum_{m=1}^M w_{1,m}w_{2,m}\right)
		\, = \,
		\E\left(1/M\right) \, .
		\label{eqn:sp1p2}
	\end{align}
	\App{Proof of Equations \eqref{eqn:ssj} and \eqref{eqn:sp1p2} are deferred to Section \ref{app:proof_stima_param}.}
	% \RAnote{Questo è solo un commento, legato a quello che facevo su Marta Catalano. Queste quantità a-posteriri sono impossible to compute?}\ACnote{Calcolate, commentare insieme :)}
	The expected value of $\rho_{12}$ in Equation \eqref{eqn:sp1p2} solely depends on $q_M$. This is due to the fact that we model the dependence in the vector of random probability measures $\left(P_1,P_2\right)$ through the random number of species $M$ and by imposing common atoms. %, which are irrelevant at this stage.
	Note that our choice of $q_M$ as $\operatorname{Pois}_1(\Lambda)$ yields explicit expressions for Equation \eqref{eqn:sp1p2} depending on $\Lambda$;  see Equation \eqref{eqn:Exp1overM_Pois}. Then, $\Lambda$ regulates the amount of similarity between the two areas.
	Furthermore, the limits of the expected Simpson index in Equation \eqref{eqn:ssj} further illuminate the interpretation of $\gamma_1$ and $\gamma_2$ as homogeneity parameters, as described in Section \ref{subsection:correlation}.
	In fact, the limits of Equation \eqref{eqn:ssj} for $\gamma_j\rightarrow0$ and $\gamma_j\rightarrow\infty$ are equal to one and $\E\left(1/M\right)$, respectively, which represent the cases of minimum and maximum heterogeneity.
	Therefore, $\gamma_1$ and $\gamma_2$ are homogeneity parameters since heterogeneity decreases as $\gamma_j$ increases.

	Our diversity-based strategy for estimating $\gamma_1$, $\gamma_2$ and $\Lambda$ consists of two steps. 
	Firstly, we use the observed data to get estimates of $\rho_j$, $j=1,2$, in Equation \eqref{eq:simpson} and $\rho_{12}$ in Equation \eqref{eq:rho12}. 
	This step can be achieved with standard routines, such as using the estimator in Equation \eqref{eqn:ssj_freq}. 
	Then, we plug such estimates into the left-hand sides of Equations \eqref{eqn:ssj} and \eqref{eqn:sp1p2} and solve with respect to $\gamma_1$, $\gamma_2$, and $\Lambda$.
	% In particular, we first solve Equation \eqref{eqn:sp1p2}, which involves only $\Lambda$. Since $\E\left(1/M\right) = \Lambda^{-1}\left(1-\e^{-\Lambda}\right)$ is monotonically decreasing towards zero, this step is trivial.
 %    Then, given $\Lambda$,  Equation \eqref{eqn:ssj} decouples into two independent conditions, one for each group, which can be solved in parallel.
    % \RAcanc{In the following, we refer to \textit{Bayes II} \RAnote{Bayes II?} estimators as those proposed in the previous sections when the parameters are estimated using the diversity-based approach. Its main advantage with respect to maximum marginal likelihood estimation  lies in its straightforward interpretability and ease of implementation. Indeed, in Section \ref{section:simulation_summary} we show that the computational effort required by the diversity-based strategy is negligible compared to the alternative. }
    In the following, we refer to \textit{Bayes II} estimators as the estimators for the local and global quantities, when using the strategy just described to estimate $\bm\gamma$ and $\Lambda$. In Section \ref{section:simulation_summary}, we show that the computational effort required by the diversity-based strategy is negligible compared to the alternative based on the maximum marginal likelihood.

    We conclude by noting that other estimation procedures can also be considered.
    For instance, in the case of a single population, \cite{Lij(07)} and \cite{favaro2009} suggested maximising the prior distribution of the number of distinct species evaluated at their observed values. Alternatively, \cite{BalocchiCamFavaro24} employed a fully Bayesian approach by specifying suitable hyperpriors and estimating the parameters via Markov chain Monte Carlo (MCMC) methods.

	\section{Simulation study}
	\label{section:simulation_summary}
    \subsection{Data generation and competitors}
    \label{subsection:data_generation}
    %\ACnote{Accorciare o spostare per ultima}\\
    We conduct a simulation study to evaluate our methodology and compare it with existing estimators. The data-generating mechanism, adapted from \cite{Yue2012}, assumes that $\left(P_1,P_2\right)$ are two discrete probability distributions, each consisting of $M_{\text{true}}$ species. The group-specific species proportions are denoted by $p_{j,m}$, with $j=1,2$ and $m=1,\dots,M_{\operatorname{true}}$, which are specified according to several alternatives to cover a wide range of benchmark distributions and misspecification scenarios. 
    In particular, we consider the following alternatives, referred to here as settings:\\
    %\begin{itemize}
        %\item 
        (i) Dirichlet weights -- we set $M_{\text{true}} = 60$ and $p_{j,m}$ to be randomly generated from symmetric Dirichlet distributions with parameters $\gamma_j \in \{0.1, 0.5\}$; all three possible combinations are denoted $D_1,D_2,D_3$ as explained in Table \ref{tab:SS_D};\\
        (ii) Geometric weights -- we set $M_{\text{true}} = 60$ and $p_{j,m}$ to be deterministically assigned according to a geometric decay, i.e., $p_{j,m} \propto \alpha_j^m$, with $\alpha_j \in \{0.8,\,0.85,\,0.9\}$, as in \cite{Yue2012}; all possible combinations of these three values are considered, yielding six cases denoted as $G_1,\ldots,G_6$ and explained in %, and the corresponding names are provided in
        Table \ref{tab:SS_G};\\
        (iii) Zipf's weights -- we set $M_{\text{true}} = 60$ and $p_{j,m}$ to be deterministically assigned as $p_{j,m} \propto m^{-s_j}$, with $s_j \in \{1.3,\,2\}$, similar to \cite{franzolini25}; all three possible combinations are denoted $Z_1,Z_2,Z_3$ and explained in %, and the corresponding names are provided in 
        Table \ref{tab:SS_Z}.%\LPnote{vogliamo mettere ref di franzolini/battiston?}\ACnote{Ok, fatto}
    %\end{itemize}
    
    In the Geometric and Zipf settings, the probabilities $p_{j,m}$ are deterministic and monotonically decreasing in $m$, implying that species with a high probability of being observed in one group are likely to appear in the other group as well. Consequently, only these few species are likely to be observed as shared, leaving a negligible probability for all others. To mitigate this effect, before assigning the probability mass function, the $M_{\text{true}}$ species in each population are randomly permuted. This procedure prevents the Morisita index described in Section \ref{subsection:indici_intro} from being exactly equal to one whenever the distribution parameters of the two groups coincide, e.g., $\alpha_1 = \alpha_2$ or $s_1 = s_2$. The empirical values of the Morisita index after shuffling are reported in Figure \ref{fig:SS_Morosita}.
    Finally, Figures \ref{fig:AccCrv_D3}-\ref{fig:AccCrv_Z3} display the accumulation curves for the four statistics of interest for some selected cases $(D_3,G_6,Z_3)$.
    
    All three settings described above satisfy the assumption that the same set of $M_{\text{true}}$ species is shared. However, the accumulation curves of shared species show that, for a finite number of observations, the number of observed shared species is smaller than $M_{\text{true}}$.
    Nevertheless, we also consider a fourth setting in which this assumption no longer holds. Specifically, inspired by \citep{Muller2004}, we define $(P_1,P_2)$ as
    \begin{equation}
    \label{eq:gen_muller}
        P_j = c Q_0 + (1-c)Q_j\,,
  \end{equation}
    for any scalar $c \in [0,1]$. The shared component 
    $Q_0$ %, common to both areas, 
    is assumed to follow a symmetric Dirichlet distribution of size $M_{\operatorname{com}}$ with parameter $\delta_0$, while the two idiosyncratic components $Q_j$, $j=1,2$, follow symmetric Dirichlet distributions of sizes $M_{\operatorname{id}}^1$ and $M_{\operatorname{id}}^2$ with parameters $\delta_1$ and $\delta_2$, respectively.
    In particular, %To limit the number of possible combinations, 
    we set $\delta_1 = \delta_2$, denoting this common value as $\delta$ in the following.
    Overall, the total number of species is $M_{\operatorname{tot}} = M_{\operatorname{com}} + M_{\operatorname{id}}^1 + M_{\operatorname{id}}^2$. In our experiments, we fix $M_{\operatorname{tot}} = 80$ and set $M_{\operatorname{com}} = c M_{\operatorname{tot}}$, rounded down to the nearest integer. The remaining species are equally divided between the two idiosyncratic components, i.e., $M_{\operatorname{id}}^j = (M_{\operatorname{tot}} - M_{\operatorname{com}})/2$ for $j=1,2$, with rounding applied to ensure that the total number of species matches the desired value.
    Regarding the choice of parameters, we consider all configurations with $c \in \{0,0.5,1\}$ and $(\delta_0,\delta) \in \{0.1,0.5\}^2$, resulting in a total of 12 configurations, denoted as $A_{1,1},\ldots,A_{3,4}$. The corresponding names are reported in Table \ref{tab:SS_A}.
  Figures \ref{fig:AccCrv_A14}-\ref{fig:AccCrv_A34} display the accumulation curves for the four statistics of interest for some selected cases $(A_{1,4},A_{2,4},A_{3,4})$.
    Finally, in each setting, we independently sample $n_j$ species with replacement from $P_j$, for $j=1,2$. Here, we consider the balanced case where $n_1 = n_2$, while we defer the unbalanced case with $n_1 \ll n_2$ to Sections \ref{app:SS_Exp1_additional} and \ref{app:SS_Exp2_additional} of the Supplementary material.
    In both cases, we denote the total number of observations by $n = n_1 + n_2$.

   % \ACnote{ ------------------- Descrizione competitors ------------------- }\\
    Whenever possible, we compare the performance of our estimators with that of two simpler benchmark models. The first benchmark models the two populations independently, assuming that each area-specific model coincides with the marginals induced by our proposed framework. Specifically, we consider $P_j = \sum_{m=1}^{M_j} w_{j,m} \delta_{\tau_{j,m}}$, for $j=1,2$ where the collections %$\{M_1, (w_{1,1},\dots  , w_{1,M_1}), (\tau_{1,1}, \dots, \tau_{1,M_1})\}$ and\\ $\{M_2, (w_{2,1},\dots  , w_{2,M_2}), (\tau_{2,1}, \dots, \tau_{2,M_2})\}$ 
  $\{M_j, (w_{j,1},\dots  , w_{j,M_j}), (\tau_{j,1}, \dots, \tau_{j,M_j})\}$ are distributed independently as a Finite Dirichlet Process (FDP; \citealt{argiento2022annals}), i.e., $P_j\sim \operatorname{FDP}\left(\Lambda_j,\gamma_j\right)$. 
    In the following, we refer to this model and the corresponding estimators as \textit{Independent}. Since any dependence between the two areas is completely ignored, this approach is only suitable for the estimation of local quantities such as $K_{j,n_j}$ and $K_{j,m_j}^{(n_j)}$.
    The second benchmark model, instead, discards any information about group membership and pools all observations as if they originated from a single area. For this reason, we refer to this approach as \textit{Pooled}. In this case, the merged sample $\bm{X}_1 \cup \bm{X}_2$ is assumed to be exchangeable from a single homogeneous population, so that $P_1 = P_2 = P$, with 
    $P \sim \operatorname{FDP}\left(\Lambda_0,\gamma_0\right)$.
    Under this assumption, information on local quantities is irretrievably lost. However, we can still compare the global number of distinct species $\mathcal K^{(n_1,n_2)}_{m_1,m_2}$ with the corresponding quantity computed under the \textit{Pooled} model, namely
    $\mathcal K^{(m_1+m_2)}_{n_1+n_2} = K_{m_1+m_2} - K_{n_1+n_2}$.
    The \textit{Independent} and \textit{Pooled} estimators of the quantities of interest are obtained by estimating the corresponding parameters, $(\Lambda_j,\gamma_j)$ for $j=1,2$ and $(\Lambda_0,\gamma_0)$ via maximum marginal likelihood. An explicit expression for the associated EPPF can be found in \cite{argiento2022annals}.
    The performance of our proposed model is evaluated through two separate experiments, described in Sections \ref{subsection:SS_exp1} and \ref{subsection:SS_exp2}, respectively.

    In the case of one-step-ahead discovery probability of shared species, we compare our approach against two nonparametric frequentist competitors that represent two alternative generalisations of the Good-Turing estimator for the problem of shared species.
	Specifically, let $f_{\nu_1,\nu_2}$ be the number of species that appeared exactly $\nu_1$ and $\nu_2$ times in the first and second groups, respectively. Moreover, let $f_{\nu_1,+}$ ($f_{+,\nu_2}$) be the number of species that appeared exactly $\nu_1$ ($\nu_2$) times in the first (second) group and at least once in the second (first) group. Then, the estimator proposed by \cite{Yue2012} (\textit{Yue}) 
    is $\P_{\text{Yue}}\left(\mathcal S^{(n_1,n_2)}_{1,1} > 0\right) = \left(f_{1+} + f_{+1} + f_{11} \right)/n_1$, and it is only defined for $n_1 = n_2$. On the other hand, \cite{chao2017} (\textit{Chao}) proposed 
    $\P_{\text{Chao}}\left(\mathcal S^{(n_1,n_2)}_{1,1} > 0\right) = f_{1+}/n_1 + f_{+1}/n_2 + f_{11}/n_1n_2 $. 
    % A third alternative, proposed by \cite{Yue2022}, is also available in the literature.  However, we did not include this option because, in the experiments we conducted, it produced results almost identical to those of \textit{Chao}.\LPnote{possiamo togliere l'ultima parte forse} \FCnote{Si direi di toglierla se non si usa: è già abbatsanza dettagliato}

    \subsection{Experiment 1}
    \label{subsection:SS_exp1}
    The first experiment assesses the ability of the estimators introduced in Section \ref{section:posterior} to predict the number of additional local and global distinct species, as well as the number of shared species, in a future unobserved test set.
    The experiment proceeds as follows. First, we generate a training dataset according to the data-generating mechanism described in Section \ref{subsection:data_generation}, with balanced sample sizes $n_1 = n_2$ taking values in the set $\{100,200,\ldots,800\}$. 
    We then generate $m_1 = m_2 = 200$ additional observations, which constitute the unobserved test set. This procedure is repeated over $100$ independently generated pairs of training and test sets.
    Each training set is used to estimate the model parameters as described in Section \ref{subsection:param_estimation}.
    We recall that \textit{Bayes I} and \textit{Bayes II} estimators refer to maximum marginal likelihood and diversity-based estimation strategies, respectively. 
    These estimates are subsequently employed to predict the expected number of new species in the test set, following the methodology outlined in Section \ref{section:posterior}. Specifically, we use Equation \eqref{eqn:joint_post_d2} to compute $\E\left[K^{(n_j)}_{j, m_j}\mid\bmX\right]$ for $j=1,2$ as well as $\E\left[\mathcal K^{(n_1,n_2)}_{m_1,m_2}\mid\bmX\right]$. Finally, $\E\left[\mathcal S^{(n_1,n_2)}_{m_1,m_2}\mid\bmX\right]$ is computed as in Equation \eqref{eqn:shared_estimator}. 
    We compare the estimates of the new local and global distinct species with the \textit{Independent} and \textit{Pooled} estimators introduced in Section \ref{subsection:data_generation}. However, we are not aware of any $m$-steps ahead estimator for the shared species. 
    
    Out-of-sample performance is evaluated by computing the Root Mean Squared Error (RMSE) of each estimated quantity with respect to its true value.
    For the sake of space, Figure \ref{fig:SS_Pred_1} reports results only for selected scenarios $(D_3,G_6,Z_3,A_{1,4},A_{2,4},A_{3,4})$ when the training sample size is fixed at $n_1 = n_2 = 400$.
    Furthermore, Table \ref{tab:SS1_paramest} reports the mean and standard deviation of the estimated total number of species, $K_n + M^\star$, as well as the model parameters $\left(\Lambda,\gamma_1,\gamma_2\right)$. 
    Results for all remaining configurations are reported on the Github repository \url{https://github.com/alessandrocolombi/HSSM}.

    \begin{figure}[ht!]
        \centering
        \includegraphics[width=0.485\linewidth]{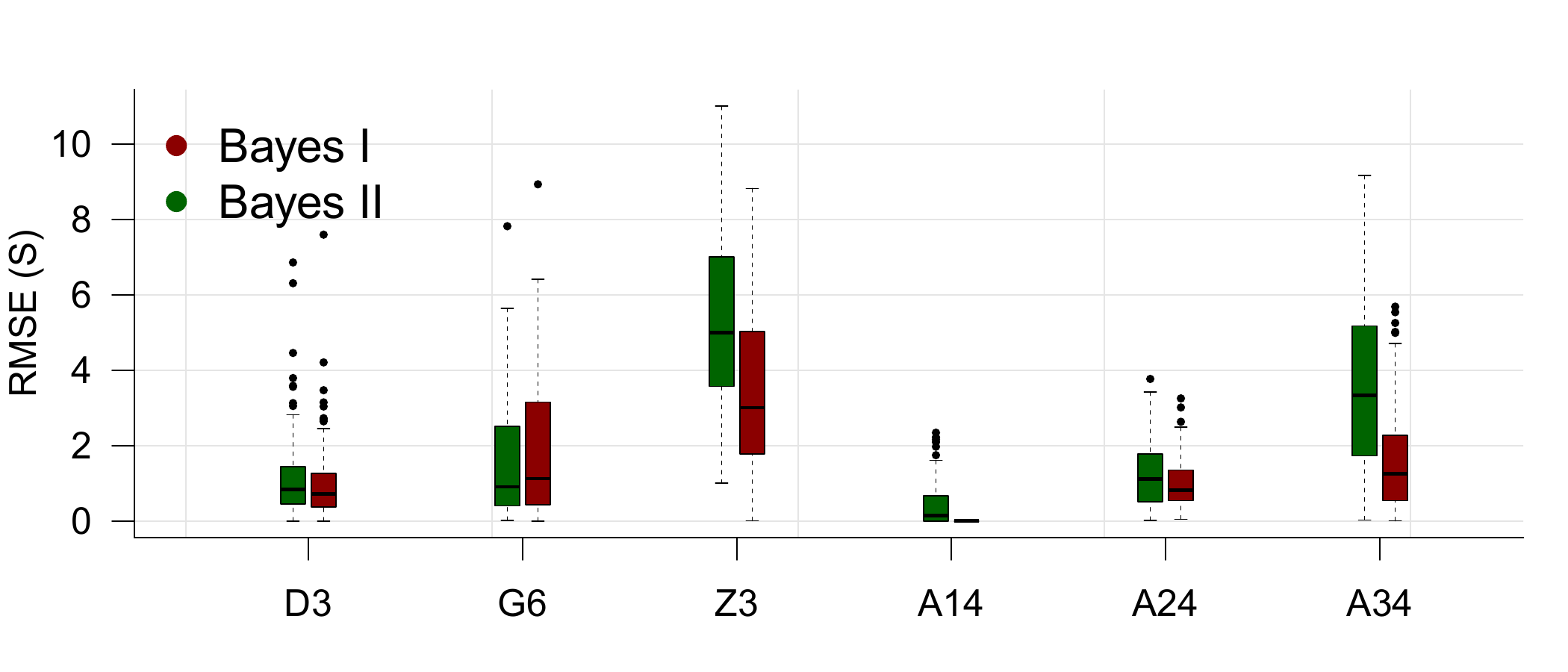}
        \hfill
        \includegraphics[width=0.485\linewidth]{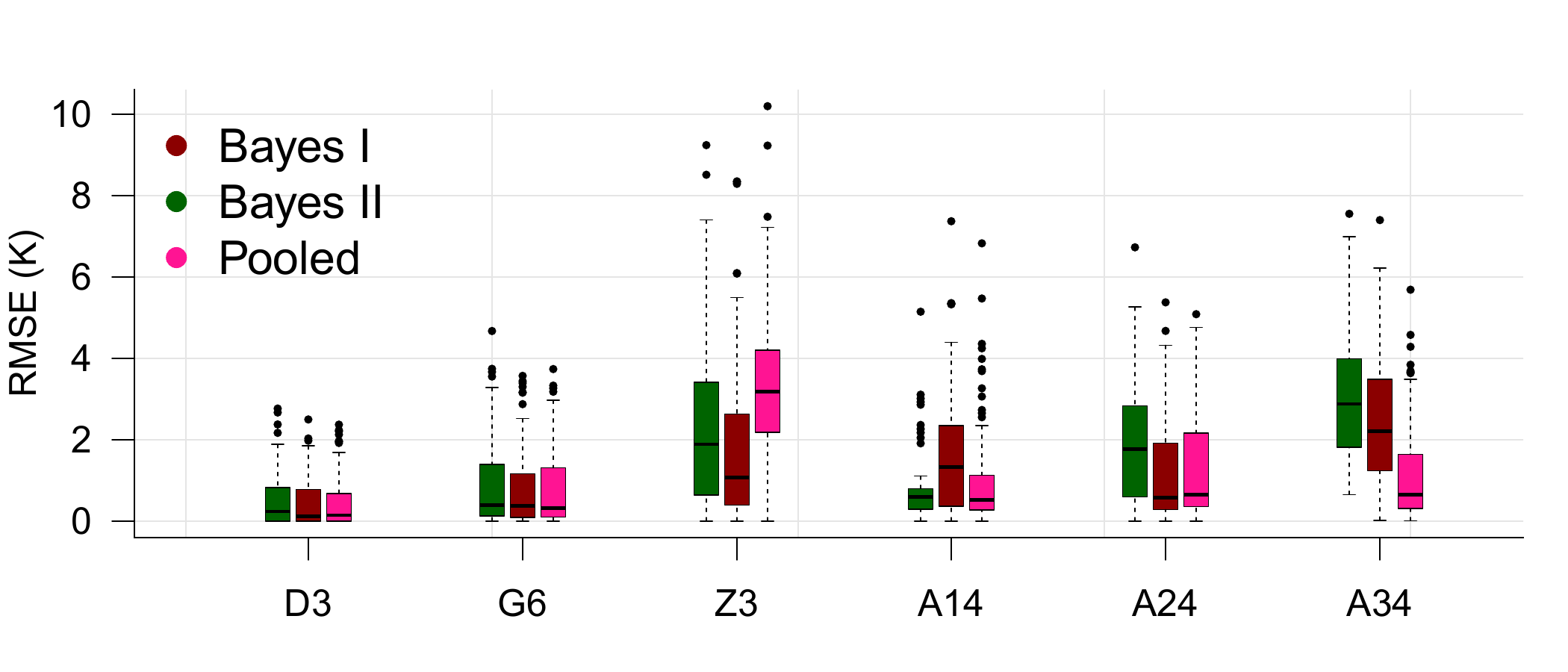}
        \\
        \includegraphics[width=0.485\linewidth]{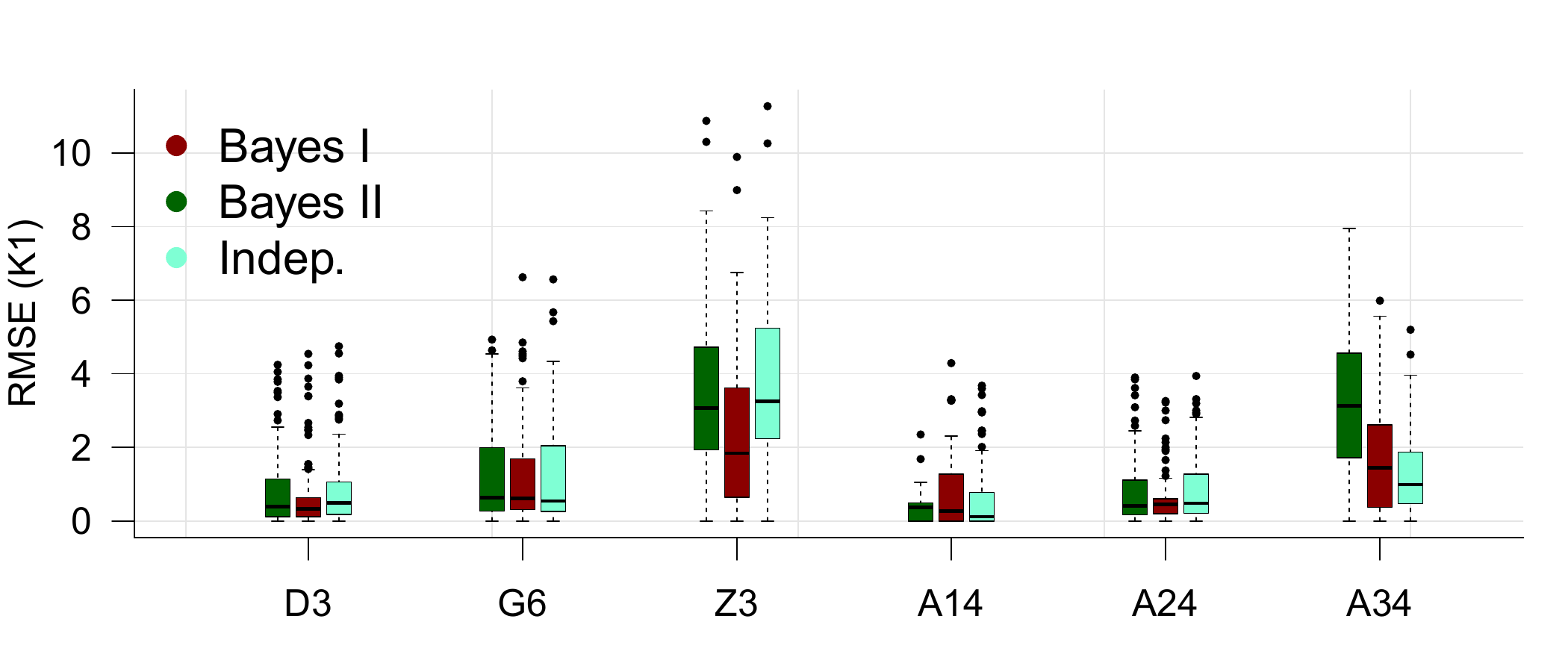}
        \hfill
        \includegraphics[width=0.485\linewidth]{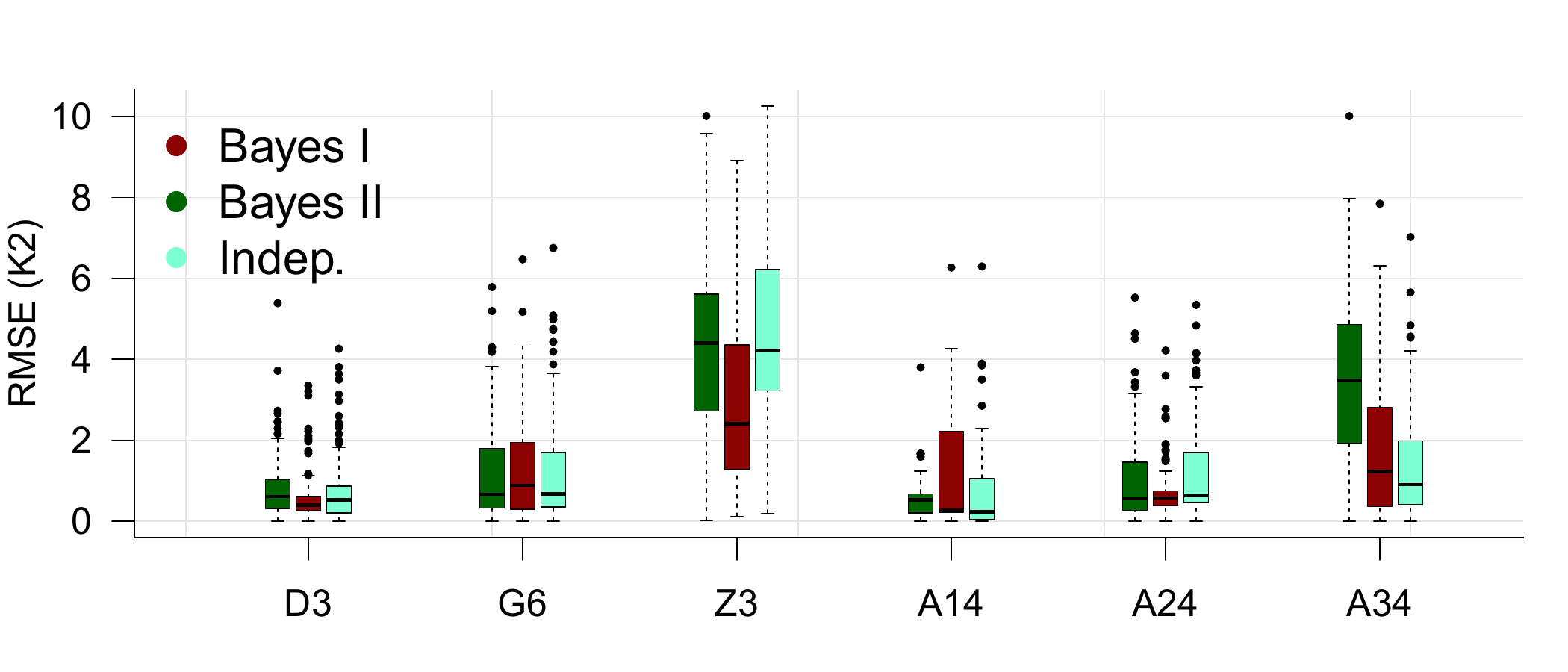}
        \hfill
        \caption{Experiment 1: RMSE of out-of-sample predictions for new shared species (top-left panel), new global distinct species (top-right panel), and new local distinct species (bottom-left panel) and (bottom-right panel) across selected scenarios.}
        \label{fig:SS_Pred_1}
    \end{figure}
    Figure \ref{fig:SS_Pred_1} highlights that \textit{Bayes I} and \textit{Bayes II} are the only procedures that provide simultaneous out-of-sample predictions for all four target quantities. In contrast, \textit{Pooled} and \textit{Independent} focus on a single quantity and therefore cannot deliver a coherent joint assessment. Notably, our method is the only one that provides predictions for the number of new shared species. Overall, our estimators remain competitive with the benchmark methods while offering the advantage of relying on a unified model at the cost of only one additional parameter. %and an additional parameter rather than targeting only one component. 
    The main exception is scenario $A_{3,4}$, as expected: this configuration is generated assuming $c=1$ in Equation \eqref{eq:gen_muller}, i.e., under a single common population. Hence, %under the \textit{Pooled} assumptions, and the %corresponding 
    \textit{Pooled} estimator achieves more accurate predictions for the number of new global distinct species.
    
    The scenario $Z_3$ is the most challenging, as evidenced by uniformly larger RMSEs in all quantities. setting \textit{Bayes I} improves over the competitors, while \textit{Bayes II} outperforms \textit{Pooled} for global distinct discoveries and remains comparable to \textit{Independent} for local discoveries. This is particularly relevant since, in $Z_3$, the estimate of the total number of species $K_n+M^\star$ is out of scale (see Table \ref{tab:SS1_paramest}), indicating that our predictive performance is robust to misspecification (Zipf's weights in $Z_3$ differ markedly from the Dirichlet-type decay). 
    A similar phenomenon occurs in scenario $A_{1,4}$. Here, the likelihood-based fit underlying \textit{Bayes I} is strongly misspecified because, by construction, there are no shared species. Nevertheless, the predicted number of new shared species is correctly concentrated at zero, and the remaining quantities still exhibit small median RMSEs (typically below one), although with slightly higher variability.
    In the same scenario, \textit{Bayes II} behaves differently: since its fitting is driven by a diversity-based strategy, it relies more directly on observed summary quantities and less on the full parametric specification of the data-generating model. Hence, it yields substantially more accurate predictions than \textit{Bayes I}, up to a small residual error on shared-species discovery, which is not fully shrinking to zero. Finally, regarding the estimation of $K_n+M^\star$, both approaches recover the true value under the model in $D_3$ and under geometric weights in $G_6$; the mild underestimation observed in $A_{2,4}$ and $A_{3,4}$ does not appear to compromise the predictive performance of the methods.

    Finally, Figure \ref{fig:SS1_ExecTime} shows the computational time required by \textit{Bayes I} and \textit{Bayes II} to fit the model and perform 
    out-of-sample predictions.
    A key advantage of \textit{Bayes II} is that the cost of parameter estimation is independent of the observed sample size, while \textit{Bayes I} directly involves the computation of the marginal likelihood and therefore becomes more expensive as the sample size increases. More broadly, these results highlight the substantial computational benefit of our approach relative to previous solutions available in the literature. Thanks to closed-form expressions, prediction can be performed in less than half a second, which is comparable to the time required for parameter estimation via \textit{Bayes II} and negligible relative to the cost of fitting the model via \textit{Bayes I}. Such computational times are typically unattainable for more complex Bayesian nonparametric models that require MCMC methods.

    % \ACnote{ ------------------- $n_1 \ll n_2$ ------------------- }\\
    % \ACnote{Andrà in supplementary\\
    % As anticipated, we repeat the experiment in the unbalanced case $n_1<n_2$. Specifically, we let $n_2$ vary in the set $\{100,200,\ldots,1000\}$ and set $n_1 = n_2/5$. Results are shown in Figures \ref{fig:SS_Pred_n1small} and \ref{fig:SS_Pred_Add_n1small} for the same settings reported above. 
    % }

    \subsection{Experiment 2}
    \label{subsection:SS_exp2}
    The second experiment assesses the ability of our model to estimate the one-step ahead probability of discovering a new local or global distinct species, as well as a new shared species. 
    The estimator for the shared discovery probability is given in Equation \eqref{eqn:Sh0_1_step} while those related to local and global quantities are reported in Equations \eqref{eqn:PrK_1step_full} and \eqref{eqn:PrKj_1step_full}. We compare our estimators with those discussed in Section \ref{subsection:data_generation}.
	
	Given $\left(P_1,P_2\right)$ generated as described in Section \ref{subsection:data_generation}, we evaluate all competing methods on a grid of sample sizes $n_1 = n_2 = \{50,100,\ldots,400\}$. % and set $n_2=n_1$. 
	The experiment is then repeated for $100$ independently generated datasets.
	Results are compared with the true probability of discovering a new species, which is given in Equation \ref{eqn:Pr1step_true}.
    Figure \ref{fig:SS_Pr1step_1} reports the results for selected scenarios $(D_3,G_6,Z_3,A_{1,4},A_{2,4},A_{3,4})$ and only for sample size $n_1=n_2=400$. 
    Results for all remaining configurations are reported on the Github repository \url{https://github.com/alessandrocolombi/HSSM}.

    \begin{figure}[ht!]
        \centering
        \includegraphics[width=0.485\linewidth]{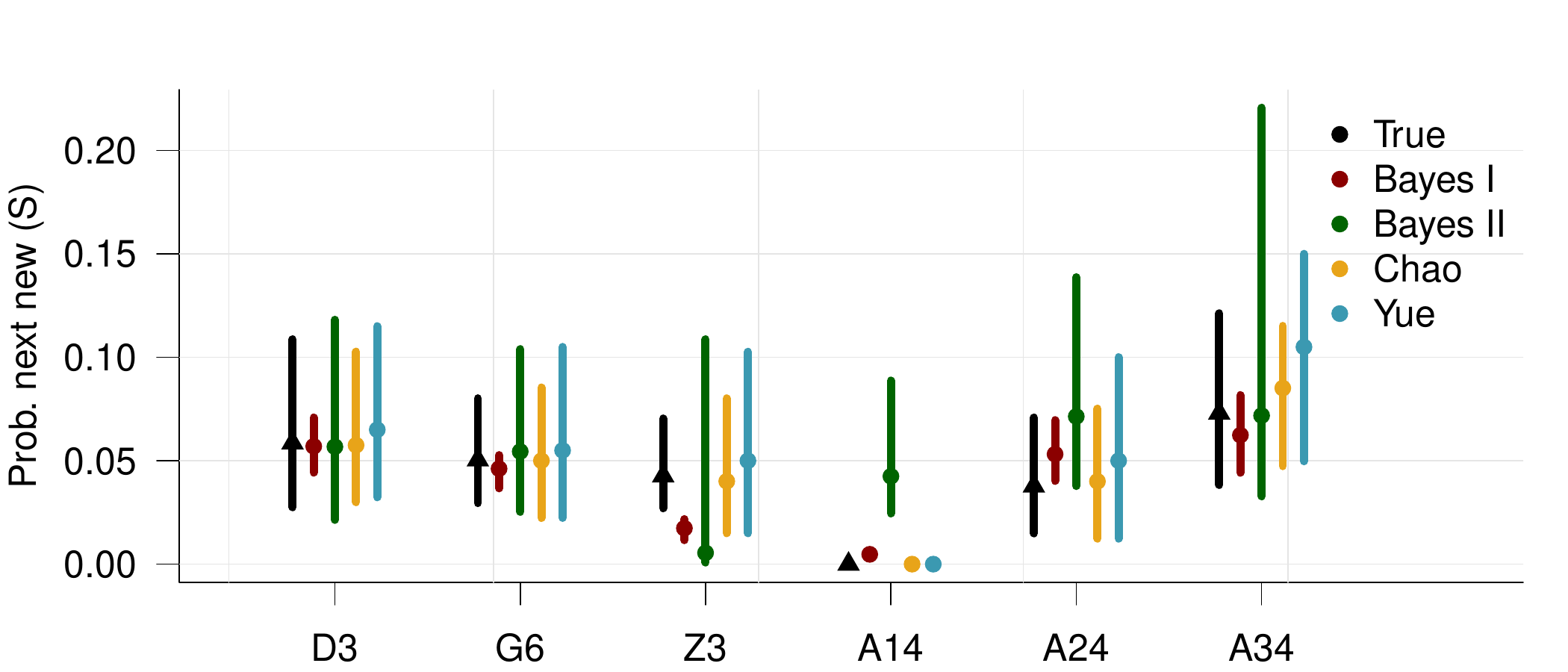}
        \hfill
        \includegraphics[width=0.485\linewidth]{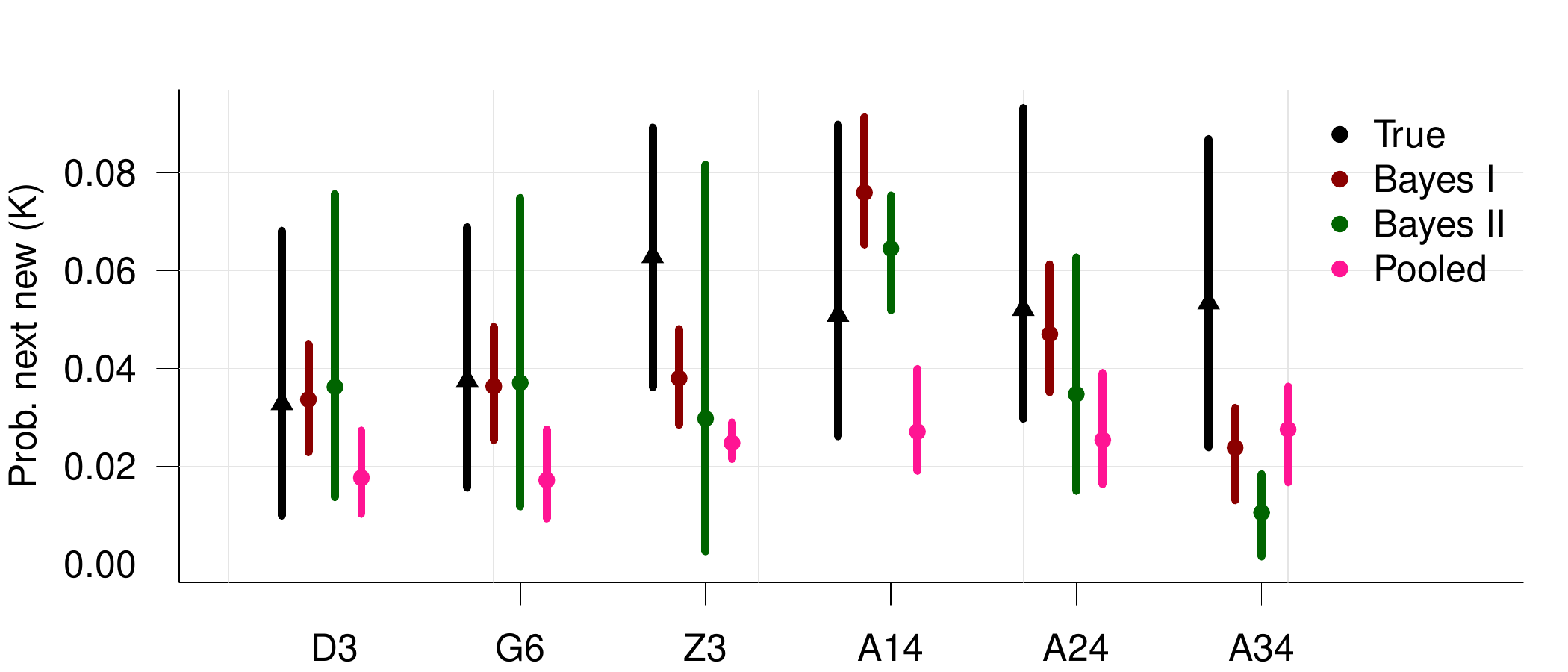}
        \\
        \includegraphics[width=0.485\linewidth]{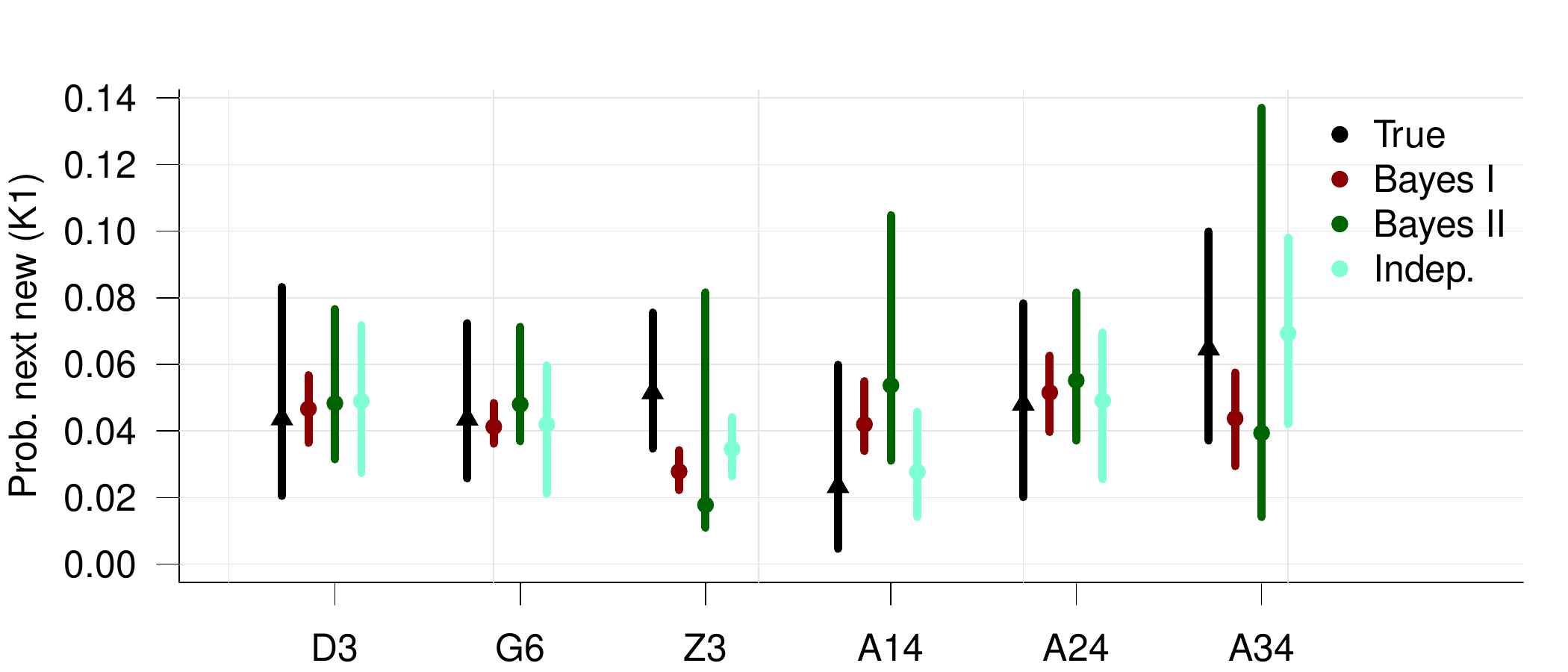}
        \hfill
        \includegraphics[width=0.485\linewidth]{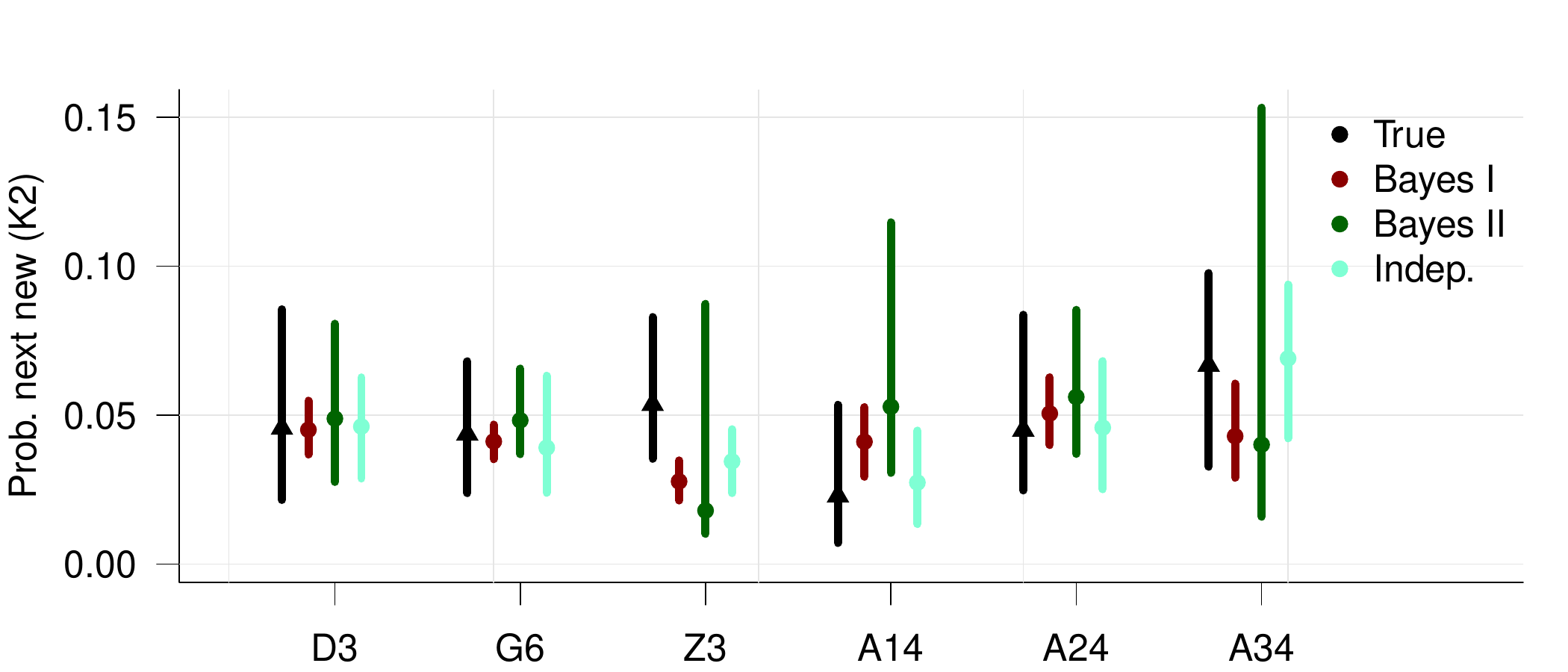}
        \caption{Experiment 2: one-step-ahead prediction probability for new shared species (top-left panel), new global distinct species (top-right panel), and new local distinct species (bottom-left panel) and (bottom-right panel) across selected scenarios.}
        \label{fig:SS_Pr1step_1}
    \end{figure}

    Figure \ref{fig:SS_Pr1step_1} reports the one-step-ahead discovery probabilities, where the black line represents the oracle benchmark; therefore, estimates closer to the black curve indicate better performance. Overall, \textit{Bayes I} exhibits very low variability across replications, whereas \textit{Bayes II} is noticeably more variable, reflecting its different parameter-fitting strategy. In terms of accuracy, both Bayesian estimators systematically improve over the \textit{Pooled} baseline and deliver results that are generally comparable to \textit{Independent}. Importantly, this experiment also includes competitors for shared-species discovery probabilities: both \textit{Bayes I} and \textit{Bayes II} are broadly in line with the state-of-the-art frequentist estimators. Finally, as in the experiment in Section \ref{subsection:SS_exp1}, we note that our method is the only model-based approach that targets all four quantities simultaneously; moreover, unlike the frequentist competitors, it is not restricted to one-step-ahead prediction and naturally extends to $m$-step-ahead coverage probabilities, as illustrated in Section \ref{subsection:SS_exp_mappette}.

    % \ACnote{ ------------------- $n_1 \ll n_2$ ------------------- }\\
    % \ACtext{
    % As anticipated, we repeat the experiment in the unbalanced case $n_1<n_2$. Specifically, we let $n_2$ vary in the set $\{100,200,\ldots,1000\}$ and set $n_1 = n_2/5$. Results are shown in Figures \ref{fig:SS_Pr1step_n1small} and \ref{fig:SS_Pr1step_Add_n1small} for the same settings reported above. 
    % }

    \subsection{Further experiments}
    \label{subsection:further_experiments}
    Additional experiments are reported in the Supplementary material to further illustrate two key aspects of our approach. 
    First, in Section \ref{subsection:SS_exp_mappette}, we investigate $m$-steps-ahead discovery probabilities, showing how the probability of observing at least one new local, global, or shared species varies with both the current sample size and the planned future sampling effort; this provides a practical tool for designing stopping rules and quantifying diminishing returns. 
    Second, in Section \ref{subsection:SS_exp4}, we compare our predictive estimator of shared species with the shared-richness estimator of \cite{Chao2000}, clarifying differences in assumptions and interpretations between the prediction of future discoveries and the estimation of unobserved species richness.

	\section{Analysis of ants data}
	\label{section:application}
   % \ACnote{Accorciare e sintetizzare. Valutare se fare una brevissima introduzione che rimanda poi all'appendice dove vengono introdotti in modo esteso come fatto qua.}
	We illustrate the methodology outlined in the previous sections to analyse a real-world dataset coming from a case study conducted by \cite{Zara21} to evaluate %a case study aimed at evaluating 
    the impact of urbanisation on biodiversity. %, with a specific focus on the beneficial effects of urban green areas. 
    Although urbanisation is often cited as one of the main drivers of species extinction due to factors such as pollution, changes in land use, and the introduction and spread of alien species, the authors argue that urban green spaces can serve as important refuges %, not only promoting human well-being but also 
    supporting high levels of species diversity. %This is because resources such as water and food are readily available in these areas, fostering biodiversity in potentially complex communities.
To investigate this topic, the researchers selected several sites in Trieste, a city in north-eastern Italy, representing different levels of urbanisation.    
	%To investigate this topic, the authors conducted their study in the urban setting of Trieste, a city in north-eastern Italy. To account for areas with varying levels of urbanisation, \cite{Zara21} identified %six
	%a few of distinct sites in the city. 
    For our analysis, %however, we focus solely
	we focus on two of these areas because of their ecological distinction: Bosco Bovedo (BB) and Orto Lapidario (OL).
	The former is a semi-natural urban forest located just outside residential areas, acting as a transition zone between urban and natural environments. The latter is a city park within a local museum complex.  The two selected sites are sufficiently far from each other (about 6 kilometres) to rule out migratory contamination between the areas. 
	%Hence, since spatiotemporal effects are negligible, 
    Hence, we use the methodology introduced in Section \ref{section:VecFDP} to model the two areas using separate distributions, thereby accounting for possible differences in the	underlying distributions due to their spatial locations.

	Ground-dwelling arthropods were sampled using pitfall traps. %In particular, each site was sampled three times during the spring and summer months, with ten pitfall traps randomly placed in each area, maintaining a minimum distance of 20 metres between traps.
	Formicidae were identified at the species level on the basis of their morphology, and their abundance was recorded.
    Overall, a total number of $2,971$ and $3,489$ ants were collected in the first area (BB) and in the second area (OL), respectively. However, $2037$ out of $2971$ (around $68\%$) observations in the first group belong to the same species (crematogaster schmidti), which we excluded from the analysis. Similarly, in the second area (OL), $1229$ out of $3489$ (around $35\%$) also belong to the same species (pheidole pallidula), which we also excluded. This procedure is common in ecological studies, as highly abundant species are known to contribute little to understanding species diversity. Hence, the dataset analysed consists of $n_1 = 934$ and $n_2 = 2235$ observations. The observed number of local distinct species is $r_1 = 17$ and $r_2 = 23$, while the global number of distinct species and the number of shared species are $r = 30$ and $t = 10$, respectively. The observed species proportions, sorted in decreasing order, are displayed in Figure \ref{fig:Ants_proportions_empirical}, while the species accumulation curves are reported in Figure \ref{fig:Ants_AccCrvs}.

    %%%%%%%%%%%%%%%%%%%%%%%%%%%%%%%%%%%%%%%%%%%%%%%%%%%%%%%%%%% 
    % Da qui

    To analyse the ants dataset, we employ the same strategy introduced in the simulation study. 
    First, we evaluate the prediction performance in terms of the number of new shared and distinct species in an additional, unobserved sample. Secondly, we assess the $m$-steps-ahead prediction performance, starting from $m=1$. For both cases, the competing approaches are those described in Section \ref{subsection:data_generation}.

    Since this is a real dataset and the true future discoveries are not observable, we adopt a training and test split strategy to empirically assess predictive performance.
    Specifically, we partition the observed data into a training set and a test set: the training set is used to estimate the model parameters, which are then plugged into the proposed estimators to predict the number of new species in the test set.
    We repeat the analysis for several training set proportions, namely $\{10\%,\,30\%,\,50\%,\,70\%,\,90\%\}$.  For brevity, we report here only the results corresponding to a $50\%$ training set, with the remaining $50\%$ used as test data. The remaining proportions are deferred to Section \ref{app:application}.
    The experiments are repeated over $100$ independent splits obtained by sampling without replacement from the full dataset. In particular, each split is constructed so as to preserve the original imbalance between the two areas.
     
  Following Section \ref{subsection:SS_exp1}, the predictive performance on the test set is evaluated using RMSE, reported in the left panel of Figure \ref{fig:Ants_application}. The benefits of our approach are particularly evident for shared species, for which no natural competitors are available. Improvements are also observed for the prediction of new local distinct species in the first area, which is the one with the smaller sample size. Here, \textit{Bayes I} achieves a lower RMSE than the \textit{Independent} estimator,
  likely because it borrows strength information from the second area through the dependence structure induced by the model.
  This interpretation is supported by Figure \ref{fig:Ants_application_Pred_rest}, which reports the additional cases where the training set is smaller ($10\%$ and $30\%$). The proposed estimators, especially \textit{Bayes I}, achieve lower prediction error than the competitor models.
  
    Regarding the one-step-ahead prediction, the right panel of Figure \ref{fig:Ants_application} reports the estimated discovery probabilities under \textit{Bayes I}, \textit{Bayes II}, and the competing methods for a total sample size $n=350$. Additional results for other sample sizes, $n\in\{50,\,250,\,450,\,600\}$, are reported in Figure \ref{fig:Ants_application_Pr1step_rest}. We did not consider larger sample sizes since the probabilities are too small to be meaningful. 
    Regarding the shared species discovery probability, the lower tails of the intervals for both \textit{Yue} and \textit{Chao} include zero; therefore, they do not provide any guarantee of discovering a new shared species, despite the fact that the observed data suggest that not all shared species have been detected yet. In contrast, our model %does not suffer from this drawback, 
    yields a discovery probability for new shared species that is strictly greater than zero. Furthermore, the probability of discovering a new global distinct species is smaller under the \textit{Pooled} estimator than under our proposed estimators.
    Overall, the probabilities involved are very close to zero, in agreement with the accumulation curves in Figure \ref{fig:Ants_AccCrvs}, which appear close to saturation. This suggests that discovering something new in a single additional draw is unlikely.

    %\ACnote{Nuovo}\\
      For this reason, we strengthen our analysis by considering the $m$-steps-ahead discovery probability, with $m>1$, of observing at least one new shared or distinct species. To illustrate this, we propose a two-dimensional visualisation of discovery probabilities across both the current sampling effort and the planned future effort. Specifically, for each quantity of interest, we represent the probability of making at least one new discovery as a function of: (i) the current sample size (horizontal axis), considering equally spaced sizes from $n=158$ to $n=2218$, and (ii) the size of the additional sample (vertical axis), from $m=1$ to $m=1000$ in steps of $10$. 
    The resulting heatmaps provide an immediate summary of the expected gain from further sampling and can be used in practice to decide whether it is worth investing additional resources or whether the experiment is already sufficiently exhaustive. %\LPcanc{Importantly, this assessment is obtained under a unified model that jointly describes all relevant quantities and supports $m$-step-ahead predictions.}
    For the sake of space,  we focus only on the \textit{Bayes I} implementation; see Figure \ref{fig:Ants_application_mappette}. The figure confirms that the one-step-ahead analysis is not informative for any practical stopping rule since even at $n=158$ the discovery probability is already close to zero. 
    As the future sample size increases, the discovery probability grows and eventually approaches one, thereby providing a concrete, interpretable notion of the sampling effort required to achieve a desired chance of observing something new. Moreover, for fixed future effort, discovery probabilities tend to decrease as the sample size $n$ increases, reflecting diminishing returns: once many observations have already been collected, substantially larger additional samples are needed to reach the same probability of discovering new distinct or shared species. Finally, the discovery probabilities for new local distinct species in the second area decrease much more quickly as the observed sample size increases than in the first area, reflecting the marked imbalance between the two sites.

    \begin{figure}[ht!]
    	\centering
    	\includegraphics[width=0.49\linewidth]{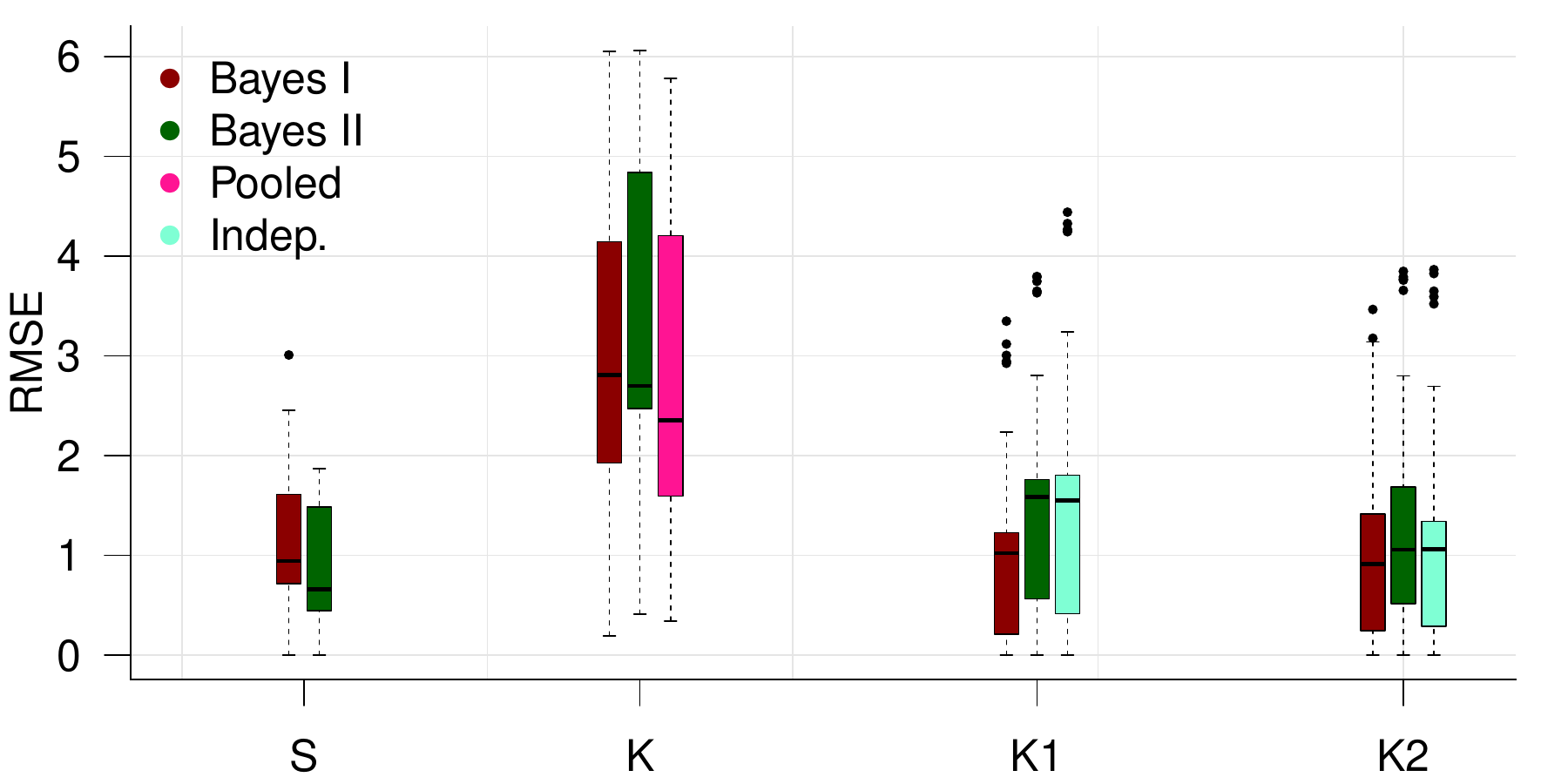}
    	\hfill
    	\includegraphics[width=0.49\linewidth]{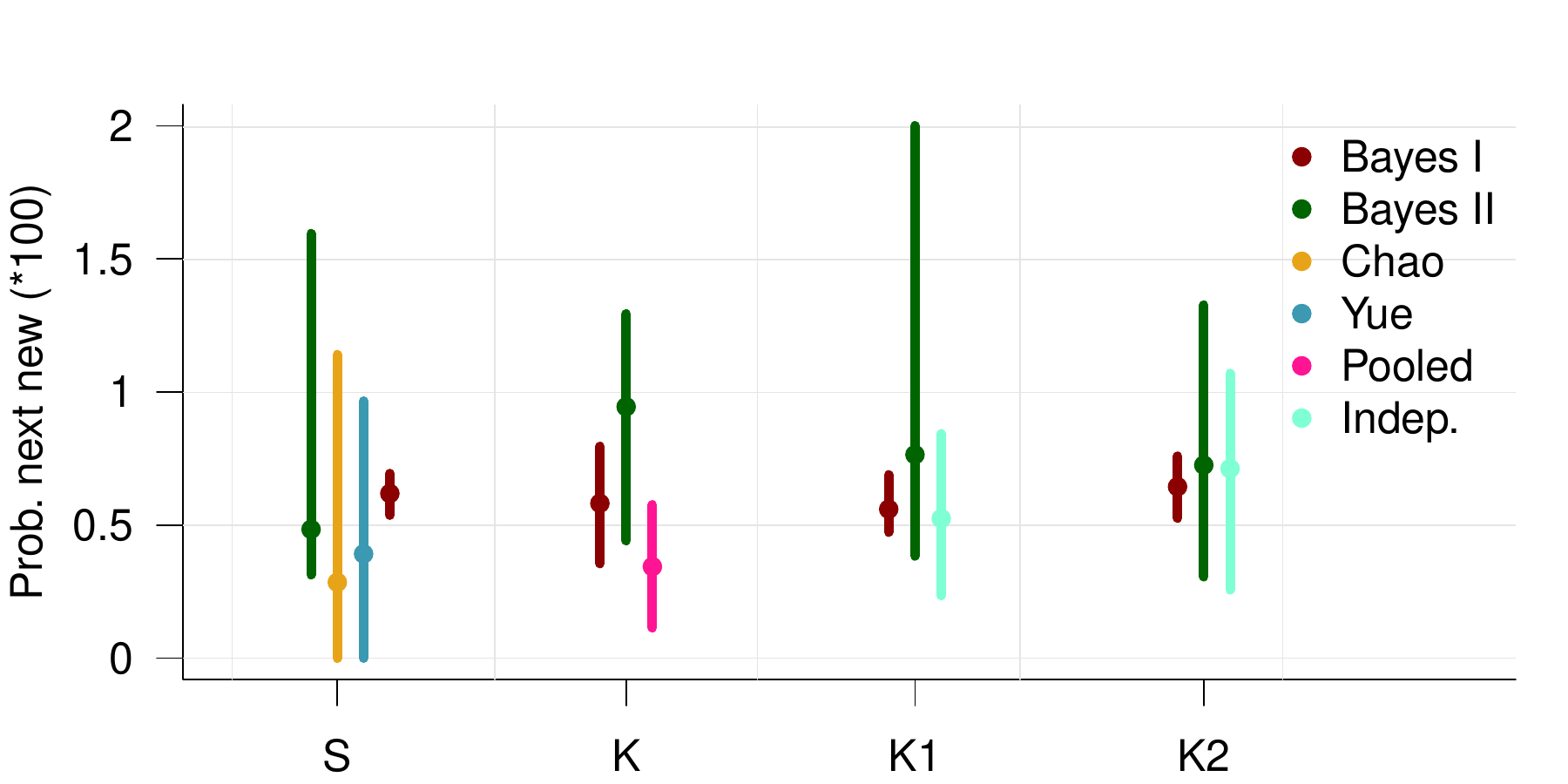}
    	\caption{RMSE of out-of-sample predictions (left panel) and one-step-ahead discovery probability (right panel) for shared (S), global distinct (K), and local distinct (K1, K2) species. Probabilities on the rightmost plot have been multiplied by $100$.}% to improve readability.   }
    	\label{fig:Ants_application}
    \end{figure}
    
    \begin{figure}[ht!]
        \centering
        \includegraphics[width=0.47\linewidth]{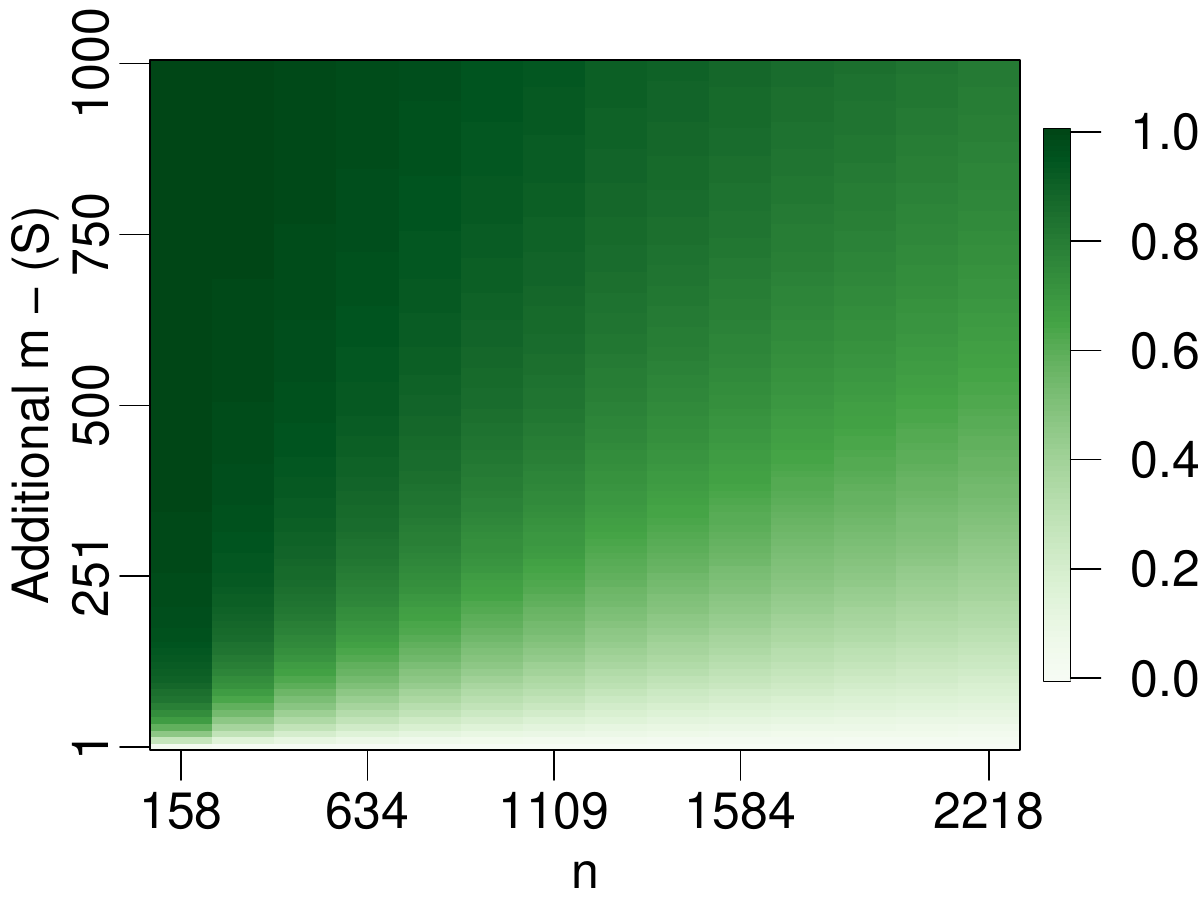}
        \hfill
        \includegraphics[width=0.47\linewidth]{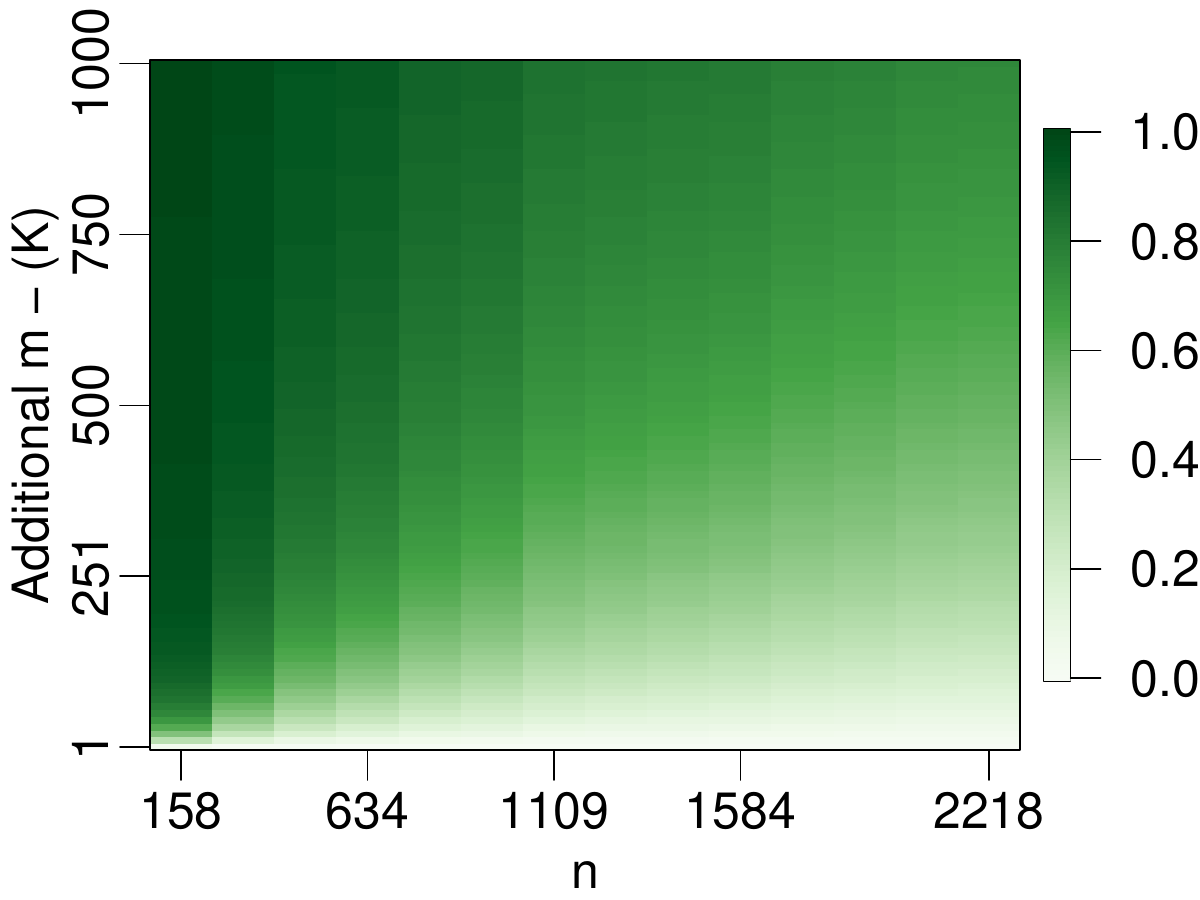}
        \\
        \includegraphics[width=0.47\linewidth]{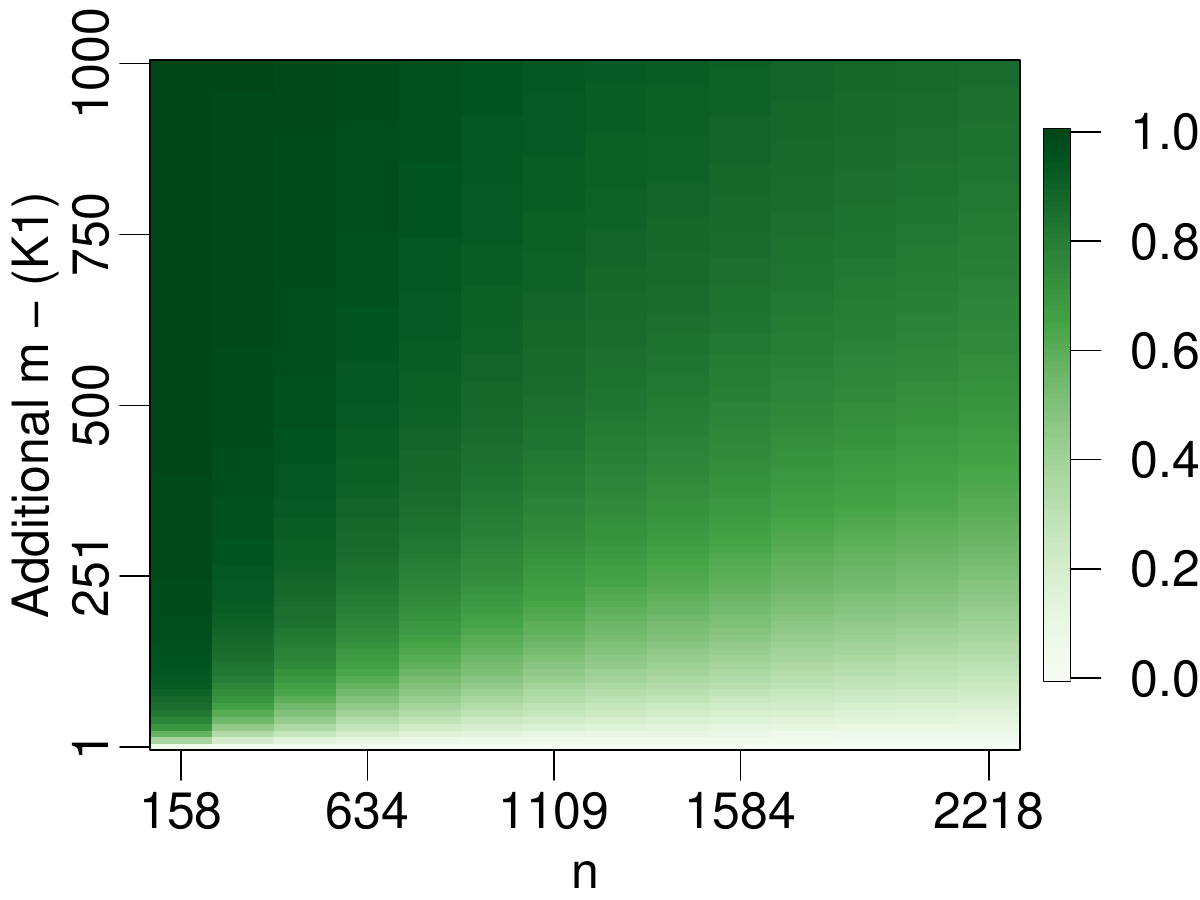}
        \includegraphics[width=0.47\linewidth]{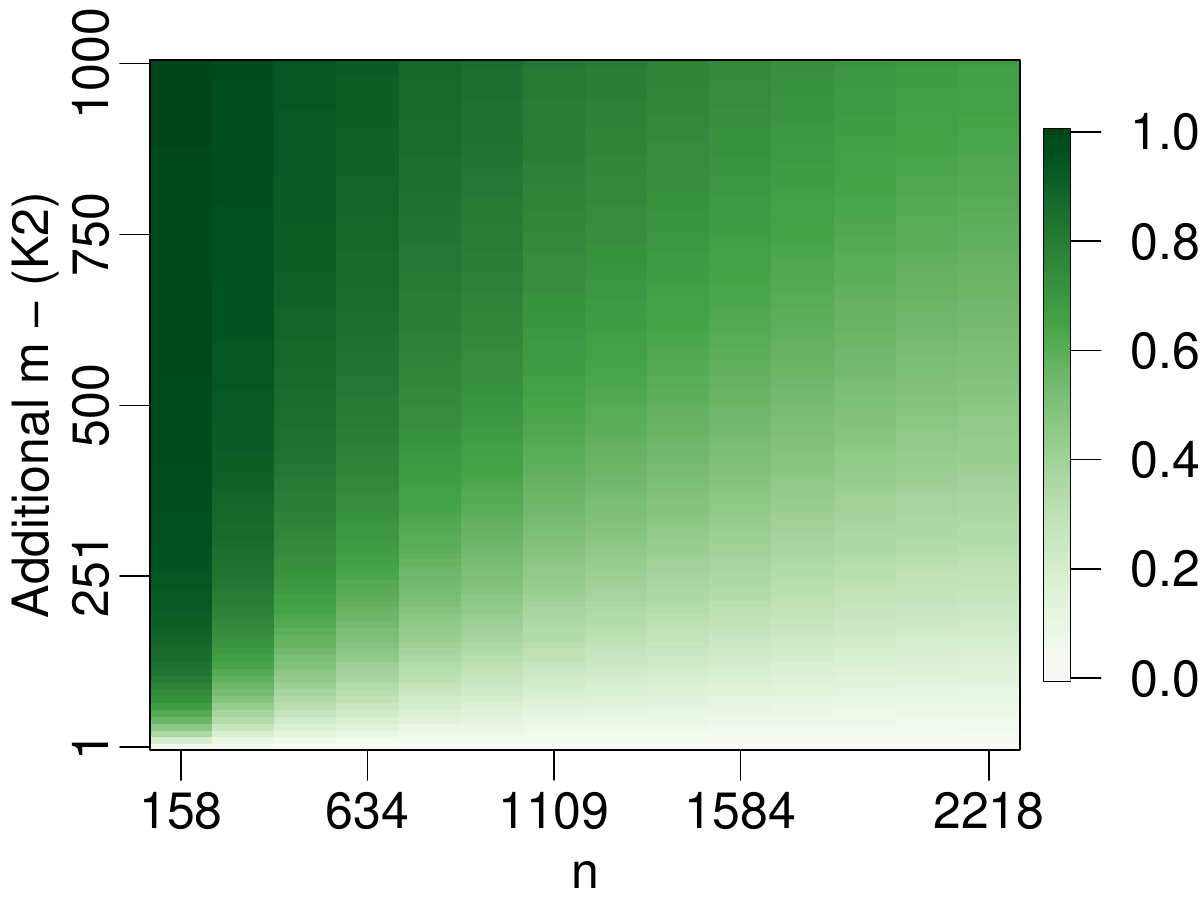}
        \caption{$m$-steps-ahead prediction probabilities for new shared species (top-left), global distinct species (top-right), and local distinct species (bottom-left) and (bottom-right). The x-axes display the total number of observations used in the analysis.}
        %Estimation is performed via \textit{Bayes I}. }
        \label{fig:Ants_application_mappette}
    \end{figure}
    
    \section{Discussion}
    \label{section:discussion}
    % 6) Discussion
    % Regole Biometrika:
    % It is only needed if it contains new material such as open questions and future research areas. It should not repeat the summary or reiterate the contents of the paper
    The increased mathematical complexity has led to limited exploration of the species sampling problem across multiple areas in the literature, compared to the single-area case. In this work, we have introduced for the first time a model-based approach within the Bayesian nonparametric setting that relies entirely on closed-form expressions and exact calculations. Specifically, we focused on two primary objectives: predicting the number of distinct and shared species in an additional, unobserved sample of size $m \geq 1$, and 
    estimating the discovery probability of new shared species at $m$-steps ahead. %, and predicting the number of new shared species in an additional, unobserved future sample of sizes $m \geq 1$.
  % \LPcanc{A diversity-based strategy     has been employed to estimate the model parameters. Nevertheless, the Bayesian estimators of the diversity indices can be also investigated.}
    
    We argue that our work provides a foundation for significant future developments. For instance,
    % For instance, the most stringent assumption we made is that the two observed samples are generated from \ACtext{the same set of species}, manifesting with different proportions across the two groups. 
    % \ACtext{An intriguing direction for future research would be to relax this assumption, thus allowing the number of species, $M$, to be group-specific, i.e., $M_j$. Alternatively, a similar effect could be achieved by placing an asymmetric Dirichlet distribution on the species proportions, i.e., $(w_{j,1},\ldots,w_{j,M})\mid M\sim \operatorname{Dir}_{M}(\gamma_{j,1},\ldots,\gamma_{j,M})$, while allowing some of the $\gamma_{j,m}$ parameters to shrink towards zero. This second approach would serve as a Bayesian nonparametric counterpart to the model employed in \cite{Chao2000} and its subsequent extensions.}\ACnote{ Al ref non piace..}
    the exact formulas presented in this work require the computation of the generalised factorial coefficients, whose evaluation scales quadratically with the sample size. We hope that a more in-depth investigation into the asymptotic behaviour of our model may pave the way for suitable approximations that make our solutions more scalable, similar to how \cite{favaro2009} improved upon \cite{Lij(07)} in the exchangeable case.
    In our experience, computing the coefficients $V$ does not constitute an additional computational bottleneck, as they converge rapidly. However, the infinite summation complicates the study of additional properties of our model, such as the expected values of the statistics of interest. 
    It remains an open question whether there exists a distribution $q_M$ such that Equation \eqref{eqn:Vprior} admits an explicit solution, similarly to Gnedin's model \citep{gnedin2010} in the single-group case.
    This is not the only open question about these $V$ coefficients. %, defined as in Equation \eqref{eqn:Vprior}. 
    Indeed, in the paper, we began with the model \eqref{eqn:partial_ex}, where the $P_j$'s are specified as in Equation \eqref{eqn:Pj_def}, to derive the
    $V$ coefficients, the pEPPF and a system of predictive rules. An alternative approach, similar to \cite{GnedinPitman2006}, would focus on the definition of a suitable set of $V$ coefficients that satisfy Equation \eqref{eqn:Vprior_recurrence} to obtain the  pEPPF and a system of predictive rules. This strategy would lead to the definition of Gibbs-type priors in a dependent framework, with the model analysed here being a specific example. We wonder if it is possible to identify other tractable examples within this class.

    A possible application for our framework is to the multi-armed setting, where the goal is to select the area to sample from so in order to maximise the probability of discovering a new global distinct species. A Bayesian nonparametric approach to this problem was first developed in \cite{Battiston2018MultiArmed} under the Hierarchical Pitman-Yor process and subsequently extended to the class of multivariate species sampling models in \cite{franzolini25}, which includes the $\operatorname{Vec\mbox{-}FDP}$ prior. The resulting decision rule can be expressed in terms of a functional of the $\operatorname{Vec-FDP}$ posterior, whose explicit form is given in \cite{colombi2023mixture}, including the case of more than two areas ($d>2$), which is typically the most relevant in these applications.
    
An open challenge is how to generalise our theoretical results to more than two areas. Notably, the main difficulty is intrinsic to the combinatorial structure of the multiple areas species problem. 
For $d=2$, the description of the observed sample can be expressed in terms of the local numbers of distinct species and the global number of distinct species, from which the number of shared species follows deterministically (see, e.g., Equation \eqref{eqn:linear_rel_post}). However, already for $d=3$, this relationship no longer holds. 
Indeed, a joint description of the set of quantities of interest requires additional information beyond the local and global counts, namely the statistics for each pair of areas. %, e.g., the number of distinct species observed in the union of areas $j$ and $j^\prime$), the number of species shared between $j$ and $j^\prime$, and the number of species observed in $j$ and $j^\prime$, but absent from the remaining area. 
As $d$ increases, one must account not only for all pairwise statistics but also for triple-wise statistics, and so on, with the number of required quantities growing exponentially. This rapidly makes the corresponding distributional theory non-scalable in $d$. A direction for future work is to investigate alternative strategies --beyond the combinatorial marginalization arguments used here--  that directly target the global number of distinct and shared species, without requiring integration over the full collection of overlap statistics.

   % Finally, we conclude by mentioning that in this work, we assume that the abundance of each species has been observed. However, another line of research focuses on the case where only the presence/absence of a species (incidence data) is recorded. In such cases, different statistical models are employed, and we refer to \cite{ghilotti2024} for a recent contribution in a Bayesian nonparametrics setting. 

%%%%%%%%%%%%%%%%%%%%%%%%%%%%%%%%%%%%%%%%%%%%%%
%% Acknowledgements                         %%
%% should be provided in the                %%
%% Acknowledgements section.                %%
%%%%%%%%%%%%%%%%%%%%%%%%%%%%%%%%%%%%%%%%%%%%%%
\noindent{\textbf{Acknowledgments:}}\\
This research was largely conducted while Alessandro Colombi was a Ph.D. student at University of Milano-Bicocca, Italy. We would like to thank Giovanni Bacaro (University of Trieste) for supporting with the ants data and Mario Beraha and Tommaso Rigon (University of Milano-Bicocca) for the helpful discussions. We thank the Editorial Board and the two Referees for their valuable comments and suggestions.
The first and the third authors were supported by the European Union -- Next Generation EU funds, component M4C2, investment 1.1., PRIN-PNRR 2022 (P2022H5WZ9). The second author was partially supported by the European Union -- Next Generation EU funds, component M4C2, investment 1.1., PRIN-PNRR 2022 (P2022AW4LX).

%%%%%%%%%%%%%%%%%%%%%%%%%%%%%%%%%%%%%%%%%%%%%%%%%%%%%%%%%%%%%%%%%%%%%%%%%%%%%
% Supplementary Material
%%%%%%%%%%%%%%%%%%%%%%%%%%%%%%%%%%%%%%%%%%%%%%%%%%%%%%%%%%%%%%%%%%%%%%%%%%%%%
\newpage
\setcounter{page}{1} 
\setcounter{equation}{0}
\setcounter{section}{0} % <--- Change here
\setcounter{table}{0}
\setcounter{figure}{0}
\setcounter{prop}{0}
\renewcommand\thesection{S\arabic{section}}
\renewcommand\thetable{S\arabic{table}}
\renewcommand\thefigure{S\arabic{figure}}
\renewcommand\theequation{S\arabic{equation}}
\renewcommand\theprop{S\arabic{prop}}

\begin{center}
   \LARGE Supplementary materials for:\\
    "Bayesian discovery of species in multiple areas"
\end{center}

\section{Quantities of interest}
\label{app:quantities_details}

\subsection{Prior quantities}
\label{app:prior_quantities}
In this section, we present and detail the prior quantities of interest by means of an example, reported in Figure \ref{fig:esempio_animali_prior}, which shows the observed samples in two different areas. Moreover, Table \ref{tab:prior_quantities_def} summarises the notation and the meaning of each random variable.

\begin{figure}[ht!]
	\centering 
	\includegraphics[width=0.8\linewidth]{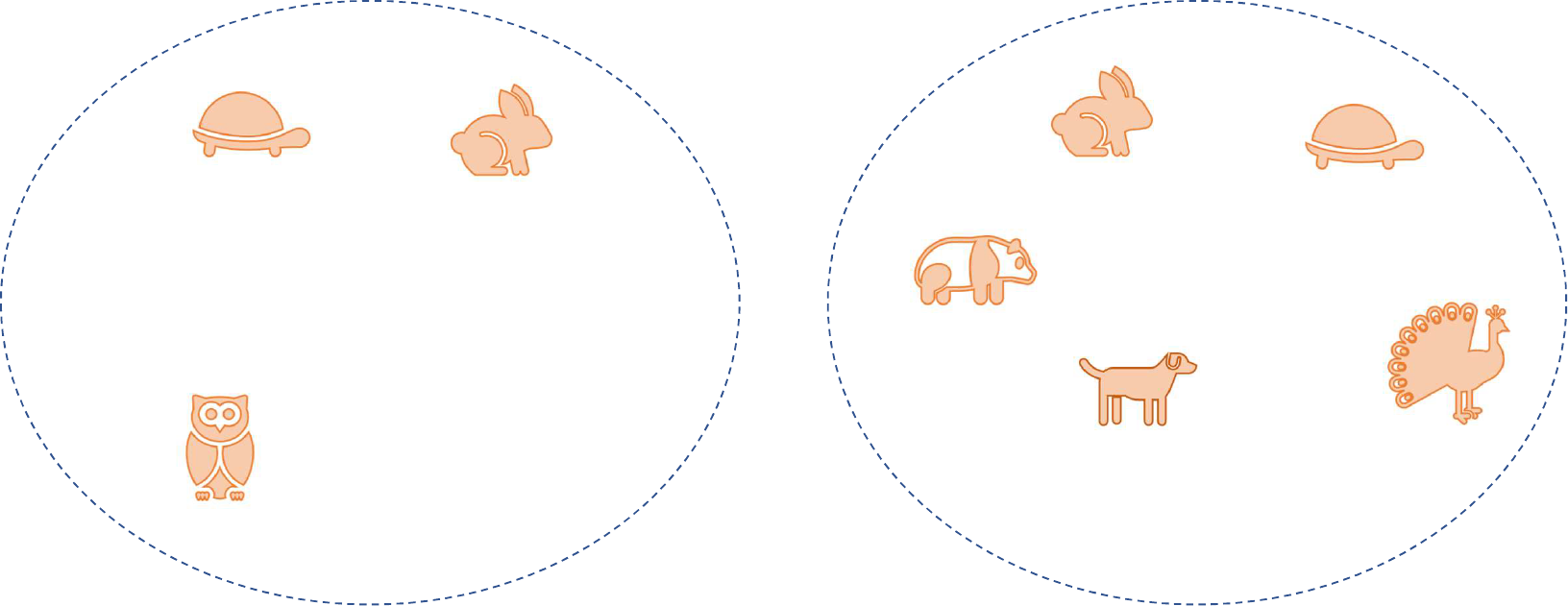}
	\caption{ Observed species in the observed sample. Each dotted circle delimits an area.}
	\label{fig:esempio_animali_prior}
\end{figure}

The first area is composed of $K_{1,n_1}=3$ distinct species. Following the notation of Section \ref{section:VecFDP}, this is a local quantity because it does not require any knowledge of the second area to be computed. Similarly, the local number of distinct species in the second area, $K_{2,n_2}$, equals five.
In general, we denote the observed values of $K_{j,n_j}$ as $r_j$, for $j=1,2$.
Then, moving to global quantities, i.e., those that require both areas to be computed, let the global number of distinct species be $\mathcal K_{n_1,n_2}$. This can be computed by pooling all species and discarding those that are repeated twice. In our running example, $\mathcal K_{n_1,n_2} = 6$ and, in general, this is denoted as $r$. We highlight that $r\geq r_j$, $j=1,2$, because each local distinct species is also a global distinct species, but we have $r\leq r_1+r_2$, because some species may appear in both areas. These are called shared species; the associated random variable is denoted as $\mathcal S_{n_1,n_2}$, and its realisation is $t$. In Figure \ref{fig:esempio_animali_prior}, $t=2$ (the rabbit and the turtle).

We also introduce two additional quantities, $K^*_{1,n}$ and $K^*_{2,n}$, which are useful in our proofs and for the understanding of our framework. Consider $j=1$, $K^*_{1,n}$ represents the number of species observed in the first group but not in the second one. For example, the owl is only seen in area 1, hence $K^*_{1,n} = 1$. On the other hand, $K^*_{2,n} = 3$ (the panda, the turkey, and the dog). In general, we let $K^*_{j,n} = r^*_j$, $j=1,2$. Specifically, even though $r^*_j$ refers to the number of species in a single area, this is also a global quantity, as it is not well defined if a second area is not observed. 

Summing up, we introduced six random variables which are dependent. In the following, we report their main relationships
% \begin{equation}
	% \label{eqn:prior_system_eq}
	%     \begin{split}
		%         t \ & = \ r_1 + r_2 - r \\
		%         r \ & = \ t + r_1^* + r_2^* \\
		%         r_j \ & = \ t + r_j^*, \text{ for } j=1,2. \\
		%     \end{split}
	% \end{equation}
\begin{equation}
	\label{eqn:prior_system_eq}
	\begin{split}
		t = r_1 + r_2 - r,
		\quad
		r = t + r_1^* + r_2^*, 
		\quad
		r_j = t + r_j^*, \quad j=1,2. 
	\end{split}
\end{equation}
Among these four linear equations, only three of them are linearly independent. As a consequence, the number of linearly independent random variables is three. We only need three out of six quantities to properly characterise the observed sample; the remaining three are deduced using the system of Equations \eqref{eqn:prior_system_eq}.

\begin{table}[ht]
	\centering
	\caption{In-sample statistics}	
	\label{tab:prior_quantities_def}
	\renewcommand{\arraystretch}{1.2}
	\begin{tabular}{ccl}
		\hline
		\textbf{Random variable} & \textbf{Realisations} & \textbf{Description} \\ 
		\hline
		$\mathcal K_{n_1,n_2}$ & $r$ & $\#$ of global distinct species \\
		$\mathcal S_{n_1,n_2}$ & $t$ & $\#$ of shared species\\
		$K_{j,n_j}$ & $r_j$ & $\#$ of local distinct species in group $j$\\
		$K^*_{j,n}$ & $r^*_j$ & $\#$ of distinct species observed in group $j$ \\
		\empty & \empty &  but which are missing group $j^\prime$
	\end{tabular}
\end{table}

\subsection{Posterior quantities}
\label{app:post_quantities}
Similarly to the previous section, we now present and detail the posterior quantities of interest by means of an updated example, reported in Figure \ref{fig:esempio_animali_post}. In red, we show the same observed species as in Figure \ref{fig:esempio_animali_prior} while we add the additional species in green. These are only present in a future sample, and therefore, they are still unobserved. We say that Figure \ref{fig:esempio_animali_post} represents the enlarged sample, composed of both the observed and the future ones.
Finally, Table \ref{tab:post_quantities_def} summarises the notation and the meaning of each random variable.

\begin{figure}[ht!]
	\centering 
	\includegraphics[width=0.8\linewidth]{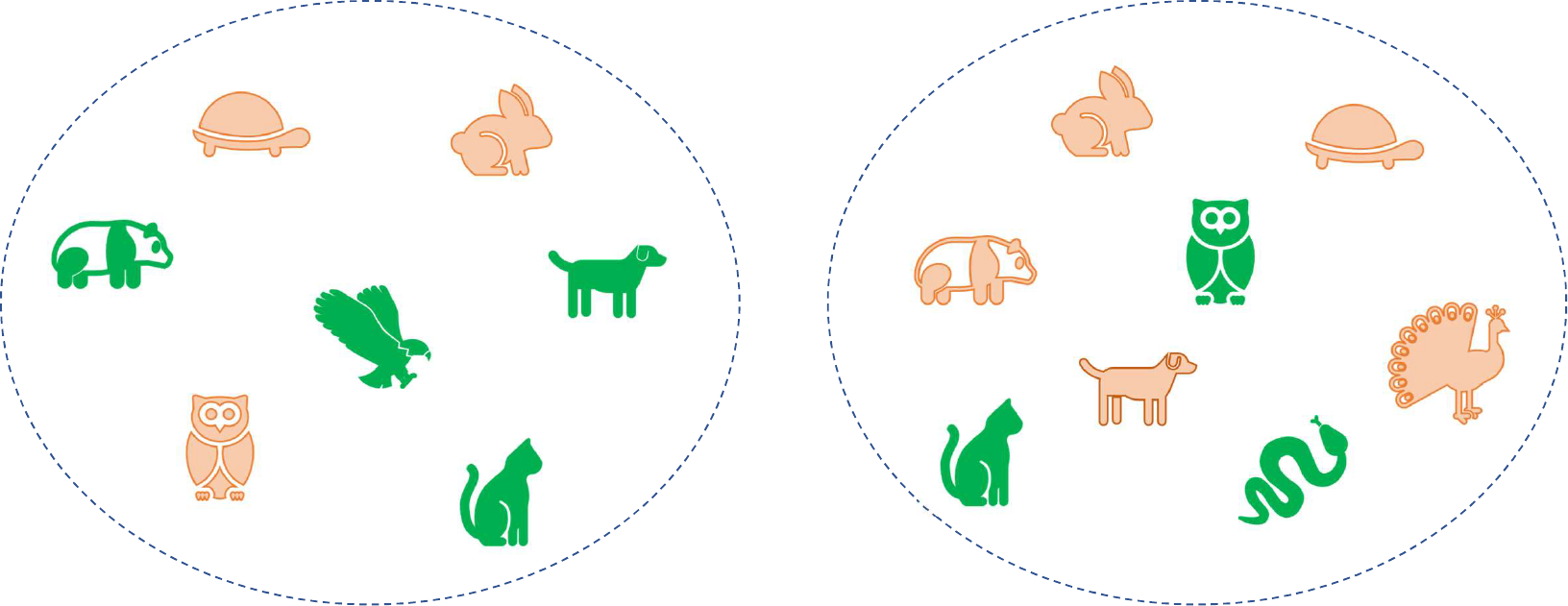}
	\caption{ Observed and future species in the enlarged sample, green species represent those belonging to the future, additional, sample, hence they are unobserved.  }
	\label{fig:esempio_animali_post}
\end{figure}

Once again, we start from the local number of new distinct species, $K^{(n_j)}_{j,m_j}$, for $j=1,2$, and generally denoted as $k_j$. Let $j=1$, this is defined as the number of distinct species in the enlarged sample minus the ones that were already observed in the observed sample. In our running example, this is simply computed as the number of green species in area 1, that is, four, $k_1 = 4$. This does not imply that a future sample can not contain turtles and rabbits, but these would not count as new species because they were already present in the previous sample. Similarly, $k_2 = 3$.
Let us move to the global quantities, starting from the global number of distinct species $\mathcal K^{(n_1,n_2)}_{m_1,m_2}$, whose realisation is denoted as $k$. This is also defined as the number of global species in the enlarged sample minus the same quantity in the observed sample, namely, $\mathcal K^{(n_1,n_2)}_{m_1,m_2} = \mathcal K_{n_1+m_1,n_2+m_2} - \mathcal K_{n_1,n_2}$. Some care is required when computing this quantity from Figure \ref{fig:esempio_animali_post}. Indeed, in Figure \ref{fig:esempio_animali_prior} it was enough to pool the red species and discard the repeated ones. Hence, one may be tempted to pool the green species and discard the repeated ones, but this would be wrong. Indeed, note that the owl and the dog have already been observed in area 1 and 2, respectively, and must not count as new global species. Hence, $k=3$ (the eagle, the cat, and the snake).
Although trivial, this example shows something important, that is, we can not compute $k$ from $k_1$ and $k_2$ only looking at the future sample (green species), but we must also take into account the past sample (red species).
Let us also have a look at the other main global quantity, that is the number of new shared species, $\mathcal S^{(n_1,n_2)}_{m_1,m_2}$, also defined as $\mathcal S^{(n_1,n_2)}_{m_1,m_2} = \mathcal S_{n_1+m_1,n_2+m_2} - \mathcal S_{n_1,n_2}$ and denoted as $s$. Once again, if we repeat what we did in Section \ref{app:prior_quantities}, i.e., pooling the red species and counting the number of the repeated ones, with the green species in Section \ref{fig:esempio_animali_post}, we end up committing a mistake. Indeed, the only repeated green species is the cat, while we claim that $s=4$ in our running example. Why is it so? As before, the correct way to count is to follow the definition and not to miss the mixed cases, i.e., those species that are first observed only in one group and then also in the future sample of the other one. 
In this case, the owl, the panda and the dog are shared species in the enlarged sample $\mathcal S_{n_1+m_1,n_2+m_2}$, as well as the previously mentioned cat and the turtle, but the latter has already been counted as a shared species in the observed sample $\mathcal S_{n_1,n_2}$, hence $s$ equals four.
Let us then characterise and name those quantities we implicitly computed to derive $k$ and $s$. Let $S^*_{m}$ be the number of those species that do not belong to the $r$ observed global species and that are shared among the two areas and the $s^*$ be its realisation. In our example, $s^*=1$, that is the cat, that is the only green species appearing in both areas. Moreover, let $S_{j^\prime,j}$ be the number of species that only appears in area $j^\prime$
for what concerns the observed sample but that are then also seen in area $j$ once the future sample is considered. Their realisations are denoted as $s_{j^\prime,j}$, for $j^\prime,j=1,2$ and $j^\prime\neq j$. For example, if $j=1$, then $j^\prime = 2$ and $s_{2,1} = 2$, i.e., the panda and the dog.
The computation of $k$ and $s$ can be summarised as
$k = k_1 + k_2 - s^* - s_{1,2} - s_{2,1}$ and $s = s^* + s_{1,2} + s_{2,1}$. Hence, $k = k_1 + k_2 - s$.

Finally, we follow Section \ref{app:prior_quantities} and introduce two additional auxiliary quantities, $K^{*(n)}_{j,m}$ for $j=1,2$, denoted as $k^*_j$. Let $j=1$, extending the prior quantity $K^*_{1,n}$, this counts the number of new distinct species that were not part of the observed $r$ global species and that are present in the first area. In our example, $k^*_1 = 1$ (the eagle) and $k^*_2 = 1$ (the snake). As for $K^*_{j,n}$, these are also global quantities that are not defined in the case of a single area.

One further consideration is that we can divide the global posterior random variables into two categories:
(i) those involving some of the $r$ observed species ($S_{1,2}$ and $S_{2,1}$) and (ii) those considering only new species, never observed in area 1 nor in area 2 ($S^*_{m}$, $K^{*(n)}_{1,m}$, and $K^{*(n)}_{2,m}$).
What about the two main quantities of interest $\mathcal K^{(n_1,n_2)}_{m_1,m_2}$ and $\mathcal S^{(n_1,n_2)}_{m_1,m_2}$?
As for the number of new shared species, this is directly related to $S_{1,2}$ and $S_{2,1}$, hence it clearly belongs to (i). On the other hand, the number of new distinct species, by definition, must belong to (ii). Indeed, for the sake of clarity, this has been derived as a function of $K^{(n_j)}_{j,m_j}$ and $S_{j^\prime,j}$, but it can also be computed as $k = s^* + k^*_1 + k^*_2$, which are all quantities belonging to (ii). 
This difference has a major impact when considering the computation of joint and marginal distributions of the posterior quantities of interest.

The main linear relationships among the nine introduced posterior random variables are the following
% \begin{equation}
	% \label{eqn:post_system_eq}
	%     \begin{split}
		%         k \ & = \ s^* + k_1^* + k_2^* \\
		%         s \ & = \ s^* + s_{1,2} + s_{2,1} \\
		%         s \ & = \ k_1 + k_2 - k \\
		%         k_j \ & = \ s^* + k_j^* + s_{j^\prime,j}, \text{ for } j^\prime,j=1,2 \text{ and } j^\prime \neq j. \\
		%     \end{split}
	% \end{equation}
\begin{equation}
	\label{eqn:post_system_eq}
	\begin{split}
		&k = s^* + k_1^* + k_2^*,
		\quad
		s = s^* + s_{1,2} + s_{2,1},
		\quad
		s = k_1 + k_2 - k, \\
		& k_j = s^* + k_j^* + s_{j^\prime,j},
		\quad j^\prime,j=1,2;j^\prime\neq j
		%\text{ for } j^\prime,j=1,2 \text{ and } j^\prime \neq j. \\
	\end{split}
\end{equation}
Only four of these equations are linearly independent, which means that the posterior set of random variables needs five linearly independent quantities to be fully characterised. 

In Section \ref{app:prior_quantities}, we pointed out that $r\leq r_1+r_2$ and $r\geq r_j$, for $j=1,2$. In this posterior case, the analogous condition $k\leq k_1+k_2$ still holds, but in the main manuscript, we notice that $k\geq k_j$ does not hold any longer. For example, in Section \ref{fig:esempio_animali_post} we have $k=3$, which is smaller than $k_1=4$. Why is it so? Intuitively, it is possible that $k_j$ is growing because of the discovery of many species that were only observed in area $j^\prime$, hence the presence of these in area $j$ is not increasing $k$.
More precisely, the set of equations in \eqref{eqn:prior_system_eq}, it can be shown that the condition for $k_j > k$ to happen is $s_{j^\prime,j}>k^*_{j^\prime}$.

\begin{table}[ht]
	\centering
	\caption{Ouf-of-sample statistics.}
	\label{tab:post_quantities_def}
	\renewcommand{\arraystretch}{1.2}
	\begin{tabular}{ccl}
		\hline
		\textbf{Random Variable} & \textbf{Realization} & \textbf{Description} \\ 
		\hline
		$\mathcal K^{(n_1,n_2)}_{m_1,m_2}$ & $k$ &  $\#$ of new global distinct species \\
		$\mathcal S^{(n_1,n_2)}_{m_1,m_2}$ & $s$ &  $\#$ of new shared species \\
		$K^{(n_j)}_{j,m_j}$ & $k_j$ &  $\#$ of new local distinct species in group $j$\\
		$K^{*(n)}_{j,m}$ & $k^*_j$ &  $\#$ of new distinct species in group $j$\\
		\empty & \empty &   but missing in group $j^\prime$ \\
		$S^*_{m}$ & $s^*$ &  $\#$ of new shared species among \\
		\empty & \empty &   the $k$ new distinct species \\
		$S_{j^\prime,j}$ & $s_{j^\prime,j}$ & $\#$ of species which were first \textit{only} observed in  \\
		\empty & \empty & group $j\prime$ and that are then observed in group $j$ \\
	\end{tabular}
\end{table}

\section{Review of generalised factorial coefficients}
\label{app:Cnumbers}
The results presented in Section \ref{section:prior} and Section \ref{section:posterior} rely on both central and non-central generalised factorial coefficients. In this section, we provide some background about these combinatorial objects and report the most relevant formulae that are extensively used in subsequent sections. We refer to \cite[Ch. 8]{chara2002} for a detailed discussion on this topic.

For any positive integers $n$ and $k$, with $k\leq n$, the generalised factorial coefficient $C(n,k; \gamma)$ is the coefficient of the $(k)$th order falling factorial of $t$ in the expansion of the $(n)$th order generalised factorial of $t$
with scale parameter $\gamma$, namely
\begin{equation}
	\label{eqn:Ccentral_def}
	(\gamma t)_{n\downarrow} = \sum_{k=0}^n C(n,k; \gamma) (t)_{k\downarrow};
\end{equation}
Sometimes, we refer to $C(n,k; \gamma)$ as the central generalised factorial coefficient to distinguish it from its non-central generalisation, which is
\begin{equation}
	\label{eqn:Cnoncentral_def}
	(\gamma t-\rho)_{n\downarrow} =  
	\sum_{k=0}^n C(n,k; \gamma, \rho) (t)_{k\downarrow};
\end{equation}
In particular, we have that $C(n,k; \gamma,0) = C(n,k; \gamma)$.
We wish to highlight the use of the falling factorial in Equations \eqref{eqn:Ccentral_def} and \eqref{eqn:Cnoncentral_def}.
If one were to replace it with the rising factorial, this would lead to a different definition of the generalised factorial coefficient, as used, for instance, in \cite{Lij(07)}. We denote this alternative form as $\Ccr (n,k; \gamma, \rho)$. The two definitions are connected by the identity
$\Ccr(n,k;\gamma,\rho) = (-1)^{n-k}C(n,k;\gamma,\rho)$.

An important formula that relates the central and the non-central generalised factorial coefficients is the following one
\begin{equation}
	\label{eqn:CnC_general}
	C(n,k; \gamma, \rho)
	= 
	\sum_{j=k}^n \binom{n}{j} (\rho)_{(n-j)\downarrow}  C(j,k; \gamma).
\end{equation}
Moreover, let $(x)_n$ denote the $(n)$th order rising factorial of $x$ and recall that $(x)_{n\downarrow} = (-1)^n(-x)_n$. From Equation \eqref{eqn:CnC_general} we also derive the following generalisation of Equation \eqref{eqn:CnC_general} involving the absolute values of the generalised factorial coefficients,
\begin{equation}
	\label{eqn:CnC}
	\lvert C(n,k; \gamma, \rho) \rvert
	= 
	\sum_{j=k}^n \binom{n}{j} (-\rho)_{n-j} \lvert C(j,k; \gamma) \rvert;
\end{equation}
where in Equation \eqref{eqn:CnC} we also exploited the fact that $\lvert C(n,k; \gamma, \rho) \rvert = (-1)^n C(n,k; \gamma, \rho) $. 
Finally, for $\gamma>0$, we also remind the following formula 
\begin{equation}
	\label{eqn:gen_formula}
	\lvert C(n,k; -\gamma) \rvert = 
	\frac{1}{k!} \sum_{(\star)}  \binom{n}{n_1, \ldots , n_k} \prod_{l=1}^k  (\gamma)_{r_l} 
\end{equation}
where the sum is taken over the following set
\[
(\star) = \left\{ (n_1, \ldots , n_k) : \; n_l \geq 1 \,, \; n_1+\cdots+n_k = n \right\}.
\]
It will also be useful to remind an important generalisation of Vandermonde's identity:
\begin{equation}
	\label{eqn:vandermonde}
	\sum_{(\star \star)}  \binom{n}{n_1  ,\ldots , n_k} \prod_{l=1}^k (\gamma_l )_{r_l} =
	(\gamma_1 +\ldots +\gamma_k)_n
\end{equation}
where the sum is taken over the following set
\[
(\star \star) = \left\{ (n_1, \ldots , n_k) : \; n_l \geq 0 , \; n_1+\cdots+n_k = n \right\}
\]
and $\gamma_l>0$ for $l=1,\ldots,k$.

\section{Details and proofs of results in Section \ref{section:predictive}}
\label{app:joint_CRFP}
Section \ref{section:predictive} reports the predictive distribution of the $(n_1+1)$th observation when, according to the Chinese restaurant franchise metaphor, the client enters the first restaurant.
Here, we report the general case when a new pair of clients arrives, one for each group, i.e.,  the $(n_1+1)$th and $(n_2+1)$th clients in the first and second restaurants, respectively. The predictive distribution is 
\begin{equation}
	\label{eqn:full_predictive}
	\begin{split}
		&\P\left( X_{1,n_1+1}\in\,A\,, X_{2,n_2+1}\in\,B\,\mid\bmX\right) \\
		& \ =
		\frac{V^r_{n_1+1,n_2+1}}{V^r_{n_1,n_2}} 
		\sum_{l_1=1}^r\sum_{l_2=1}^r
		\left(n_{1,l_1}+\gamma_1\right)
		\left(n_{1,l_2}+\gamma_2\right)
		\delta_{X^{**}_{l_1}}(A)
		\delta_{X^{**}_{l_2}}(B) \\
		&\quad+
		\frac{V^{r+1}_{n_1+1,n_2+1}}{V^r_{n_1,n_2}} 
		\left\{\,
		\sum_{l_1=1}^r
		% \left\{
		\left(n_{1,l_1}+\gamma_1\right)
		\delta_{X^{**}_{l_1}}(A)
		% \right\}
		\gamma_2
		P_0(B)
		+
		\gamma_1
		P_0(A)
		\sum_{l_2=1}^r
		% \left\{
		\left(n_{1,l_2}+\gamma_2\right)
		\delta_{X^{**}_{l_2}}(B) 
		% \right\}
		\,\right\} \\
		&\quad +
		\frac{V^{r+1}_{n_1+1,n_2+1}}{V^r_{n_1,n_2}} 
		\gamma_1\gamma_2
		P_0(A\cap B) +
		\frac{V^{r+2}_{n_1+1,n_2+1}}{V^r_{n_1,n_2}} 
		\gamma_1\gamma_2
		P_0(A)P_0(B),
		\end {split}
	\end{equation}
	for any measurable sets $A$ and $B$. The one-step-ahead predictive distribution in Equation \eqref{eqn:full_predictive} follows from Equation \eqref{eqn:joint_post_d2} after noticing that $ \lvert C(1, 0; -\gamma_j, -(\gamma_j r_j + n_j))\rvert = \gamma_j r_j + n_j$ and $\lvert C(1, 1; -\gamma_j, -(\gamma_j r_j + n_j))\rvert = \gamma_j$.

\subsection{Proof of Equation \eqref{eqn:peppf}}
\label{app:proof_peppf}
\cite{colombi2023mixture} derived the following equivalent form for the pEPPF,
\begin{equation*}
	\label{eqn:proof_peppf_1}
	\begin{split}
		&\Pi_{r}^{(n)}\left(\bmn_1,\bmn_2\right)  \\
		& \ = 
		\bigintsss_{[0,\infty]\times[0,\infty]}
		\Psi(r,u_1,u_2)
		\prod_{j=1}^2
		\frac{u_j^{n_j-1}}{\Gamma(n_j)\left(1+u_j\right)^{n_j+\gamma_j r}}
		{\rm d}u_1{\rm d}u_2
		\,
		\times 
		\prod_{j=1}^2\prod_{l=1}^r
		\left(\gamma_j\right)_{n_{j,l}},
	\end{split}
\end{equation*}
and, as we are only interested in the case of symmetric Dirichlet distributed random weights, $\Psi(r,u_1,u_2)$ takes the following form
\begin{equation*}
	\label{eqn:proof_peppf_2}
	\begin{split}
		\Psi(r,u_1,u_2) \ = \ 
		\sum_{\mstar=0}^\infty
		\frac{(\mstar+r)!}{\mstar!}
		q_M(\mstar + r) 
		\prod_{j=1}^2\left(1+u_j\right)^{\gamma_j\mstar}.
	\end{split}
\end{equation*}
In summary, we want to exchange the integral and the infinite sum and solve the remaining integrals with respect to $u_1$ and $u_2$. By doing do, we have that
\begin{equation*}
	\label{eqn:proof_peppf_3}
	\begin{split}
		&\Pi_{r}^{(n)}\left(\bmn_1,\bmn_2\right)  \\
		& \ = 
		\prod_{j=1}^2\prod_{l=1}^r
		\left(\gamma_j\right)_{n_{j,l}}
		\,
		\sum_{\mstar=0}^\infty
		\left\{
		\frac{(\mstar+r)!}{\mstar!}
		q_M(\mstar + r) 
		\prod_{j=1}^2
		\int_{0}^\infty
		\frac{1}{\Gamma(n_j)}
		\frac{u_j^{n_j-1}}{\left(1+u_j\right)^{n_j+\gamma_j( r + \mstar)}}
		{\rm d}u_j
		\right\}
		\, .
	\end{split}
\end{equation*}
The integral equals a Beta function \citep[p.258]{abram}, hence we have that
\begin{equation}
	\label{eqn:proof_peppf_4}
	\begin{split}
		&\Pi_{r}^{(n)}\left(\bmn_1,\bmn_2\right)  \\
		& \ = 
		\sum_{\mstar=0}^\infty
		\left\{
		\frac{(\mstar+r)!}{\mstar!}
		q_M(\mstar + r) 
		\prod_{j=1}^2
		\frac{B(n_j,\gamma_j(\mstar + r))}{\Gamma(n_j)}
		\right\}
		\prod_{j=1}^2\prod_{l=1}^r
		\left(\gamma_j\right)_{n_{j,l}}
		\, \\
		& \ =
		\sum_{\mstar=0}^\infty
		\left\{
		\frac{(\mstar+r)!}{\mstar!}
		q_M(\mstar + r) 
		\prod_{j=1}^2
		\frac{1}{\left(\gamma_j(\mstar + r)\right)_{n_{j}}}
		\right\}
		\prod_{j=1}^2\prod_{l=1}^r
		\left(\gamma_j\right)_{n_{j,l}} \, .
	\end{split}
\end{equation}
To complete the proof, it is enough to change variables in the infinite sum and define $V_{n_1,n_2}^r$ as
\begin{equation*}
	\label{eqn:proof_peppf_5}
	\begin{split}
		V_{n_1,n_2}^r = 
		\sum_{\mstar=0}^\infty
		(\mstar)_{r\downarrow}
		q_M(\mstar) 
		\prod_{j=1}^2
		\frac{1}{\left(\gamma_j(\mstar)\right)_{n_{j}}}.
	\end{split}
\end{equation*}
The latter coincides with the definition given in Equation \eqref{eqn:Vprior}.

\section{ Analysis of the $V_{n_1,n_2}^r$ coefficients}
\label{section:Vcoeff}
We discuss here some properties of the coefficients $V_{n_1,n_2}^r$, defined in Equation \eqref{eqn:Vprior}. In the exchangeable case, i.e., when all observations are drawn from the same area, 
%In the latter setting, 
\cite{GnedinPitman2006} shows that the sampling model, described by the Exchangeable Partition Probability Function (EPPF), admits the product form
\begin{equation}
	\label{eqn:Gibbs_type}
	\Pi_{r}^{(n)}(n_1,\ldots,n_r) \, = \, V_n^r \prod_{l=1}^{r}(1-\sigma)_{n_l-1} \, ,
\end{equation}
for any $\sigma<1$, $n\geq1$, $r\leq n$ and positive integers $n_1,\ldots,n_r$ that sum up to $n$ if and only if the set of non-negative weights $\{V_r^n \,:\, n\geq 1\,,1\leq r \leq n \}$ satisfies the recurrence relationship $V_{n}^r = V_{n+1}^{r+1} + (n - \sigma r)V_{n+1}^r $.
The sampling models in Equation \eqref{eqn:Gibbs_type} are known as Gibbs-type models, see \cite{deblasi2015} for further discussion. Within this class, it is important to distinguish between two cases, namely when $\sigma \in [0,1)$ and when $\sigma < 0$. The former corresponds to models with a potentially infinite number of species, whereas the latter assumes a finite -though random- number of species. This assumption aligns with the hypothesis of our model, and we therefore confine our attention to this case.
Finally, \cite{millerharrison} reparametrize the model in Equation \eqref{eqn:Gibbs_type} so that the recurrence relationship above takes the form
\begin{equation}
	\label{eqn:Vcoeff_recrel_1d}
	V_{n}^r = \gamma V_{n+1}^{r+1} + (\gamma r + n)V_{n+1}^r \, ,
\end{equation}
and the coefficients $V_{n}^r$ admit the infinite sum representation,
\begin{equation}
	\label{eqn:Vcoeff_MH}
	V_{n}^r \ = \
	\sum_{m=1}^\infty
	\, (m)_{r\downarrow} \, q_M(m) \, \frac{1}{(\gamma m)_{n}}.
\end{equation}
The latter expression is also recovered when assuming Dirichlet distributed weights in the Normalized Independent Finite Point Process model by \cite{argiento2022annals}.

Since Equation \eqref{eqn:Vcoeff_MH} is recovered from Equation \eqref{eqn:Vprior} by setting either $n_1$ or $n_2$ equal to zero, we say that the coefficients $V_{n_1,n_2}^r$ are a multi-group extension of the $V_n^r$ coefficients in the case of finitely many species. Indeed, we show that they share similar properties to $V_n^r$. In particular, $V_{n_1,n_2}^r$ multiplies the general term of the series in \eqref{eqn:Vcoeff_MH} by a factor that rapidly decreases to zero, both with respect to the series index and the number of observations. As a consequence, $V_{n_1,n_2}^r$ converges even faster than $V_n^r$. Hence, the next proposition states that the $V_{n_1,n_2}^r$ coefficients are well defined and, for sufficiently large sample sizes, can be accurately approximated by the $(r)$th term of the series. 
Proposition \ref{proposition:Vcoef_conv_asym} extends \cite{GnedinPitman2006} and \cite{millerharrison} to the multi-group setting. 
\begin{prop}
	\label{proposition:Vcoef_conv_asym}
	The $V_{n_1,n_2}^r$ coefficients introduced in Equation \eqref{eqn:Vprior} are well defined for every choice of probability mass function $q_M$. Moreover, for any integer $r\geq1$ such that $q_M(r)>0$, the following approximation holds:
	\begin{equation}
		\label{eqn:Vapprox}
		\begin{split}
			V_{n_1,n_2}^r \, &= \,
			\frac{r!q_M(r)}{(\gamma_1 r)_{n_1}(\gamma_2 r)_{n_2}} \\
			&\quad \times \, 
			\left\{
			1+n_1^{-\gamma_1}n_2^{-\gamma_2}(r+1)(\gamma_1 r)_{\gamma_1}(\gamma_2 r)_{\gamma_2}\frac{q_M(r+1)}{q_M(r)} + o(n_1^{-\gamma_1}n_2^{-\gamma_2})
			\right\} \, .
		\end{split}
	\end{equation}
\end{prop}
\begin{proof}
    Firstly, we use the change of variables $\mstar = m-r$ and rewrite $V_{n_1,n_2}^r$ as
    \begin{equation}
    	\label{eqn:Vconv_proof_1}
    	% \left\{V_{n_1,n_2}^r\right\}_m
    	V_{n_1,n_2}^r
    	\ = \ 
    	\sum_{\mstar=0}^\infty
    	\frac{(m+r)!}{m!}\prod_{j=1}^2 \frac{1}{(\gamma_j (\mstar + r))_{n_j}} q_M(\mstar+r) \, .
    \end{equation}
    Then, to prove the statement, we consider the series of the asymptotic expansion of the general term. To do so, we use the following Stirling approximation for large values of $\mstar$: 
    \begin{equation}
    	\label{eqn:Vconv_proof_2}
    	\frac{(\mstar+r)!}{\mstar!} \ \sim \  e^{-r} \sqrt{\frac{\mstar+r}{\mstar}}\left(\frac{\mstar+r}{\mstar}\right)^{\mstar}(\mstar + r)^r \, ,
    \end{equation}
    where we use the notation $f(x)\sim g(x)$ as a short form for $f(x) = o_{x_0}(g(x))$ for $x\rightarrow x_0$.
    We also recall the following approximation for the ratio of the Gamma function $\Gamma(a+cm)/\Gamma(b+cm) \sim (cm)^{a-b}$ when $m\rightarrow\infty$. Hence, we have that 
    \begin{equation}
    	\label{eqn:Vconv_proof_3}
    	\frac{\Gamma(\gamma_j \mstar + \gamma_j r)}{\Gamma(\gamma_j \mstar + \gamma_j r + n_j)} \ \sim \ 
    	(\gamma_j \mstar)^{-n_j} \, .
    \end{equation}
    Using Equations \eqref{eqn:Vconv_proof_2} and \eqref{eqn:Vconv_proof_3}, the following is the asymptotic expression of the general term in Equation \eqref{eqn:Vconv_proof_1} for large values of $\mstar$ 
    \begin{equation*}
    	\label{eqn:Vconv_proof_4}
    	\begin{split}
    		&\frac{(\mstar+r)!}{\mstar!}\prod_{j=1}^2 \frac{1}{(\gamma_j (\mstar + r))_{n_j}} q_M(\mstar+r)  \\
    		& \quad  \sim 
    		e^{-r} \sqrt{\frac{\mstar+r}{\mstar}}\left(\frac{\mstar+r}{\mstar}\right)^{\mstar}(\mstar + r)^r
    		(\gamma_1)^{-n_1}(\gamma_2)^{-n_2}(\mstar)^{-n} q_M(\mstar + r)\\
    		& \quad \sim 
    		(\gamma_1)^{-n_1}(\gamma_2)^{-n_2}\left(\frac{\mstar + r}{\mstar}\right)^r (\mstar)^{-n+r}q_M(\mstar + r) \, , \\
    		& \quad \sim 
    		(\gamma_1)^{-n_1}(\gamma_2)^{-n_2} \frac{1}{(\mstar)^{n-r}} q_M(\mstar + r) \, , \\ 
    	\end{split}
    \end{equation*}
    where we defined $n=n_1+n_2$, used $\left(\frac{\mstar+r}{\mstar}\right)^{\mstar} \sim e^r$, and wrote $(\mstar)^{-n}$ as $(\mstar)^{-(n-r)-r}$.
    As a consequence, the convergence of $V_{n_1,n_2}^r$ can be assessed by studying the convergence of the following series,
    \begin{equation*}
    	\label{eqn:Vconv_proof_5}
    	\begin{split}
    		\sum_{\mstar = 0}^\infty
    		(\gamma_1)^{-n_1}(\gamma_2)^{-n_2} \frac{1}{(\mstar)^{n-r}} q_M(\mstar + r)
    		\leq
    		(\gamma_1)^{-n_1}(\gamma_2)^{-n_2} \sum_{\mstar = 0}^\infty q_M(\mstar + r) < \infty \, .
    	\end{split}
    \end{equation*}
    The latter follows since $r\leq n$ implies that $\frac{1}{(\mstar)^{n-r}} \leq 1$. Additionally, being $q_M$ a probability mass function, the final sum is less than or equal to one.

    We continue proving the statement about the asymptotic expansion of $V^r_{n_1,n_2}$.
    To simplify the notation, we write $V_{n_1,n_2}^r$ as 
    \begin{equation*}
    	\label{eqn:Vapprox_proof_0}
    	V_{n_1,n_2}^r = \left\{V_{n_1,n_2}^r\right\}_r + \sum_{m=r+1}^\infty \left\{V_{n_1,n_2}^r\right\}_m \, 
    \end{equation*}
    where $\left\{V_{n_1,n_2}^r\right\}_m$ is the $(m)$th term of the series. Namely,
    \begin{equation*}
    	\label{eqn:Vapprox_proof_1}
    	\left\{V_{n_1,n_2}^r\right\}_m \ = \ 
    	(m)_{r\downarrow}\prod_{j=1}^2 \left((\gamma_j m)_{n_j}\right)^{-1} q_M(m) \, ,
    \end{equation*}
    for each integer $m\geq r$. Since $q_M(r)>0$, we can collect the first term, and we get
    \begin{equation*}
    	\label{eqn:Vapprox_proof_2}
    	V_{n_1,n_2}^r \ = \ \left\{V_{n_1,n_2}^r\right\}_r 
    	\left[ 1 + 
    	\frac{1}{\left\{V_{n_1,n_2}^r\right\}_r}
    	\sum_{m=r+1}^\infty \left\{V_{n_1,n_2}^r\right\}_m 
    	\right] .
    \end{equation*}
    Following the same steps of the proof in \cite{millerharrison}, we show that the second term in the squared brackets goes to zero. Once again, we first isolate the $\left\{V_{n_1,n_2}^r\right\}_{r+1}$ term for the infinite sum.
    \begin{equation}
    	\label{eqn:Vapprox_proof_3}
    	\begin{split}
    		&\frac{1}{\left\{V_{n_1,n_2}^r\right\}_r}
    		\sum_{m=r+1}^\infty \left\{V_{n_1,n_2}^r\right\}_m = 
    		\frac{\left\{V_{n_1,n_2}^r\right\}_{r+1}}{\left\{V_{n_1,n_2}^r\right\}_r} + 
    		\frac{1}{\left\{V_{n_1,n_2}^r\right\}_r}
    		\sum_{m=r+2}^\infty \left\{V_{n_1,n_2}^r\right\}_m \\ 
    		& \qquad = 
    		n_1^{-\gamma_1}n_2^{-\gamma_2}(r+1)(\gamma_1 r)_{\gamma_1}(\gamma_2 r)_{\gamma_2}\frac{q_M(r+1)}{q_M(r)} + 
    		\frac{1}{\left\{V_{n_1,n_2}^r\right\}_r}
    		\sum_{m=r+2}^\infty \left\{V_{n_1,n_2}^r\right\}_m \, .
    	\end{split}
    \end{equation}
    The second term in the final line of Equation \eqref{eqn:Vapprox_proof_3} converges to $0$ as $n_1,n_2\rightarrow0$. This follows from \cite[Proposition S1.1]{millerharrison}. Moreover, it can be written as an infinite polynomial with respect to $(n_1)^{-\gamma_1}(n_2)^{-\gamma_2}$ for some polynomial coefficients depending on $m$. Namely
    \begin{equation}
    	\label{eqn:Vapprox_proof_4}
    	\begin{split}
    		&\frac{1}{\left\{V_{n_1,n_2}^r\right\}_r}
    		\sum_{m=r+2}^\infty \left\{V_{n_1,n_2}^r\right\}_m \, \\
    		& \qquad = \ 
    		\sum_{m=r+2}^\infty C_m\left((n_1)^{-\gamma_1}\right)^{m-r}\left((n_2)^{-\gamma_2}\right)^{m-r} = 
    		o((n_1)^{-\gamma_1}(n_2)^{-\gamma_2}) \, .
    	\end{split}
    \end{equation}
    The statement follows by combining Equations \eqref{eqn:Vapprox_proof_3} and \eqref{eqn:Vapprox_proof_4}.
\end{proof}

Furthermore, we derive the recurrence relationship satisfied by $V_{n_1,n_2}^r$. 
\begin{prop}
	\label{proposition:Vcoeff_recrel}
	Let $V_{n_1,n_2}^r$ be the coefficients defined in Equation \eqref{eqn:Vprior}. 
	Then, the following 1-step recurrence relationship holds 
	\begin{equation}
		\label{eqn:Vcoef_1step_recrel}
		%	\begin{split}
			V_{n_1,n_2}^r \, =\, \gamma_2 V_{n_1,n_2+1}^{r+1} + (\gamma_2r + n_2)V_{n_1,n_2+1}^r \,.
			%\end{split}
		\end{equation}
		or equivalently, 
		\begin{equation}
			\label{eqn:Vcoef_1step_recrel_bis}
			V_{n_1,n_2}^r \, = \, \gamma_1 V_{n_1+1,n_2}^{r+1} + (\gamma_1r + n_1)V_{n_1+1,n_2}^r \,.				
		\end{equation}
		Moreover, the following $2$-steps recurrence relationship holds,
		\begin{equation}
			\label{eqn:Vprior_recurrence}    
			\begin{split}
				V_{n_1,n_2}^r \, = \, & 
				\gamma_1\gamma_2
				\left\{
				r^2V_{n_1+1,n_2+1}^{r} + (2r+1)V_{n_1+1,n_2+1}^{r+1} + V_{n_1+1,n_2+1}^{r+2}
				\right\} \\
				&+ n_1V_{n_1+1,n_2}^{r} + n_2V_{n_1,n_2+1}^{r} - n_1n_2V_{n_1+1,n_2+1}^{r} \,.
			\end{split}
		\end{equation}
	\end{prop}
\begin{proof}
    To prove Equation \eqref{eqn:Vcoef_1step_recrel} we exploit the identity
    \begin{equation*}
    	\label{app:Vcoef_1step_recrel_proof1}
    	(m)_{(r+1)\downarrow} \, = \, \frac{\gamma_2m+n_2}{\gamma_2}(m)_{r\downarrow} - (r + \frac{n_2}{\gamma_2})(m)_{(r)\downarrow}\,.
    \end{equation*}
    Then, we have that
    \begin{equation*}
    	\label{app:Vcoef_1step_recrel_proof2}
    	\begin{split}
    		&V_{n_1,n_2+1}^{r+1} \, = \, 
    		\sum_{m=1}^\infty (m)_{r\downarrow}
    		\left(\frac{1}{\gamma_2}\frac{(\gamma_2m + n_2)}{(\gamma_2m)_{(n_2+1)}}\right)
    		\frac{1}{(\gamma_1m)_{(n_1)}}q_M(m) \\
    		&\quad - \,
    		(r + \frac{n_2}{\gamma_2})
    		\sum_{m=1}^\infty (m)_{r\downarrow}
    		\left(\frac{1}{(\gamma_2m)_{(n_2+1)}}\right)
    		\frac{1}{(\gamma_1m)_{(n_1)}}q_M(m)		
    		\,.
    	\end{split}
    \end{equation*}
    The statement follows after recognising the Pochhammer symbol in the first term of Equation \eqref{eqn:Vcoef_1step_recrel}. Equation \eqref{eqn:Vcoef_1step_recrel_bis} can be proven analogously.  
    
    Firstly, we rewrite $(m)_{(r+2)\downarrow}$ in a convenient way.
    \begin{equation*}
    	\label{eqn:Vrecurr_proof_1}
    	\begin{split}
    		(m)_{(r+2)\downarrow} = (m-r-1)(m-r)(m)_{r\downarrow} = 
    		\left\{m^2-m(2r+1) +r(r+1)\right\}(m)_{r\downarrow} \, .
    	\end{split}
    \end{equation*}
    Then, we exploit the following identity
    \begin{equation*}
    	\label{eqn:Vrecurr_proof_2}
    	\begin{split}
    		m^2 = \frac{(\gamma_1 m + n_1)-n_1}{\gamma_1}\frac{(\gamma_2 m + n_2)-n_2}{\gamma_2} \, ,
    	\end{split}
    \end{equation*}
    to write
    \begin{equation}
    	\label{eqn:Vrecurr_proof_3}
    	\begin{split}
    		(m)_{(r+2)\downarrow} = 
    		\Biggl\{
    		&\frac{(\gamma_1 m + n_1)-n_1}{\gamma_1}\frac{(\gamma_2 m + n_2)-n_2}{\gamma_2} - 
    		\frac{n_1(\gamma_2m+n_2)}{\gamma_1\gamma_2} -
    		\frac{n_2(\gamma_1m+n_1)}{\gamma_1\gamma_2} + \\ 
    		&\quad + \frac{n_1n_2}{\gamma_1\gamma_2} + r(r+1) - m(2r+1)
    		\Biggr\}
    		(m)_{r\downarrow} \, .
    	\end{split}
    \end{equation}
    We exploit Equation \eqref{eqn:Vrecurr_proof_3} to have
    \begin{equation*}
    	\label{eqn:Vrecurr_proof_4}
    	\begin{split}
    		&V_{n_1+1,n_2+1}^{r+2} = 
    		\sum_{m=0}^\infty
    		(m)_{(r+2)\downarrow}\frac{1}{(\gamma_1m)_{n_1+1}(\gamma_2m)_{n_2+1}}q_M(m)\\
    		&\, =
    		\sum_{m=0}^\infty
    		\Biggl\{
    		\frac{(\gamma_1 m + n_1)-n_1}{\gamma_1}\frac{(\gamma_2 m + n_2)-n_2}{\gamma_2} - 
    		\frac{n_1(\gamma_2m+n_2)}{\gamma_1\gamma_2} -
    		\frac{n_2(\gamma_1m+n_1)}{\gamma_1\gamma_2} + \frac{n_1n_2}{\gamma_1\gamma_2} \\
    		& \qquad \qquad
    		+ r(r+1) - m(2r+1)
    		\Biggr\}
    		\frac{(m)_{r\downarrow}}{(\gamma_1m)_{n_1}(\gamma_2m)_{n_2}(\gamma_1m+n_1)(\gamma_1m+n_2)}q_M(m) 
    	\end{split}
    \end{equation*}
    Using the definition of $V$ coefficients in Equation \eqref{eqn:Vprior}, we split the sum and recognise some terms.
    \begin{equation}
    	\label{eqn:Vrecurr_proof_5}
    	\begin{split}
    		V_{n_1+1,n_2+1}^{r+2} \, = \,&
    		\frac{1}{\gamma_1\gamma_2} V_{n_1,n_2}^r -
    		\frac{n_1}{\gamma_1\gamma_2} V_{n_1+1,n_2}^r-
    		\frac{n_2}{\gamma_1\gamma_2} V_{n_1,n_2+1}^r+
    		\left\{\frac{n_1n_2}{\gamma_1\gamma_2} + r(r+1)\right\}V_{n_1+1,n_2+1}^r\\
    		& \quad - 
    		(2r+1)\sum_{m=0}^\infty
    		m(m)_{r\downarrow}\frac{1}{(\gamma_1m)_{n_1+1}(\gamma_2m)_{n_2+1}}q_M(m).
    	\end{split}
    \end{equation}
    Using $m(m)_{r\downarrow} = r(m)_{r\downarrow} + (m)_{(r+1)\downarrow}$ and following the same steps as before, it is easy to show that the sum in the final line of Equation \eqref{eqn:Vrecurr_proof_5} equals $rV_{n_1+1,n_2+1}^r + V_{n_1+1,n_2+1}^{r+1}$. Hence, we have
    \begin{equation*}
    	\label{eqn:Vrecurr_proof_6}
    	\begin{split}
    		&V_{n_1+1,n_2+1}^{r+2} \, = \,
    		\frac{1}{\gamma_1\gamma_2}
    		\left\{
    		V_{n_1,n_2}^r - n_1 V_{n_1+1,n_2}^r - n_2 V_{n_1,n_2+1}^r + 
    		\left(n_1n_2 + \gamma_1\gamma_2r(r+1)\right)
    		\right\} \\
    		& \qquad \qquad \qquad
    		- (2r+1) \left\{rV_{n_1+1,n_2+1}^r + V_{n_1+1,n_2+1}^{r+1}\right\} \, .
    	\end{split}
    \end{equation*}
    The statement follows after some trivial linear algebra.

\end{proof}

\section{Details and proofs of the results in Section \ref{section:prior}}
\label{app:proof_sec_post}
\subsection{Proof of Theorem \ref{thm:jointprior_d2}}
\label{app:proof_joint_prior}
	Firstly, we notice from Section \ref{app:prior_quantities} that $\mathcal K_{n_1,n_2}$, $K_{1,n_1}$ and $K_{2,n_2}$ are linearly independent quantities from which we can also derive the remaining ones, i.e., $\mathcal S_{n_1,n_2}=t$ and $K^*_{j,n}$, for $j=1,2$. From Equations \eqref{eqn:prior_system_eq}, it follows that
	\begin{equation}
		\label{eqn:proof_pr_S_Kstar}
		t \ = \ r_1 + r_2 - r, \quad
		r^*_1 \ = \ r - r_2, \quad 
		r^*_2 \ = \ r - r_1.
	\end{equation}
	Moreover, the following conditions must hold: $1\leq r \leq n_1+n_2$ and $1\leq r_j \leq n_j$, for $j=1,2$. 
	The probability of interest may be evaluated as follows
	\begin{equation*}
		\label{eqn:proof_pr_1}
		\P\left(\mathcal K_{n_1, n_2} = r, K_{1, n_1} = r_1, K_{2, n_2} = r_2 \right) =  \sum_{(\star)}  
		\frac{1}{r!}  \prod_{j=1}^2 
		\binom{n_j}{n_{j,1}, \ldots , n_{j,r}} \cdot \Pi_r^{(n)} (\bmn_1,  \bmn_2) 
	\end{equation*}
	where $n = n_1+n_2$ and the sum $(\star)$ are extended over all the vectors $(n_{1,1}, \ldots , n_{1,r})$ and $(n_{2,1}, \ldots , n_{2,r})$ of non-negative integers satisfying the following constraints
	\begin{equation*}
		\label{eqn:proof_pr_insieme1}
		\begin{split}
			& \sum_{l=1}^r n_{j,l} = n_j \text{ with } n_{j,l} \geq 0,
			\quad 
			l = 1, \ldots , r \, j = 1,2 \, , \\
			& n_{1,l}+n_{2,l} \geq 1 \quad l=1, \ldots , r \, , 
			\ 
			\sum_{l=1}^r \delta_{ \{n_{j,l}>0\} } = r_j, \, j = 1,2
			\, .
		\end{split}
	\end{equation*}
	By exploiting the expression of the pEPPF in Equation \eqref{eqn:peppf}, we get
	
	\begin{equation*} 
		\label{eqn:proof_pr_2}
		\begin{split}
			&\P\left(\mathcal K_{n_1, n_2} = r, K_{1, n_1} = r_1, K_{2, n_2} = r_2 \right)  \\
			& \qquad = 
			V_{n_1,n_2}^r
			\sum_{(\star)} \frac{1}{r!}  
			\prod_{j=1}^2
			\left\{
			\binom{n_j}{n_{j,1}, \ldots , n_{j,r}} \prod_{l=1}^r  (\gamma_j)_{n_{j,l}}
			\right\}.
		\end{split}
	\end{equation*}
	In the following, we aim to solve the sum over the set $(\star)$.
	The main difficulty here is the joint condition $n_{1,j}+n_{2,l}\geq 1$; therefore, we elaborate the sum, trying to decouple it into two sums, only involving the local cardinalities $n_{j,l}$.
	To do so, $t$ out of $r$ species must be shared. Without loss of generality, assume that the first $t$ species are these shared species. Moreover, we fix an ordering for them, noticing that this operation can be done in $\binom{r}{t}$ equivalent ways. Hence, we have
	\begin{equation} 
		\label{eqn:proof_pr_3}
		\begin{split}
			&\sum_{(\star)} \frac{1}{r!}  
			\prod_{j=1}^2
			\left\{
			\binom{n_j}{n_{j,1}, \ldots , n_{j,r}} \prod_{l=1}^r  (\gamma_j)_{n_{j,l}}
			\right\}  \\
			& \qquad = 
			\sum_{(\star\star)} 
			\frac{1}{r!} \binom{r}{t}
			\prod_{j=1}^2
			\left\{
			\binom{n_j}{n_{j,1}, \ldots , n_{j,r}} \prod_{l=1}^r  (\gamma_j)_{n_{j,l}}
			\right\},
		\end{split}
	\end{equation}
	where the set $(\star\star)$ must satisfy the following constraints,
	\begin{equation*}
		\label{eqn:proof_pr_insieme2}
		\begin{split}
			& \sum_{l=1}^r n_{j,l} = n_j \, j=1,2 \, ,  \\ 
			&n_{j,l} \geq 1, \quad l = 1, \ldots , t\, , 
			\quad
			n_{j,l} \geq 0, \quad j = 1,2; \, l = t+1, \ldots , r \, ,\\
			& n_{1,l}+n_{2,l} \geq 1 \quad
			n_{1,l}\cdot n_{2,l} = 0
			\quad l=t+1, \ldots , r \, , \\
			&\sum_{l=1}^r \delta_{ \{n_{j,l}>0\} } = r_j, \quad j = 1,2 \, .
		\end{split}
	\end{equation*}
	Equation \eqref{eqn:proof_pr_3} is properly defined as long as $0\leq t \leq r$, which implies $r \leq r_1+r_2 \leq 2r$.    
	
	When moving from set $(\star)$ to set $(\star\star)$, the joint condition only refers to the final $r-t$ species. We can further reorder such species that are not shared among the two groups. Indeed, we know that $r_1^*$ species have been observed in group 1 only, while $r^*_2$ species are specific to group 2 only. 
	Hence, we assume the $r_1^*$ species are in positions from $t+1$ to $t+r_1^*$, and we fix an ordering in any of the $\binom{r-t}{r^*-1}$ possible ways. 
	The remaining species are $r-t-r_1^*$ which, from Equations \eqref{eqn:proof_pr_S_Kstar}, can be shown to equal $r^*_2$.
	Note that, from Equations \eqref{eqn:proof_pr_insieme2}, $n_{1,l}\geq 1$ implies $n_{2,l} = 0$ for $l=t+1, \ldots , r+r^*_1$ while $n_{2,l} \geq 1$ implies $n_{1,l} = 0$ for $l=t+r^*_1+1, \ldots , r$. This fully resolves the joint condition in set $(\star\star)$. 
	Moreover, for each $j=1,2$, the number of non-zero elements in vectors $\bmn_j$ is $t+r^*_j$, and from Equations \eqref{eqn:proof_pr_S_Kstar}, this equals $r_j$. This guarantees one of the conditions in Equation \eqref{eqn:proof_pr_insieme2}. As a consequence, we can discard zero elements in vectors $\bmn_1$ and $\bmn_2$ and we obtain
	\begin{equation} 
		\label{eqn:proof_pr_4}
		\begin{split}
			&\sum_{(\star)} \frac{1}{r!}  
			\prod_{j=1}^2
			\left\{
			\binom{n_j}{n_{j,1}, \ldots , n_{j,r}} \prod_{l=1}^r  (\gamma_j)_{n_{j,l}}
			\right\}  \\
			& \qquad = 
			\frac{1}{r!} \binom{r}{t} \binom{r-t}{r^*_1}
			\prod_{j=1}^2
			\left\{
			\sum_{(\star\star j)} 
			\binom{n_j}{n_{j,1}, \ldots , n_{j,r_j}}
			\prod_{l=1}^{r_j}  (\gamma_j)_{n_{j,l}}
			\right\},
		\end{split}
	\end{equation}
	where the sum over the sets $(\star\star j)$, for $j=1,2$, is extended over all vectors $(n_{j,1},\ldots,n_{j,r_j})$ such that $ n_{j,l} \geq 1$ and $\sum_{l=1}^{r_j} n_{j,l} = n_j$. 
	Equation \eqref{eqn:proof_pr_4} is properly defined as long as $0\leq r^*_1 \leq r-t$, which implies $r_1 \leq r$ and $r_2 \leq r$. In particular, this also ensures that $r_1+r_2 \leq 2r$. 
	
	Finally, we use Equation \eqref{eqn:gen_formula} to solve the final sums over the sets $(\star\star j)$, and we conclude that
	\begin{equation*} 
		\label{eqn:proof_pr_5}
		\begin{split}
			&\P\left(\mathcal K_{n_1, n_2} = r, K_{1, n_1} = r_1, K_{2, n_2} = r_2 \right)  \\
			& \qquad = 
			V_{n_1,n_2}^r
			\frac{1}{r!} \binom{r}{t} \binom{r-t}{r^*_1}
			\prod_{j=1}^2
			r_j!
			\lvert C(n_j, r_j; -\gamma_j )\rvert.
		\end{split}
	\end{equation*}
	The statement follows by plugging the values of $t$ and $r^*_1$ in terms of $r$, $r_1$ and $r_2$ reported in Equations \eqref{eqn:proof_pr_S_Kstar} and rearranging the 
	factorials and the binomial coefficients.
	
	\subsection{Additional details on the in-sample marginal statistics}
	\label{app:insample_marginal}
	
	\cite{colombi2023mixture} stated and proved the following expression for the marginal distribution of the prior number of distinct species:
	\begin{equation}
		\label{eqn:K12_prior}
		\begin{split}
			\P\left(\, \mathcal K_{n_1,n_2} = r \, \right) 
			=
			V_{n_1,n_2}^r 
			\sum_{z_1=0}^r  
			\sum_{z_2=0}^{r-z_1} 
			\binom{r-z_1}{z_2}
			\frac{(r-z_2)!}{z_1!}
			\prod_{j=1}^2  
			\left \lvert C\left(n_j, r-z_j ; -\gamma_j \right) \right \rvert,
		\end{split}
	\end{equation}
	for $r \in\{1,\ldots, n_1 + n_2\}$. 
	In Section \ref{app:prior_quantities}, we stated that the number of linearly independent variables we need to characterise the observed sample is three. In the main manuscript, we choose the local and global number of distinct species. However, one may not be interested in the local quantities while seeking only the case of global quantities. In particular, the most interesting case is the one involving both the distinct and the shared number of species, namely, $\mathcal S_{n_1,n_2}$ and $\mathcal K_{n_1,n_2}$. Hence, if we complete the set of linearly independent variables by augmenting with respect to $K^*_{1,m} = k^*_1$ and then marginalising it out, we obtain the following joint prior distribution
	
	\begin{equation}
		\label{eqn:SKprior_d2}
		\begin{split}
			& \P\left(\mathcal K_{n_1, n_2} =r, \ \mathcal S_{n_1,n_2} = t \right) \ = \ V_{n_1,n_2}^r  \\
			& \qquad \times 
			\sum_{k^*_1=0}^{r-t} 
			\binom{r-k^*_1}{t}
			\frac{(t+k^*_1)!}{k^*_1!}
			\lvert C(n_1, t+k^*_1; -\gamma_1 )\rvert 
			\lvert C(n_2, r-k^*_1; -\gamma_2 )\rvert \, ,
		\end{split}
	\end{equation}
	for $r \in\{1,\ldots, n_1+n_2\}$ and $t\in\{0,\ldots,\min\{r,n_1,n_2\}$.
	The proof of Equation \eqref{eqn:SKprior_d2} follows the same steps as the one in Section \ref{app:proof_joint_prior}.
	Additionally, the marginal distribution of $\mathcal S_{n_1,n_2}$ can be obtained by marginalising $\mathcal K_{n_1, n_2}$ out of Equation \eqref{eqn:SKprior_d2}. We have that,
	\begin{equation}
		\label{eqn:Sprior}
		\begin{split}
			\P\left( \mathcal S_{n_1,n_2} = t \right) = &
			\sum_{r=1}^{n_1+n_2} \sum_{k^*_1=0}^{r-t} V_{n_1,n_2}^r  
			\binom{r-k^*_1}{t}
			\frac{(t+k^*_1)!}{k^*_1!} \\
			& \qquad  \times 
			\lvert C(n_1, t+k^*_1; -\gamma_1 )\rvert 
			\lvert C(n_2, r-k^*_1; -\gamma_2 )\rvert \, .
		\end{split}
	\end{equation}
	
	Let us now move to consider the local quantities, hence fixing $j\in\{1,2\}$. 
	The marginal distribution for the local number of prior distinct species is 
	\begin{equation}
		\label{eqn:Kj_prior}
		\P\left(\, K_{j,n_j} = r_j \, \right) 
		\ = \ 
		V_{n_j}^{r_j} \, 
		\lvert C\left(n_j,r_j; -\gamma_j \right) \rvert,
	\end{equation}
	for $r_j\in\{1,\ldots, n_j\}$ and where $V_{n_j}^{r_j}$ is obtained from $V_{n_1,n_2}^r$, defined in Equation \eqref{eqn:Vprior}, setting $n_{j^\prime} = 0$, for $j^\prime \neq j$. In particular, this implies $r = r_j$.
	We also notice that $V_{n_j}^{r_j}$ coincides with the definition of the V coefficients in \cite{GnedinPitman2006} and \cite{millerharrison}. The proof of this result is given in \cite{argiento2022annals}.

	\section{Details and proofs of the results in Section \ref{section:posterior}}
	\label{app:proof_sec_prior}
	\subsection{Proof of Equation \eqref{eqn:qMpost}}
	\label{app:proof_qMpost}
	The proof of Equation \eqref{eqn:qMpost} easily follows from Section \ref{app:proof_peppf}. Indeed, we can look at Equation \eqref{eqn:proof_peppf_4} as a posterior marginal distribution with respect to $M$. 
	It follows that,
	\begin{equation}
		\label{eqn:proof_qMpost_1}
		\begin{split}
			& q_M(\mstar\mid\bmX) \propto
			\frac{(\mstar+r)!}{\mstar!}
			q_M(\mstar + r) 
			\prod_{j=1}^2
			\frac{1}{\left(\gamma_j(\mstar + r)\right)_{n_{j}}}
			\prod_{j=1}^2\prod_{l=1}^r
			\left(\gamma_j\right)_{n_{j,l}} \\ 
			& \qquad \propto
			\frac{(\mstar+r)!}{\mstar!}
			q_M(\mstar + r) 
			\prod_{j=1}^2
			\frac{1}{\left(\gamma_j(\mstar + r)\right)_{n_{j}}}.
		\end{split}
	\end{equation}
	Then, the normalising constant for the previous expression is
	\begin{equation}
		\label{eqn:proof_qMpost_2}
		\begin{split}
			\sum_{\mstar = 0}^\infty
			\frac{(\mstar+r)!}{\mstar!}
			q_M(\mstar + r) 
			\prod_{j=1}^2
			\frac{1}{\left(\gamma_j(\mstar + r)\right)_{n_{j}}},
		\end{split}
	\end{equation}
	which is, up to a change of variables, the $V_{n_1,n_2}^r$ coefficient defined in Equation \eqref{eqn:Vprior}.
	
	\subsection{Proof of Equation \eqref{eqn:ExpqMpost}}
	\label{app:proof_ExpqMpost}
	In the following, we compute the expected value of the $M^\star$ whose probability mass function is $q^\star_{M\mid\bmX}$, defined in Equation \eqref{eqn:qMpost}, i.e., 
	\begin{equation*}
		\label{eqn:qMpost_1}
		\begin{split}
			&\E_{q^\star_{M\mid\bmX}}\left(M^\star\right) 
			=
			\frac{1}{V_{n_1,n_2}^{r}}
			\sum_{m^\star = 1}^\infty
			m^\star(m^\star+r)_{r\downarrow}q_M(m^\star + r)
			\prod_{j=1}^d \frac{1}{\left(\gamma_j\left(m^\star+r\right)\right)_{n_j}} \\
			& \quad =
			\frac{1}{V_{n_1,n_2}^{r}}
			\sum_{m^{\star\star} = 0}^\infty
			(m^{\star\star}+1)(m^{\star\star}+r+1)_{r\downarrow}q_M(m^{\star\star} + r + 1)
			\prod_{j=1}^d \frac{1}{\left(\gamma_j\left(m^{\star\star}+r+1\right)\right)_{n_j}}
			\\
			& \quad =
			\frac{1}{V_{n_1,n_2}^{r}}
			\sum_{m^{\star\star} = 0}^\infty
			(m^{\star\star} + r + 1)_{(r+1)\downarrow}
			q_M(m^{\star\star} + r + 1)
			\prod_{j=1}^d \frac{1}{\left(\gamma_j\left(m^{\star\star}+r+1\right)\right)_{n_j}} 
			\, ,
		\end{split}
	\end{equation*}
	where we first applied the change of index $m^{\star\star} = m^\star+1$ and then we used the following identity: 
	$(m^{\star\star}+1)(m^{\star\star}+r+1)_{r\downarrow}=(m^{\star\star}+r+1)_{(r+1)\downarrow}$. Then, we note that 
	$(m^{\star\star} + r + 1)_{(r+1)\downarrow} = 0$ for $m^{\star\star} \leq r$. Hence, we change variables once again, setting $\bar{m} = m^{\star\star} + r + 1$. By doing so, we obtain 
	\begin{equation*}
		\label{eqn:qMpost_2}
		\begin{split}
			\E_{q^\star_{M\mid\bmX}}\left(M^\star\right) 
			\, = \,
			\frac{1}{V_{n_1,n_2}^{r}}
			\sum_{\bar{m} = r+1}^\infty
			(\bar{m})_{(r+1)\downarrow}
			q_M(\bar{m})
			\prod_{j=1}^d \frac{1}{\left(\gamma_j\left(\bar{m}\right)\right)_{n_j}} 
			% & \quad =
			\, = \,
			\frac{V_{n_1,n_2}^{r+1}}{V_{n_1,n_2}^{r}} \, ,
		\end{split}
	\end{equation*}
	where the final equivalence follows by definition, see Equation \eqref{eqn:Vprior}.
	
	\subsection{Proof of Equation \eqref{eqn:ExpqMpost_approx}}
	\label{app:proof_ExpqMpost_approx}
	The expected value in Equation \eqref{eqn:ExpqMpost} is the ratio of two $V$ coefficients. 
	Using the asymptotic expansion given in Equation \eqref{eqn:Vapprox}, we have that
	\begin{equation}
		\label{eqn:ExpqMpost_approx_proof_1}
		\begin{split}
			&\E\left(M^\star\mid\bmX\right) \ = \ \frac{V_{n_1,n_2}^{r+1}}{V_{n_1,n_2}^r} \\
			& \ \sim 
			(r+1)\frac{(\gamma_1 (r+1) )_{n_1}(\gamma_2 (r+1) )_{n_2}}{(\gamma_1 r)_{n_1}(\gamma_2 r)_{n_2}}
			\frac{q_M(r+1)}{q_M(r)} \\
			& \qquad \times
			\frac{
				\left\{
				1+n_1^{-\gamma_1}n_2^{-\gamma_2}(r+2)(\gamma_1 (r+1))_{\gamma_1}(\gamma_2 (r+1))_{\gamma_2}\frac{q_M(r+2)}{q_M(r+1)} + o\left(n_1^{-\gamma_1}n_2^{-\gamma_2}\right)
				\right\}
			}{
				\left\{
				1+n_1^{-\gamma_1}n_2^{-\gamma_2}(r+1)(\gamma_1 r)_{\gamma_1}(\gamma_2 r)_{\gamma_2}\frac{q_M(r+1)}{q_M(r)} + o\left(n_1^{-\gamma_1}n_2^{-\gamma_2}\right)
				\right\}
			} \, .
		\end{split}
	\end{equation}
	To further expand the second term, let us define $C_r = (r+1)(\gamma_1 r)_{\gamma_1}(\gamma_2 r)_{\gamma_2} q_M(r+1)/q_M(r)$ and $\nstar_j = (n_j)^{\gamma_j}$ . Hence, we have that
	\begin{equation*}
		\label{eqn:ExpqMpost_approx_proof_2}
		\begin{split}
			&\frac{\left(1 + C_{r+1}(\nstar_1)^{-1}(\nstar_2)^{-1} + o\left((\nstar_1)^{-1}(\nstar_2)^{-1}\right) \right)}
			{\left(1 + C_{r}(\nstar_1)^{-1}(\nstar_2)^{-1} + o\left((\nstar_1)^{-1}(\nstar_2)^{-1}\right) \right)} \\
			& \quad = 
			\left(1 + C_{r+1}(\nstar_1)^{-1}(\nstar_2)^{-1} + o\left((\nstar_1)^{-1}(\nstar_2)^{-1}\right) \right)
			\left(1 - C_{r}(\nstar_1)^{-1}(\nstar_)2^{-1} + o\left((\nstar_1)^{-1}(\nstar_2)^{-1}\right) \right) \\ 
			& \quad = 
			1 + (C_{r+1}-C_{r})(\nstar_1)^{-1}(\nstar_2)^{-1} + o\left((\nstar_1)^{-1}(\nstar_2)^{-1}\right) \, .
		\end{split}
	\end{equation*}
	Plugging the latter expansion into \eqref{eqn:ExpqMpost_approx_proof_1}, we have that,
	\begin{equation}
		\label{eqn:ExpqMpost_approx_proof_3}
		\begin{split}
			&\E\left(M^\star\mid\bmX\right) \ = \ \frac{V_{n_1,n_2}^{r+1}}{V_{n_1,n_2}^r} \\
			& \ \sim 
			(r+1)\frac{q_M(r+1)}{q_M(r)}
			\prod_{j=1}^2(\gamma_jr)_{\gamma_j}\frac{\Gamma(\gamma_jr + n_j)}{\Gamma(\gamma_jr+\gamma_j+ n_j)} \\
			& \qquad \times
			\left\{
			1 + (C_{r+1}-C_{r})(\nstar_1)^{-1}(\nstar_2)^{-1} + o\left((\nstar_1)^{-1}(\nstar_2)^{-1}\right)
			\right\} \\
			& \ = 
			(r+1)\frac{q_M(r+1)}{q_M(r)}(\gamma_1r)_{\gamma_1}(\gamma_2r)_{\gamma_2}
			(\nstar_1)^{-1}(\nstar_2)^{-1}o\left((\nstar_1)^{-1}(\nstar_2)^{-1}\right) \\
			& \qquad \times
			\left\{
			1 + (C_{r+1}-C_{r})(\nstar_1)^{-1}(\nstar_2)^{-1} + o\left((\nstar_1)^{-1}(\nstar_2)^{-1}\right)
			\right\} \\
			& \ =
			(r+1)\frac{q_M(r+1)}{q_M(r)}(\gamma_1r)_{\gamma_1}(\gamma_2r)_{\gamma_2}
			(\nstar_1)^{-1}(\nstar_2)^{-1}
			\left(1 + o\left((\nstar_1)^{-1}(\nstar_2)^{-1}\right) \right) \, .
		\end{split}
	\end{equation}
	\subsection{Proof of Theorem \ref{thm:jointpost_d2}}
	\label{app:proof_thm_post}
	Firstly, we notice from Section \ref{app:post_quantities} that the number of linearly independent quantities we need to fully characterise all the posterior samples is five. It follows that $K^{(n_1,n_2)}_{m_1,m_2}, K^{(n_1)}_{1, m_1}, \mathcal K^{(n_2)}_{2,m_2}$ are not enough, and this would require introducing two more random variables that would be marginalised out. Of course, we must choose them so that the five selected quantities form a system of linearly independent variables. 
	To achieve this, we choose $S^*_m=s^*$ and $K^{*(n)}_{1,m}=k^*_1$. Now, from Equations \eqref{eqn:post_system_eq}, it follows that 
	\begin{equation}
		\label{eqn:proof_post_others}
		k^*_2 \ = \  k - k^*_1 - s^*, \ 
		s \ = \ k_1 + k_2 - k, \ 
		s_{2,1} \ = \ k_1 - k^*_1 - s^*, \ 
		s_{1,2} \ = \ k_2 + k^*_1 - k
	\end{equation}
	As for the support of the variables, it is natural to ask for $0 \leq k_j \leq m_j$ and $j=1,2$. Furthermore, from Equation \eqref{eqn:proof_post_others}, we see that $k\leq k_1 + k_2$ is a more stringent condition with respect to $k\leq m_1+m_2$. Hence, we also ask for $0 \leq k \leq k_1+k_2$.
	
	Let $(\pi_1,\pi_2)$ denote the partition of the additional observations 
	$\{ (X_{n_j+1}, \ldots , X_{n_j+m_j}): \; j=1, 2 \}$ into $r+k$ sets of distinct values, of which $r$ coincide with already observed values in the initial sample and the remaining $k$ are new. 
	We also indicate by $\bmm_j (\pi_j) = (m_{j,1} (\pi_j), \ldots , m_{j,K+r}(\pi_j)) $ the corresponding frequency counts, as $j = 1,2$. 
	Finally, let $\Pcr_{m_1,m_2,r+k}$ be the space of all such possible partitions, so that $(\pi_1,\pi_2) \in \Pcr_{m_1,m_2,r+k}$. Moreover, let $n=n_1+n_2$ and $m = m_1+m_2$.
	The posterior probability of interest can be evaluated as follows,
	\begin{equation} 
		\label{eqn:proof_jointpost_1}
		\begin{split}
			&\P\left(\mathcal K^{(n_1,n_2)}_{m_1,m_2} = k,
			\ K^{(n_1)}_{1, m_1} = k_1,
			\ K^{(n_2)}_{2,m_2} = k_2
			\mid \bmX \right)
			\\ 
			& \qquad = 
			\sum_{\Pcr_{m_1,m_2,r+k}}  
			\frac{ \Pi_{r+k}^{(n+m)} (\bmn_1+\bmm_1 (\pi_1),  \bmn_2+ \bmm_2 (\pi_2)) }
			{ \Pi_r^{(n)} (\bmn_1,  \bmn_2) },
		\end{split}
	\end{equation}
	where the sum is extended for all $(\pi_1,\pi_2) \in \Pcr_{m_1,m_2,r+k}$.
	
	We now elaborate the numerator in Equation \eqref{eqn:proof_jointpost_1}. 
	Writing explicitly all terms in Equations \eqref{eqn:peppf} and \eqref{eqn:Vprior} we have that
	\begin{equation}
		\label{eqn:proof_jointpost_2}
		\begin{split}
			& \Pi_{r+K}^{(n+m)} (\bmn_1+\bmm_1 (\pi_1),  \bmn_2+ \bmm_2 (\pi_2)) \\
			& \ =  
			\sum_{\mstar=r+k}^\infty 
			(\mstar)_{(r+k)\downarrow} 
			q_M (\mstar) 
			\prod_{j=1}^2
			\left\{
			\frac{1}{(\gamma_j (\mstar))_{n_j+m_j}} 
			\prod_{l=1}^{k+r} 
			(\gamma_j)_{n_{j,l}+m_{j,l} (\pi_j)}
			\right\} \\
			& \ = 
			\sum_{\mbar=k}^\infty 
			(\mbar)_{k\downarrow}\frac{(\mbar+r)!}{\mbar!}
			q_M (\mbar + r) 
			\prod_{j=1}^2
			\left\{
			\frac{1}{(\gamma_j (\mbar + r))_{n_j+m_j}} 
			\prod_{l=1}^{k+r} 
			(\gamma_j)_{n_{j,l}+m_{j,l} (\pi_j)}
			\right\} \\
			& \ = 
			\sum_{\mbar=k}^\infty 
			\Biggl\{
			(\mbar)_{k\downarrow}\frac{(\mbar+r)!}{\mbar!}
			q_M (\mbar + r) 
			\prod_{j=1}^2 \frac{1}{\left(\gamma_j(r+\mbar)\right)_{n_{j}}}
			\prod_{j=1}^2\prod_{l=1}^{r} (\gamma_j)_{n_{j,l}} \\
			& \qquad  \times
			\prod_{j=1}^2\prod_{l=1}^{r} \frac{(\gamma_j)_{n_{j,l}+m_{j,l} (\pi_j)}} {(\gamma_j)_{n_{j,l}}}
			\prod_{j=1}^2\prod_{l=r+1}^{r+k} (\gamma_j)_{m_{j,l} (\pi_j)}
			\prod_{j=1}^2\frac{\left(\gamma_j(r+\mbar)\right)_{n_j}}{\left(\gamma_j(r+\mbar)\right)_{n_j + m_j}}
			\Biggr\}
		\end{split}
	\end{equation}
	The second equality in Equation \eqref{eqn:proof_jointpost_2} follows after the change of variables $\mbar = \mstar - r$ and using the identity $1/(\mbar-k)! = (\mbar)_{k\downarrow}/\mbar!$. The third equality is obtained by rearranging terms after multiplying and dividing by $\left(\gamma_j(r+\mbar)\right)_{n_j + m_j}$ as well as by 
$(\gamma_j)_{n_{j,l}}$, for all $j=1,2$ and $l=1,\ldots,r$. Additionally, we also assumed, without loss of generality, that the first $r$ species are the ones that have already been observed. 
	
	Then, plugging Equations \eqref{eqn:peppf} and \eqref{eqn:proof_jointpost_2} into Equation \eqref{eqn:proof_jointpost_1} we get
	\begin{equation}
		\label{eqn:proof_jointpost_3}
		\begin{split}
			&\P\left(\mathcal K^{(n_1,n_2)}_{m_1,m_2} = k, K^{(n_1)}_{1, m_1} = k_1,
			K^{(n_2)}_{2,m_2} = k_2 \mid \bmX \right)
			\\ 
			& \ = 
			\sum_{\Pcr_{m_1,m_2,r+k}}  
			\sum_{\mbar=k}^\infty 
			\Biggl\{
			(\mbar)_{k\downarrow}
			\frac{1}{V_{n_1,n_2}^r}
			\frac{(\mbar+r)!}{\mbar!}
			q_M (\mbar + r) 
			\prod_{j=1}^2 \frac{1}{\left(\gamma_j(r+\mbar)\right)_{n_{j}}}
			\\
			& \qquad \qquad \times
			\prod_{j=1}^2\prod_{l=1}^{r} \frac{(\gamma_j)_{n_{j,l}+m_{j,l} (\pi_j)}} {(\gamma_j)_{n_{j,l}}}
			\prod_{j=1}^2\prod_{l=r+1}^{r+k} (\gamma_j)_{m_{j,l} (\pi_j)}
			\prod_{j=1}^2\frac{\left(\gamma_j(r+\mbar)\right)_{n_j}}{\left(\gamma_j(r+\mbar)\right)_{n_j + m_j}}
			\Biggr\} \\
			& \ = 
			\sum_{\Pcr_{m_1,m_2,r+k}} \  
			\sum_{\mbar=k}^\infty
			\left\{
			(\mbar)_{k\downarrow} q_M(\mbar\mid\bmX)          \prod_{j=1}^2\frac{\left(\gamma_j(r+\mbar)\right)_{n_j}}
			{\left(\gamma_j(r+\mbar)\right)_{n_j + m_j}} 
			\right\} \\
			& \qquad \qquad \times
			\prod_{j=1}^2\prod_{l=1}^{r} \frac{(\gamma_j)_{n_{j,l}+m_{j,l} (\pi_j)}} {(\gamma_j)_{n_{j,l}}}
			\prod_{j=1}^2\prod_{l=r+1}^{r+k} (\gamma_j)_{m_{j,l} (\pi_j)}.
		\end{split}
	\end{equation}
	In Equation \eqref{eqn:proof_jointpost_3}, we recognized the posterior distribution $q_M(\cdot\mid\bmX)$ defined in Equation \eqref{eqn:qMpost}. 
	Finally, we notice that the inner infinite sum does not depend on $(\pi_1,\pi_2)$. In particular, we have 
	\begin{equation*}
		\label{eqn:Vpost_proof_1}
		\begin{split}
			&\sum_{\mstar=k}^\infty
			\left\{
			(\mstar)_{k\downarrow} q_M(\mstar\mid\bmX)          
			\prod_{j=1}^2\frac{\left(\gamma_j(r+\mstar)\right)_{n_j}}
			{\left(\gamma_j(r+\mstar)\right)_{n_j + m_j}} 
			\right\} \\
			&\quad = 
			\frac{1}{V_{n_1,n_2}^r}
			\sum_{\mstar=k}^\infty
			\left\{
			\frac{\mstar!(\mstar + r)!}{(\mstar-k)!\mstar!} q_M(\mstar + r)          
			\prod_{j=1}^2
			\frac{\Gamma\left(\gamma_j(r+\mstar)\right)}{\Gamma\left(\gamma_j(r+\mstar) + n_j\right)}
			\frac{\Gamma\left(\gamma_j(r+\mstar) + n_j\right)}{\Gamma\left(\gamma_j(r+\mstar) + n_j + m_j\right)}
			\right\} \\
			&\quad = 
			\frac{1}{V_{n_1,n_2}^r}
			\sum_{\mstar=k}^\infty
			\left\{
			\frac{(\mstar + r)!}{(\mstar-k)!} q_M(\mstar + r)          
			\prod_{j=1}^2
			\frac{\Gamma\left(\gamma_j(r+\mstar) \right)}{\Gamma\left(\gamma_j(r+\mstar) + n_j + m_j\right)}
			\right\} \\
			&\quad = 
			\frac{1}{V_{n_1,n_2}^r}
			\sum_{m^{\star\star}=k+r}^\infty
			\left\{
			\frac{m^{\star\star}!}{(m^{\star\star}-r-k)!} q_M(m^{\star\star})          
			\prod_{j=1}^2
			\frac{1}{\left(\gamma_jm^{\star\star}\right)_{n_j+m_j}}
			\right\} 
			= \frac{V_{n_1+m_1,n_2+m_2}^{r+k}}{V_{n_1,n_2}^r}
		\end{split}
	\end{equation*}
	where the final equality follows from noticing that $m^{\star\star}!/(m^{\star\star}-k-r)! = (m^{\star\star})_{(k+r)\downarrow}$.
	
	We now focus on solving the sum over the set of partitions $\Pcr_{m_1,m_2,r+k}$. The partitions $(\pi_1,\pi_2)$ only appear through the cardinalities of the sets; hence, the quantities of interest can equivalently be computed as
	\begin{equation*}
		\label{eqn:proof_jointpost_4}
		\begin{split}
			&\P\left(\mathcal K^{(n_1,n_2)}_{m_1,m_2} = k, K^{(n_1)}_{1, m_1} = k_1,
			K^{(n_2)}_{2,m_2} = k_2 \mid \bmX \right) \ = \ 
			\frac{V_{n_1+m_1,n_2+m_2}^{r+k}}{V_{n_1,n_2}^r} \\
			& \qquad \times 
			\frac{1}{k!}
			\sum_{(\Delta)} 
			\binom{m_j}{m_{j,1}, \ldots , m_{j,r+k}} 
			\prod_{j=1}^2\prod_{l=1}^{r} \frac{(\gamma_j)_{n_{j,l}+m_{j,l}}} {(\gamma_j)_{n_{j,l}}}
			\prod_{j=1}^2\prod_{l=r+1}^{r+k} (\gamma_j)_{m_{j,l}},
		\end{split}
	\end{equation*}
	where the sum is extended over the set $(\Delta)$ of non-negative integers $\bmm_1 = \left( m_{1,1},\ldots, m_{1,r+k}\right)$ and $\bmm_2 = \left( m_{2,1},\ldots, m_{2,r+k}\right)$ that satisfy the following constraints,
	\begin{equation}
		\label{eqn:proof_post_insieme1}
		\begin{split}
			&\sum_{l=1}^{r+k} m_{j,l} = m_j \, , 
			\quad m_{j,l}\geq 0 
			\quad j=1,2; \, l = 1,\ldots,r+k \, , \\
			&m_{1,l} + m_{2,l} \geq 1 
			\quad l = r+1,\ldots,r+k \, , \\
			&\sum_{l=1}^{r} \delta_{\left\{m_{j,l} \geq 1, \, n_{j,l}=0\right\}} + \sum_{l=r+1}^{r+k} \delta_{\left\{m_{j,l} \geq 1 \right\}} = k_j \, ,
			\quad j=1,2 \, .
		\end{split}
	\end{equation}
	As mentioned at the beginning of the proof, the knowledge of $k$, $k_1$, and $k_2$ only is not enough to fully characterise $\bmm_1$ and $\bmm_2$ and therefore decouple the joint condition in Equation \eqref{eqn:proof_post_insieme1} as done in Section \ref{app:proof_joint_prior}.
	To do so, we introduce the auxiliary quantities $S^*_m=s^*$ and $K^{*(n)}_{1,m}=k^*_1$. See Section \ref{app:post_quantities} for their interpretation. Since we are not interested in inferring such quantities, we then marginalised them out. As a consequence of this augmentation, all other posterior quantities (namely, $s$, $k^*_2$, $s_{1,2}$, and $s_{2,1}$) can be recovered, and their expressions are reported in Equation \eqref{eqn:proof_post_others}.
	Additionally, $s^*$ and $k^*_1$ fix some ordering of the new species. 
	We say that the new shared species among the new $k$ species are located in the first $s^*$ positions. Their order is fixed in any of the $\binom{k}{s^*}$ equivalent ways. 
	We point out that as we set $s^*$, we also set $s-s^*$ as the number of new shared species among the already observed $r$ species.
	Then, consecutively to the $s^*$ new shared species, we set the following $k^*_1$ species to be those that are found in group 1 only. The order is fixed in any of the $\binom{k-s^*}{k^*_1}$ equivalent ways. In particular, among the $r$ new species, we are left with $k^*_2 = k-s^*-k^*_1$ species that are specific to area 2 only and which are placed, by construction, in the final positions. 
	It follows that the target probability equals
	\begin{equation*}
		\label{eqn:proof_jointpost_5}
		\begin{split}
			&\P\left(\mathcal K^{(n_1,n_2)}_{m_1,m_2} = k, K^{(n_1)}_{1, m_1} = k_1,
			K^{(n_2)}_{2,m_2} = k_2 \mid \bmX \right) = \\
			& \qquad = 
			\frac{V_{n_1+m_1,n_2+m_2}^{r+k}}{V_{n_1,n_2}^r} 
			\frac{1}{k!}
			\sum_{s^*=0}^{k}
			\sum_{k^*_1=0}^{k-s^*}
			\binom{k}{s^*}\binom{k-s^*}{k^*_1} \\
			& \qquad \qquad \times 
			\sum_{(\Delta\Delta)} 
			\binom{m_j}{m_{j,1}, \ldots , m_{j,r+k}} 
			\prod_{j=1}^2\prod_{l=1}^{r} \frac{(\gamma_j)_{n_{j,l}+m_{j,l}}} {(\gamma_j)_{n_{j,l}}}
			\prod_{j=1}^2\prod_{l=r+1}^{r+k} (\gamma_j)_{m_{j,l}}
			,
		\end{split}
	\end{equation*}
	where the summation over $(\Delta\Delta)$ satisfies the following set of constraints:
	\begin{equation}
		\label{eqn:proof_post_insieme2}
		\begin{split}
			&\sum_{l=1}^{r+k} m_{j,l} = m_j \, , \quad j=1,2 \, , \\
			& m_{j,l} \geq 0 \quad j=1,2; \, l = 1,\ldots, r \, , \\
			&\sum_{l=1}^{r} \delta_{
				\left\{ n_{1,l}\cdot n_{2,l} = 0, \,
				n_{1,l}+m_{1,l} \geq 1, \,
				n_{2,l}+m_{2,l} \geq 1, \, 
				\right\}} = s - s^* \, , \\
			& m_{j,l} \geq 1 \quad j=1,2; \, l = r+1,\ldots, r+s^* \, , \\
			& m_{1,l} \geq 1, \, m_{2,l} = 0 \quad j=1,2; \, l = r+s^*+1,\ldots, r+s^*+k^*_1 \, , \\
			& m_{1,l} = 0, \, m_{2,l} \geq 1 \quad j=1,2; \, l = r+s^*+k^*_1+1,\ldots, r+k \, , \\
			&\sum_{l=1}^{r} \delta_{\left\{m_{j,l} \geq 1, \, n_{j,l}=0\right\}} = s_{j^\prime,j} \quad j,j^\prime=1,2 \, , \ j\neq j^\prime \, .
		\end{split}
	\end{equation}
	The final conditions in Equation \eqref{eqn:proof_post_insieme2} rise noticing that the second sum in the corresponding condition in Equation \eqref{eqn:proof_post_insieme1} has been set equal to $s^*+k^*_j$, for $j=1,2$, and therefore the first sum must be equal to $k_j - s^* - k_j^*$, which coincides with $s_{j^\prime,j}$, see Equation \eqref{eqn:proof_post_others}. 
	
	It is still not possible to decouple $(\Delta\Delta)$ into two disjoint sets because of the joint condition regarding the number of new shared species among the already observed $r$ species. See line 3 in Equation \eqref{eqn:proof_post_insieme2}.
	However, $\bmm_1$ and $\bmm_2$ can be further reordered to ensure this.
	Indeed, from Section \ref{app:prior_quantities}, we know that the observed sample $\bmn_1$ and $\bmn_2$ can be arranged so that the first $t$ out of $r$ species are shared, then the following $r^*_1$ species are only present in the first group while the remaining $r^*_2$ are only present in the second group. 
	Let us fix $j$. In order to satisfy the joint condition in Equation \eqref{eqn:proof_post_insieme2}, it is enough to reorder $m_{j,l}$ and $l=1,\ldots,r$ so that the first $s_{j^\prime,j}$ species, which were first only observed in group $j^\prime$ and then observed in group $j$, are the first $s_{j^\prime,j}$ species among the $r^*_{j^\prime}$ such that $n_{j,l} = 0$. By construction, it follows that the remaining $r^*_{j^\prime} - s_{j^\prime,j}$ must be such that $m_{j,l} = 0$. This can be done in $\binom{r^*_2}{s_{j^\prime,j}}$ equivalent ways. 
	We have,
	\begin{equation*}
		\label{eqn:proof_jointpost_6}
		\begin{split}
			&\P\left(\mathcal K^{(n_1,n_2)}_{m_1,m_2} = k, K^{(n_1)}_{1, m_1} = k_1,
			K^{(n_2)}_{2,m_2} = k_2 \mid \bmX \right) = \\
			& \qquad = 
			\frac{V_{n_1+m_1,n_2+m_2}^{r+k}}{V_{n_1,n_2}^r} 
			\frac{1}{k!}
			\sum_{s^*=0}^{k}
			\sum_{k^*_1=0}^{k-s^*}
			\binom{k}{s^*}\binom{k-s^*}{k^*_1}
			\binom{r^*_1}{s_{1,2}}\binom{r^*_2}{s_{2,1}}\\
			& \qquad \qquad \times 
			\sum_{(\Delta\Delta\Delta)} 
			\binom{m_j}{m_{j,1}, \ldots , m_{j,r+k}} 
			\prod_{j=1}^2\prod_{l=1}^{r} \frac{(\gamma_j)_{n_{j,l}+m_{j,l}}} {(\gamma_j)_{n_{j,l}}}
			\prod_{j=1}^2\prod_{l=r+1}^{r+k} (\gamma_j)_{m_{j,l}}
			,
		\end{split}
	\end{equation*}
	where the set $(\Delta\Delta\Delta)$ describes the following set of constraints,
	\begin{equation}
		\label{eqn:proof_post_insieme3}
		\begin{split}
			&\sum_{l=1}^{r+k} m_{j,l} = m_j  \, , \\
			&m_{1,l} \geq 0 \, ,  \quad l = 1,\dots,t+r^*_1 \, , \\
			&m_{1,l} \geq 1 \, , \quad l = t+r^*_1+1,\ldots,t+r^*_1+s_{2,1} \, , \\
			&m_{1,l} = 0 \, , \quad l = t+r^*_1+s_{2,1}+1,\ldots,r \, , \\
			&m_{2,l} \geq 0 \, , \quad l = 1,\dots,t \quad l = r-r^*_2,\ldots,r \, , \\
			&m_{2,l} \geq 1 \, , \quad l = t+1,\ldots,t+s_{1,2} \, , \\
			&m_{2,l} = 0 \, , \quad l = t+s_{1,2}+1,\ldots,t+r^*_1 \, , \\
			& m_{j,l} \geq 1 \, , \quad j=1,2; \, , l = r+1,\ldots, r+s^* \, , \\
			& m_{1,l} \geq 1,  \, m_{2,l} = 0 \, , \quad j=1,2; \, l = r+s^*+1,\ldots, r+s^*+k^*_1 \, , \\
			& m_{1,l} = 0, \, m_{2,l} \geq 1 \, , \quad j=1,2; \, l = r+s^*+k^*_1+1,\ldots, r+k \, .
		\end{split}
	\end{equation}
	We are finally ready to decouple the set $(\Delta\Delta\Delta)$. We discard the elements such that $m_{j,l} = 0$ and note that the number of species such that $m_{j,l}\geq 1$ is $s_{j^\prime,1}+s^*+k^*j$, which equals $k_j$. Among the first $r$ species, the number of those such that $m_{j,l}\geq 0$ is $t+r^*_j$, which equals $r_j$. 
	We have that,
	\begin{equation}
		\label{eqn:proof_jointpost_7}
		\begin{split}
			&\P\left(\mathcal K^{(n_1,n_2)}_{m_1,m_2} = k, K^{(n_1)}_{1, m_1} = k_1,
			K^{(n_2)}_{2,m_2} = k_2 \mid \bmX \right) = \\
			& \qquad = 
			\frac{V_{n_1+m_1,n_2+m_2}^{r+k}}{V_{n_1,n_2}^r} 
			\frac{1}{k!}
			\sum_{s^*=0}^{k}
			\sum_{k^*_1=0}^{k-s^*}
			\binom{k}{s^*}\binom{k-s^*}{k^*_1}
			\binom{r^*_1}{s_{1,2}}\binom{r^*_2}{s_{2,1}}\\
			& \qquad \qquad \times 
			\prod_{j=1}^2 \sum_{(\Delta j)} 
			\binom{m_j}{m_{j,1}, \ldots , m_{j,r_j+k_j}} 
			\prod_{l=1}^{r_j} \left(\gamma_j + n_{j,l} \right)_{m_j}
			\prod_{l=r_j+1}^{r_j+k_j} \left(\gamma_j\right)_{m_{j,l}}
			,
		\end{split}
	\end{equation}
	where the sum is extended over the sets $(\Delta j)$, for $j=1,2$, of non-negative integers $\left( m_{1,1},\ldots, m_{1,r_j+k_j}\right)$ and $\left( m_{2,1},\ldots, m_{2,r_j+k_j}\right)$ which satisfy the following constraints, 
	\begin{equation*}
		\label{eqn:proof_post_insieme4}
		\begin{split}
			&\sum_{l=1}^{r_j+k_j} m_{j,l} = m_j  \, , \\
			&m_{j,l} \geq 0 \, , \quad l = 1,\dots,r_j \, , \\
			&m_{j,l} \geq 1 \, , \quad l = r_j+1,\dots,r_j+k_j \, ,
		\end{split}
	\end{equation*}
	for each $j=1,2$.
	Moreover, in Equation \eqref{eqn:proof_jointpost_7}, we leveraged the fact that $n_{j,l} = 0$ for $l=r_j+1,\ldots,r$ and the identity $(x)_{n+m}/(x)_{n} = (x+n)_m$, that holds for any non-negative integers $n,m$ and any positive real number $x>0$. We are finally left to compute the sums over the sets $(\Delta j)$. 
	To do so, let us fix $j$ and introduce the additional variable $h_j$, with $h_j \in \{k_j,k_j+1,\dots,m_j\}$, such that 
	$$
	\sum_{j = 1}^{r_j} m_{j,l} = m_j - h_j 
	\, , \quad
	\sum_{j = r_j+1}^{r_j+k_j} m_{j,l} = h_j \, .
	$$
It follows that
	\begin{equation}
		\label{eqn:proof_jointpost_8}
		\begin{split}
			&\sum_{(\Delta j)} 
			\binom{m_j}{m_{j,1}, \ldots , m_{j,r_j+k_j}} 
			\prod_{l=1}^{r_j} \left(\gamma_j + n_{j,l} \right)_{m_j}
			\prod_{l=r_j+1}^{r_j+k_j} \left(\gamma_j\right)_{m_{j,l}}
			\\
			& \qquad =
			\sum_{h_j = k_j}^{m_j}
			\sum_{(\Delta j,1)}
			\sum_{(\Delta j,2)}
			\binom{m_j}{h_j}
			\binom{m_j-h_j}{m_{j,1},\dots,m_{j,r_j}}
			\binom{h_j}{m_{j,r_j+1},\dots,m_{j,r_j+k_j}} \\
			& \qquad \qquad \times
			\prod_{l=1}^{r_j} \left(\gamma_j + n_{j,l} \right)_{m_j}
			\prod_{l=r_j+1}^{r_j+k_j} \left(\gamma_j\right)_{m_{j,l}},
		\end{split}
	\end{equation}
	where the sets $(\Delta j,1)$ and $(\Delta j,2)$ are defined as
	\begin{equation}
		\label{eqn:proof_post_insieme5}
		\begin{split}
			&(\Delta j,1) = \left\{
			(m_{j,1},\dots,m_{j,r_j}) \ : \ 
			m_{j,l} \geq 0, \text{ and } \sum_{l=1}^{r_j} m_{j,l} = m_j - h_j
			\right\} \\
			& (\Delta j,2) = \left\{
			(m_{j,r_j+1},\dots,m_{j,r_j+k_j}) \ : \ 
			m_{j,l} \geq 1, \text{ and } \sum_{l=r_j+1}^{r_j+k_j} m_{j,l} = h_j
			\right\}
		\end{split}
	\end{equation}
	
	Then, from Vandermonde's generalised identity, see Equation \eqref{eqn:vandermonde}, it follows that
	\begin{equation}
		\label{eqn:proof_jointpost_9}
		\begin{split}
			\sum_{(\Delta j,1)} \binom{m_j-h_j}{m_{j,1}, \ldots , m_{j,r_j}} 
			\prod_{l=1}^{r_j}
			\left(\gamma_j+n_{j,l}\right)_{m_{j,l} } 
			= \left(\sum_{l=1}^{r_j}\left(\gamma_j + n_{j,l}\right)\right)_{m_j-h_j}
			= \left(\gamma_j r_j + n_j\right)_{m_j-h_j}
		\end{split}
	\end{equation}
	while, thanks to Equation \eqref{eqn:gen_formula}, we have
	\begin{equation}
		\label{eqn:proof_jointpost_10}
		\begin{split}
			\sum_{(\Delta j,2)} 
			\binom{h_j}{m_{j,r_j+1}, \ldots , m_{j,r_j+k_j}} 
			\prod_{l=r_j+1}^{r_j+k_j}
			(\gamma_j)_{m_{j,l} } = 
			k_j! \,  \lvert C(h_j, k_j; -\gamma_j)\rvert .
		\end{split}
	\end{equation}
	Finally, plugging Equations \eqref{eqn:proof_jointpost_9} and \eqref{eqn:proof_jointpost_10} into \eqref{eqn:proof_jointpost_8}, we have 
	\begin{equation}
		\label{eqn:proof_jointpost_11}
		\begin{split}
			&\sum_{(\Delta j)} 
			\binom{m_j}{m_{j,1}, \ldots , m_{j,r_j+k_j}} 
			\prod_{l=1}^{r_j} \left(\gamma_j + n_{j,l} \right)_{m_j}
			\prod_{l=r_j+1}^{r_j+k_j} \left(\gamma_j\right)_{m_{j,l}}
			\\
			& \qquad =
			k_j!
			\sum_{h_j = k_j}^{m_j}
			\binom{m_j}{h_j} \,
			\lvert C(h_j, k_j; -\gamma_j)\rvert
			\left(\gamma_j r_j + n_j\right)_{m_j-h_j} \\
			& \qquad = 
			k_j! \, \lvert C\left(m_j, k_j; -\gamma_j, -\left(\gamma_j r_j + n_j\right)  \right)\rvert,
		\end{split}
	\end{equation}
	where the final equality follows from Equation \eqref{eqn:CnC}. The statement follows after plugging Equation \eqref{eqn:proof_jointpost_11} into \eqref{eqn:proof_jointpost_7} and rewriting all quantities in terms of $k$, $k_1$, and $k_2$.
	
	\subsection{Proof of Proposition \ref{thm:marginal_post_global}}
	\label{app:proof_K12_post}
	The best way to evaluate $\P\left(\mathcal K_{m_1,m_2}^{(n_1,n_2)} =k \mid \bmX\right)$ is not marginalizing $K_{1,m_1}^{(n_1)}$ and $K_{2,m_2}^{(n_2)}$ out of Equation \eqref{eqn:joint_post_d2}. 
	Section \ref{app:post_quantities} explains that the global number of new distinct species, $\mathcal K_{m_1,m_2}^{(n_1,n_2)}$, can be computed using those posterior quantities that do not require information about the frequencies of species in the future sample that regard the previously observed $r$ distinct species. Namely, $m_{j,l}$ for $l=1,\ldots,r$.
	In summary, the proof follows the steps of the one in Section \ref{app:proof_thm_post} but does not require reordering the first $r$ species. Indeed, following the same steps of Section \ref{app:proof_thm_post}, we can show that the quantity of interest can be computed as
	\begin{equation*} 
		\label{eqn:proof_K12post_1}
		\begin{split}
			&\P\left(\mathcal K^{(n_1,n_2)}_{m_1,m_2} = k \mid \bmX \right) =
			\sum_{\Pcr_{m_1,m_2,r+k}}  
			\frac{ \Pi_{r+k}^{(n+m)} (\bmn_1+\bmm_1 (\pi_1),  \bmn_2+ \bmm_2 (\pi_2)) }
			{ \Pi_r^{(n)} (\bmn_1,  \bmn_2) } \\
			& \qquad = 
			\frac{V_{n_1+m_1,n_2+m_2}^{r+k}}{V_{n_1,n_2}^r}
			\frac{1}{k!}
			\sum_{(\star)} 
			\binom{m_j}{m_{j,1}, \ldots , m_{j,r+k}} 
			\prod_{j=1}^2\prod_{l=1}^{r} \frac{(\gamma_j)_{n_{j,l}+m_{j,l}}} {(\gamma_j)_{n_{j,l}}}
			\prod_{j=1}^2\prod_{l=r+1}^{r+k} (\gamma_j)_{m_{j,l}},
		\end{split}
	\end{equation*}
	where the sum is extended over the set $(\star)$ of non-negative integers $\bmm_1 = \left( m_{1,1},\ldots, m_{1,r+k}\right)$ and $\bmm_2 = \left( m_{2,1},\ldots, m_{2,r+k}\right)$ which satisfy the following constraints,
	\begin{equation}
		\label{eqn:proof_Kpost_insieme1}
		\begin{split}
			&\sum_{l=1}^{r+k} m_{j,l} = m_j \, ,
			\quad m_{j,l}\geq 0 \quad
			j=1,2; \, l = 1,\ldots,r+k \, , \\
			&m_{1,l} + m_{2,l} \geq 1
			\, , \quad
			l = r+1,\ldots,r+k \, .
		\end{split}
	\end{equation}
	We now augment the space by introducing the auxiliary random variables $K^{*(n)}_{1,m} = k^*_1$ and $K^{*(n)}_{2,m} = k^*_2$. Due to Equation \eqref{eqn:proof_post_others}, this also implies that the number of new shared species among the new $k$ distinct species is $s^* = k - k^*_1 - k^*_2$. Hence, we order the new $k$ species so that the first $s^*$ are shared, the following $k^*_1$ are present in group one only and the remaining $k^*_2$ are found in group 2 only. This can be done in $\binom{k}{k^*_1}\binom{k-k^*_1}{k^*_2}$ equivalent ways. As for Section \ref{app:proof_thm_post}, we are not interested in $K^{*(n)}_{1,m}$ and $K^{*(n)}_{2,m}$, which can then be marginalized out.
	Hence, we have that
	\begin{equation*} 
		\label{eqn:proof_K12post_2}
		\begin{split}
			&\P\left(\mathcal K^{(n_1,n_2)}_{m_1,m_2} = k \mid \bmX \right)\\
			& \qquad = 
			\frac{V_{n_1+m_1,n_2+m_2}^{r+k}}{V_{n_1,n_2}^r} \ \frac{1}{k!}
			\sum_{(\star\star)}
			\sum_{k^*_1=0}^{k}\sum_{k^*_2=0}^{k-k^*_1}
			\binom{k}{k^*_1}\binom{k-k^*_1}{k^*_2}
			\binom{m_j}{m_{j,1}, \ldots , m_{j,r+k}} \\
			& \qquad \qquad \times
			\prod_{j=1}^2\prod_{l=1}^{r} \frac{(\gamma_j)_{n_{j,l}+m_{j,l}}} {(\gamma_j)_{n_{j,l}}}
			\prod_{j=1}^2\prod_{l=r+1}^{r+k} (\gamma_j)_{m_{j,l}},
		\end{split}
	\end{equation*}
	where the set $(\star\star)$ defines the following constraints,
	\begin{equation}
		\label{eqn:proof_Kpost_insieme2}
		\begin{split}
			&\sum_{l=1}^{r+k} m_{j,l} = m_j 
			\, , \quad
			m_{j,l}\geq 0 \quad j=1,2; \, l = 1,\ldots,r \, , \\
			&m_{j,l} \geq 1 \, , \quad j=1,2; \, l = r+1,\ldots,r+s^* \, , \\
			&m_{1,l} \geq 1 \, ,  \, m_{2,l} = 0 
			\, , \quad l = r+s^*+1,\ldots,r+s^*+k^*_1 \, , \\
			&m_{1,l} = 0, \, m_{2,l} \geq 1 
			\, , \quad l = r+s^*+k^*_1+1,\ldots,r+k \, .
		\end{split}
	\end{equation}
	The set $(\star\star)$ does not involve any coupled condition as we are not interested in shared quantities. Hence, it can be decoupled similarly to the set $(\Delta\Delta\Delta)$ in Section \ref{app:proof_thm_post}. It follows that,
	\begin{equation*} 
		\label{eqn:proof_K12post_3}
		\begin{split}
			&\P\left(\mathcal K^{(n_1,n_2)}_{m_1,m_2} = k \mid \bmX \right) \ = \
			\frac{V_{n_1+m_1,n_2+m_2}^{r+k}}{V_{n_1,n_2}^r} \frac{1}{k!}
			\sum_{k^*_1=0}^{k}\sum_{k^*_2=0}^{k-k^*_1}
			\binom{k}{k^*_1}\binom{k-k^*_1}{k^*_2} \\
			& \qquad \times
			\prod_{j=1}^2
			\sum_{(\star j)}
			\binom{m_j}{m_{j,1}, \ldots , m_{j,r+s^*+k^*_j}} 
			\prod_{l=1}^{r} \left(\gamma_j + n_{j,l}\right)_{m_{j,l}}
			\prod_{l=r+1}^{r+s^*+k^*_j} \left(\gamma_j \right)_{m_{j,l}}
		\end{split}
	\end{equation*}
	where the sum is extended over the sets $(\star j)$, for $j=1,2$, of non-negative integers $\left( m_{1,1},\ldots, m_{1,r+s^*+k^*_1}\right)$ and $\left( m_{2,1},\ldots, m_{2,r + s^*+k^*_2}\right)$ which satisfy the following constraints, 
	\begin{equation*}
		\label{eqn:proof_Kpost_insieme5}
		\begin{split}
			&\sum_{l=1}^{r+s^*+k^*_j} m_{j,l} = m_j  
			\, , \quad
			m_{j,l} \geq 0 \quad l = 1,\dots,r \, , \\
			&m_{j,l} \geq 1 \, ,  
			\quad l = r+1,\dots,r+s^*+k^*_j \, ,
		\end{split}
	\end{equation*}
	for each $j=1,2$.
	Let us now fix $j$. The sum over $(\star j)$ is solved as in Section \ref{app:proof_thm_post}, that is introducing the additional variable $h_j$, with $h_j \in \{s^*+k^*_j,s^*+k^*_j+1,\dots,m_j\}$, such that 
	$$
	\sum_{j = 1}^{r} m_{j,l} = m_j - h_j \, \text{ and }
	\sum_{j = r+1}^{r+s^*+k^*_j} m_{j,l} = h_j \, .
	$$
	Following the same steps of Section \ref{app:proof_thm_post}, it is possible to show that
	\begin{equation*} 
		\label{eqn:proof_K12post_4}
		\begin{split}
			&\P\left(\mathcal K^{(n_1,n_2)}_{m_1,m_2} = k \mid \bmX \right) \ = \ \frac{V_{n_1+m_1,n_2+m_2}^{r+k}}{V_{n_1,n_2}^r}  \\
			& \ \times
			\frac{1}{k!}
			\sum_{k^*_1=0}^{k}\sum_{k^*_2=0}^{k-k^*_1}
			\binom{k}{k^*_1}\binom{k-k^*_1}{k^*_2} 
			\prod_{j=1}^2
			(s^*+k^*_j)!
			\lvert C\left(m_j,s^*+k^*_j; -\gamma_j, - (\gamma_j r + n_j)\right) \rvert.
		\end{split}
	\end{equation*}
	Finally, we perform a change of variables to lighten the notation and so that $s^*$ is not involved in the final expression, even though it can be written in terms of $k$, $k^*_1$ and $k^*_2$. Let us fix $j$ and let $z_j = k - s^* - k^*_j$. This can be interpreted as the number of new species that are not present in group $j$ but that are present in group $j^\prime$. Hence, $z_j = k^*_{j^\prime}$. If follows that $s^*+k^*_j = k - z_j$. Furthermore, the following identity holds
	$$
	\binom{k}{z_2}\binom{k-z_2}{z_1} \ = \ 
	\binom{k}{z_1}\binom{k-z_1}{z_2},
	$$
	for $z_1 \in \{0,\ldots,k\}$ and $z_2 \in \{0,\ldots,k-z_1\}$.
	It is enough to rearrange the binomial and factorial coefficients to conclude the proof.
	
	\subsection{Proof of Equation \eqref{eqn:Kj_post}}
	\label{app:proof_Kj_post}
	Let us fix $j,j^\prime=1,2$ with $j \neq j^\prime$.
	The statement follows from Equation \eqref{eqn:Kpost_d2}, after setting both $n_{j^\prime}$ and $m_{j^\prime}$ to zero. Indeed, $V(m_j,m_{j^\prime};k) = V(m_j;k_j)$ 
	because, from Equation \eqref{eqn:post_system_eq}, $k = k_j$ as $k_{j^\prime}$ and $s$ must be zero since $m_{j^\prime} = 0$ and $(x)_{0} = 1$ for all $x>0$. Similarly, $n_{j^\prime} = 0$ also implies that $r = r_j$.
	Then, we recall that
	$\lvert C\left(0,0;x,y\right) \rvert = 1$ and 
	$\lvert C\left(0,k;x,y\right) \rvert = 0$ for all $x\neq 0$, $y\in\R$ and $k = 1,2,3,\ldots\,$, see \cite{chara2002}.
	Hence, the only non-zero term in Equation \eqref{eqn:Kpost_d2} is when $z_1 = 0$ and $z_2 = k$. Finally, recalling that $r = r_j$ and $k=k_j$, it follows that
	\begin{equation*}
		\label{eqn:proof_Kjpost_1}
		\begin{split}
			\P\left(K^{(n_j)}_{j,m_j} = k_j \mid \bmX\right)   
			\ = \ 
			\frac{V_{n_j+m_j}^{r_j+k_j}}{V_{n_j}^{r_j}}
			\lvert C\left(m_j,k_j; -\gamma_j, - (\gamma_j r_j + n_j)\right) \rvert.
		\end{split}
	\end{equation*}
	
	\section{Additional details on species discovery}
	\label{app:species_discovery}
    Section \ref{subsection:PrShared_post} reports the $m$-steps ahead coverage estimator as well as the one-step ahead discovery probability of new shared species. Here, we report the analogous quantities in the case of new local and global distinct species.
    Starting from the one-step ahead distribution, we have that $\P\left(\mathcal K^{(n_1,n_2)}_{1,1} = k\mid\bmX\right)$ for $k=\{0,1,2\}$ follows from Equation \eqref{eqn:Kpost_d2} and is given by
    % GLOBAL DISTINCT
	\begin{equation}
		\label{eqn:PrK_1step_full}
		\begin{split}
			\P\left(\mathcal K^{(n_1,n_2)}_{1,1} = 0\mid\bmX\right) &\,= \,    
			\frac{V_{n_1+1,n_2+1}^{r}}{V_{n_1,n_2}^r}
            \left\{
			(\gamma_1r+n_1)(\gamma_2r+n_2)
            \right\}\,,\\
			\P\left(\mathcal K^{(n_1,n_2)}_{1,1} = 1\mid\bmX\right) &\,= \,    
			\frac{V_{n_1+1,n_2+1}^{r+1}}{V_{n_1,n_2}^r}
            \bigl\{
            \gamma_1(\gamma_2r_2+n_2) \,+\,
            \gamma_2(\gamma_1r_1+n_1)  \\
            & \qquad \qquad \qquad  \quad +\, \gamma_1\gamma_2 \left(r^*_1 + r^*_2 + 1\right)
            \bigr\}\,,\\
			\P\left(\mathcal K^{(n_1,n_2)}_{1,1} = 2\mid\bmX\right) &\,= \,    
			\frac{V_{n_1+1,n_2+1}^{r+2}}{V_{n_1,n_2}^r}
			\gamma_1\gamma_2 \, .
		\end{split}
	\end{equation}
    %%%%%%%%%%%%%%%%%%%%%%%%%%%%%%%%%%%%%%%%%%%%%%%%%%%%%%%%%%%%%%%%%%%%%%%%%%%%%%%%%%%%%%%%%%%%%%%%%%%%
    % LOCAL DISTINCT - AREA 1
    
	The full distribution of the number of new local distinct species in area $j$,  
    $\P\left(K^{(n_j)}_{j,1} = k_j\mid\bmX\right)$ for $k_j=\{0,1\}$ for $j=1,2$, follows from the marginal distribution in Equation \eqref{eqn:Kj_post} and it equals
	\begin{equation}
		\label{eqn:PrKj_1step_full}
		\begin{split}
        \P\left(K^{(n_j)}_{j, 1} = 0\mid\bmX\right) &\,= \, \frac{V_{n_j+1}^{r_j}}{V_{n_j}^{r_j}}\,(\gamma_jr_j + n_j)\,,\\
        \P\left(K^{(n_j)}_{j, 1} = 1\mid\bmX\right) &\,= \, \frac{V_{n_j+1}^{r_j+1}}{V_{n_j}^{r_j}}\,\gamma_j\,.
			% \P\left(K^{(n_1)}_{1, 1} = 0\mid\bmX\right) &\,= \,    
			% \frac{V_{n_1+1,n_2+1}^{r}}{V_{n_1,n_2}^r} 
   %          \left\{
			% (\gamma_1r_1+n_1)(\gamma_2r_2+n_2) \,+\, 
			% (\gamma_1r_1+n_1)\gamma_2 r^*_1 
   %          \right\}
   %          \,+\,
			% \frac{V_{n_1+1,n_2+1}^{r+1}}{V_{n_1,n_2}^r}
			% (\gamma_1r_1+n_1)\gamma_2  \, , \\
			% \P\left(K^{(n_1)}_{1, 1}  = 1\mid\bmX\right) &\,= \,    
			% \frac{V_{n_1+1,n_2+1}^{r}}{V_{n_1,n_2}^r} 
   %          \left\{
   %          \gamma_1(\gamma_2r_2+n_2)r^*_2 \,+\, \gamma_1\gamma_2r^*_1r^*_2
   %          \right\}\\
   %          &+\,
			% \frac{V_{n_1+1,n_2+1}^{r+1}}{V_{n_1,n_2}^r} 
   %          \left\{
   %          \gamma_1(\gamma_2r_2+n_2) \,+\, \gamma_1\gamma_2 \left(r^*_1+r^*_2+1\right)
   %          \right\}
   %          \,+\,
			% \frac{V_{n_1+1,n_2+1}^{r+2}}{V_{n_1,n_2}^r} 
   %          \gamma_1\gamma_2
   %          \, .
		\end{split}
	\end{equation}
 %    %%%%%%%%%%%%%%%%%%%%%%%%%%%%%%%%%%%%%%%%%%%%%%%%%%%%%%%%%%%%%%%%%%%%%%%%%%%%%%%%%%%%%%%%%%%%%%%%%%%%
 %    % LOCAL DISTINCT - AREA 2
	% The full distribution $\P\left(K^{(n_2)}_{2,1} = k_2\mid\bmX\right)$ for $k_2=\{0,1\}$ is 
	% \begin{equation}
	% 	\label{eqn:PrK2_1step_full} 
	% 	\begin{split}
	% 		\P\left(K^{(n_2)}_{2, 1} = 0\mid\bmX\right) &\,= \,    
	% 		\frac{V_{n_1+1,n_2+1}^{r}}{V_{n_1,n_2}^r} 
 %            \left\{
	% 		(\gamma_1r_1+n_1)(\gamma_2r_2+n_2) \,+\, 
	% 		(\gamma_2r_2+n_2)\gamma_1 r^*_2 
 %            \right\}
 %            \,+\,
	% 		\frac{V_{n_1+1,n_2+1}^{r+1}}{V_{n_1,n_2}^r}
	% 		(\gamma_2r_2+n_2)\gamma_1  \, , \\
	% 		\P\left(K^{(n_2)}_{2, 1}  = 1\mid\bmX\right) &\,= \,    
	% 		\frac{V_{n_1+1,n_2+1}^{r}}{V_{n_1,n_2}^r} 
 %            \left\{
 %            \gamma_2(\gamma_1r_1+n_1)r^*_1 \,+\, \gamma_1\gamma_2r^*_1r^*_2
 %            \right\}\\
 %            &+\,
	% 		\frac{V_{n_1+1,n_2+1}^{r+1}}{V_{n_1,n_2}^r} 
 %            \left\{
 %            \gamma_2(\gamma_1r_1+n_1) \,+\, \gamma_1\gamma_2 \left(r^*_1+r^*_2+1\right)
 %            \right\}
 %            \,+\,
	% 		\frac{V_{n_1+1,n_2+1}^{r+2}}{V_{n_1,n_2}^r} 
 %            \gamma_1\gamma_2
 %            \, .
	% 	\end{split}
	% \end{equation}
    %%%%%%%%%%%%%%%%%%%%%%%%%%%%%%%%%%%%%%%%%%%%%%%%%%%%%%%%%%%%%%%%%%%%%%%%%%%%%%%%%%%%%%%%%%%%%%%%%%%%
    % SHARED
	Finally, the full distribution $\P\left(\mathcal S^{(n_1,n_2)}_{1,1} = s\mid\bmX\right)$ for $s=\{0,1,2\}$ is 
	\begin{equation*}
		\label{eqn:PrSh_1step_full}
		\begin{split}
			\P\left(\mathcal S^{(n_1,n_2)}_{1,1} = 0\mid\bmX\right) &\,= \,    
			\frac{V_{n_1+1,n_2+1}^{r}}{V_{n_1,n_2}^r}
			(\gamma_1r_1+n_1)(\gamma_2r_2+n_2)
			\\
			& \quad +
			\frac{V_{n_1+1,n_2+1}^{r+1}}{V_{n_1,n_2}^r}
			\left\{
			\gamma_1(\gamma_2r_2+n_2) + \gamma_2(\gamma_1r_1+n_1) 
			\right\}
			+
			\frac{V_{n_1+1,n_2+1}^{r+2}}{V_{n_1,n_2}^r}
			\gamma_1\gamma_2 \, , \\
			\P\left(\mathcal S^{(n_1,n_2)}_{1,1} = 1\mid\bmX\right) &\,= \,    
			\frac{V_{n_1+1,n_2+1}^{r}}{V_{n_1,n_2}^r}
			\left\{
			r^*_2\gamma_1(\gamma_2r_2+n_2) + 
			r^*_1\gamma_2(\gamma_1r_1+n_1) 
			\right\}
			+
			\frac{V_{n_1+1,n_2+1}^{r+1}}{V_{n_1,n_2}^r}
			\gamma_1\gamma_2(r^*_1 + r^*_2 + 1) \, , \\ 
			\P\left(\mathcal S^{(n_1,n_2)}_{1,1} = 2\mid\bmX\right) &\,= \,    
			\frac{V_{n_1+1,n_2+1}^{r}}{V_{n_1,n_2}^r}
			\gamma_1\gamma_2r^*_1r^*_2 \, .
		\end{split}
	\end{equation*}
    % THE END

    Turning our attention to the $m$-steps ahead coverage, the following hold
    \begin{equation*}
    \label{eqn:K_Kj_msteps_coverage}
        \begin{split}
		\P\left(\mathcal K^{(n_1,n_2)}_{m_1,m_2} = 0\mid\bmX\right) &= 
		\frac{V_{n_1+m_1,n_2+m_2}^{r}}{V_{n_1,n_2}^{r}}
		\prod_{j=1}^d 
		\lvert C(m_j,0; -\gamma_j,-(\gamma_j r  + n_j) )\rvert\,, \\
        \P\left(K^{(n_j)}_{j,m_j} = 0\mid\bmX\right) &= 
		\frac{V_{n_j+m_j}^{r_j}}{V_{n_j}^{r_j}}
		\lvert C(m_j,0; -\gamma_j,-(\gamma_j r_j  + n_j) )\rvert
        \,,
        \end{split}
    \end{equation*}
    while has been stated in Equation \eqref{eqn:Shzero_m_steps}.
	%%%%%%%%%%%%%%%%%%%%%%%%%%%%%%%%%%%%%%%%%%%%%%%%%%%%%%%%%%%%%%%%%%%%%%%%%%%%%%%%%%%%%%%%%%%%%%%%%%%%
    %%%%%%%%%%%%%%%%%%%%%%%%%%%%%%%%%%%%%%%%%%%%%%%%%%%%%%%%%%%%%%%%%%%%%%%%%%%%%%%%%%%%%%%%%%%%%%%%%%%%
	\section{Proofs of Equations \eqref{eqn:ssj}-\eqref{eqn:sp1p2}}
	\label{app:proof_stima_param}
	Firstly, we note that:
	\begin{equation*}
		\begin{split}
			\E\left(\sum_{m=1}^M w_{j,m}^2\right) =
			\E_{q_M}\left(\E\left(\sum_{m=1}^M w_{j,m}^2\mid M\right)\right) = 
			\E_{q_M}\left( \sum_{m=1}^M
			\E \left( w_{j,m}^2\mid M \right)
			\right) .
		\end{split}
	\end{equation*}
	Given $M$, $\left(w_{j,1},\ldots,w_{j,M}\right) \sim \operatorname{Dir}_M(\gamma_j,\ldots,\gamma_j)$, hence $\E(w_{j,m}\mid M) = 1/M$ and $\operatorname{var}(w_{j,m}\mid M) = (M-1)/(M^2(\gamma_jM+1))$. It follows that
	\begin{equation*}
		\begin{split}
			\E\left(\sum_{m=1}^M w_{j,m}^2\right) =
			\E_{q_M}\left( 
			\sum_{m=1}^M
			\frac{1+\gamma_j}{M(1+\gamma_jM)}
			\right)  = 
			(1+\gamma_j)
			\E_{q_M}\left( 
			\frac{1}{(1+\gamma_jM)}
			\right).
		\end{split}
	\end{equation*}
	
	\begin{equation*}
		\begin{split}
			&\E\left(\sum_{m=1}^M w_{1,m}w_{2,m}\right) =
			\E_{q_M}\left( \sum_{m=1}^M
			\E \left( w_{1,m}w_{2,m}\mid M \right)
			\right) = \\
			& \qquad
			\E_{q_M}\left( \sum_{m=1}^M
			\E \left( w_{1,m}\mid M \right)
			\E \left( w_{2,m}\mid M \right)
			\right) = 
			\E_{q_M}\left(1/M\right)
			.
		\end{split}
	\end{equation*}
	The equality holds since $\left(w_{1,1},\ldots,w_{1,M}\right)$ and $\left(w_{2,1},\ldots,w_{2,M}\right)$ are independent given $M$.
    %%%%%%%%%%%%%%%%%%%%%%%%%%%%%%%%%%%%%%%%%%%%%%%%%%%%%%%%%%%%%%%%%%%%%%%%%%%%%%%%%%%%%%%%%%%%%%%%%%%%
    %%%%%%%%%%%%%%%%%%%%%%%%%%%%%%%%%%%%%%%%%%%%%%%%%%%%%%%%%%%%%%%%%%%%%%%%%%%%%%%%%%%%%%%%%%%%%%%%%%%%
    \section{Bayesian estimators for diversity indices}
	\label{section:posterior_diversity}
    The classical unbiased estimator of Simpson's diversity index, proposed by \citet{Simpson49}, is
	\begin{equation}
		\label{eqn:ssj_freq}
		\hat{\rho}_{j,\text{unb}} = 
		%1 - 
		\sum_{l=1}^r\frac{n_{j,l}}{n_j}\frac{\left(n_{j,l}-1\right)}{\left(n_j-1\right)} \, .
	\end{equation}
	Since $\rho_j$ is interpreted as the probability that two randomly and independently chosen individuals belong to the same species, the estimator in Equation \eqref{eqn:ssj_freq} is obtained by dividing the total number of within-species pairs, $\sum_{l=1}^r n_{j,l}(n_{j,l}-1)/2$, by the total number of possible pairs, $n_j(n_j-1)/2$. 
    
	In this section, we present the Bayesian counterpart of $\rho_j$, $j=1,2$, as well as $\rho_{12}$, given an observed sample $\bmX=(\bmX_1,\bmX_2)$ of sizes $n_1$ and $n_2$, with $r$ distinct species and $t$ shared species. Prior to this, we present some technical lemmas that are needed in the derivation of the quantities of interest.

    \subsection{Technical preliminaries}
	\label{app:scambio_limiti}
	In this section, we show a useful inequality that allows us to apply the dominated convergence theorem to exchange the order of the limit and the sum in the following proofs.
	\begin{lemma}
		\label{lem:scambio}
		Let $j,j^\prime\in\{1,2\}$, $j\neq j^\prime$ and let $n_1\geq 1$, $n_2\geq 1$, $0\geq n_{j,l}<\leq n_j$ for $l=1,\ldots,r$ and $r\geq1$. 
		Let $q^\star_{M\mid\bmX}$ be the probability mass function defined in Equation \eqref{eqn:qMpost}. The following inequality holds for every $\gamma_1>0$, $\gamma_2 > 0$, $\mstar\geq 0$: 
		\begin{equation}
			\label{eqn:ssj_scambio_lim}
			\begin{split}
				& \frac{
					\sum_{l=1}^r n_{j,l}(n_{j,l} + 1) + \gamma_j\left( \gamma_j(r+\mstar) + r+\mstar + 2n_j  \right)
				}{
					n_j(n_j+1) + \gamma_j^2(r+\mstar)^2 + \gamma_j(r+\mstar)(2n_j+1)		
				}
				q^\star_{M\mid\bmX}(\mstar) \\
				& \qquad \lesssim
				\frac{(\mstar+r)!}{\mstar}q_M(\mstar+r) \, .
			\end{split}
		\end{equation}
		\begin{equation}
			\label{eqn:sp1p2_scambio_lim}
			\begin{split}
				\frac{
					\sum_{l=1}^t n_{1,l}n_{2,l} + \gamma_1\gamma_2(r+M^\star) + n_1\gamma_2 + n_2\gamma_1
				}{
					(\gamma_1(r+M^\star) + n_1)(\gamma_2(r+M^\star) + n_2)
				}
				q^\star_{M\mid\bmX}(\mstar)  \lesssim
				\frac{(\mstar+r)!}{\mstar}q_M(\mstar+r) \, ,
			\end{split}
		\end{equation}
		where we use notation $\lesssim$ to indicate that the upper bound holds up to some constant.
		
	\end{lemma}
	\begin{proof}
		We prove Equation \eqref{eqn:ssj_scambio_lim} only. The proof of Equation \eqref{eqn:sp1p2_scambio_lim} follows the same steps.
		\begin{equation*}
			\label{eqn:ssj_scambio_limiti_proof1}
			\begin{split}
				& \frac{
					\sum_{l=1}^r n_{j,l}(n_{j,l} + 1) + \gamma_j\left( \gamma_j(r+\mstar) + r+\mstar + 2n_j  \right)
				}{
					n_j(n_j+1) + \gamma_j^2(r+\mstar)^2 + \gamma_j(r+\mstar)(2n_j+1)		
				}
				q^\star_{M\mid\bmX}(\mstar)\\
				%\frac{(\mstar+r)_{r\downarrow}q_M(\mstar+r)}{  V_{n_1,n_2}(\gamma_j(\mstar+r))_{n_j} (\gamma_{j^\prime}(\mstar+r))_{n_{j^\prime} } }\\
				& \quad \leq 
				\Bigl\{
				\frac{ r(n_j+1)^2 }
				{ 	n_j(n_j+1) + \gamma_j\left( \gamma_j(r+\mstar)^2 + r+\mstar + 2n_j(r+\mstar)  \right)  } \\
				& \quad \ +
				\frac{ \gamma_j\left( \gamma_j(r+\mstar) + r+\mstar + 2n_j  \right) }
				{ 	n_j(n_j+1) + \gamma_j\left( \gamma_j(r+\mstar)^2 + r+\mstar + 2n_j(r+\mstar)  \right)  }
				\Bigr\} 
				\frac{(\mstar+r)_{r\downarrow}q_M(\mstar+r)}{  V_{n_1,n_2}(\gamma_j(\mstar+r))_{n_j} (\gamma_{j^\prime}(\mstar+r))_{n_{j^\prime} } } \\
				& \quad \leq
				\left\{
				\frac{ r(n_j+1)^2 }
				{ 	n_j(n_j+1)  }
				+
				\frac{ \gamma_j\left( \gamma_j(r+\mstar) + r+\mstar + 2n_j  \right)}{\gamma_j\left( \gamma_j(r+\mstar)^2 + r+\mstar + 2n_j(r+\mstar)  \right) } 
				\right\}  \\
				& \qquad \ \times
				\frac{(\mstar+r)_{r\downarrow}q_M(\mstar+r)}{  V_{n_1,n_2}(\gamma_j(\mstar+r))_{n_j} (\gamma_{j^\prime}(\mstar+r))_{n_{j^\prime} } } \,.
			\end{split}
		\end{equation*}
		In the final line, we only used the positivity of each term in the denominator. 
		Then, we exploit that $V_{n_1,n_2}$ is a sum of positive terms, hence it is larger than its first positive term. Moreover, note that the second term in the parenthesis is smaller than one since $r+\mstar\geq1$.
		\begin{equation*}
			\label{eqn:ssj_scambio_limiti_proof2}
			\begin{split}
				& \frac{
					\sum_{l=1}^r n_{j,l}(n_{j,l} + 1) + \gamma_j\left( \gamma_j(r+\mstar) + r+\mstar + 2n_j  \right)
				}{
					n_j(n_j+1) + \gamma_j^2(r+\mstar)^2 + \gamma_j(r+\mstar)(2n_j+1)		
				}
				q^\star_{M\mid\bmX}(\mstar)\\
				%\frac{(\mstar+r)_{r\downarrow}q_M(\mstar+r)}{  V_{n_1,n_2}(\gamma_j(\mstar+r))_{n_j} (\gamma_{j^\prime}(\mstar+r))_{n_{j^\prime} } }\\
				& \quad \leq 
				\left\{
				\frac{ r(n_j+1)^2 }{ 	n_j(n_j+1)  } +1  
				\right\}  
				\frac{(\mstar+r)_{r\downarrow}q_M(\mstar+r)}{r!q_M(r)} 
				\frac{(\gamma_jr)_{n_j}(\gamma_{j^\prime}r)_{n_{j^\prime}}}{ (\gamma_j(\mstar+r))_{n_j} (\gamma_{j^\prime}(\mstar+r))_{n_{j^\prime} } } \, .
			\end{split}
		\end{equation*}
		Then, due to the monotonicity of the Pochhammer symbol, the final term is smaller than or equal to one, which concludes the proof. As a corollary, note that we also proved that 
		\begin{equation}
			\label{eqn:qMpost_scambio_lim}
			q^\star_{M\mid\bmX}(\mstar) \lesssim \frac{(\mstar+r)!}{\mstar}q_M(\mstar+r) \, .
		\end{equation}
	\end{proof}
	
	Moreover, under the hypothesis that $q_M$ is a probability mass function on the positive integers for which there exists some constant $a\in(0,1)$ such that $q_M(m)\leq a^m$ for large values of $m$, we have that
	\begin{equation*}
		\label{eqn:bound_is_int}
		\sum_{\mstar = 0}^\infty \frac{(\mstar+r)!}{\mstar}q_M(\mstar+r) < \infty \, .
	\end{equation*} 
	The result holds since 
	$$
	\sum_{\mstar=0}^{\infty} \frac{(\mstar+r)!}{\mstar!}a^{\mstar} < \infty \, ,
	$$
	for every $a\in(0,1)$, see \cite{argiento2022annals}. 
	
	%\LPnote{da qui in poi organizzare il discorso esplicitanto il risultato in Lemma e poi proof.}
	We conclude this section about some preliminary results, providing additional details about the limiting distribution of $q^\star_{M\mid\bmX}$ when $\lim_{\gamma_j\rightarrow\gamma_0}$, where $\gamma_0 \in [0,\infty]$. In particular, $q^\star_{M\mid\bmX}$ is a well defined probability mass function on $\N$ for each $\gamma_1>0$, $\gamma_2>0$. We define the limiting distribution 
	$q^\star_{M\mid\bmX,\gamma_0}$ as the pointwise limit of each atom. Namely,
	$$
	q^\star_{M\mid\bmX,\gamma_0}(\mstar) = \lim_{\gamma_j\rightarrow \gamma_0} q^\star_{M\mid\bmX}(\mstar) \, ,
	$$
	for each $\mstar \geq 0$. Clearly, $q^\star_{M\mid\bmX,\gamma_0}(\mstar) \geq 0$ for each $\mstar\geq0$. Moreover, thanks to Equation \eqref{eqn:qMpost_scambio_lim}, we apply the dominated convergence theorem to show that
	$$
	\sum_{\mstar = 0}^\infty q^\star_{M\mid\bmX,\gamma_0}(\mstar) = 
	\sum_{\mstar = 0}^\infty \lim_{\gamma_j\rightarrow \gamma_0} q^\star_{M\mid\bmX}(\mstar) =
	\lim_{\gamma_j\rightarrow \gamma_0} \sum_{\mstar = 0}^\infty  q^\star_{M\mid\bmX}(\mstar) = 1 \, .
	$$
	In particular, $q^\star_{M\mid\bmX,\gamma_0}(\mstar) \leq 1$ for each $\mstar\geq0$ since the sum is one and all terms are non-negative. As a consequence, we conclude that $q^\star_{M\mid\bmX,\gamma_0}$ is a well defined probability mass function on $\N$.
	
	\begin{lemma}
		\label{lem:qminf}
		When $\gamma_0$ is $+\infty$, it holds that 
		\begin{equation}
			\label{eqn:qMpost_liminf_def}
			q^\star_{M\mid\bmX,\infty} (m^\star)\propto \frac{(\mstar +r)_{r\downarrow}}{(\mstar +r)^{n_1}(\mstar +r)^{n_2}}q_M(\mstar +r) \, .
		\end{equation}
	\end{lemma}
	\begin{proof}
		%For each  
		We must evaluate the following limit
		\begin{equation*}
			\label{eqn:qMpost_liminf_def_proof1}
			\begin{split}
				& \lim_{\gamma_1,\gamma_2\rightarrow+\infty} q^\star_{M\mid\bmX}(\mstar) \\
				& \quad = 
				\lim_{\gamma_1,\gamma_2\rightarrow+\infty} 
				\frac{1}{V_{n_1,n_2}^r}
				(\mstar+r)_{r\downarrow}
				q_M(\mstar+r)
				\prod_{j=1}^d \frac{1}{(\gamma_j (\mstar+r))_{n_j}} \\
				& \quad = 
				(\mstar+r)_{r\downarrow}
				q_M(\mstar+r)
				\lim_{\gamma_1,\gamma_2\rightarrow+\infty}
				\left\{
				\sum_{m=0}^\infty
				(m)_{r\downarrow}q_M(m)
				\prod_{j=1}^2
				\frac{(\gamma_j (\mstar+r))_{n_j}}{(\gamma_jm)_{n_j}}
				\right\}^{-1} \,.
			\end{split}
		\end{equation*}
		Exploiting the results in Lemma \ref{lem:scambio}, we exchange the order of the limit and the sum. Hence, we have that
		\begin{equation*}
			\label{eqn:qMpost_liminf_def_proof2}
			\begin{split}
				& \lim_{\gamma_1,\gamma_2\rightarrow+\infty} q^\star_{M\mid\bmX}(\mstar) \, = \,
				(\mstar+r)_{r\downarrow} q_M(\mstar+r) \\
				& \qquad \times  
				\left\{
				\sum_{m=0}^\infty
				(m)_{r\downarrow}q_M(m)
				\prod_{j=1}^2
				\lim_{\gamma_j\rightarrow+\infty}
				\frac{\Gamma(\gamma_j (\mstar+r) + n_j)}{\Gamma(\gamma_j m + n_j)}
				\frac{\Gamma(\gamma_jm)}{\Gamma(\gamma_j (\mstar +r))}
				\right\}^{-1} \\
				& \quad = \, 
				(\mstar+r)_{r\downarrow} q_M(\mstar+r)   
				\left\{
				\sum_{m=0}^\infty
				(m)_{r\downarrow}q_M(m)
				\prod_{j=1}^2
				\left( \frac{\mstar+r}{m} \right)^{n_j}
				\right\}^{-1} \, ,
			\end{split}
		\end{equation*}
		where the final equality holds because of the asymptotic of the ratio of gamma functions, $\Gamma(x+a)/\Gamma(x+b)\sim (x)^{a-b}$, for $x\rightarrow+\infty$.
		Moreover, we note that the normalising constant of $q^\star_{M\mid\bmX,\infty}$ is the pEPPF computed in the corresponding limiting case.
	\end{proof}

    \subsection{Posterior quantities and limiting behaviours}
    The posterior expected value of the Simpson diversity index in area $j$ equals 
	%$1 - \E\left(\rho_j\mid\bmX\right)$, where 
	\begin{equation}
		\label{eqn:ssj_post}
		\begin{split}
			&\E\left(\rho_j\mid\bmX\right) \\
            & \qquad = \,
			\E_{q^\star_{M\mid\bmX}}
			\left(
			\frac{
				\sum_{l=1}^r n_{j,l}(n_{j,l} + 1) + \gamma_j\left( \gamma_j(r+M^\star) + r+M^\star + 2n_j  \right)
			}{
				n_j(n_j+1) + \gamma_j^2(r+M^\star)^2 + \gamma_j(r+M^\star)(2n_j+1)		
			}
			\right) \, .		
		\end{split}
	\end{equation}
    \begin{proof}
    The proof follows the same steps given in Section \ref{app:proof_stima_param} but exploits the posterior representation $(P_1,P_2)\mid\bmX$. The latter is provided in \cite{colombi2023mixture} and reported here for the sake of completeness.
	
	$(P_1,P_2)\mid\bmX \stackrel{d}{=} (P_1^\star,P^\star_2)$, where each component is defined as
	\begin{equation*}
		\label{eqn:posterior_P1P2}
		P^\star_j = \sum_{l=1}^r w_{j,l}\delta_{\tau^\star_l} \, + \, \sum_{l=r+1}^{r+M^\star} w_{j,l}\delta_{\tau_l} \, ,
	\end{equation*}
	where $n_{j,l}\geq0$ are the observed counts and $\tau^\star_l$ are the corresponding species labels. Then, the random number of the unseen species $M^\star\sim q^\star_{M\mid\bmX}$ where $q^\star_{M\mid\bmX}$ has been defined in Section \ref{subsection:qM_post}. The labels of the unseen species are $\tau_{l}\mid M^\star \iid P_0(\D \tau)$ for $l=r+1,\ldots,r+M^\star$. Finally, the vector of posterior probabilities $\bmw^\star_{j} = (w^\star_{j,1},\ldots,w^\star_{j,r+M^\star})$ follows a Dirichlet distribution,
	\begin{equation*}
		\label{eqn:wj_post}
		(w^\star_{j,1},\ldots,w^\star_{j,r+M^\star}) \mid M^\star \, \sim \, \operatorname{Dir}_{r+M^\star}\left(\gamma_j+n_{j,1},\ldots,\gamma_j+n_{j,r},\gamma_j,\ldots\,\gamma_j\right) \, .
	\end{equation*}
	Furthermore, $P_1\indep P_2 \mid M^\star,\bmX$.
	
	The marginal distributions of $w^\star_{j,l}\mid M^\star$ are crucial in the derivation of the result. Using the aggregation property of the Dirichlet distribution, we have that
	\begin{equation*}
		\label{eqn:wj_post_marginal}
		w^\star_{j,l} \mid M^\star \, \sim \, \operatorname{Beta}\left(a_{j,l}, n_j + \gamma_j(r+M^\star) - a_{j,l} \right)\, ,
	\end{equation*}
	where $a_{j,l} = n_{j,l} + \gamma_j$ for $l=1,\ldots,r$ and $a_{j,l} = \gamma_j$ for $l=r+1,\ldots,r+M^\star$.

	The posterior expected value of the Simpson index is
	\begin{equation*}
		\label{eqn:ssj_post_proof1}
		\begin{split}
			&\E\left(\rho_j\mid \bmX\right) \, = \,
			\E\left(\sum_{l=1}^{r+M^\star} \left(w^\star_{j,l}\right)^2 \mid \bmX\right) \, = \,
			\E\left( \E\left( \sum_{l=1}^{r+M^\star} \left(w^\star_{j,l}\right)^2 \mid M^\star,\bmX \right) \mid  \bmX \right) \\ 
			& \quad = \,
			\E_{q^\star_{M\mid\bmX}}\left(
			\sum_{l=1}^{r}
			\E\left(
			(w^\star_{j,l})^2
			\mid M^\star,\bmX\right) + 
			M^\star
			\E\left(
			(w^\star_{j,r+1})^2			
			\mid M^\star,\bmX\right)
			\right) \, .
		\end{split}
	\end{equation*}
	To conclude the proof, it is enough to recall that the second moment of the marginal distributions $w^\star_{j,l}\mid M^\star$ equals 
	$$
	\E\left((w^\star_{j,r+1})^2 \mid M^\star,\bmX\right) \, = \, \frac{a_{j,l}}{(n_j + \gamma_j(r+M^\star))} \frac{(a_{j,l}+1)}{(n_j + \gamma_j(r+M^\star)+1)} \, .
	$$
    \end{proof}
	Note that the posterior is computed with respect to the multilevel sample $\bmX$ since $M^\star$ depends on both areas. 
	
	%The estimator in Equation \eqref{eqn:ssj_post} converges to the corresponding plug-in estimator in Equation \eqref{eqn:ssj_plugin}  as the sample size increases.
	%Intuitively, such a convergence %of $\E\left(\rho_j\mid\bmX\right)$ to the plug-in estimator in Equation \eqref{eqn:ssj_plugin} 
	%follows since $M^\star$ shrinks towards zero as $n_j$ grows to infinity.  Section \ref{app:ssj_sp1p2_consistency} provides a formal proof.
	
	The limiting behaviour for $\gamma_j$ going to zero or infinity further explains how to interpret such a parameter and sheds light on the properties of the model.
	Firstly, consider the limit for $\gamma_j$ going to zero while $\gamma_{j^\prime}$ is kept fixed, with $j^\prime\neq j$, that is 
	%The limiting estimator is 
	\begin{equation}
		\label{eqn:ssj_post_lim0}
		\lim_{\gamma_j\rightarrow 0} %1 - 
		\E(\rho_j\mid\bmX) = 
		%1 - 
		\sum_{l=1}^r\frac{n_{j,l}}{n_j}\frac{\left(n_{j,l}+1\right)}{\left(n_j+1\right)} \, .
	\end{equation}
	\begin{proof}
	    We write the expected value with respect to $q^\star_{M\mid\bmX}$ as an infinite sum.
	      We recall that, although it is not explicit in the notation, $q^\star_{M\mid\bmX}$ depends both on $\gamma_1$ and $\gamma_2$, see Equation \eqref{eqn:qMpost}. 
	   We use the dominated convergence theorem to exchange the limit and the sum, which is valid thanks to the upper bound provided in Section \ref{app:scambio_limiti}.
	
	Then, the limit for $\gamma_j\rightarrow0$:
	\begin{equation*}
		\label{eqn:ssj_post_lim0_proof1}
		\begin{split}
			&\lim_{\gamma_j\rightarrow 0}  \E\left(\rho_j\mid\bmX\right) \\
			& \quad = 
			\lim_{\gamma_j\rightarrow 0} \sum_{\mstar=0}^\infty 
			\frac{
				\sum_{l=1}^r n_{j,l}(n_{j,l} + 1) + \gamma_j\left( \gamma_j(r+\mstar) + r+\mstar + 2n_j  \right)
			}{
				n_j(n_j+1) + \gamma_j^2(r+\mstar)^2 + \gamma_j(r+\mstar)(2n_j+1)		
			}
			q^\star_{M\mid\bmX}(\mstar)\\
			%\frac{(\mstar+r)_{r\downarrow}q_M(\mstar+r)}{  V_{n_1,n_2}(\gamma_j(\mstar+r))_{n_j} (\gamma_{j^\prime}(\mstar+r))_{n_{j^\prime} } }\\
			& \quad = 
			\sum_{\mstar=0}^\infty \lim_{\gamma_j\rightarrow 0}
			\frac{
				\sum_{l=1}^r n_{j,l}(n_{j,l} + 1) + \gamma_j\left( \gamma_j(r+\mstar) + r+\mstar + 2n_j  \right)
			}{
				n_j(n_j+1) + \gamma_j^2(r+\mstar)^2 + \gamma_j(r+\mstar)(2n_j+1)		
			}
			q^\star_{M\mid\bmX}(\mstar) \\
			%\frac{(\mstar+r)_{r\downarrow}q_M(\mstar+r)}{  V_{n_1,n_2}(\gamma_j(\mstar+r))_{n_j} (\gamma_{j^\prime}(\mstar+r))_{n_{j^\prime} } } \\
			& \quad = 
			\frac{\sum_{l=1}^r n_{j,l}(n_{j,l} + 1) }{n_j(n_j+1)}
			\sum_{\mstar=0}^\infty \lim_{\gamma_j\rightarrow 0} 
			q^\star_{M\mid\bmX}(\mstar)
			= 
			\frac{\sum_{l=1}^r n_{j,l}(n_{j,l} + 1) }{n_j(n_j+1)} \, .
		\end{split}
	\end{equation*}
	The final equality holds since $\sum_{\mstar=0}^\infty \lim_{\gamma_j\rightarrow 0} q^\star_{M\mid\bmX}(\mstar) =  \sum_{\mstar=0}^\infty q^\star_{M\mid\bmX,0}(\mstar) = 1$, see Section \ref{app:scambio_limiti}.
	\end{proof}
	In the model formulation, the limiting case $\gamma_1 \rightarrow 0$ represents a prior belief in which all the mass is concentrated on a single species. In this case, the Simpson diversity index is one, which corresponds to the minimal heterogeneity, i.e., no diversity. 
	A posteriori, this belief is updated only based on the observed sample, which results in an increase in the estimated diversity in any sample with more than one species observed. 
	Finally, we note the similarity between the limit in Equation \eqref{eqn:ssj_post_lim0} and the Simpson estimator in Equation \eqref{eqn:ssj_freq}. 
	The intuition behind the result in Equation \eqref{eqn:ssj_post_lim0} is that the prior effectively contributes to the estimation through an additional observation with unknown group assignment. Thus, the total sample size increases by one, and the number of within-species pairs is augmented to include $n_j$ additional pairs formed between each observation in group $j$ and the prior-induced pseudo-observation.
	Since the Bayesian estimator in Equation \eqref{eqn:ssj_post_lim0} is larger than the estimator in Equation \eqref{eqn:ssj_freq}, small values of $\gamma_j$ shrink 
	the Simpson estimator towards one, coherently with the interpretation of the corresponding prior belief.
	
	We now turn our attention to the limit of $\gamma_j\rightarrow+\infty$, which results in %. We have that
	\begin{equation}
		\label{eqn:ssj_post_liminf}
		\lim_{\gamma_j\rightarrow +\infty} %1 - 
		\E(\rho_j\mid\bmX) = 
		%1 -
		\E_{q^\star_{M\mid\bmX,\infty}}\left(\frac{1}{r+M^\star}\mid\bmX\right)
		\, ,
	\end{equation}
	where $q^\star_{M\mid\bmX,\infty}$ is the limiting probability distribution of $q^\star_{M\mid\bmX}$ when $\gamma_j\rightarrow+\infty$. 
	Further details are provided in Section \ref{app:scambio_limiti} while the explicit expression is reported in Equation \eqref{eqn:qMpost_liminf_def}. 
	\begin{proof}
    We follow similar steps as in the proof of Equation \eqref{eqn:ssj_post_lim0}:
	\begin{equation*}
		\label{eqn:ssj_post_liminf_proof1}
		\begin{split}
			&\lim_{\gamma_j\rightarrow \infty}  \E\left(\rho_j\mid\bmX\right) \\
			& \quad = 
			\lim_{\gamma_j\rightarrow \infty} \sum_{\mstar=0}^\infty 
			\frac{
				\sum_{l=1}^r n_{j,l}(n_{j,l} + 1) + \gamma_j\left( \gamma_j(r+\mstar) + r+\mstar + 2n_j  \right)
			}{
				n_j(n_j+1) + \gamma_j^2(r+\mstar)^2 + \gamma_j(r+\mstar)(2n_j+1)		
			}
			q^\star_{M\mid\bmX}(\mstar)\\
			%\frac{(\mstar+r)_{r\downarrow}q_M(\mstar+r)}{  V_{n_1,n_2}(\gamma_j(\mstar+r))_{n_j} (\gamma_{j^\prime}(\mstar+r))_{n_{j^\prime} } }\\
			& \quad = 
			\sum_{\mstar=0}^\infty \lim_{\gamma_j\rightarrow \infty}
			\frac{
				\sum_{l=1}^r n_{j,l}(n_{j,l} + 1) + \gamma_j\left( \gamma_j(r+\mstar) + r+\mstar + 2n_j  \right)
			}{
				n_j(n_j+1) + \gamma_j^2(r+\mstar)^2 + \gamma_j(r+\mstar)(2n_j+1)		
			}
			q^\star_{M\mid\bmX}(\mstar) \\
			%\frac{(\mstar+r)_{r\downarrow}q_M(\mstar+r)}{  V_{n_1,n_2}(\gamma_j(\mstar+r))_{n_j} (\gamma_{j^\prime}(\mstar+r))_{n_{j^\prime} } } \\
			& \quad = 
			\sum_{\mstar=0}^\infty 
			\frac{1}{r+\mstar}
			\lim_{\gamma_j\rightarrow \infty} 
			q^\star_{M\mid\bmX}(\mstar)
			= 
			\E_{q^\star_{M\mid\bmX},\infty}\left(1/(r+M^\star)\right) \, ,
		\end{split}
	\end{equation*}
	where $q^\star_{M\mid\bmX,\infty},\inf$ has been defined in Lemma \ref{lem:qminf}.
	\end{proof}
	We recall that large values of $\gamma_1$ and $\gamma_2$ correspond to setting a uniform prior over all species, i.e., %to 
	the situation of maximum %highest possible degree of 
	heterogeneity. A posteriori, after observing a sample containing $r$ distinct species, the expected heterogeneity remains the largest possible one: it is equal to the average of the inverse of $r + M^\star$, where $M^\star$ is the (random) number of unobserved species and is distributed according to the limiting distribution $q^\star_{M\mid\bmX,\infty}$. 
	
	%These limits conclude the analysis of the posterior Gini-Simpson index. 
	We now focus on the Bayesian posterior estimator of $\rho_{12}$, 
	%The Bayesian estimator is 
	given by
	%Following the same approach as above, we derive the Bayesian estimator, that is
	\begin{equation}
		\label{eqn:sp1p2_post}
		\E\left(\rho_{12}\mid\bmX\right) = 
		\E_{q^\star_{M\mid\bmX}}
		\left(\frac{
			\sum_{l=1}^t n_{1,l}n_{2,l} + \gamma_1\gamma_2(r+M^\star) + n_1\gamma_2 + n_2\gamma_1
		}{
			(\gamma_1(r+M^\star) + n_1)(\gamma_2(r+M^\star) + n_2) 
		}
		\right)  \, .
	\end{equation}
    \begin{proof}
    Similarly to the derivation of $\E(\rho_j\mid \bmX)$, we exploit the conditional independence of $(P_1,P_2)\mid\bmX$.
    	\begin{equation*}
    		\label{eqn:sp1p2_post_proof1}
    		\begin{split}
    			&\E\left(\rho_{12}\mid \bmX\right) \, = \,
    			\E\left(\sum_{l=1}^{r+M^\star} w^\star_{1,l} w^\star_{2,l} \mid \bmX\right) \, = \,
    			\E\left( \E\left( \sum_{l=1}^{r+M^\star}w^\star_{1,l} w^\star_{2,l} \mid M^\star,\bmX \right) \mid  \bmX \right) \\ 
    			& \quad = \,
    			\E_{q^\star_{M\mid\bmX}}\left(
    			\sum_{l=1}^{r+M^\star}
    			\left\{
    			\E\left(
    			w^\star_{1,l}
    			\mid M^\star,\bmX\right) 
    			\E\left(
    			w^\star_{2,l}
    			\mid M^\star,\bmX\right)  
    			\right\}
    			\right) \\
    			& \quad = \,
    			\E_{q^\star_{M\mid\bmX}}\left(
    			\sum_{l=1}^{r}
    			\frac{(n_{1,l}+\gamma_1)}{(n_1+\gamma_1(r+M^\star))}
    			\frac{(n_{2,l}+\gamma_2)}{(n_2+\gamma_2(r+M^\star))}
    			\, + \,
    			M^\star\frac{\gamma_1\gamma_2}{(n_1+\gamma_1(r+M^\star))(n_2+\gamma_2(r+M^\star))} 
    			\right) \, .
    		\end{split}
    	\end{equation*}
    	The result follows after developing the sum in the numerator of the first term. Indeed,
    	$$
    	\sum_{l=1}^r (\gamma_1 + n_{1,l})(\gamma_2 + n_{2,l}) \, = \, 
    	r\gamma_1\gamma_2 + \gamma_1n_2 + \gamma_1n_1 + \sum_{l=1}^t n_{1,l}n_{2,l} \, .
    	$$
    \end{proof}
	%The estimator converges to the plug-in estimator in Equation \eqref{eqn:sp1p2_plugin}, as shown in Section \ref{app:ssj_sp1p2_consistency}. 
	%
	Again, we turn our attention to the limiting case for the $\gamma_j$'s parameters. 
	The limit of $\E\left(\rho_{12}\mid\bmX\right)$ for $\gamma_1,\gamma_2$ both going towards zero equals the plug-in estimator, i.e.,
	\begin{equation}
		\label{eqn:sp1p2_post_lim0}
		\lim_{\gamma_1,\gamma_2\rightarrow 0} 
		\E\left(\rho_{12}\mid\bmX\right) = 
		\sum_{l=1}^t \frac{n_{1,l}}{n_1}\frac{n_{2,l}}{n_2} \, .
	\end{equation}
    \begin{proof}
    We follow the same approach as the proof for Equation \eqref{eqn:ssj_post_lim0}. Firstly,
	consider the limit for $\gamma_1,\gamma_2\rightarrow0$ following the same rate. This is equivalent to assuming there exist some constants $c_1,c_2$ such that $\gamma_j = c_j \gamma$ and letting $\gamma$ go to zero or infinity.
	\begin{equation*}
		\label{eqn:sp1p2_post_lim0_proof1}
		\begin{split}
			&\lim_{\gamma_1,\gamma_2\rightarrow 0}  \E\left(\rho_{12}\mid\bmX\right) \\
			& \quad = 
			\lim_{\gamma_1,\gamma_2\rightarrow 0} \sum_{\mstar=0}^\infty 
			\left(\frac{
				\sum_{l=1}^t n_{1,l}n_{2,l} + \gamma_1\gamma_2(r+M^\star) + n_1\gamma_2 + n_2\gamma_1
			}{
				(\gamma_1(r+M^\star) + n_1)(\gamma_2(r+M^\star) + n_2)
			}
			\right)
			q^\star_{M\mid\bmX}(\mstar)\\
			%\frac{(\mstar+r)_{r\downarrow}q_M(\mstar+r)}{  V_{n_1,n_2}(\gamma_j(\mstar+r))_{n_j} (\gamma_{j^\prime}(\mstar+r))_{n_{j^\prime} } }\\
			& \quad = 
			\sum_{\mstar=0}^\infty \lim_{\gamma_1,\gamma_2\rightarrow 0}
			\left(\frac{
				\sum_{l=1}^t n_{1,l}n_{2,l} + \gamma_1\gamma_2(r+M^\star) + n_1\gamma_2 + n_2\gamma_1
			}{
				(\gamma_1(r+M^\star) + n_1)(\gamma_2(r+M^\star) + n_2)
			}
			\right)
			q^\star_{M\mid\bmX}(\mstar) \\
			%\frac{(\mstar+r)_{r\downarrow}q_M(\mstar+r)}{  V_{n_1,n_2}(\gamma_j(\mstar+r))_{n_j} (\gamma_{j^\prime}(\mstar+r))_{n_{j^\prime} } } \\
			& \quad = 
			\frac{\sum_{l=1}^t n_{1,l}n_{2,l} }{n_1n_2}
			\sum_{\mstar=0}^\infty \lim_{\gamma_j\rightarrow 0} 
			q^\star_{M\mid\bmX}(\mstar)
			= 
			\frac{\sum_{l=1}^t n_{1,l}n_{2,l} }{n_1n_2} \, .
		\end{split}
	\end{equation*}
    \end{proof}
    
	Equation \eqref{eqn:sp1p2_post_lim0} supports the interpretation of $\gamma_1$ and $\gamma_2$ as homogeneity parameters that can be employed to impose sparsity within each area when pushed towards zero. Indeed, we recall that values of $\gamma_j$ close to zero represent the prior belief of minimum correlation between $P_1$ and $P_2$, that is $\E\left(1/M\right)$ 
	(see Section \ref{subsection:correlation}).
	%the situation of minimum heterogeneity within each group. %as we expect only few species with high probabilities.%, hence maximum sparsity. 
	Then, the posterior expectation $\E\left(\rho_{12}\mid\bmX\right)$ achieves its minimum, which is equal to the plug-in estimator obtained by replacing the species probabilities with normalised observed counts; see, e.g., \cite{archer2014}. %in Equation \eqref{eqn:sp1p2_plugin}, 
	This means that the amount of similarity between the two areas must be at least equal to that observed. In the undersampled regime, the plug-in estimator is negatively biased, while the prior offers a correction for this bias. %the bias correction is given by the prior.
	
	Finally, when we assume a priori that all species are equally present in the population, i.e., the case of $\gamma_1,\gamma_2\rightarrow+\infty$, we obtain
	\begin{equation}
		\label{eqn:sp1p2_post_liminf}
		\lim_{\gamma_1,\gamma_2\rightarrow +\infty} \E(\rho_{12}\mid\bmX) = 
		\E_{q^\star_{M\mid\bmX,\infty}}\left(\frac{1}{r+M^\star}\mid\bmX\right)
		\, .
	\end{equation}
    \begin{proof}
        We follow the same approach as the proof for Equation \eqref{eqn:ssj_post_liminf}, letting $\gamma_1$ and $\gamma_2$ go to infinity at the same rate.
    	\begin{equation*}
    		\label{eqn:sp1p2_post_liminf_proof1}
    		\begin{split}
    			&\lim_{\gamma_1,\gamma_2\rightarrow \infty}  \E\left(\rho_{12}\mid\bmX\right) \\
    			& \quad = 
    			\lim_{\gamma_j\rightarrow \infty} \sum_{\mstar=0}^\infty 
    			\left(\frac{
    				\sum_{l=1}^t n_{1,l}n_{2,l} + \gamma_1\gamma_2(r+M^\star) + n_1\gamma_2 + n_2\gamma_1
    			}{
    				(\gamma_1(r+M^\star) + n_1)(\gamma_2(r+M^\star) + n_2)
    			}
    			\right)
    			q^\star_{M\mid\bmX}(\mstar)\\
    			%\frac{(\mstar+r)_{r\downarrow}q_M(\mstar+r)}{  V_{n_1,n_2}(\gamma_j(\mstar+r))_{n_j} (\gamma_{j^\prime}(\mstar+r))_{n_{j^\prime} } }\\
    			& \quad = 
    			\sum_{\mstar=0}^\infty \lim_{\gamma_1,\gamma_2\rightarrow \infty}
    			\left(\frac{
    				\sum_{l=1}^t n_{1,l}n_{2,l} + \gamma_1\gamma_2(r+M^\star) + n_1\gamma_2 + n_2\gamma_1
    			}{
    				(\gamma_1(r+M^\star) + n_1)(\gamma_2(r+M^\star) + n_2)
    			}
    			\right)
    			q^\star_{M\mid\bmX}(\mstar) \\
    			%\frac{(\mstar+r)_{r\downarrow}q_M(\mstar+r)}{  V_{n_1,n_2}(\gamma_j(\mstar+r))_{n_j} (\gamma_{j^\prime}(\mstar+r))_{n_{j^\prime} } } \\
    			& \quad = 
    			\sum_{\mstar=0}^\infty 
    			\frac{1}{r+\mstar}
    			\lim_{\gamma_1,\gamma_2\rightarrow \infty} 
    			q^\star_{M\mid\bmX}(\mstar)
    			= 
    			\E_{q^\star_{M\mid\bmX},\infty}\left(1/(r+M^\star)\right) \, .
    		\end{split}
    	\end{equation*}
    	The exchange of limit and series is valid, as shown in Section \ref{app:scambio_limiti}.
    \end{proof}
	Specifically, Equation \eqref{eqn:sp1p2_post_liminf} coincides with the corresponding limit of the Simpson index in Equation \eqref{eqn:ssj_post_liminf}. This implies that the estimated Morisita index reaches its maximum value of one, which is, once again, consistent with our interpretation of the $\gamma_j$'s parameters. In fact, pushing the $\gamma_j$'s towards infinity reflects a strong prior belief that the two populations are perfectly correlated, i.e.,  that they are identical.

    \section{Additional details on the simulation study}
	\label{app:SS_additional}
    Tables \ref{tab:SS_D}-\ref{tab:SS_A} report the names of the settings we considered during our simulation study. For selected cases $(D_3,G_6,Z_3,A_{1,4},A_{2,4},A_{3,4})$, Figure \ref{fig:SS_Morosita} displays the values of the Morisita index defined in Section \ref{subsection:indici_intro} and computed for each replicated dataset in some selected simulation settings. Results for all remaining settings are available in the repository \url{https://github.com/alessandrocolombi/HSSM}. Finally, Figures \ref{fig:AccCrv_D3}-\ref{fig:AccCrv_A34} show the accumulation curves of all quantities of interest in the selected settings. The shaded grey areas represent the $95\%$ bootstrap envelopes, computed over $100$ replicated datasets.
    %%%%%%%%%%%%%%%%%%%%%%%%%%%%%%%%%%%%%%%%%%%%%%%
    % TABELLE CON NOMI ESPERIMENTI
    %%%%%%%%%%%%%%%%%%%%%%%%%%%%%%%%%%%%%%%%%%%%%%%
    \begin{table}[h!]
    \centering
    \begin{minipage}{0.32\textwidth}
    \centering
    \begin{tabular}{|c|cc|}
        \hline
        & $0.1$ & $0.5$ \\ \hline
        \multicolumn{1}{|c|}{$0.1$} & $D_1$ & $D_2$ \\
        \multicolumn{1}{|c|}{$0.5$} & $-$ & $D_3$ \\
        \hline
    \end{tabular}
    \caption{Dirichlet weights.}
    \label{tab:SS_D}
    \end{minipage}
    \hfill
    \begin{minipage}{0.32\textwidth}
    \centering
    \begin{tabular}{|c|ccc|}
        \hline
        & $0.80$ & $0.85$ & $0.90$ \\ \hline
        $0.80$ & $G_1$ & $G_2$ & $G_3$ \\
        $0.85$ & $-$ & $G_4$ & $G_5$ \\
        $0.90$ & $-$ & $-$ & $G_6$ \\ \hline
    \end{tabular}
    \caption{Geometric weights.}
    \label{tab:SS_G}
    \end{minipage}
    \hfill
    \begin{minipage}{0.32\textwidth}
    \centering
    \begin{tabular}{|c|cc|}
        \hline
        & $1.3$ & $2$ \\ \hline
        \multicolumn{1}{|c|}{$1.3$} & $Z_1$ & $Z_2$ \\
        \multicolumn{1}{|c|}{$2$} & $-$ & $Z_3$ \\
        \hline
    \end{tabular}
    \caption{Zipf's weights.}
    \label{tab:SS_Z}
    \end{minipage}
    \end{table}
    \begin{table}[h!]
        \begin{tabular}{|c|cccc|}
        \hline
          & $(0.1,0.1)$ & $(0.1,0.5)$ & $(0.5,0.1)$ & $(0.5,0.5)$ \\ \hline
            $0$   & $A_{1,1}$   & $A_{1,2}$   & $A_{1,3}$   & $A_{1,4}$   \\
            $0.5$ & $A_{2,1}$   & $A_{2,2}$   & $A_{2,3}$   & $A_{2,4}$   \\
            $1$   & $A_{3,1}$   & $A_{3,2}$   & $A_{3,3}$   & $A_{3,4}$   \\ \hline
    \end{tabular}
    \caption{All configurations in the Additive case. On the rows, the values of the mass $c$ of the common component. The columns report all possible pairs $(\delta_0,\delta)$ of the Dirichlet distributions.}
    \label{tab:SS_A}
    \end{table}
    %%%%%%%%%%%%%%%%%%%%%%%%%%%%%%%%%%%%%%%%%%%%%%%
    % MOROSITA SELECTED EXPERIMENTI
    %%%%%%%%%%%%%%%%%%%%%%%%%%%%%%%%%%%%%%%%%%%%%%%
    \begin{figure}[ht!]
        \centering
        \includegraphics[width=0.7\linewidth]{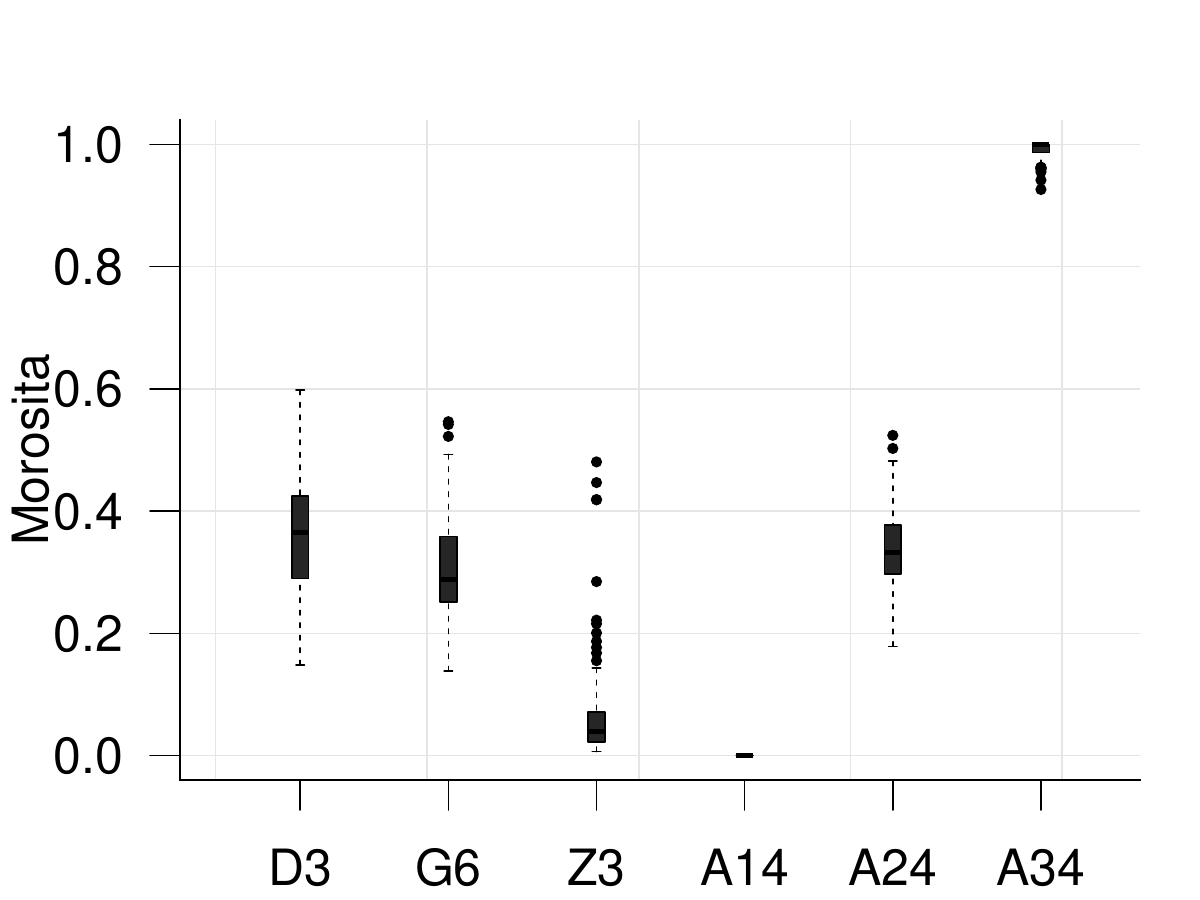}
        \caption{Morisita index for selected settings. }
        \label{fig:SS_Morosita}
    \end{figure}
    %%%%%%%%%%%%%%%%%%%%%%%%%%%%%%%%%%%%%%%%%%%%%%%
    % CURVE DI ACCUMULAZIONE
    %%%%%%%%%%%%%%%%%%%%%%%%%%%%%%%%%%%%%%%%%%%%%%%
    \begin{figure}[ht!]
        \centering
        \includegraphics[width=1\linewidth]{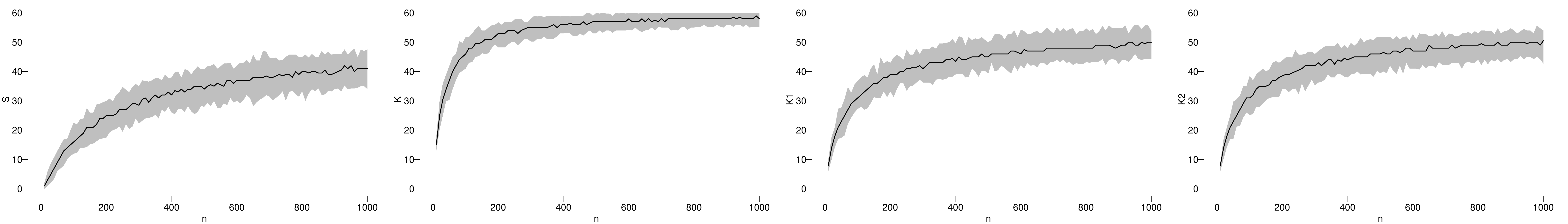}
        \caption{Accumulation curves for setting $D_3$. }
        \label{fig:AccCrv_D3}
    \end{figure}
    \begin{figure}[ht!]
        \centering
        \includegraphics[width=1\linewidth]{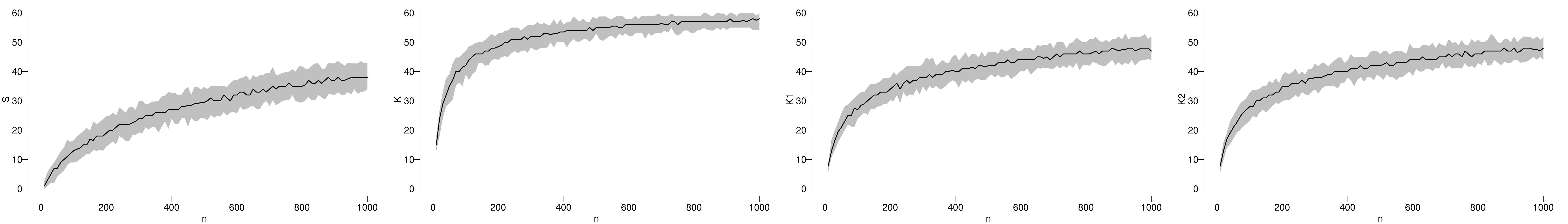}
        \caption{Accumulation curves for setting $G_6$. }
        \label{fig:AccCrv_G6}
    \end{figure}
    \begin{figure}[ht!]
        \centering
        \includegraphics[width=1\linewidth]{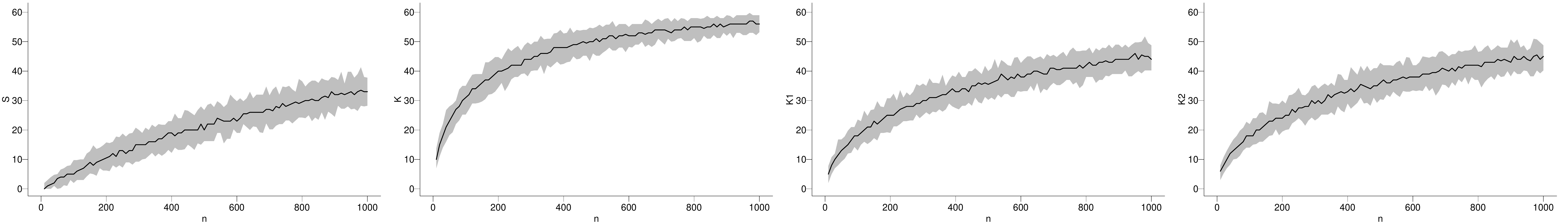}
        \caption{Accumulation curves for setting $Z_3$. }
        \label{fig:AccCrv_Z3}
    \end{figure}
    \begin{figure}[ht!]
        \centering
        \includegraphics[width=1\linewidth]{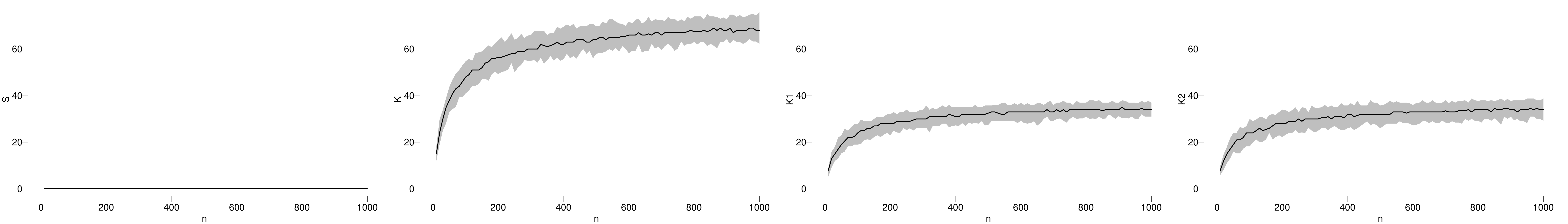}
        \caption{Accumulation curves for setting $A_{1,4}$. }
        \label{fig:AccCrv_A14}
    \end{figure}
    \begin{figure}[ht!]
        \centering
        \includegraphics[width=1\linewidth]{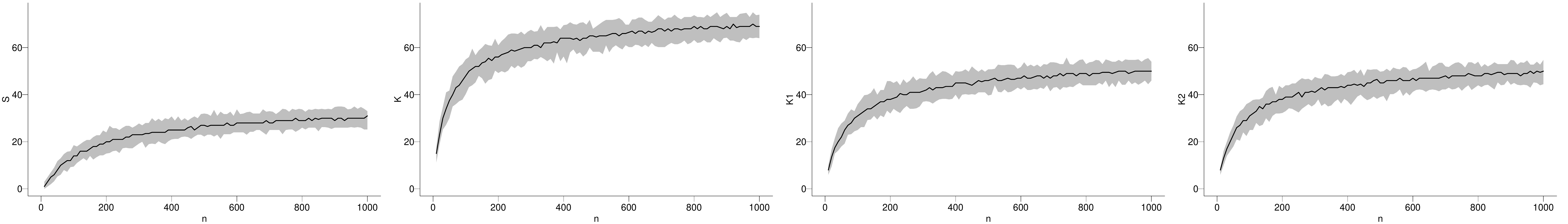}
        \caption{Accumulation curves for setting $A_{2,4}$. }
        \label{fig:AccCrv_A24}
    \end{figure}
    \begin{figure}[ht!]
        \centering
        \includegraphics[width=1\linewidth]{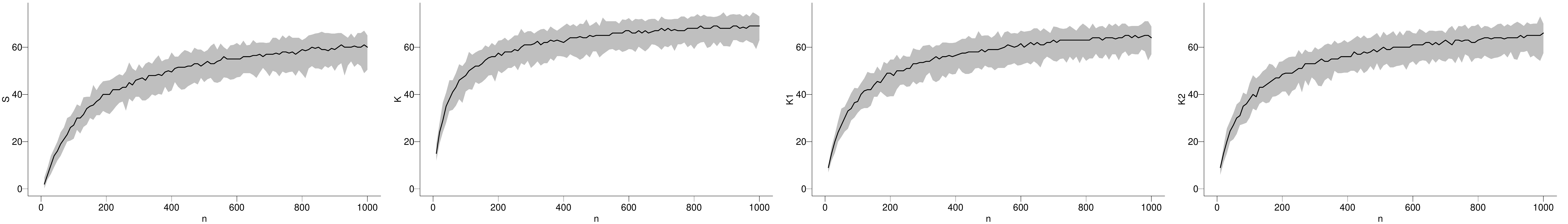}
        \caption{Accumulation curves for setting $A_{3,4}$. }
        \label{fig:AccCrv_A34}
    \end{figure}
    
    \subsection{Additional details on Experiment 1}
	\label{app:SS_Exp1_additional}
    Table \ref{tab:SS1_paramest} show the estimated parameters in the six selected cases reported in Section \ref{subsection:SS_exp1} for $n_1=n_2=400$. The estimated quantities are averaged over the $100$ replicated dataset. The corresponding standard deviation is reported as well. We recall that the true total species numbers are $M_{\mathrm{tot}}=60$ for $(D_3,G_6,Z_3)$ and $M_{\mathrm{tot}}=80$ for $(A_{1,4},A_{2,4},A_{3,4})$. In setting $A_{1,4}$, we set $\Lambda=100$ by default since, by construction, the Morisita index is exactly equal to zero, which does not allow us to solve Equation \eqref{eqn:sp1p2}. 
    %%%%%%%%%%%%%%%%%%%%%%%%%%%%%%%%%%%%%%%%%%%%%%%
    % EXPERIMENT 1: TABELLA PARAM. EST.
    %%%%%%%%%%%%%%%%%%%%%%%%%%%%%%%%%%%%%%%%%%%%%%%
    \begin{table}[ht]
        \centering
        \begin{tabular}{llcccc}
        \toprule
         & & $K_n+M^\star$ & $\Lambda$ & $\gamma_1$ & $\gamma_2$ \\
        \midrule
        \multirow{2}{*}{$D_3$} 
          & \textit{Bayes I}  & 60.43 (2.48)    & 59.43 (2.48)    & 0.515 (0.113) & 0.516 (0.132) \\
          & \textit{Bayes II} & 62.27 (7.29)    & 62.93 (17.25)   & 0.594 (0.310) & 0.589 (0.344) \\
        \addlinespace
        \multirow{2}{*}{$G_6$} 
          & \textit{Bayes I}  & 60.86 (3.72)    & 59.86 (3.72)    & 0.355 (0.051) & 0.356 (0.059) \\
          & \textit{Bayes II} & 62.11 (9.52)    & 66.07 (19.48)   & 0.442 (0.191) & 0.450 (0.225) \\
        \addlinespace
        \multirow{2}{*}{$Z_3$} 
          & \textit{Bayes I}  & 71.71 (9.64)    & 70.71 (9.64)    & 0.160 (0.034) & 0.154 (0.033) \\
          & \textit{Bayes II} & 172.97 (123.71) & 156.88 (128.68) & 0.073 (0.119) & 0.085 (0.143) \\
        \addlinespace
        \multirow{2}{*}{$A_{1,4}$} 
          & \textit{Bayes I}  & 504.99 (3.57)   & 518.86 (4.24)   & 0.0157 (0.0016) & 0.0155 (0.0017) \\
          & \textit{Bayes II} & 98.94 (5.99)    & 100.00 (0.00)   & 0.158 (0.041)   & 0.153 (0.050)   \\
        \addlinespace
        \multirow{2}{*}{$A_{2,4}$} 
          & \textit{Bayes I}  & 74.97 (5.49)    & 73.97 (5.49)    & 0.304 (0.046) & 0.320 (0.056) \\
          & \textit{Bayes II} & 68.30 (5.65)    & 60.25 (12.29)   & 0.487 (0.161) & 0.510 (0.165) \\
        \addlinespace
        \multirow{2}{*}{$A_{3,4}$} 
          & \textit{Bayes I}  & 64.43 (3.99)    & 63.34 (3.95)    & 0.889 (0.183) & 0.879 (0.203) \\
          & \textit{Bayes II} & 63.03 (3.89)    & 28.04 (4.22)    & 3.53 (6.61)   & 9.74 (42.85)   \\
        \bottomrule
        \end{tabular}
        \caption{Estimated values (standard deviations) for $(K_n+M^\star,\Lambda,\gamma_1,\gamma_2)$ obtained with \textit{Bayes I} (first subrow) and \textit{Bayes II} (second subrow). True total species numbers are $M_{\mathrm{tot}}=60$ for $(D_3,G_6,Z_3)$ and $M_{\mathrm{tot}}=80$ for $(A_{1,4},A_{2,4},A_{3,4})$.}
        \label{tab:SS1_paramest}
    \end{table}
    %%%%%%%%%%%%%%%%%%%%%%%%%%%%%%%%%%%%%%%%%%%%%%%
    % EXPERIMENT 1: EXECUTION TIME
    %%%%%%%%%%%%%%%%%%%%%%%%%%%%%%%%%%%%%%%%%%%%%%%
    Figure \ref{fig:SS1_ExecTime} displays the running times as the sample size increases. In particular, we distinguish between the execution time needed to estimate the parameters of interest (left panel) and the time needed to perform out-of-sample prediction (right panel). The latter does not significantly change as the sample size increases since the test set is maintained constant as the prediction horizon is $m_1=m_2=200$. The execution time needed to estimate the parameters via \textit{Bayes I} increases with $n$ as it requires the evaluation of the marginal likelihood (i.e., the pEPPF in Equation \eqref{eqn:peppf}), while it is constant for \textit{Bayes II} since we only need to solve a system of two decoupled equations, as explained in Section \ref{subsection:param_estimation}.
    \begin{figure}[ht!]
        \centering
        \includegraphics[width=0.49\linewidth]{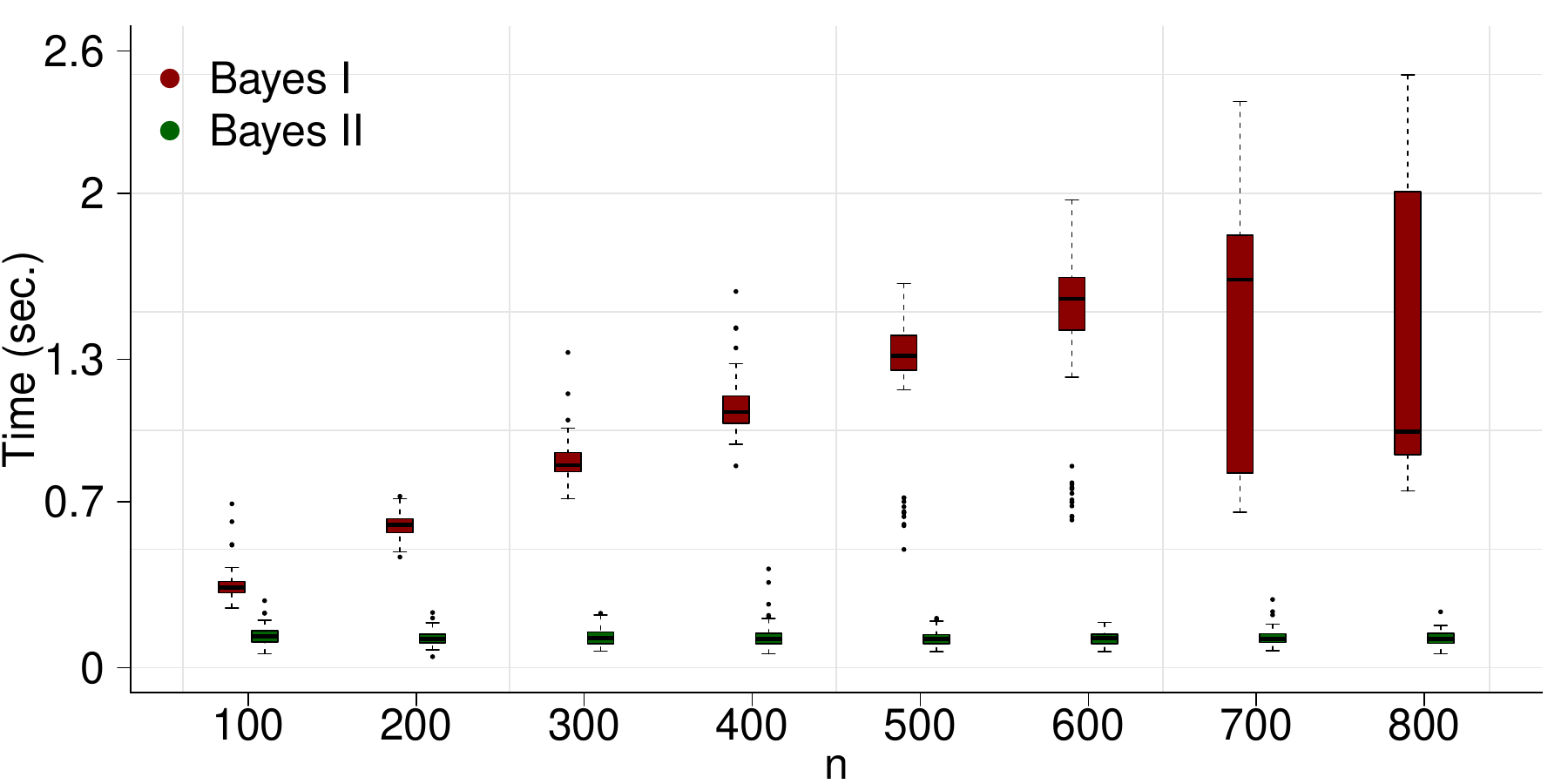}
        \hfill
        \includegraphics[width=0.49\linewidth]{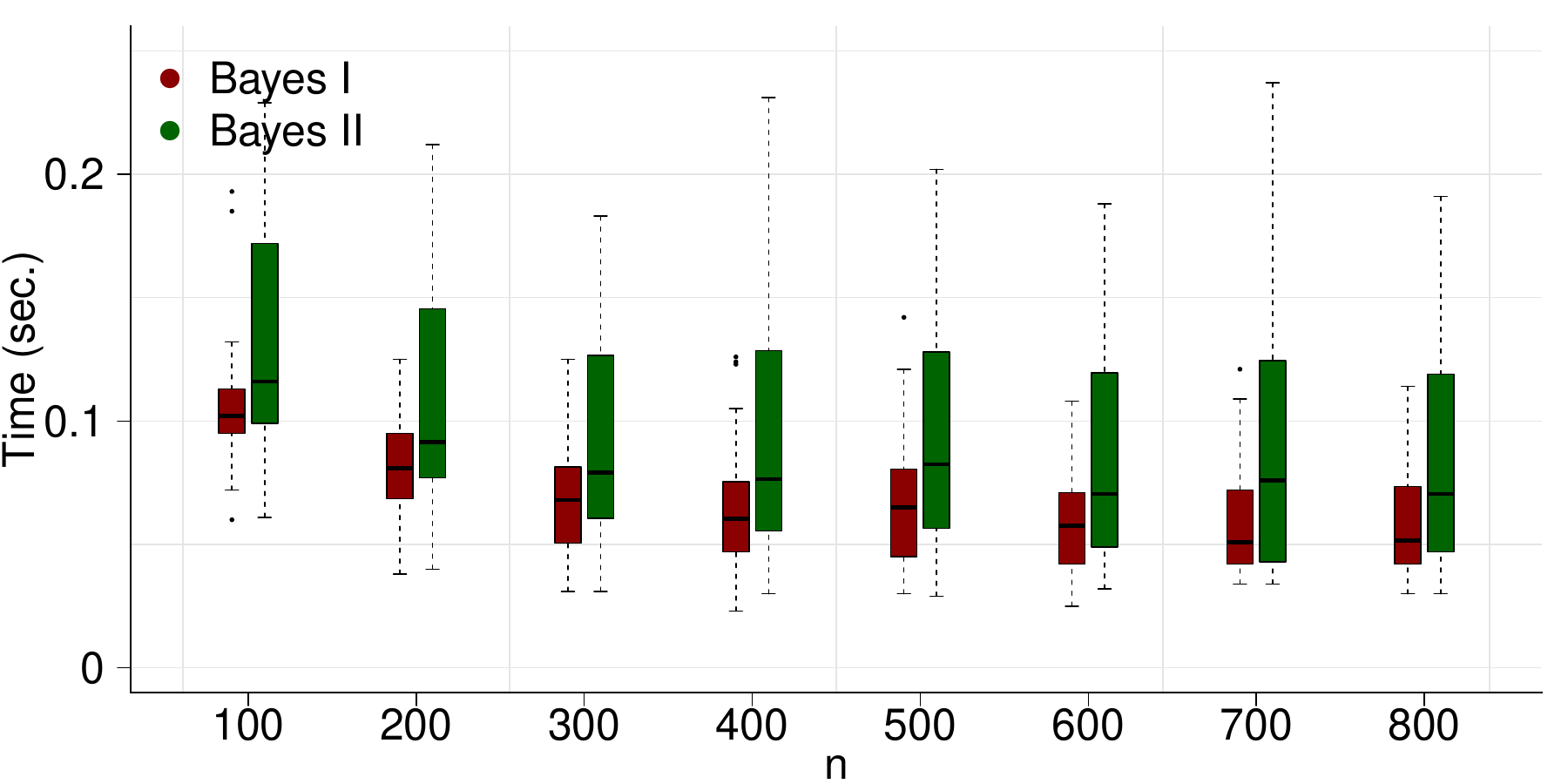}
        \caption{Experiment 1, execution times to fit the model (left panel) and to perform the out-of-sample prediction (right panel). Time is reported in seconds.}
        \label{fig:SS1_ExecTime}
    \end{figure}
    %%%%%%%%%%%%%%%%%%%%%%%%%%%%%%%%%%%%%%%%%%%%%%%
    % EXPERIMENT 1: n1 small 
    %%%%%%%%%%%%%%%%%%%%%%%%%%%%%%%%%%%%%%%%%%%%%%%
    Finally, Figure \ref{fig:Exp1_n1small} shows the results for the unbalanced case $n_1 = 60$ and $n_2 = 300$. 
    \begin{figure}[ht!]
        \centering
        \includegraphics[width=0.485\linewidth]{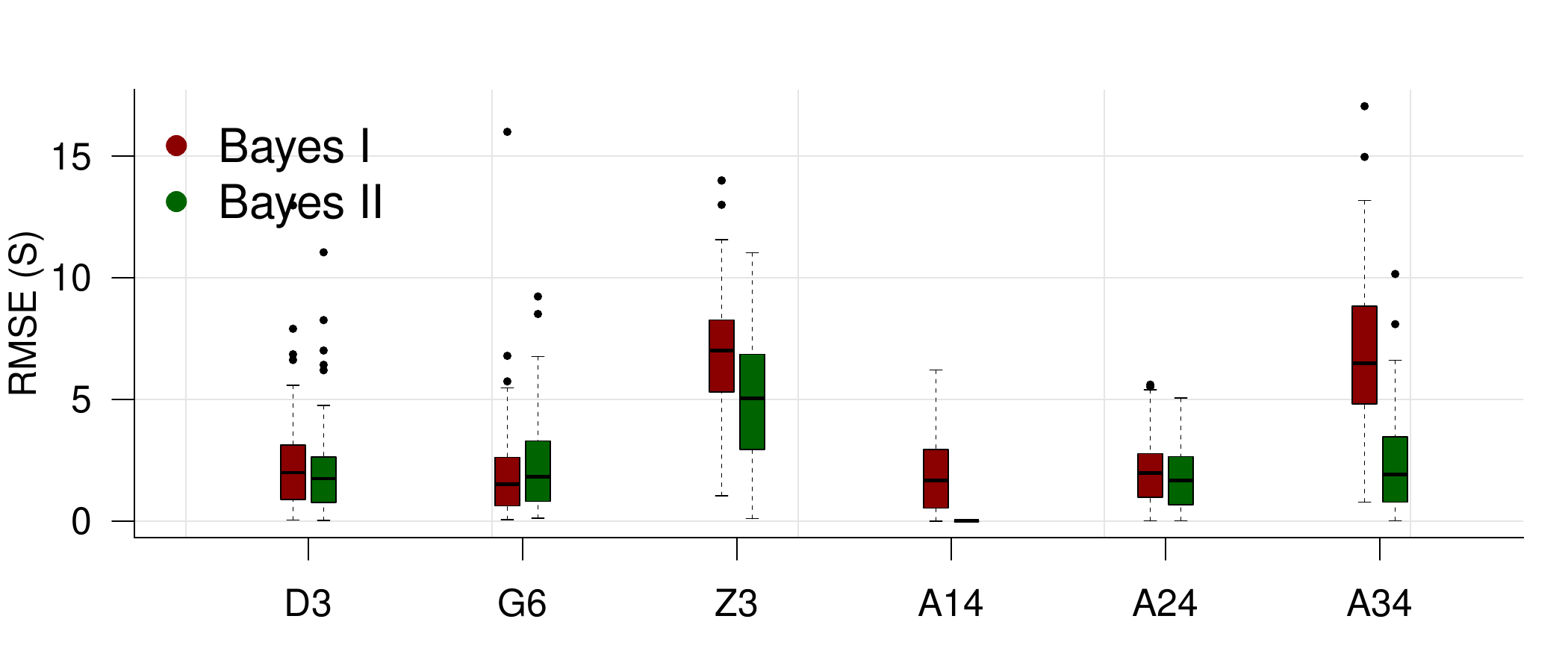}
        \hfill
        \includegraphics[width=0.485\linewidth]{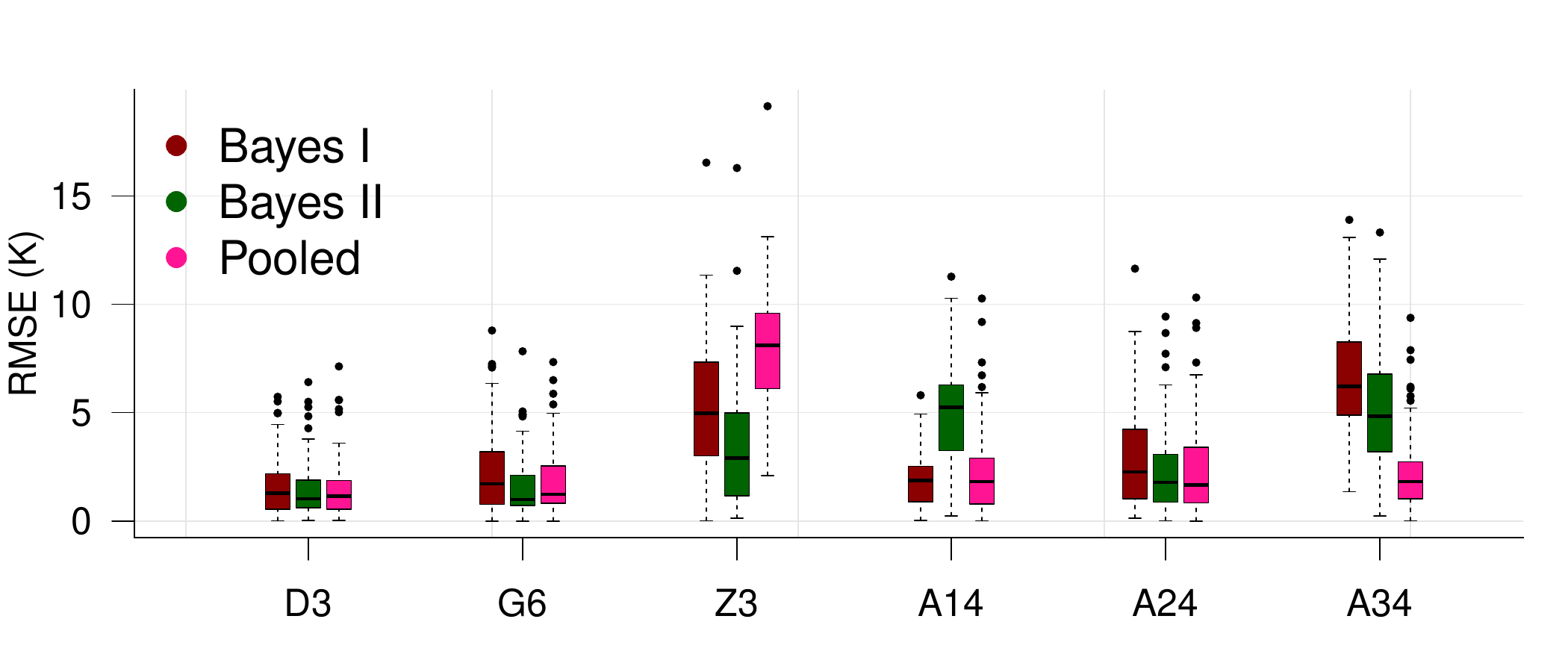}
        \\
        \includegraphics[width=0.485\linewidth]{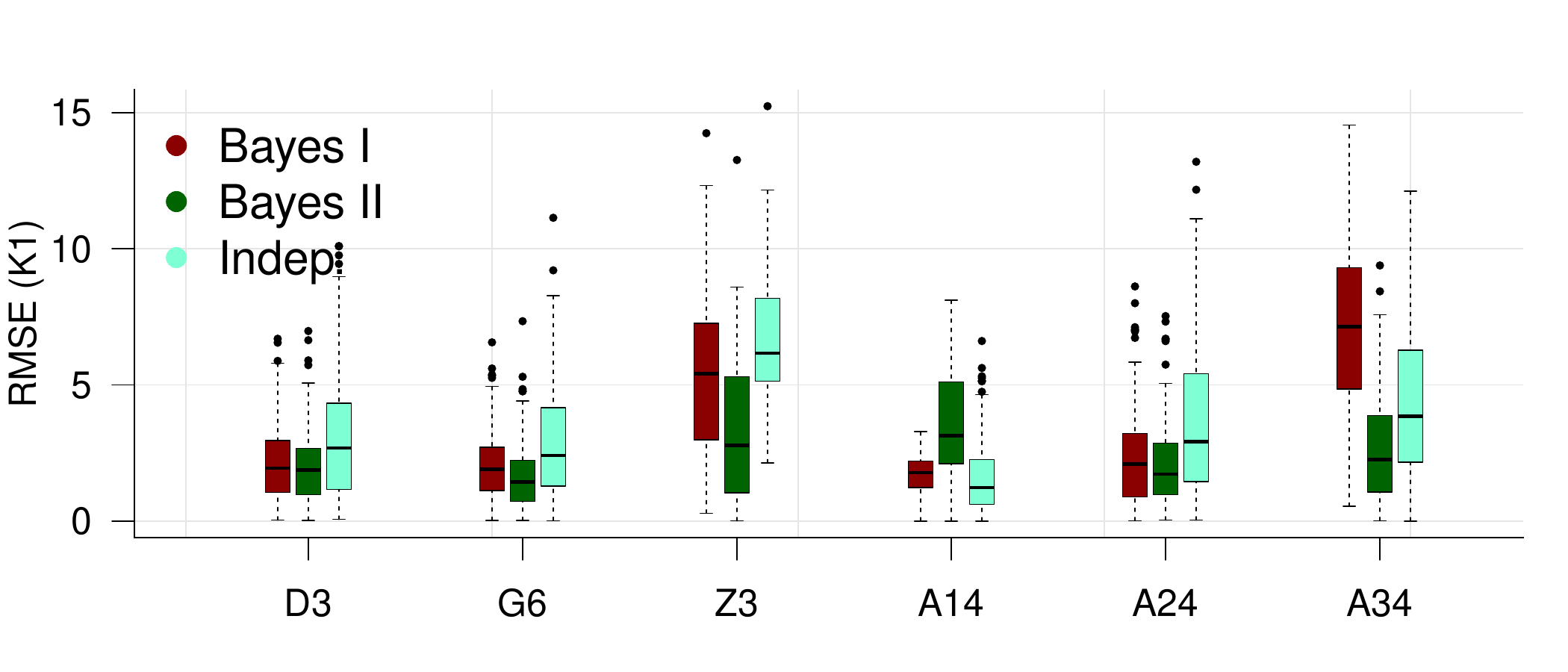}
        \hfill
        \includegraphics[width=0.485\linewidth]{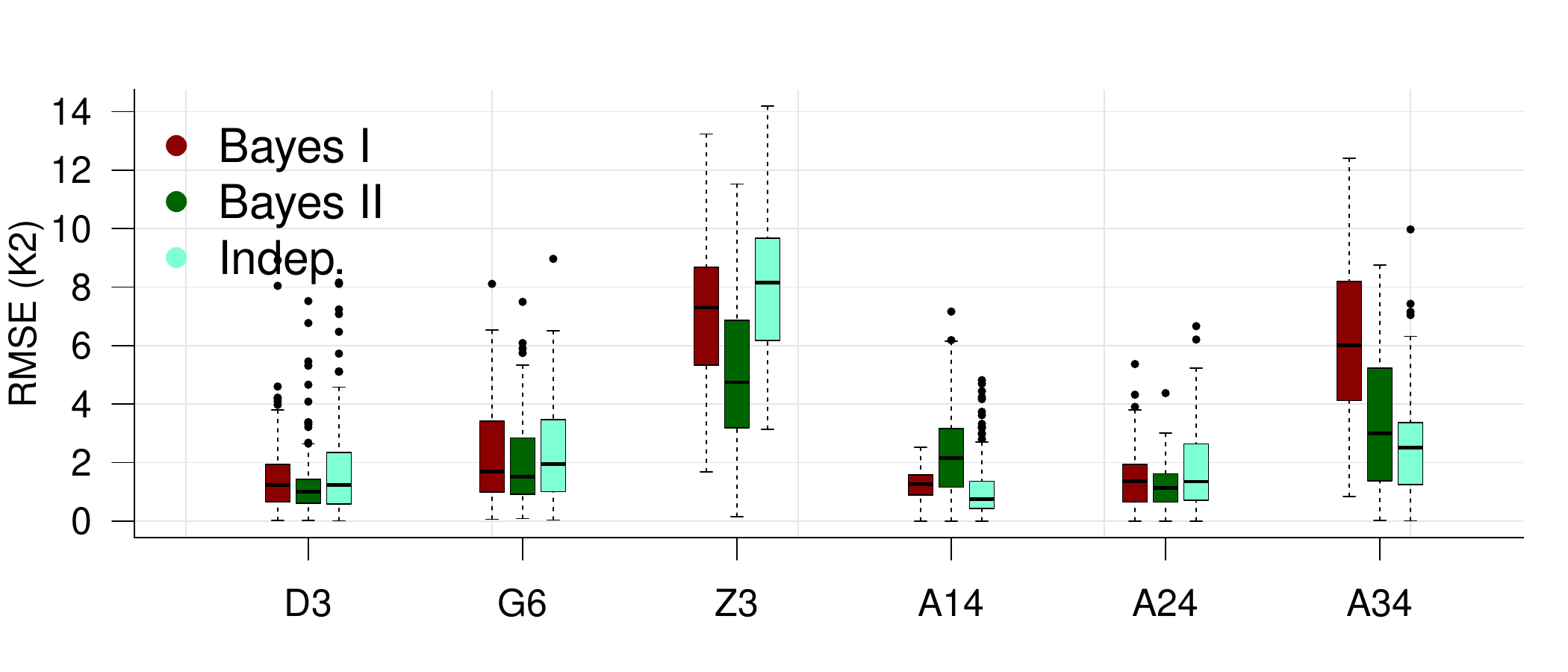}
        \hfill
        \caption{Experiment 1, $n_1\ll n_2$: RMSE of out-of-sample predictions for new shared species (top-left panel), new global distinct species (top-right panel), and new local distinct species (bottom-left panel) and (bottom-right panel) across selected scenarios.}
        \label{fig:Exp1_n1small}
    \end{figure}
    Comparing the bottom-left panel in Figure \ref{fig:Exp1_n1small} with its balanced counterpart in Figure \ref{fig:SS_Pred_1} highlights the advantage of a joint modelling perspective when one area is poorly sampled. In particular, for the prediction of new local distinct species in area 1 (the smaller sample), the \textit{Independent} approach exhibits a markedly larger RMSE in the unbalanced setting, with wider boxplots reflecting increased uncertainty due to limited data. In contrast, our joint model mitigates this effect by borrowing strength from the second area, resulting in more stable predictions.
    \subsection{Additional details on Experiment 2}
	\label{app:SS_Exp2_additional}
    %%%%%%%%%%%%%%%%%%%%%%%%%%%%%%%%%%%%%%%%%%%%%%%
    % EXPERIMENT 2: TRUE PROBABILITIES
    %%%%%%%%%%%%%%%%%%%%%%%%%%%%%%%%%%%%%%%%%%%%%%%
    Equation \eqref{eqn:Pr1step_true} reports the oracle estimators for the one-step ahead probabilities of discovery, that is
    \begin{equation}
	\label{eqn:Pr1step_true}
		\begin{split}
            &\P_{\text{true}}\left(K^{(n_j)}_{j,1} > 0\mid\bmX\right)  \\
			&\quad = 
            \sum_{m=1}^{M_{\text{true}}} 
			\
			p_{j,m}\indic(n_{j,m} = 0)\,, \quad j=1,2\,,\\
            %   PK_true = sum(w_j[[1]][idx_00])+sum(w_j[[2]][idx_00])-sum(w_j[[1]][idx_00]*w_j[[2]][idx_00]) 
			&\P_{\text{true}}\left(\mathcal K^{(n_1,n_2)}_{1,1} > 0\mid\bmX\right) \\
            &\quad = 
			\sum_{m=1}^{M_{\text{true}}} 
			\
			\left(p_{1,m}+p_{2,m}-p_{1,m}p_{2,m}\right)\indic(n_{1,m} = 0, \, n_{2,m} = 0) 
			\,,\\
			&\P_{\text{true}}\left(\mathcal S^{(n_1,n_2)}_{1,1} > 0\mid\bmX\right)  \\
            &\quad = 
			\sum_{m=1}^{M_{\text{true}}}  
			\
			p_{1,m}p_{2,m}\indic(n_{1,m} = 0, \, n_{2,m} = 0) \\
			&\qquad\qquad + \
			p_{1,m}\indic(n_{1,m} = 0, \, n_{2,m} > 0) 
			   + 
			p_{2,m}\indic(n_{1,m} > 0, \, n_{2,m} = 0), 
			\,
		\end{split}
	\end{equation}
	where $n_{j,m}$ is the observed absolute frequency of the $(m)$th species in the $(j)$th group. 
    %%%%%%%%%%%%%%%%%%%%%%%%%%%%%%%%%%%%%%%%%%%%%%%
    % EXPERIMENT 2: n1 small 
    %%%%%%%%%%%%%%%%%%%%%%%%%%%%%%%%%%%%%%%%%%%%%%%
    Finally, Figure \ref{fig:Exp2_n1small} shows the results for the unbalanced case $n_1 = 60$ and $n_2 = 300$. 
    \begin{figure}[ht!]
        \centering
        \includegraphics[width=0.485\linewidth]{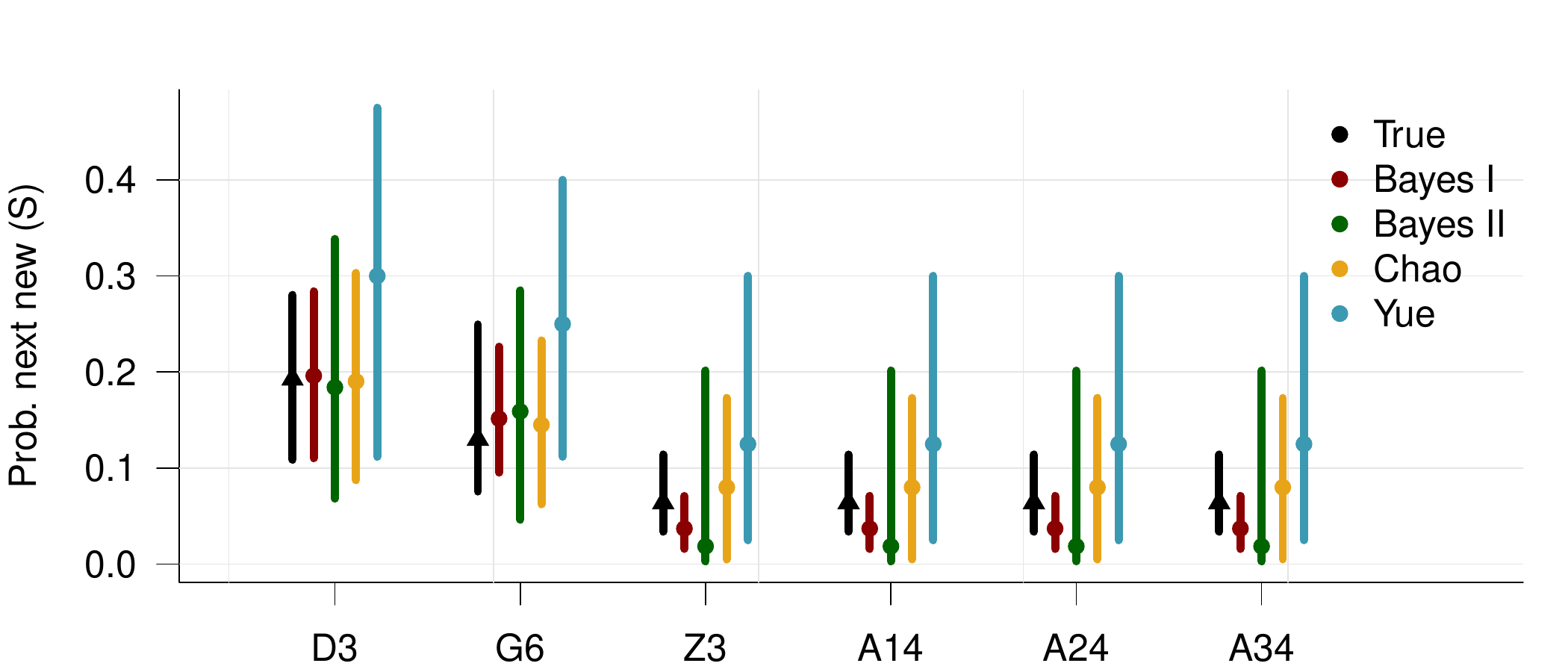}
        \hfill
        \includegraphics[width=0.485\linewidth]{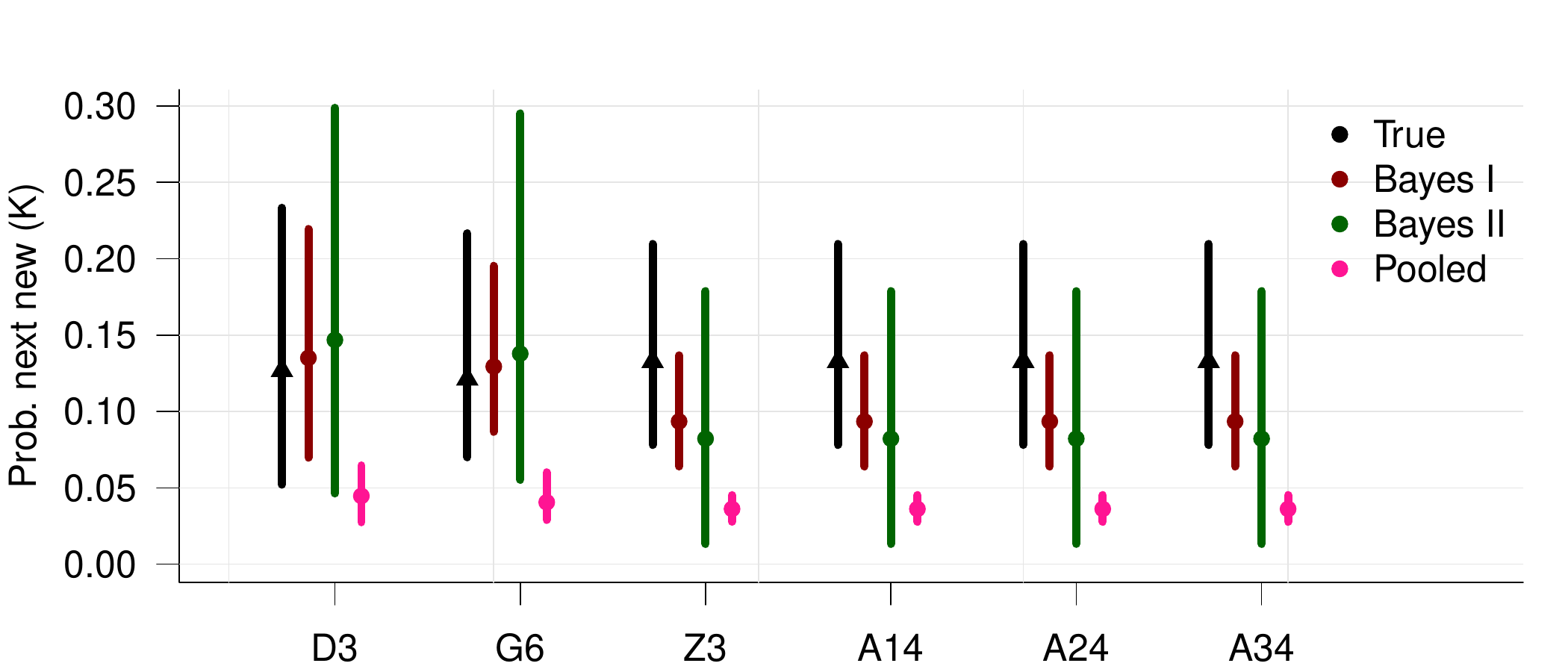}
        \\
        \includegraphics[width=0.485\linewidth]{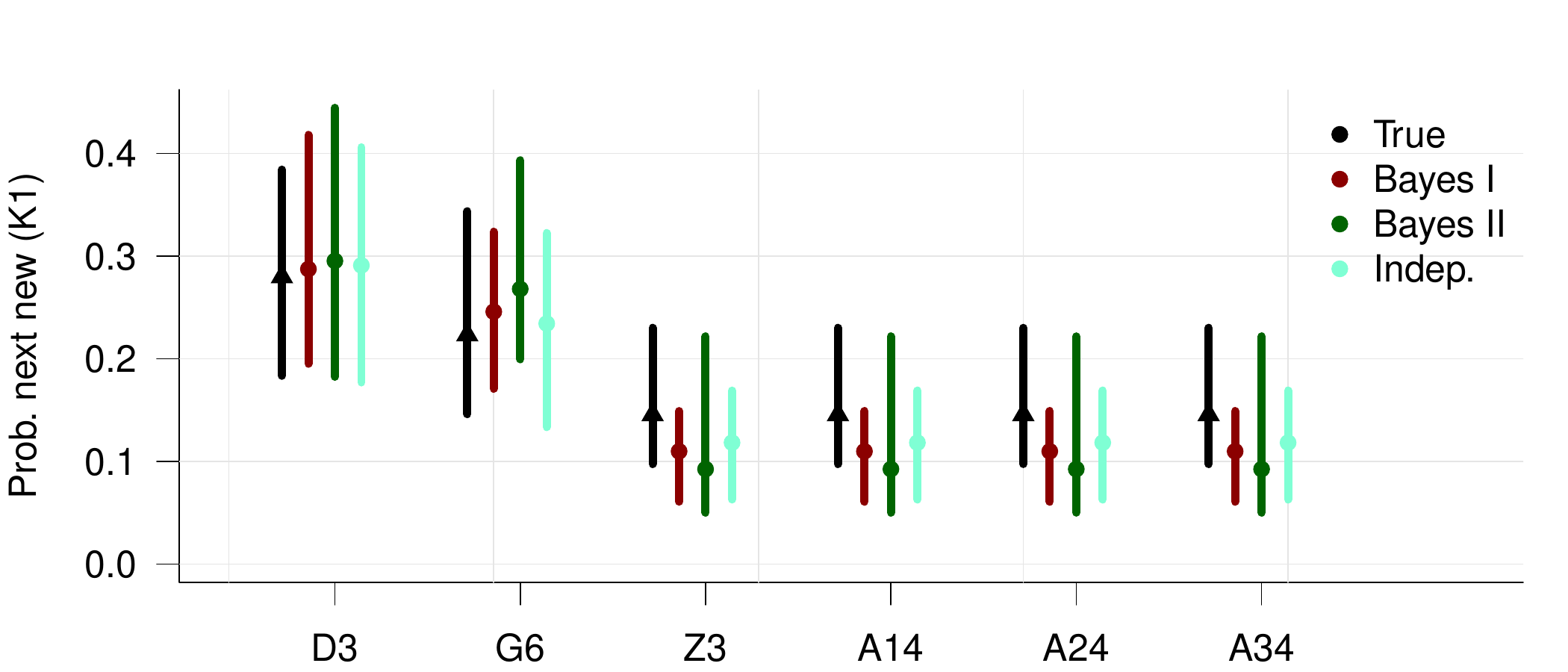}
        \hfill
        \includegraphics[width=0.485\linewidth]{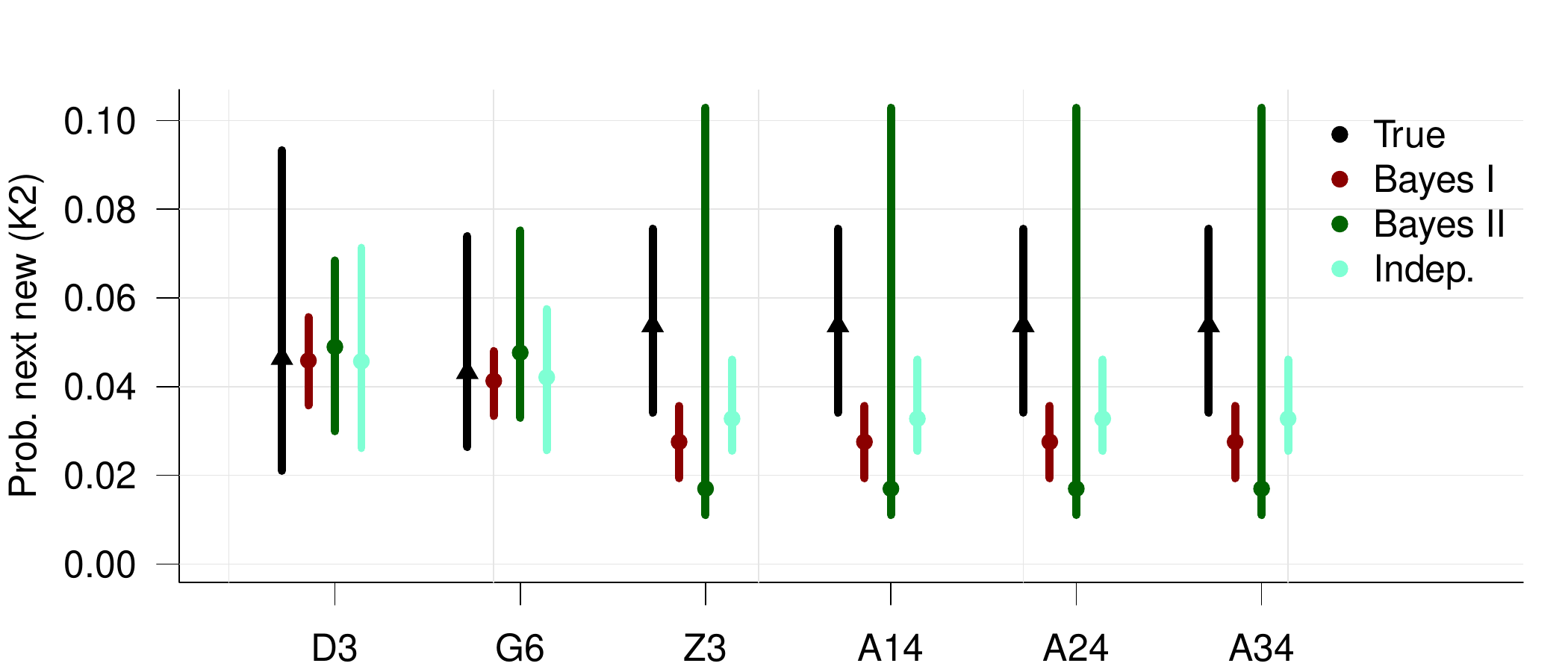}
        \hfill
        \caption{Experiment 2, $n_1\ll n_2$. One-step-ahead prediction probability for new shared species (top-left panel), new global distinct species (top-right panel), and new local distinct species (bottom-left panel) and (bottom-right panel) across selected scenarios.}
        \label{fig:Exp2_n1small}
    \end{figure}
	\subsection{Experiment 3}
    \label{subsection:SS_exp_mappette}
    This experiment introduces a practical tool that is naturally available under our proposed model but is largely unavailable in the existing literature, especially in the frequentist framework. In Section \ref{section:posterior} we derived both $1$-step-ahead and $m$-steps-ahead predictive distributions for the number of new local, global, and shared species. In Section \ref{subsection:SS_exp2} we focused on the one-step-ahead case because it is the only setting where direct comparisons with frequentist competitors are possible. The key methodological novelty of our approach, however, is that it enables coherent predictions at an arbitrarily distant horizon, i.e., for any future sample sizes $(m_1,m_2)$.
    
    To illustrate this capability, we propose a two-dimensional visualization of discovery probabilities across both the current sampling effort and the planned future effort. Specifically, for each quantity of interest, we represent the probability of making at least one new discovery as a function of: (i) the current sample size $n$ (horizontal axis), and (ii) the size of the additional sample (vertical axis). The resulting heatmaps provide an immediate summary of the expected gain from further sampling, and can be used in practice to decide whether it is worth investing additional resources or whether the experiment is already sufficiently exhaustive. Importantly, this assessment is obtained under a unified model that jointly describes all relevant quantities and supports $m$-step-ahead predictions.

    The results for scenario $G_6$,  $D_3$ and $Z_3$ are reported in Figures \ref{fig:SS_Pr_m_step_Geom}, % while we defer scenarios $D_3$ and $Z_3$ to the Supplementary material \ref{fig:SS_Pr_m_step_Zipfs}, see Figures 
\ref{fig:SS_Pr_m_step_Dir} and \ref{fig:SS_Pr_m_step_Zipfs}, respectively.
    In each figure, the estimates (top rows) are compared with the corresponding ground-truth values computed under the data-generating mechanism (bottom rows). For the sake of space, in these plots, we focus only on the \textit{Bayes I} implementation. Across settings and across target quantities, a common qualitative pattern emerges: one-step-ahead discovery probabilities are often very small and, therefore, may be of limited practical use for planning purposes. As the future sample size increases, the discovery probability grows and eventually approaches one, thereby providing a concrete, interpretable notion of the sampling effort required to achieve a desired chance of observing something new. Moreover, for fixed future effort, discovery probabilities tend to decrease as the currently available sample size $n$ increases, reflecting diminishing returns: once many observations have already been collected, substantially larger additional samples are needed to reach the same probability of discovering new distinct or shared species.

    \begin{figure}[ht!]
        \centering
        \includegraphics[width=0.24\linewidth]{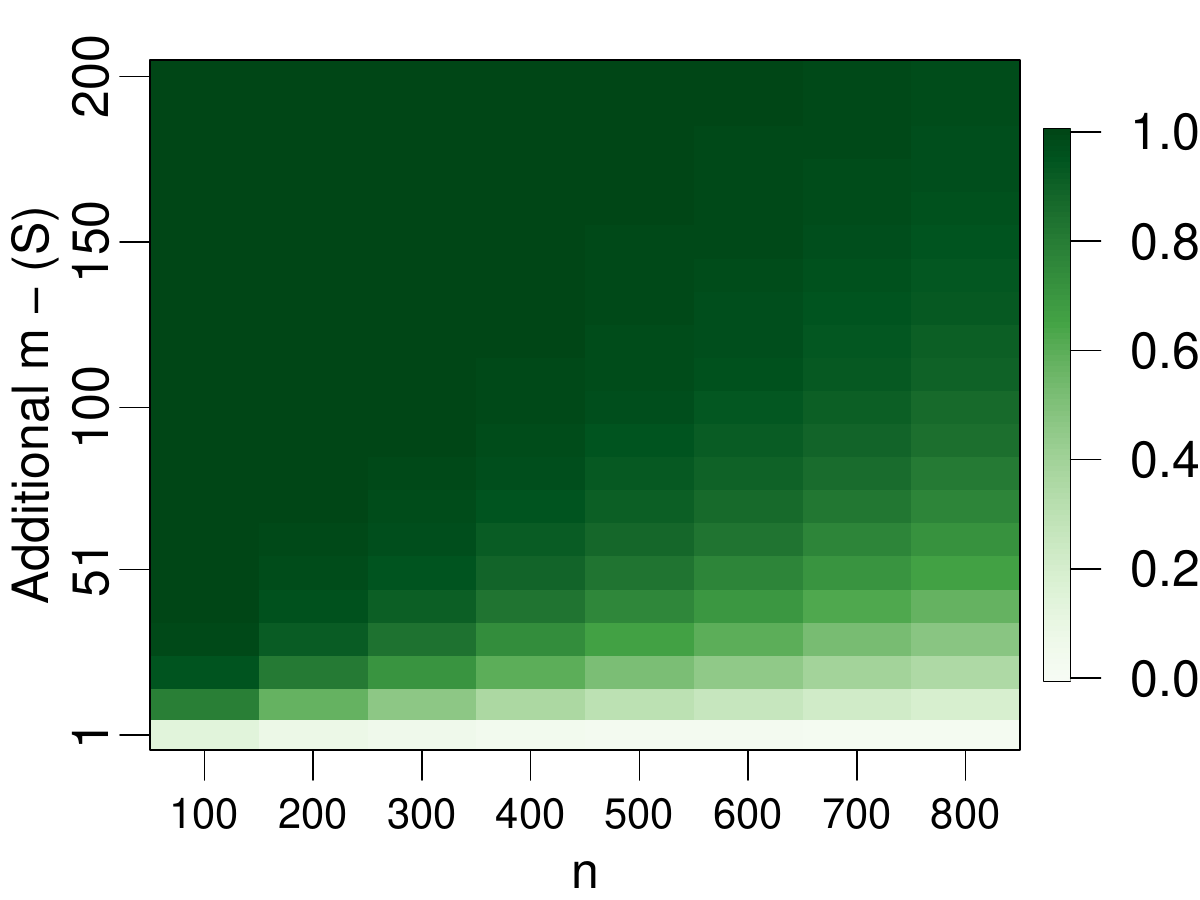}
        \hfill
        \includegraphics[width=0.24\linewidth]{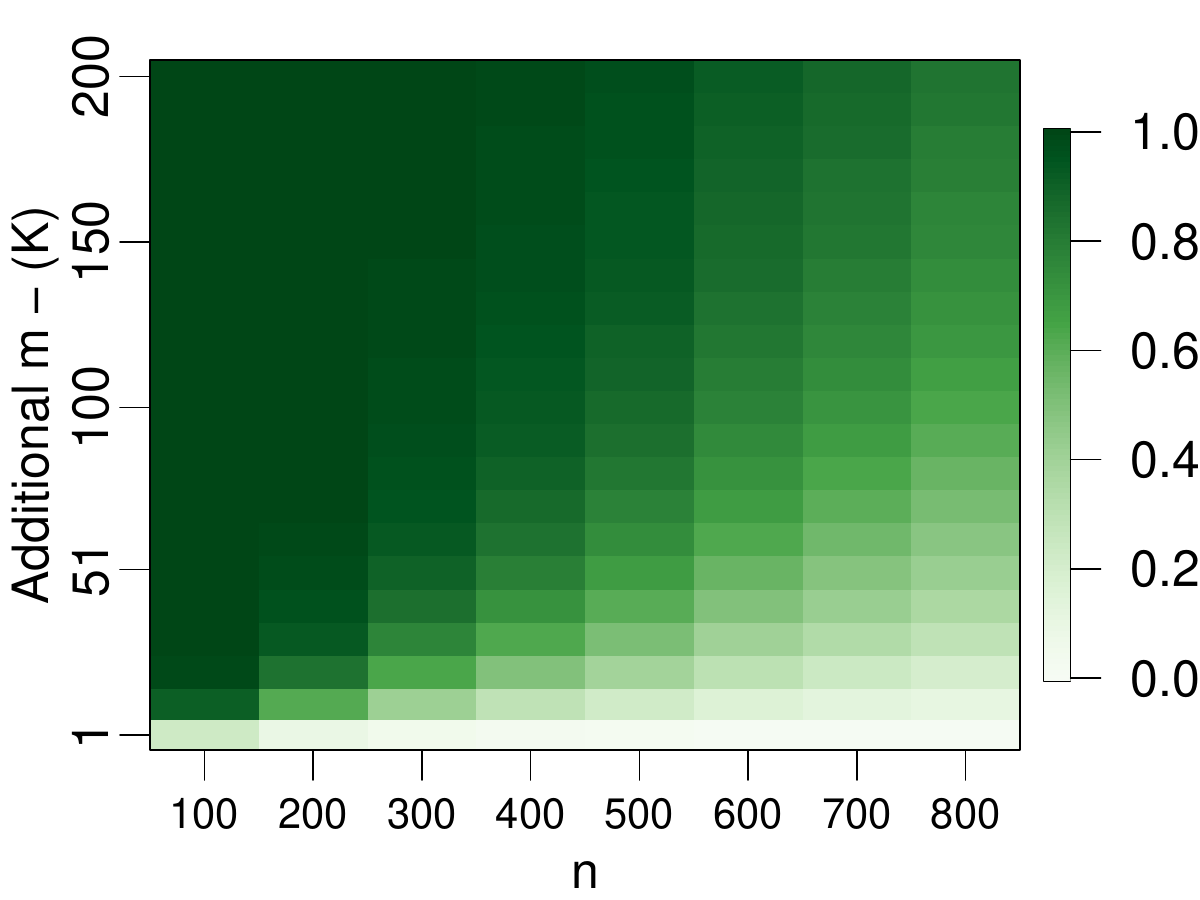}
        \hfill
        \includegraphics[width=0.24\linewidth]{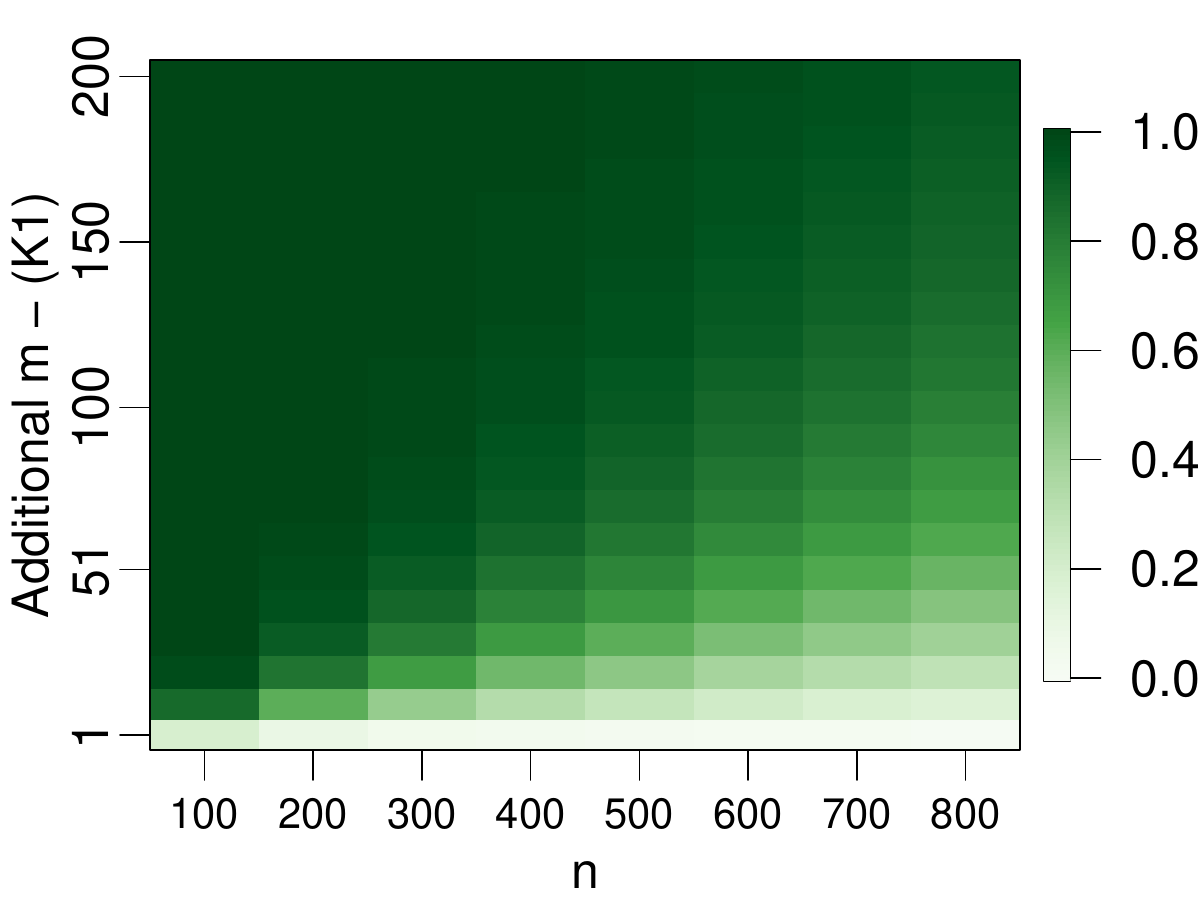}
        \hfill
        \includegraphics[width=0.24\linewidth]{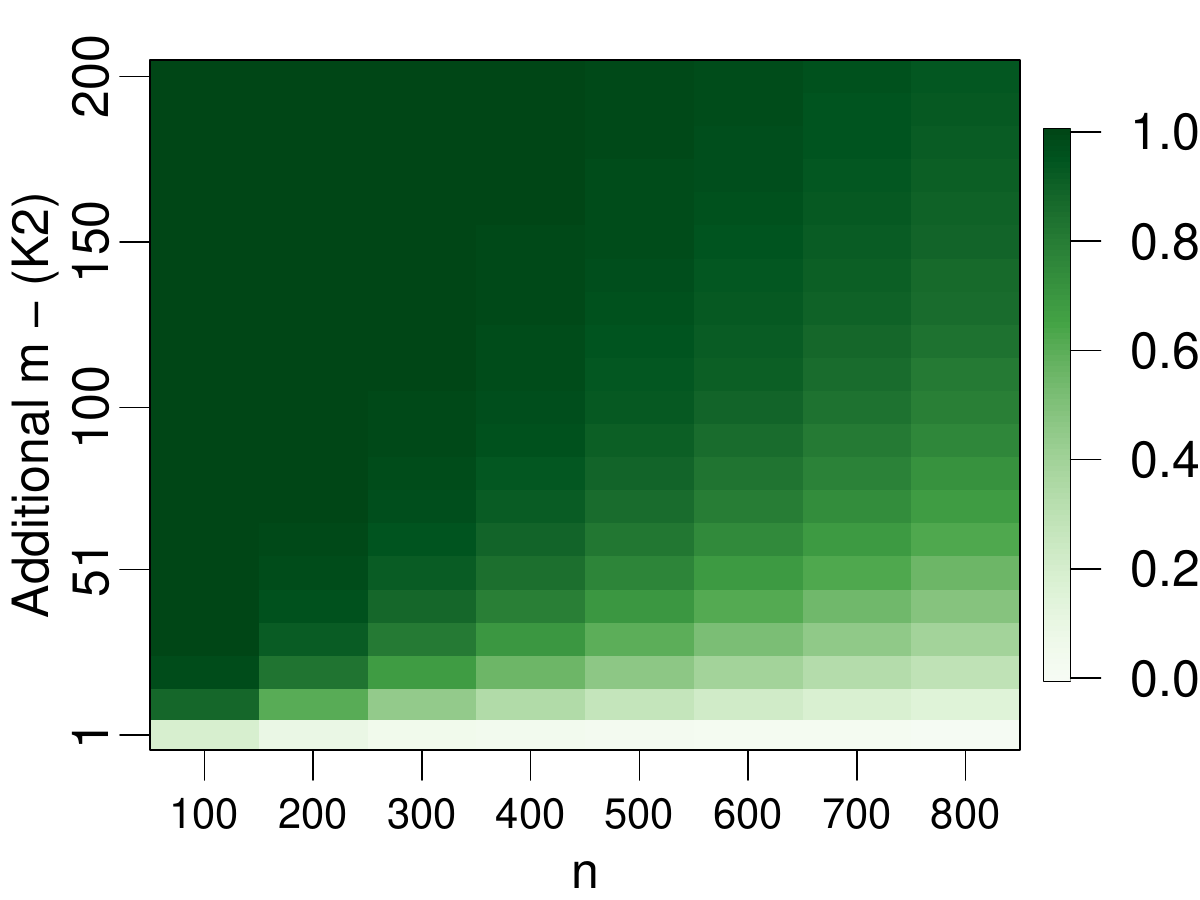}
        \\
        \includegraphics[width=0.24\linewidth]{img/Pr_m_steps_Geom_ShMLE_09_09.pdf}
        \hfill
        \includegraphics[width=0.24\linewidth]{img/Pr_m_steps_Geom_KMLE_09_09.pdf}
        \hfill
        \includegraphics[width=0.24\linewidth]{img/Pr_m_steps_Geom_K1MLE_09_09.pdf}
        \hfill
        \includegraphics[width=0.24\linewidth]{img/Pr_m_steps_Geom_K2MLE_09_09.pdf}
        \caption{Experiment 3 - $G_6$ case: $m$-steps-ahead prediction probabilities for new shared species (first column), new global distinct species (second column), and new local distinct species (third column) and (fourth column) under $G_6$ setting. 
        Top row shows the estimated probabilities via \textit{Bayes I} estimator while the bottom row represents the oracle estimators. }
        \label{fig:SS_Pr_m_step_Geom}
    \end{figure}
    %%%%%%%%%%%%%%%%%%%%%%%%%%%%%%%%%%%%%%%%%%%%%%%
    % EXPERIMENT 3: MAPPETTE D3 - Z3
    %%%%%%%%%%%%%%%%%%%%%%%%%%%%%%%%%%%%%%%%%%%%%%%
    \begin{figure}[ht!]
        \centering
        \includegraphics[width=0.24\linewidth]{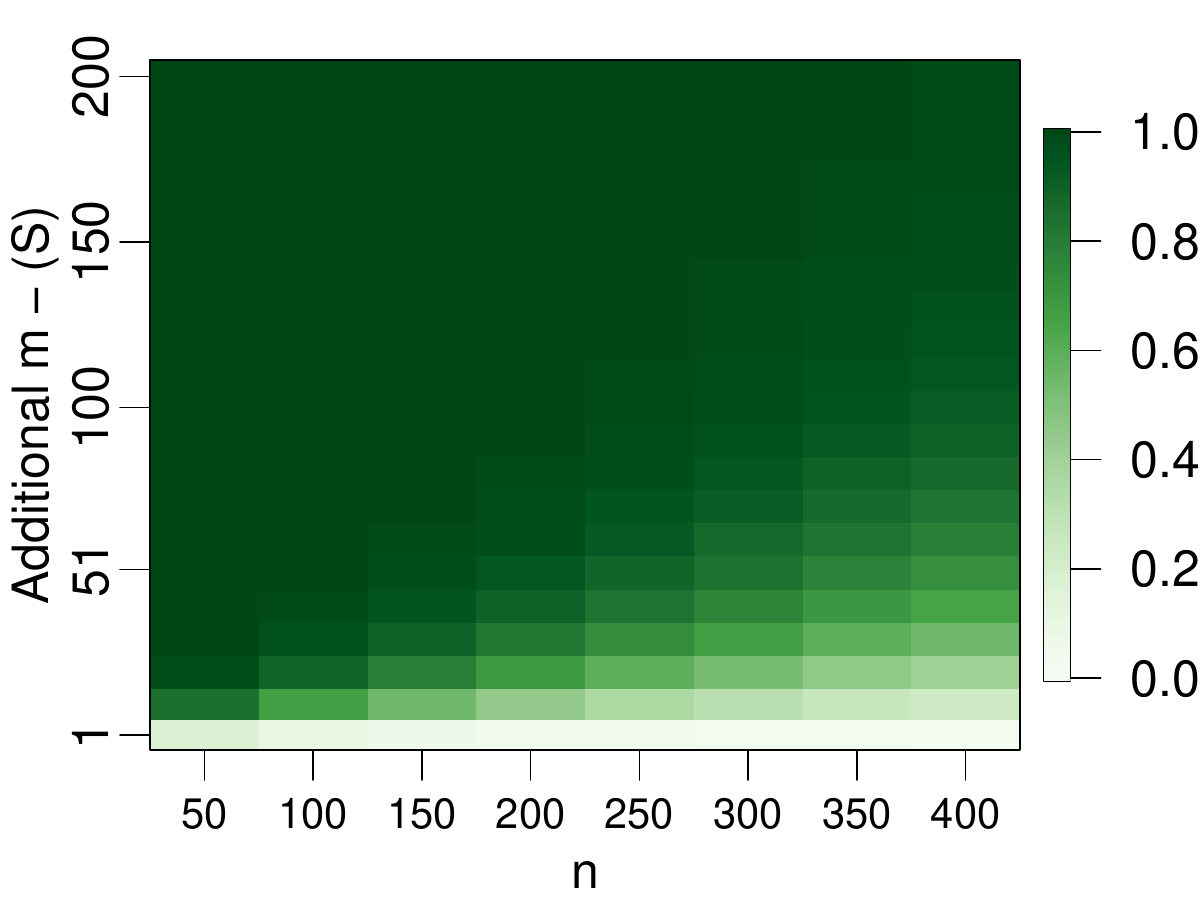}
        \hfill
        \includegraphics[width=0.24\linewidth]{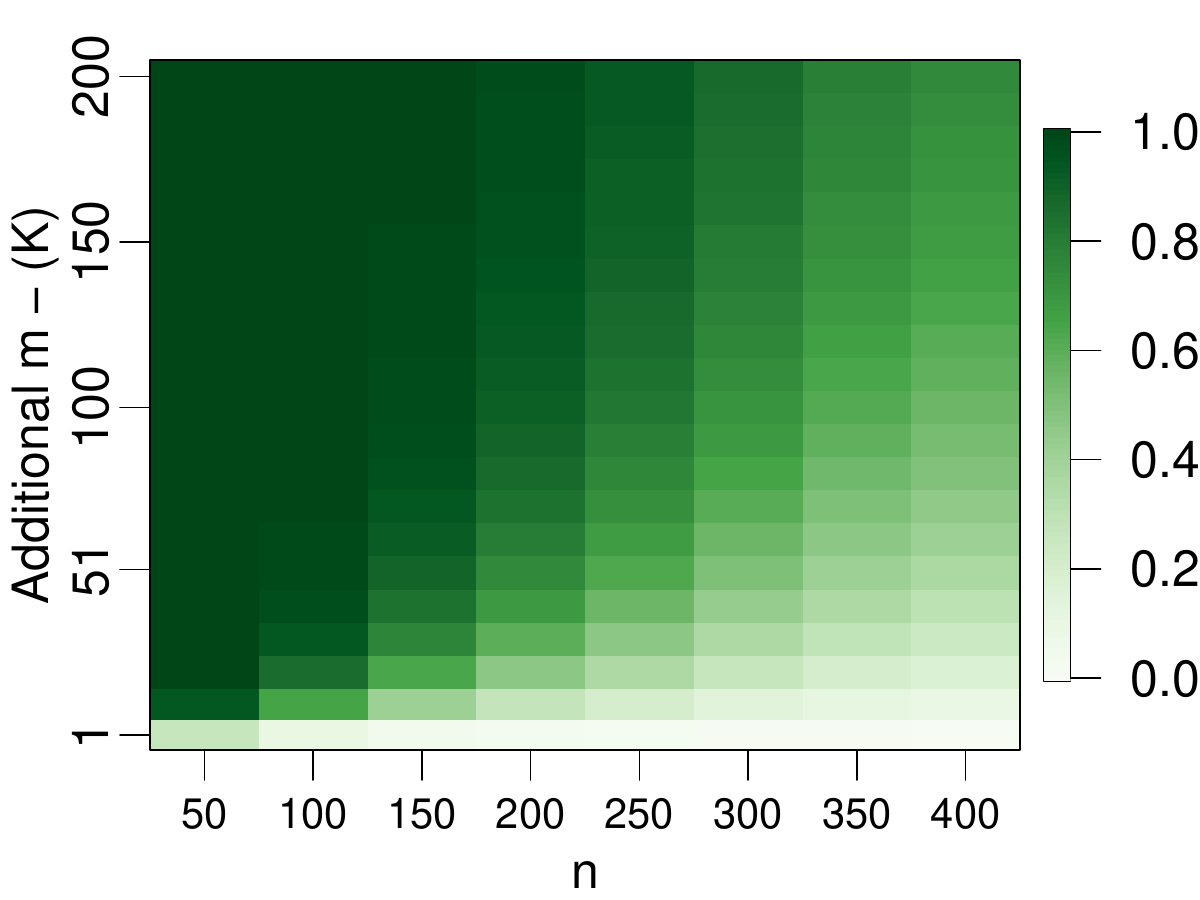}
        \hfill
        \includegraphics[width=0.24\linewidth]{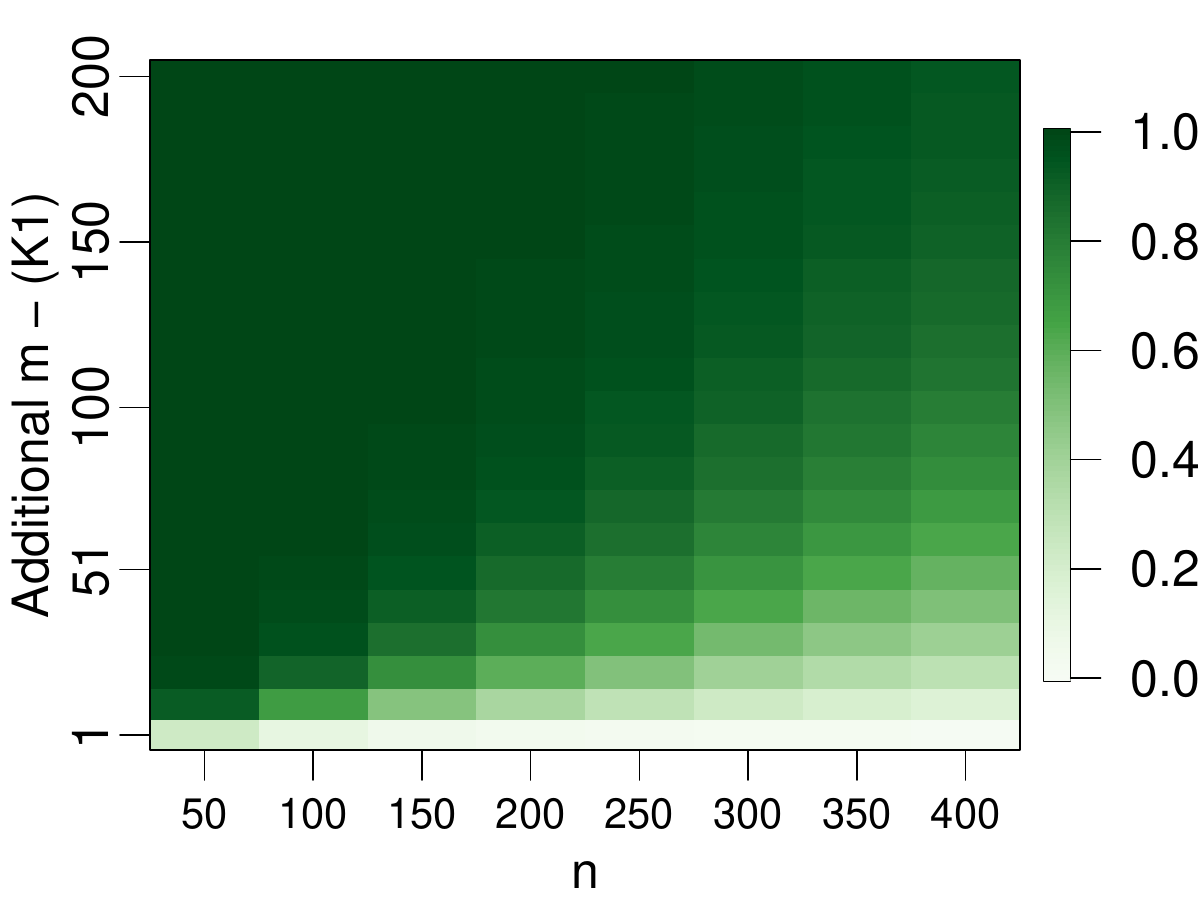}
        \hfill
        \includegraphics[width=0.24\linewidth]{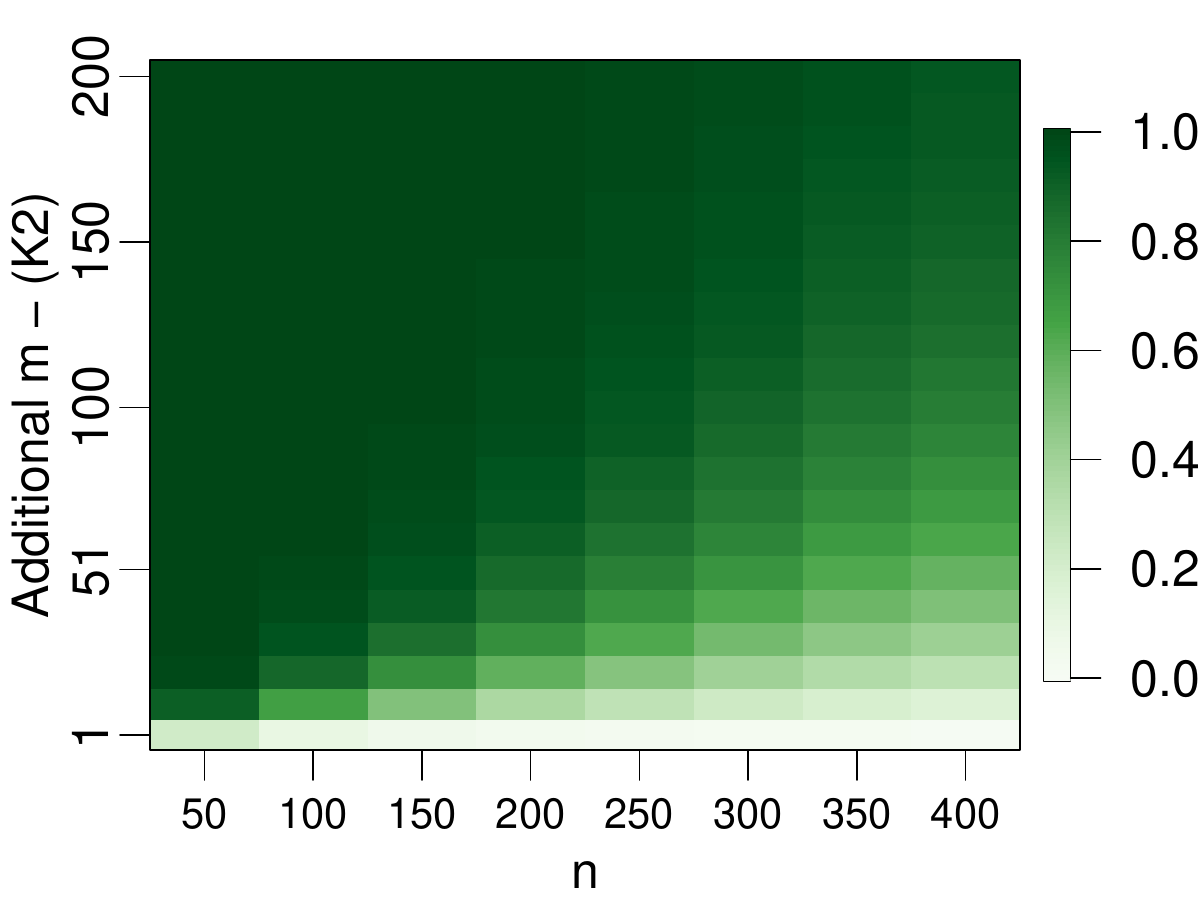}
        \\
        \includegraphics[width=0.24\linewidth]{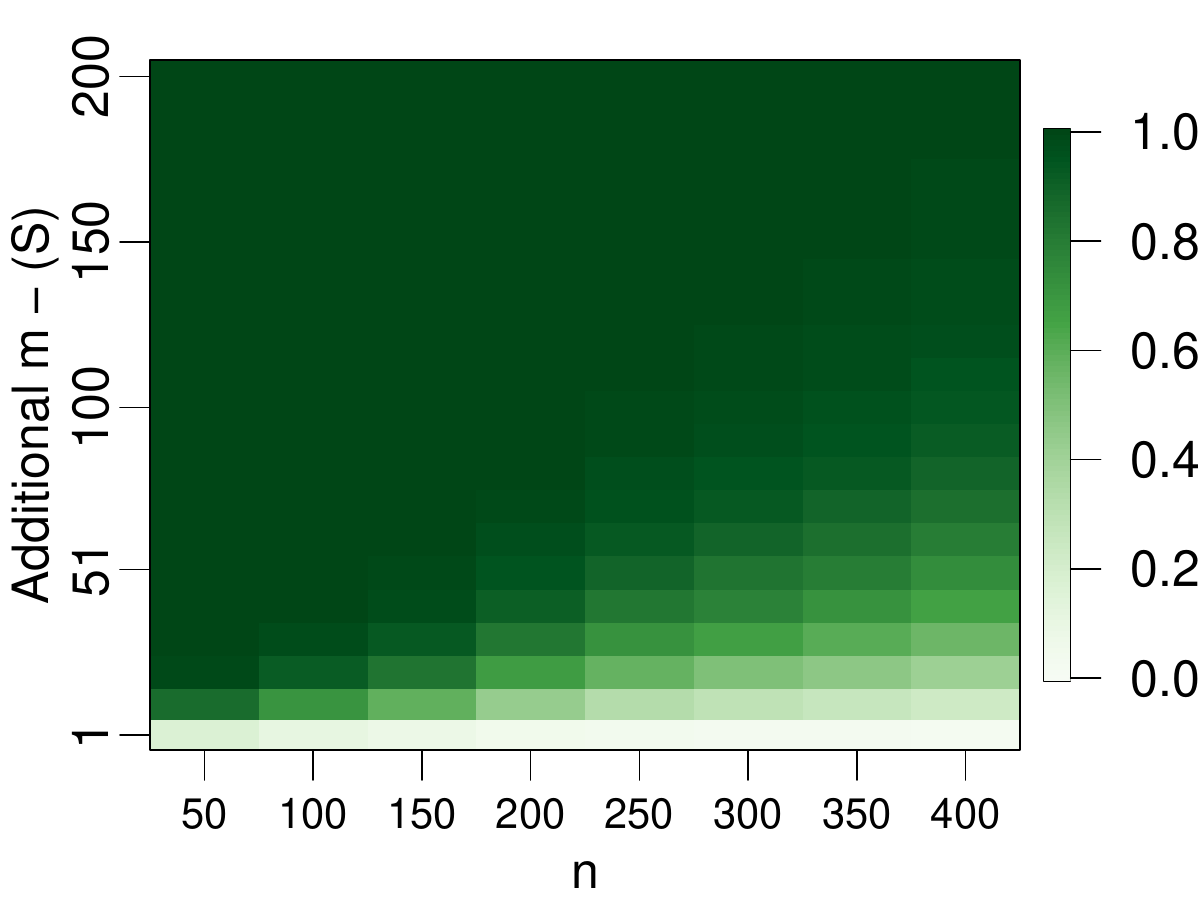}
        \hfill
        \includegraphics[width=0.24\linewidth]{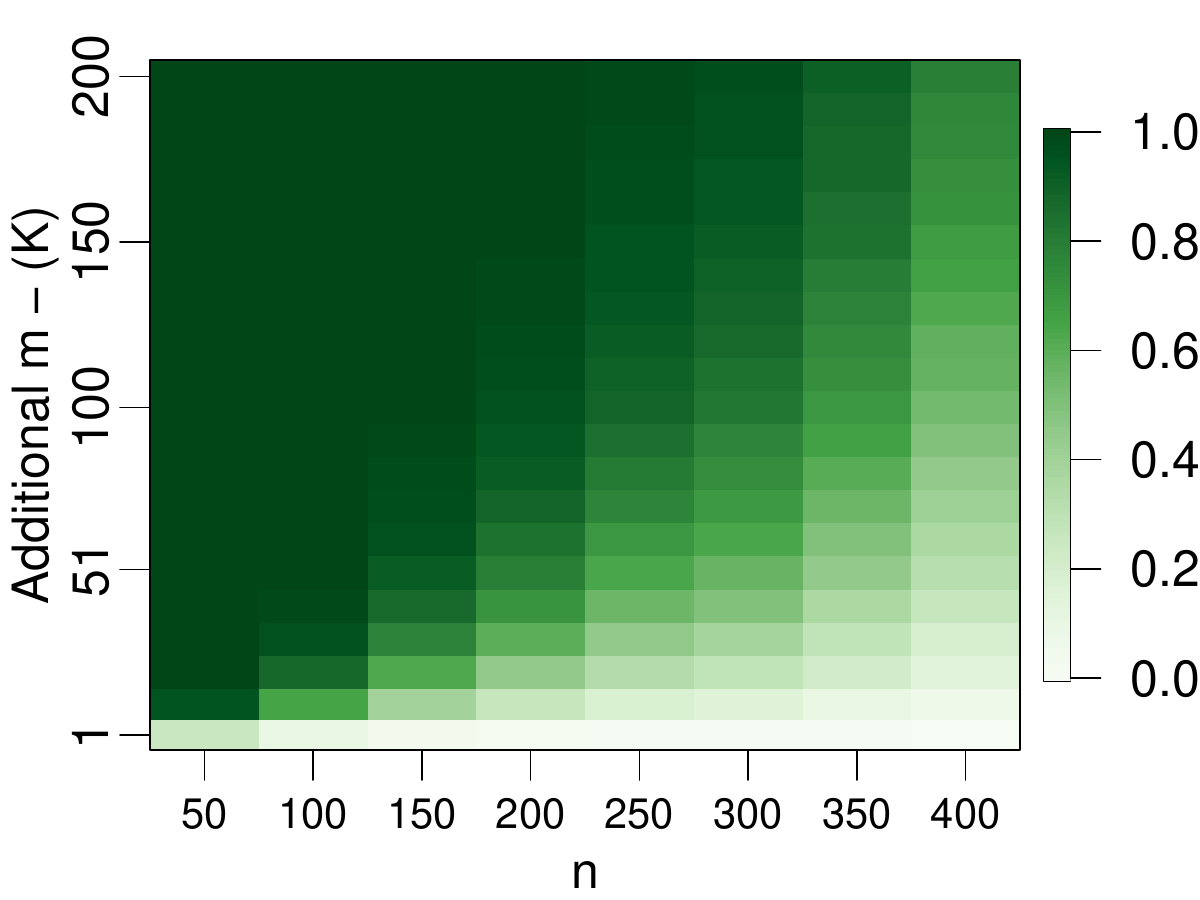}
        \hfill
        \includegraphics[width=0.24\linewidth]{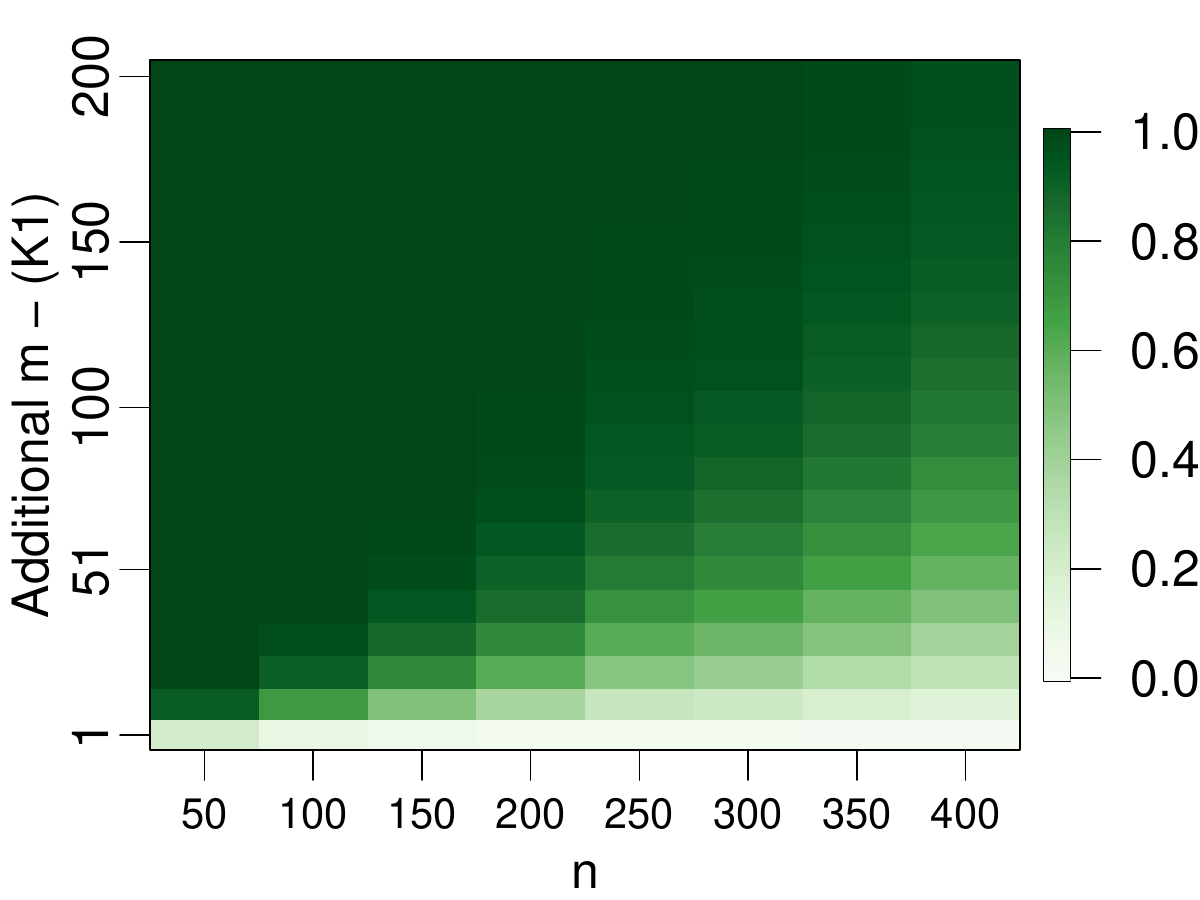}
        \hfill
        \includegraphics[width=0.24\linewidth]{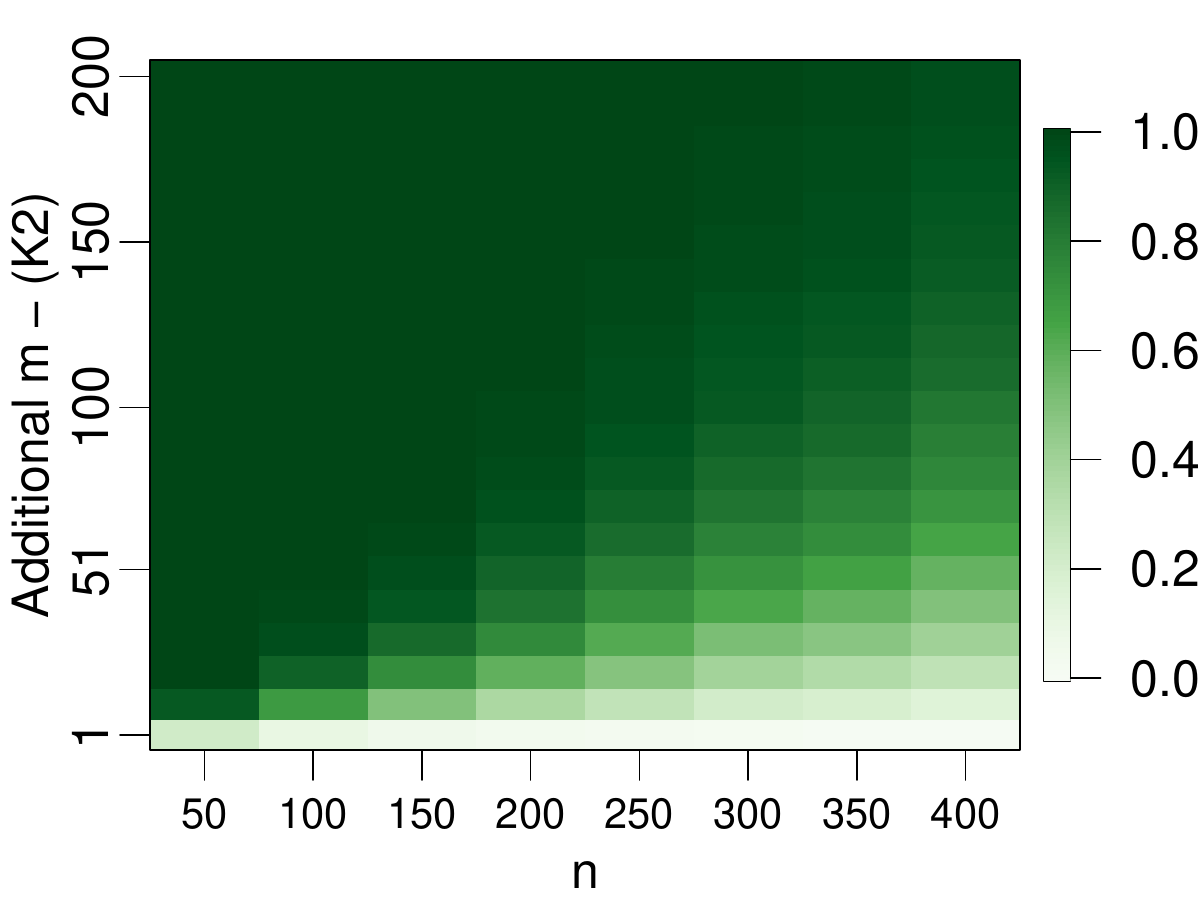}
        \caption{Experiment 3 - $D_3$ case: $m$-steps-ahead prediction probabilities.
        Top row shows the estimated probabilities via \textit{Bayes I} estimator while the bottom row represents the oracle estimators. }
        \label{fig:SS_Pr_m_step_Dir}
    \end{figure}
    \begin{figure}[ht!]
        \centering
        \includegraphics[width=0.24\linewidth]{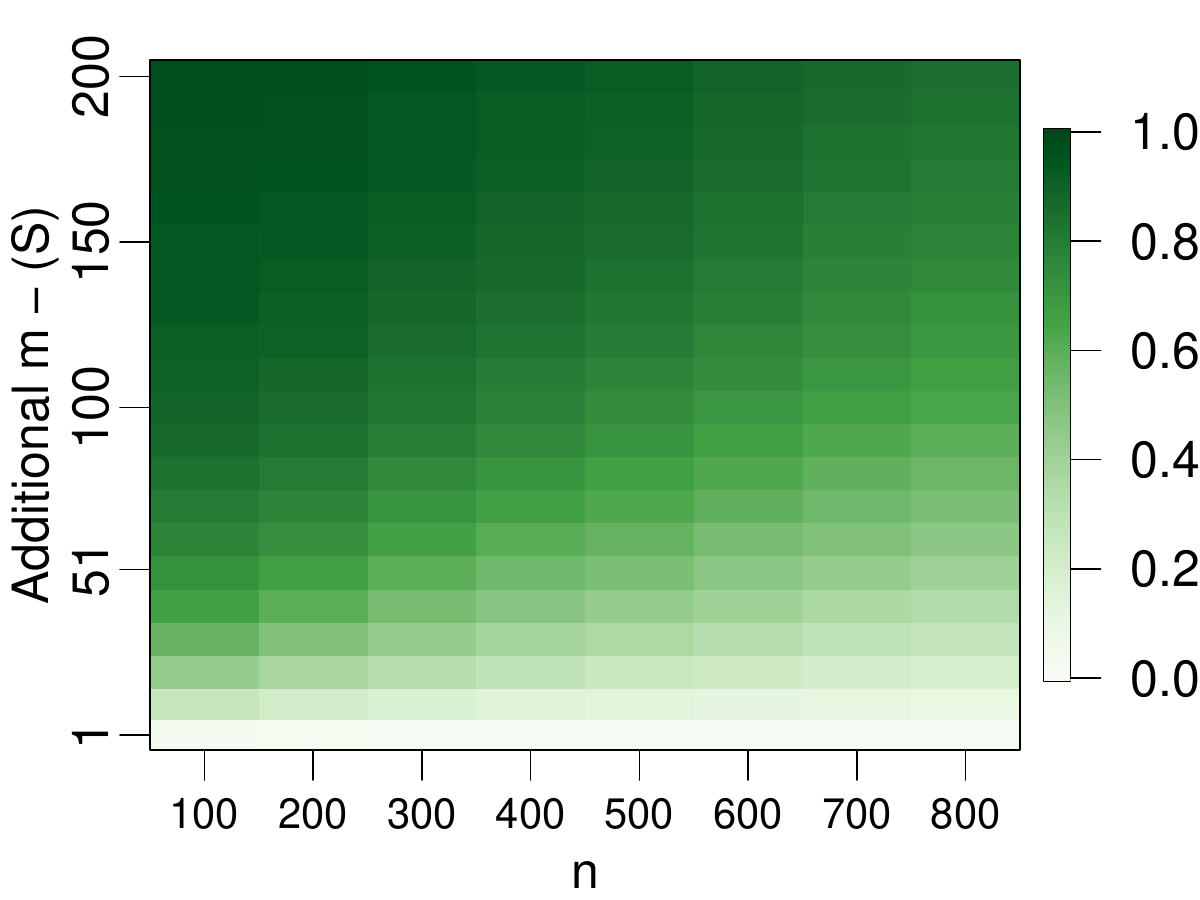}
        \hfill
        \includegraphics[width=0.24\linewidth]{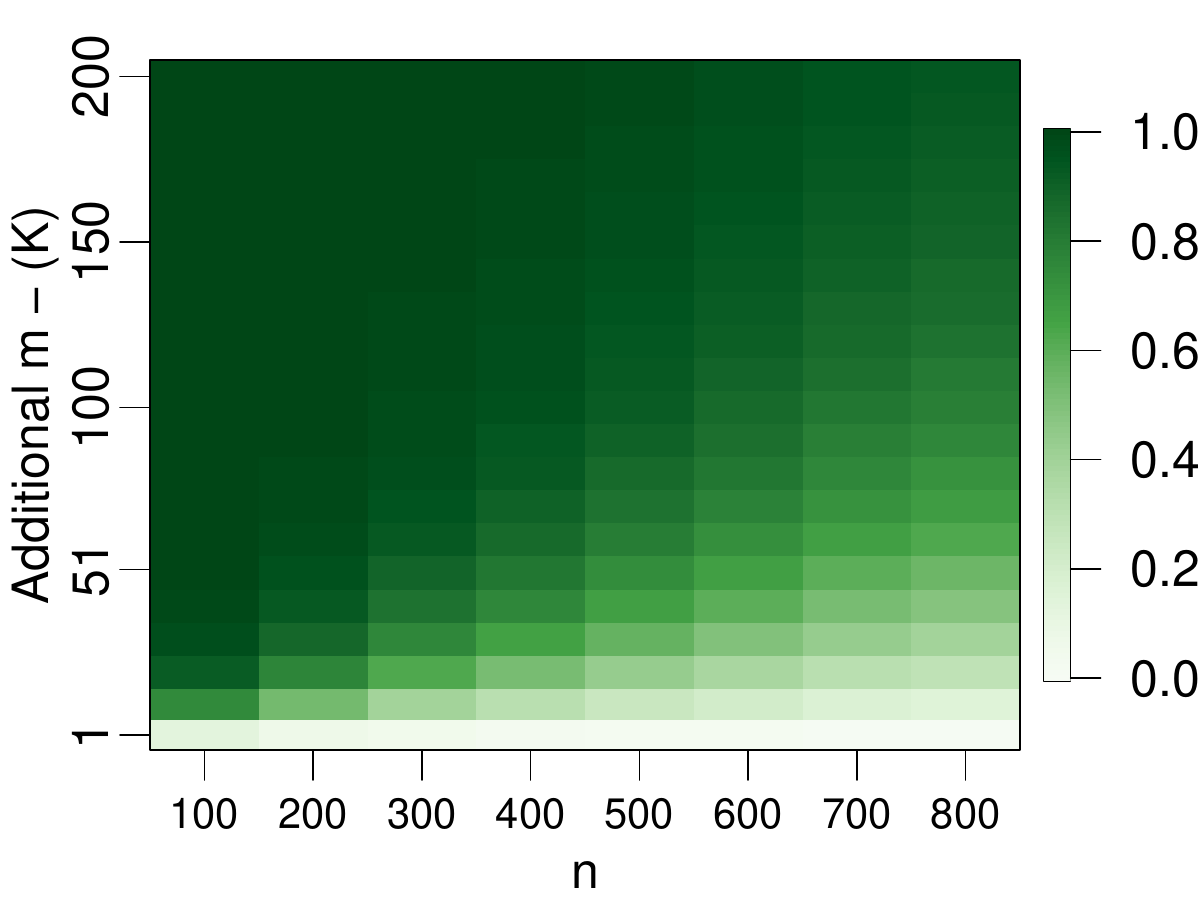}
        \hfill
        \includegraphics[width=0.24\linewidth]{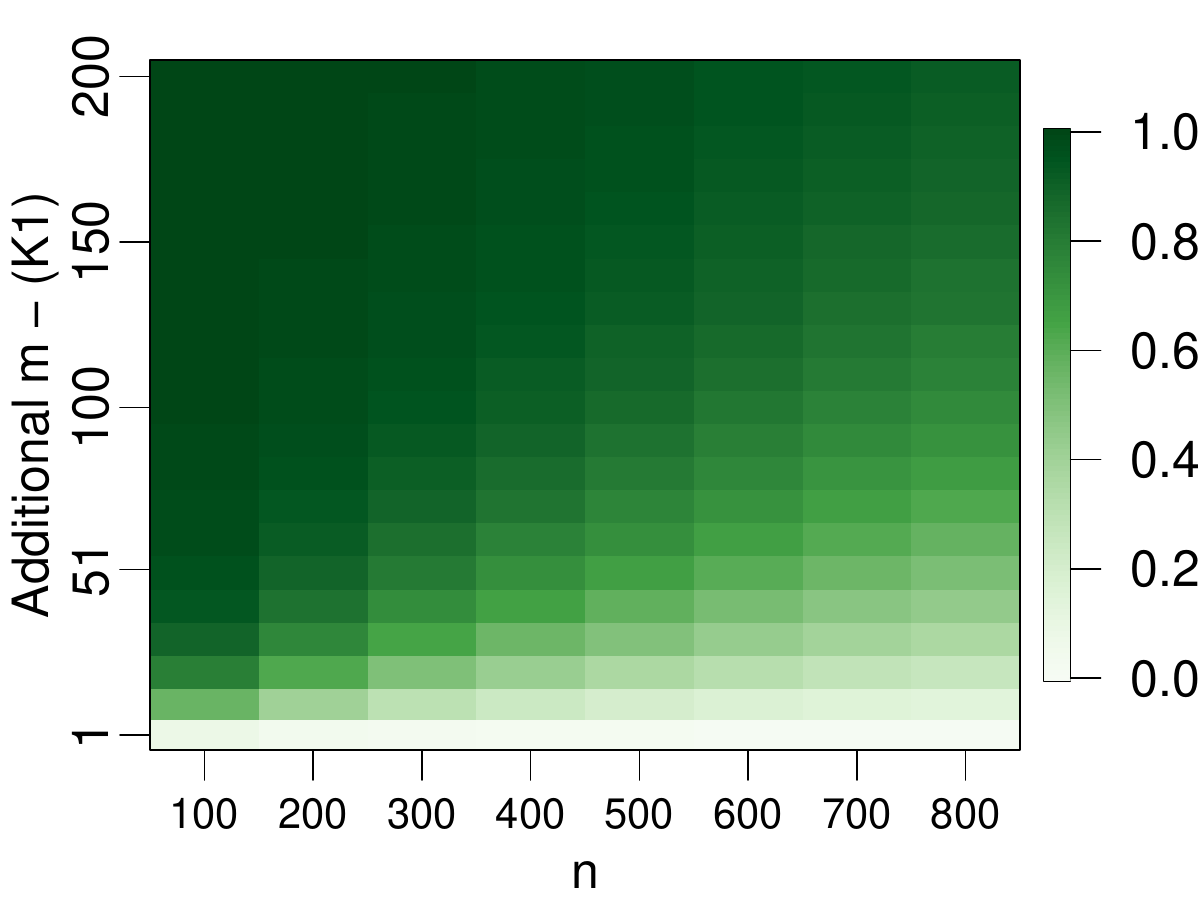}
        \hfill
        \includegraphics[width=0.24\linewidth]{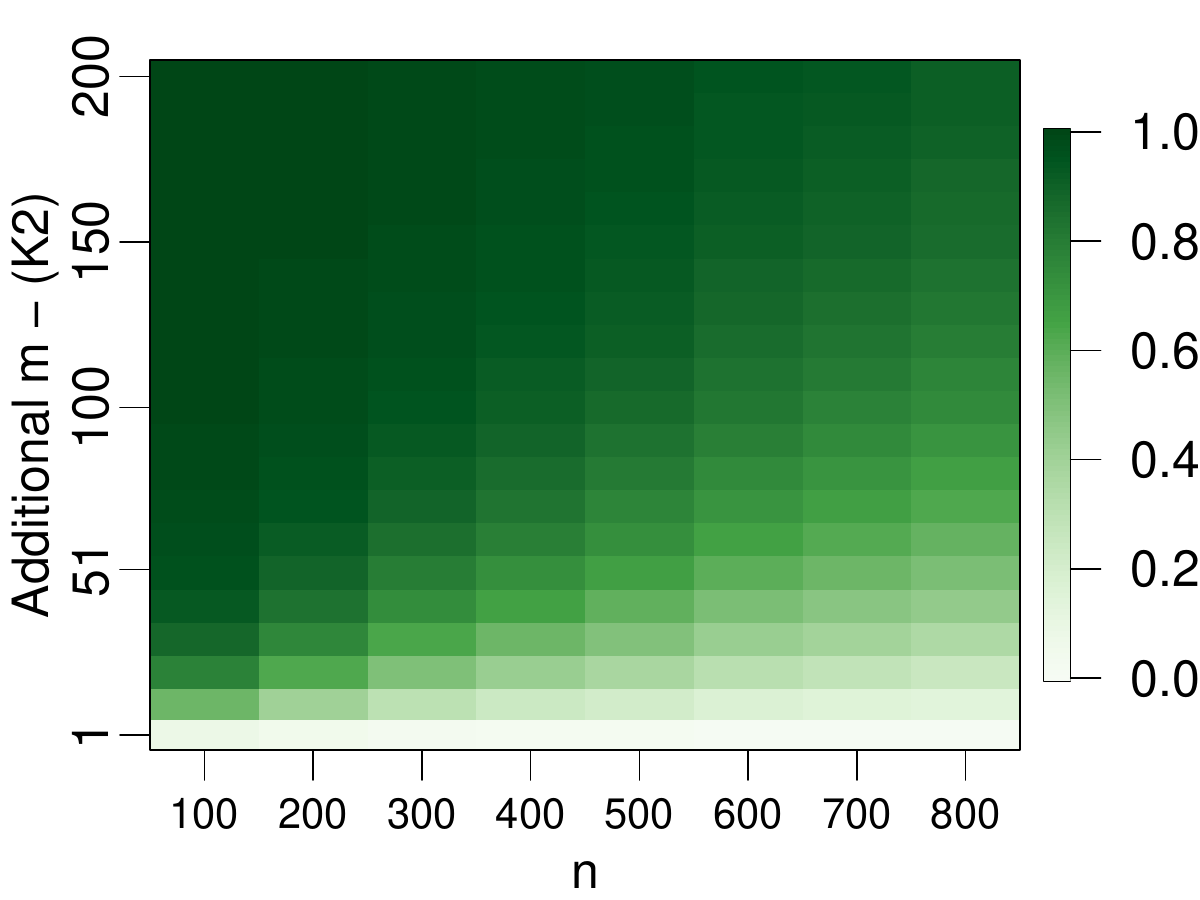}
        \\
        \includegraphics[width=0.24\linewidth]{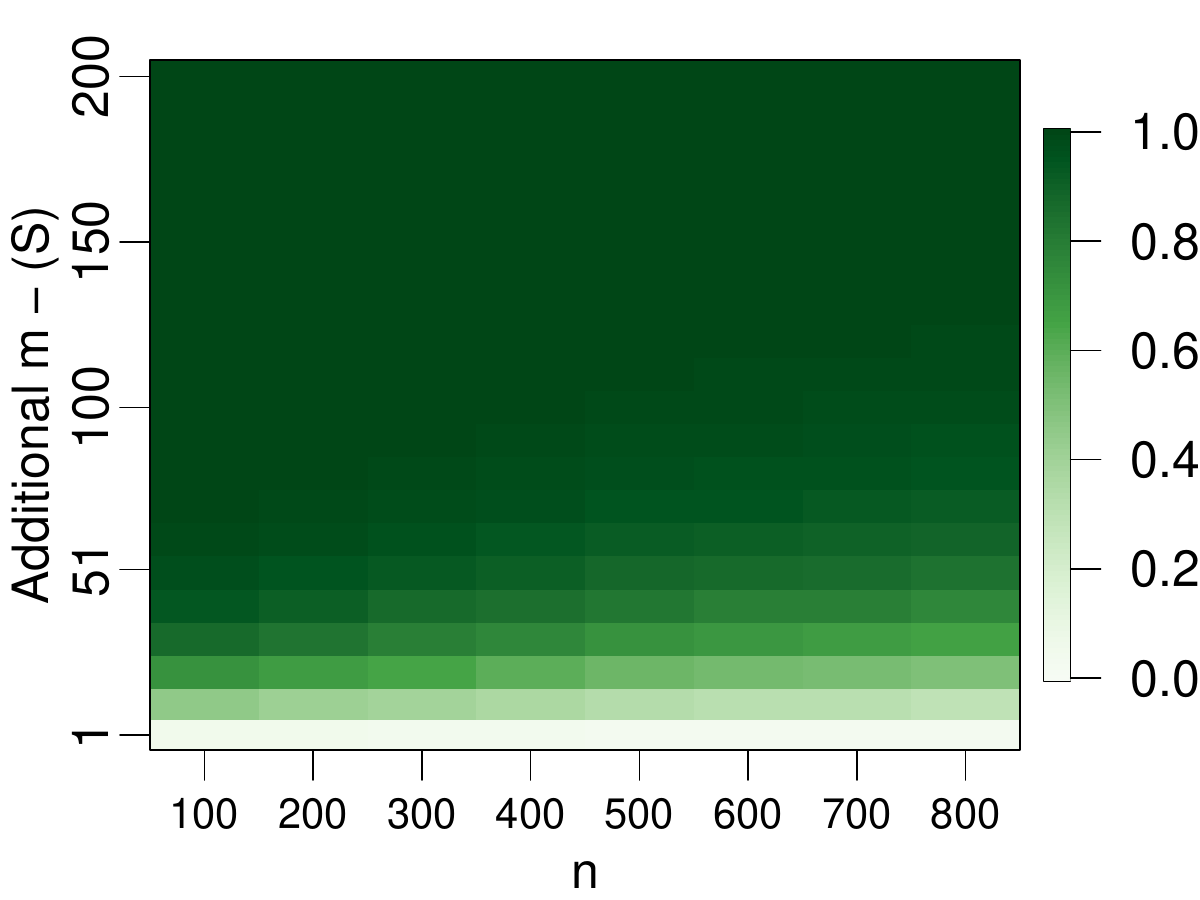}
        \hfill
        \includegraphics[width=0.24\linewidth]{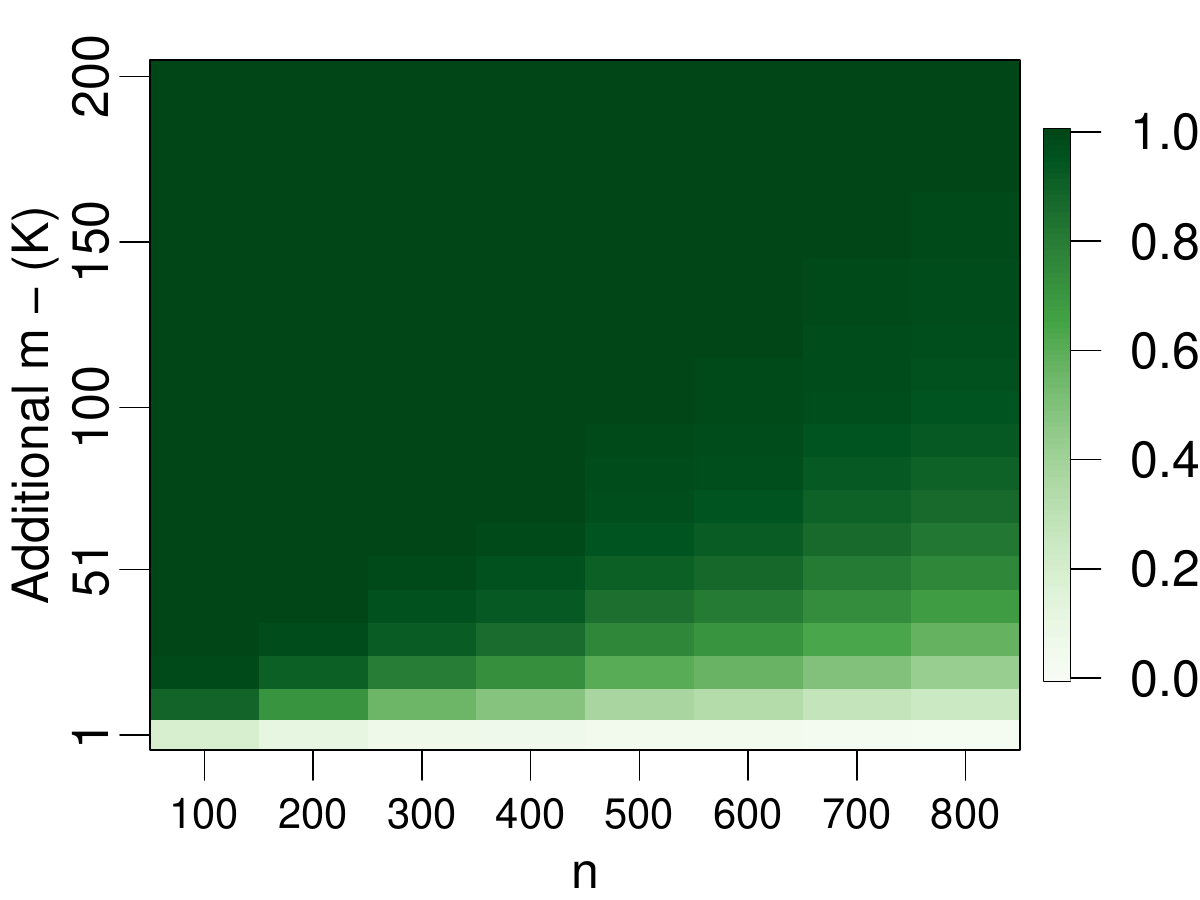}
        \hfill
        \includegraphics[width=0.24\linewidth]{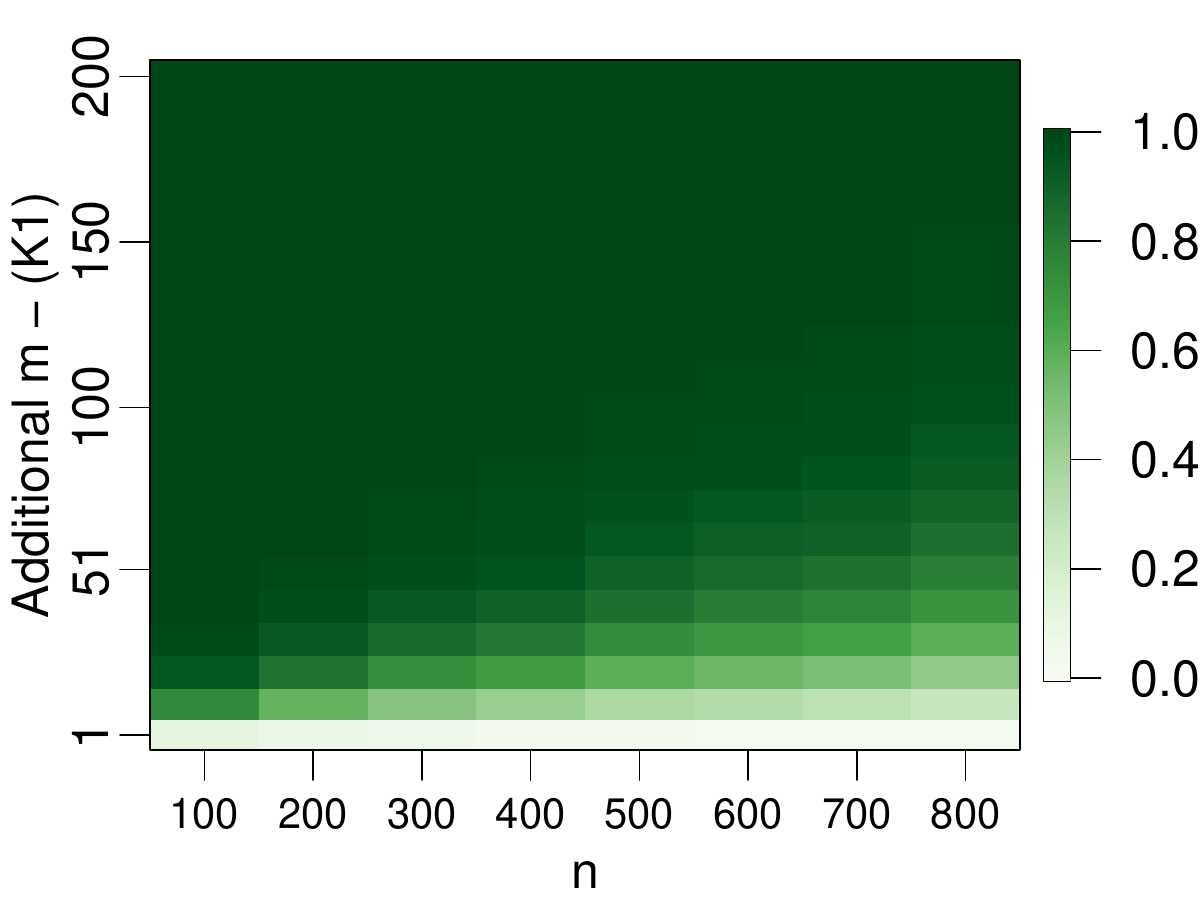}
        \hfill
        \includegraphics[width=0.24\linewidth]{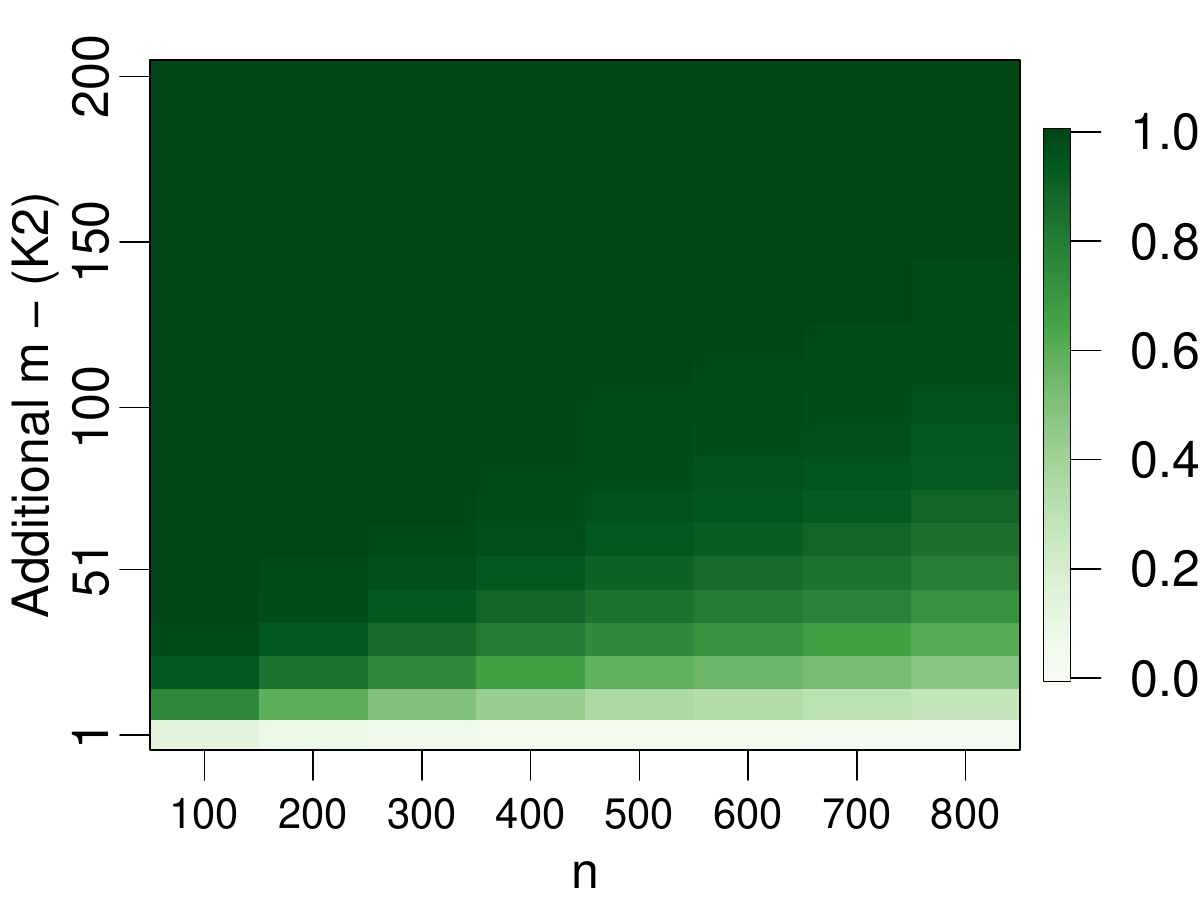}
        \caption{Experiment 3 - $Z_3$ case: $m$-steps-ahead prediction probabilities.
        Top row shows the estimated probabilities via \textit{Bayes I} estimator while the bottom row represents the oracle estimators. }
        \label{fig:SS_Pr_m_step_Zipfs}
    \end{figure}

    \subsection{Experiment 4}
    \label{subsection:SS_exp4}
    This experiment is intended to highlight analogies and differences between our estimator in Equation \eqref{eqn:shared_estimator}, which predicts the number of additional shared species in a future, unobserved test set, and the state-of-the-art estimator for shared species richness introduced by \cite{Chao2000}, as implemented in the \texttt{R} package \texttt{SpadeR} \citep{SpadeR}. This estimator, hereafter denoted by \textit{Chao2000}, targets the total number of shared species across two groups.
    Consequently, the goal of this experiment is not to compare the two methods in terms of performance, but rather to clarify their different modelling assumptions and the interpretation of their outputs. Although subsequent works have extended the \textit{Chao2000} estimator (see, among others, \cite{ChaoLB2009,Chuang2015,chao2017}), we focus on \cite{Chao2000} because of its foundational role in the literature as the first proposal of this kind. Finally, since the purpose is conceptual comparison rather than efficiency benchmarking, and for graphical simplicity, we compare \textit{Chao2000} only with the \textit{Bayes II} estimator; the same conclusions also apply to \textit{Bayes I}.
    
    The experiment proceeds as follows. We generate data from settings $(G_2,G_5,G_6)$, as defined in Section \ref{section:simulation_summary}, with $n_1=n_2=400$. We then consider a grid of training set percentages ranging from $0.1$ to $0.9$. For each percentage, we construct the training set as the corresponding fraction of the full dataset, and use the remainder as a test set. We compute the reference quantity $S_{\text{true}}$, defined as the number of shared species observed in the full dataset (training plus test), which serves as the benchmark for our comparisons.
    The training set is used to estimate the model parameters as described in Section \ref{subsection:param_estimation} and to compute the observed number of shared species, $S_{\text{obs}}$. Given the estimated parameters, we predict the expected number of new shared species in the test set, $S_{\text{est}}$, as explained in Section \ref{section:posterior}. We then compare $S_{\text{true}}$ with the predicted total $S_{\text{obs}} + S_{\text{est}}$. For each percentage value, we replicate the procedure over $100$ independently generated training--test splits.
    
    The results of the experiment, shown in Figure \ref{fig:PredTestSet_exp2}, reveal substantial differences between our model and \textit{Chao2000}. Our model is based on the assumption that the two groups are generated from a process that, under infinite sampling, would produce the same set of species in both groups, albeit with different proportions. In contrast, \cite{Chao2000} adopt a different modelling perspective. For each group $j$, they allow some of the probabilities $w_{j,m}$ in Equation \eqref{eqn:Pj_def} to be exactly zero for some $m$. This implies that, even with infinite sampling, some species present in one group may never appear in the other group. Equivalently, this corresponds to assuming that the total number of species can differ across areas, which ultimately permits inference on the total number of shared species.
    Figure \ref{fig:PredTestSet_exp2} shows that our \textit{Bayes II} estimator converges to the true value $S_{\text{true}}$ as the training set percentage increases, in all the reported cases. On the other hand, \textit{Chao2000} tends to stabilize at slightly higher values than $S_{\text{true}}$, showing a clear discrepancy between the dashed line in Figure \ref{fig:PredTestSet_exp2} and the final prediction of \textit{Chao2000}, i.e., when the training set comprises nearly the entire dataset. This gap corresponds to the estimate of shared species that are assumed to exist in the population but are not observed in the available dataset.
    
    As the results show, \textit{Chao2000} is inferential in nature: it aims to estimate an unknown population-level quantity (shared species richness) and relies on asymptotic assumptions, such as having sufficiently large sample sizes. In contrast, our approach is predictive: it is applicable for any observed sample sizes $n_1$ and $n_2$ and for any future sample sizes $m_1$ and $m_2$, although we expect some bias for large values of $m_1$ and $m_2$ when the modelling assumptions are not matched in practice. In other words, our model answers the question, ``How many species that we have not yet observed will we find in the future?'', whereas \textit{Chao2000} answers the question, ``How many species that we have not yet observed actually exist?'', even though, under its assumptions, some of these species may never be encountered.

	\begin{figure}[ht!]
		\centering 
		\includegraphics[width=0.32\linewidth]{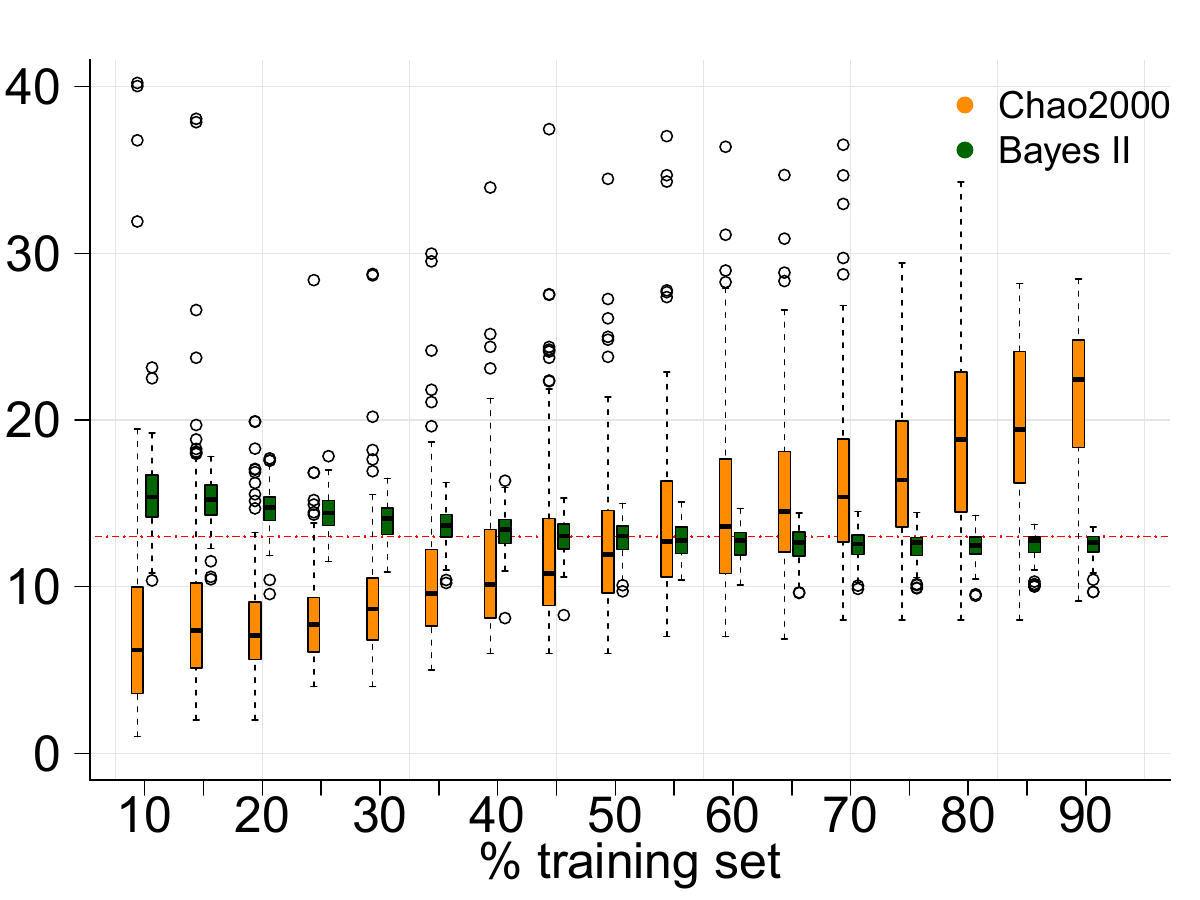}
		\hfill
		\includegraphics[width=0.32\linewidth]{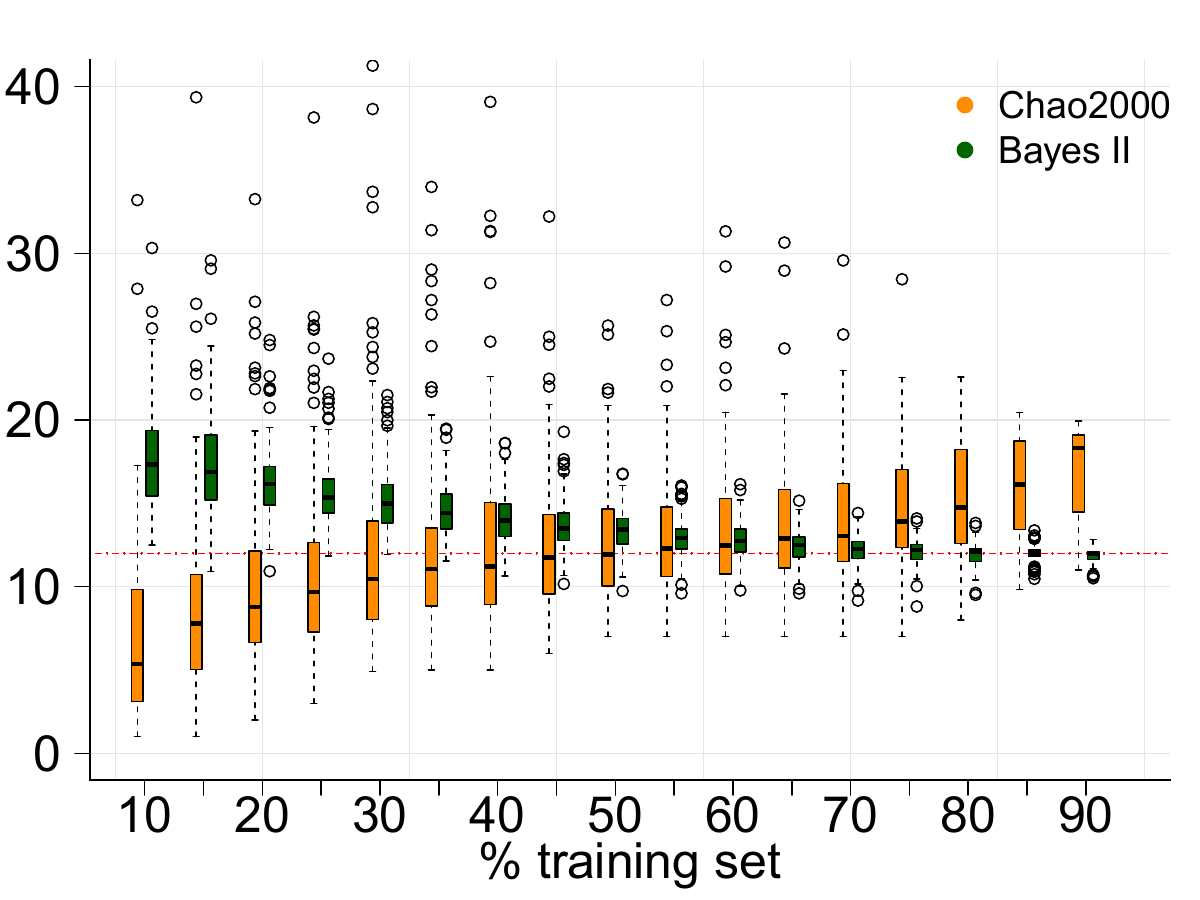}
		\hfill
		\includegraphics[width=0.32\linewidth]{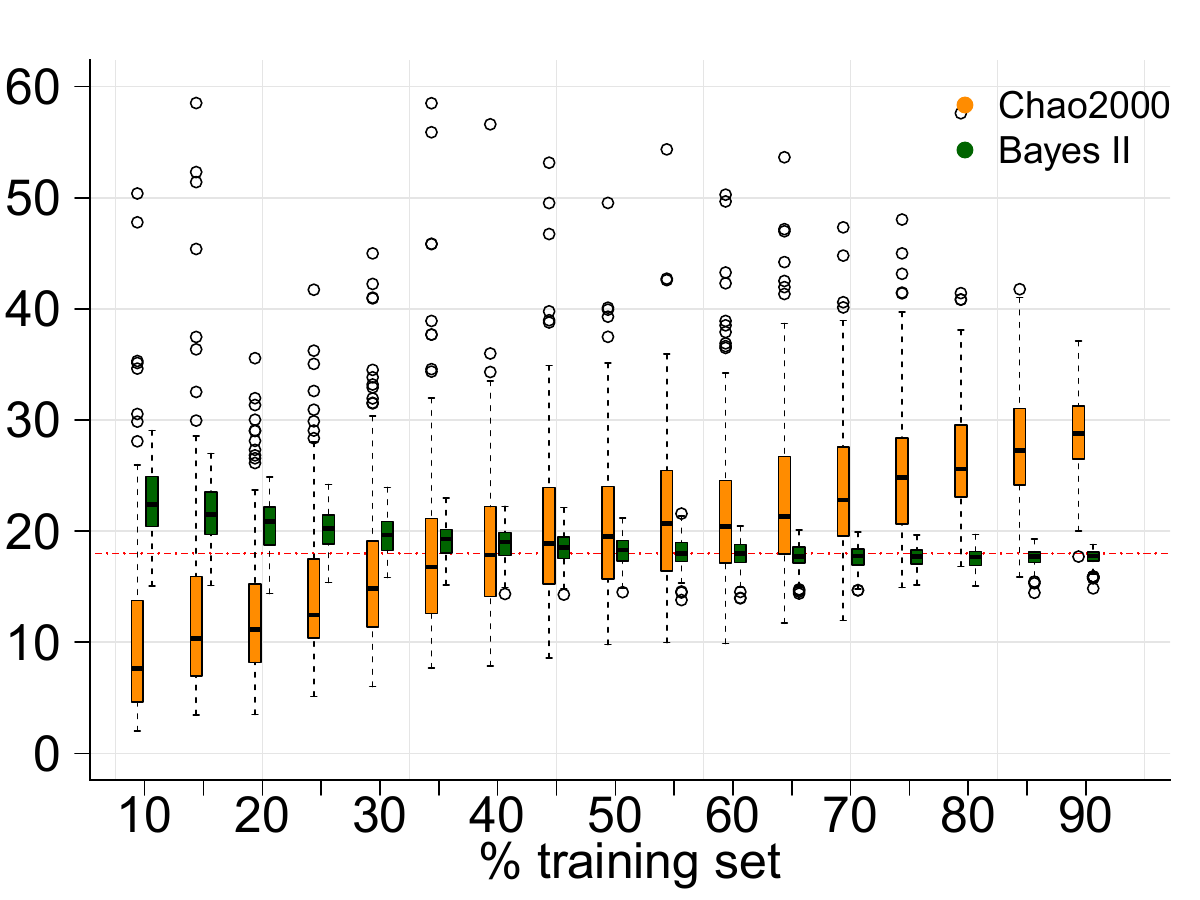}
		\caption{Experiment 4: predicted number of shared species for different training set percentages. The red line represents $S_{\text{true}}$.}
		\label{fig:PredTestSet_exp2}
	\end{figure}

	\section{Additional details on the analysis of ants data}
	\label{app:application}
	Figure \ref{fig:Ants_proportions_empirical} displays the observed species proportions in the ants dataset, sorted in decreasing order while Figure \ref{fig:Ants_AccCrvs} shows the accumulation curves of the data.
	\begin{figure}
		\centering
		\includegraphics[width=1\linewidth]{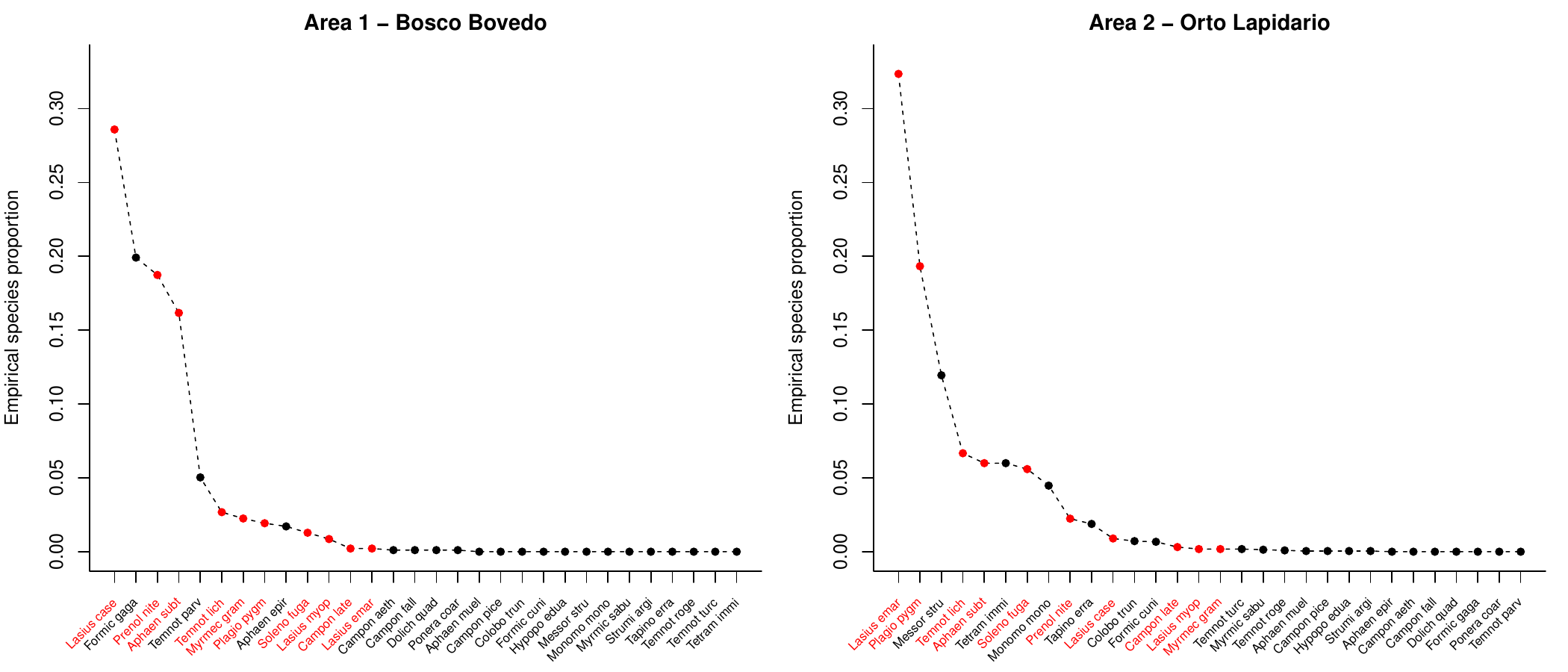}
		\caption{Graphical representations of observed species proportions in the dataset. Species have been sorted, within each group, in decreasing order. Red points and names represent shared species. }
		\label{fig:Ants_proportions_empirical}
	\end{figure}

    \begin{figure}[ht!]
        \centering
        \includegraphics[width=0.48\linewidth]{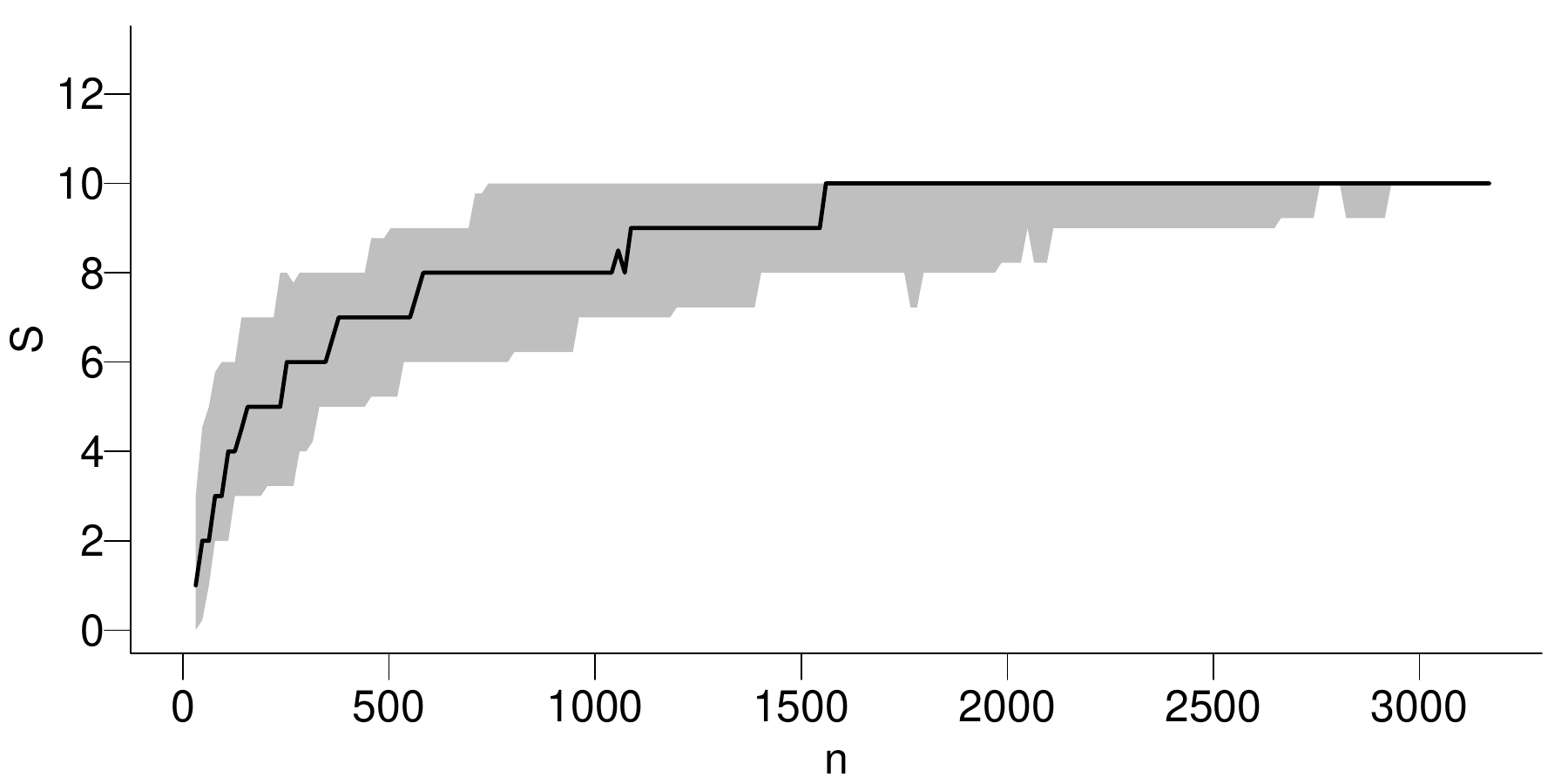}
        \hfill
        \includegraphics[width=0.48\linewidth]{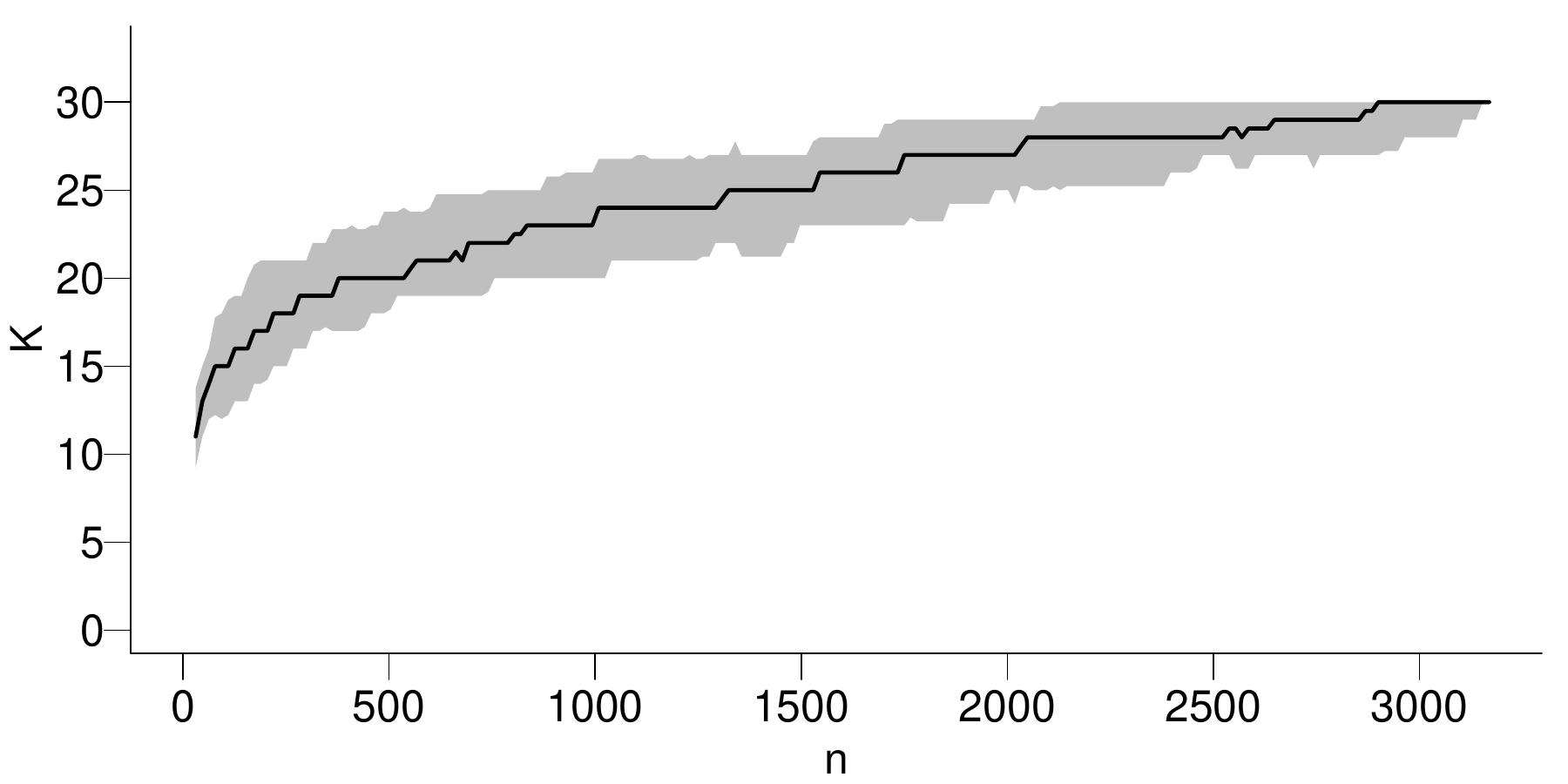}\\
        \includegraphics[width=0.48\linewidth]{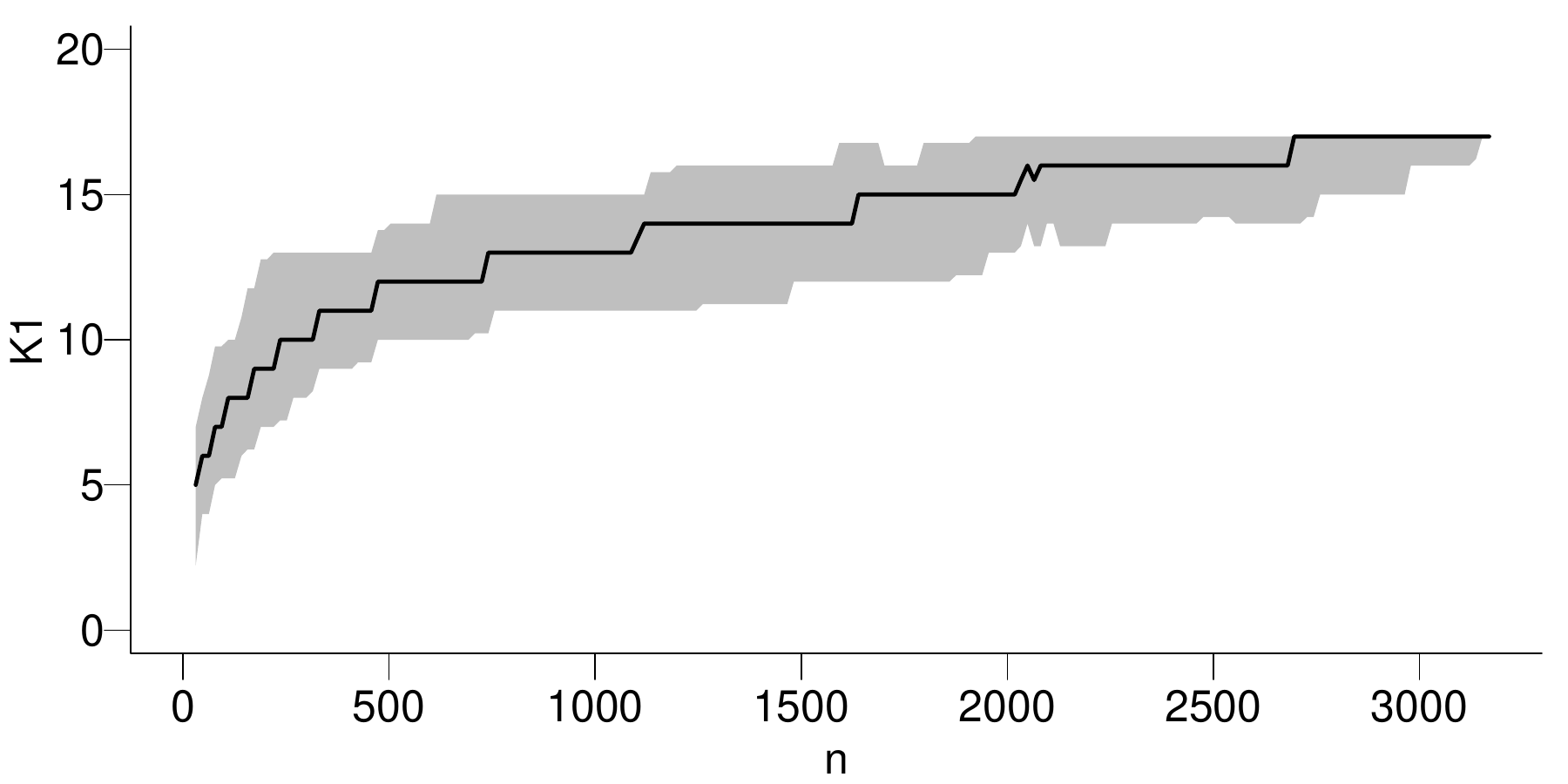}
        \hfill
        \includegraphics[width=0.48\linewidth]{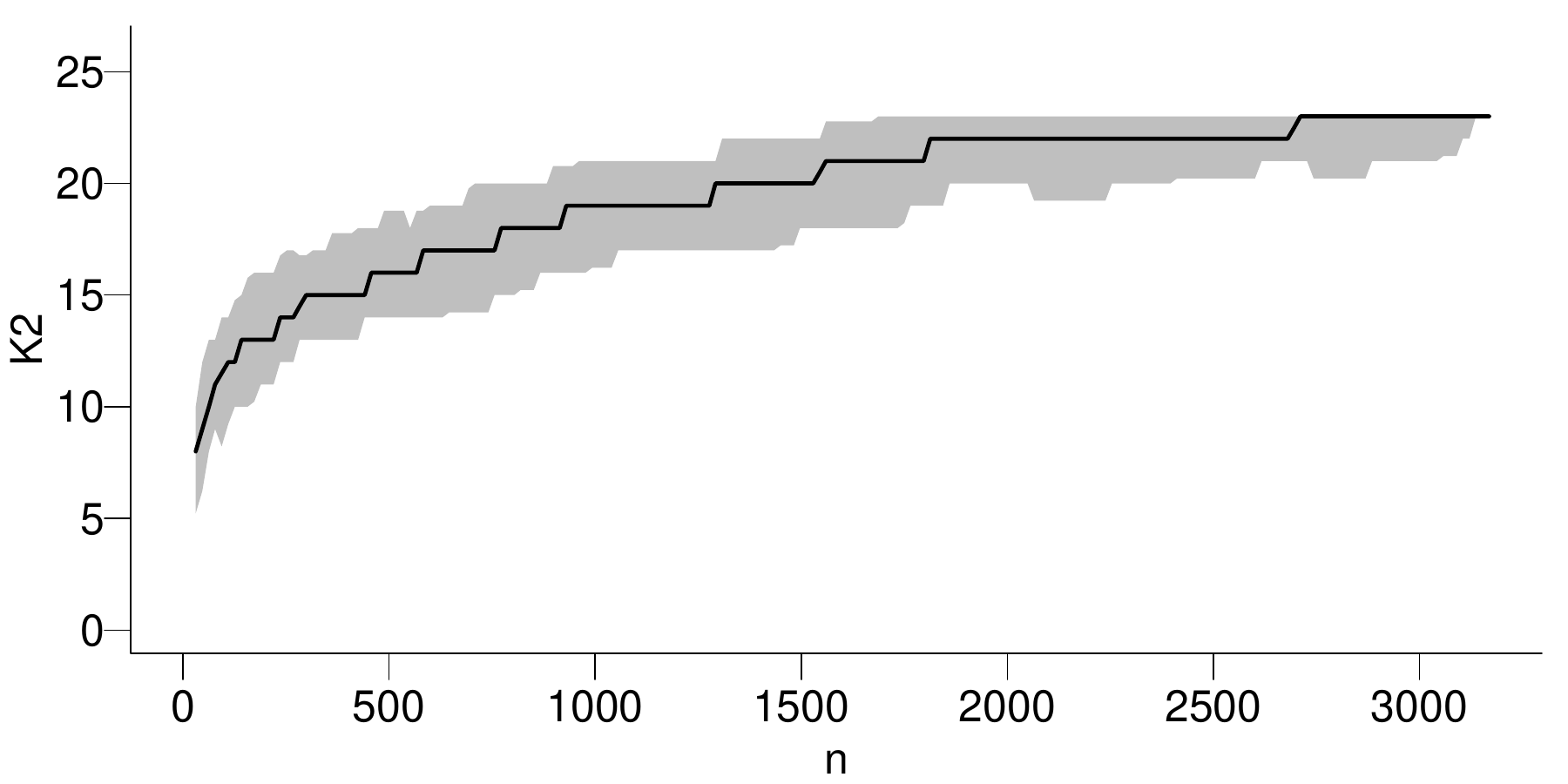}
        \caption{Accumulation curves Ants. }
        \label{fig:Ants_AccCrvs}
    \end{figure}

    \begin{figure}[ht!]
    	\centering
    	\includegraphics[width=0.48\linewidth]{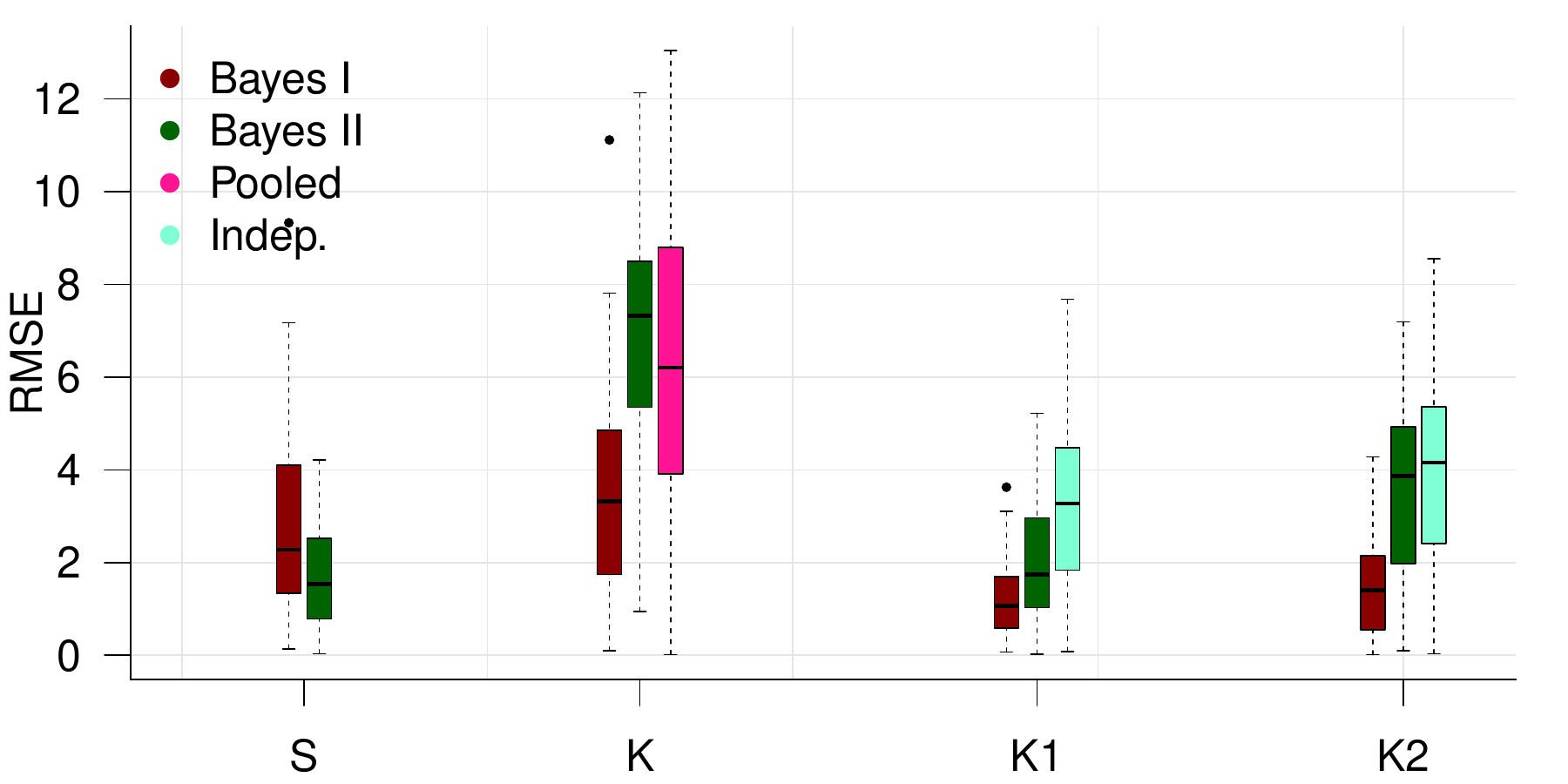}
    	\hfill
    	\includegraphics[width=0.48\linewidth]{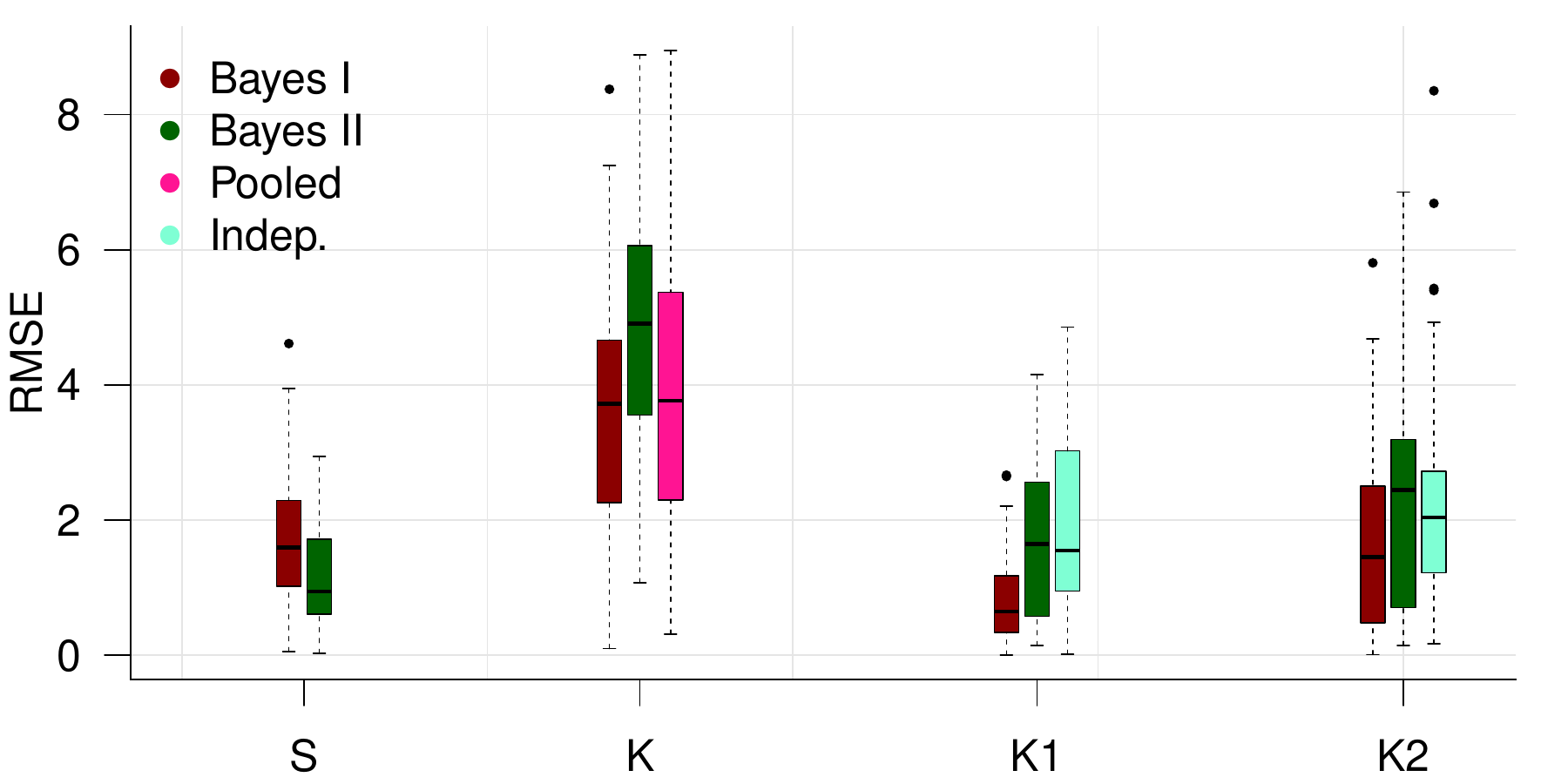}\\
    	\includegraphics[width=0.48\linewidth]{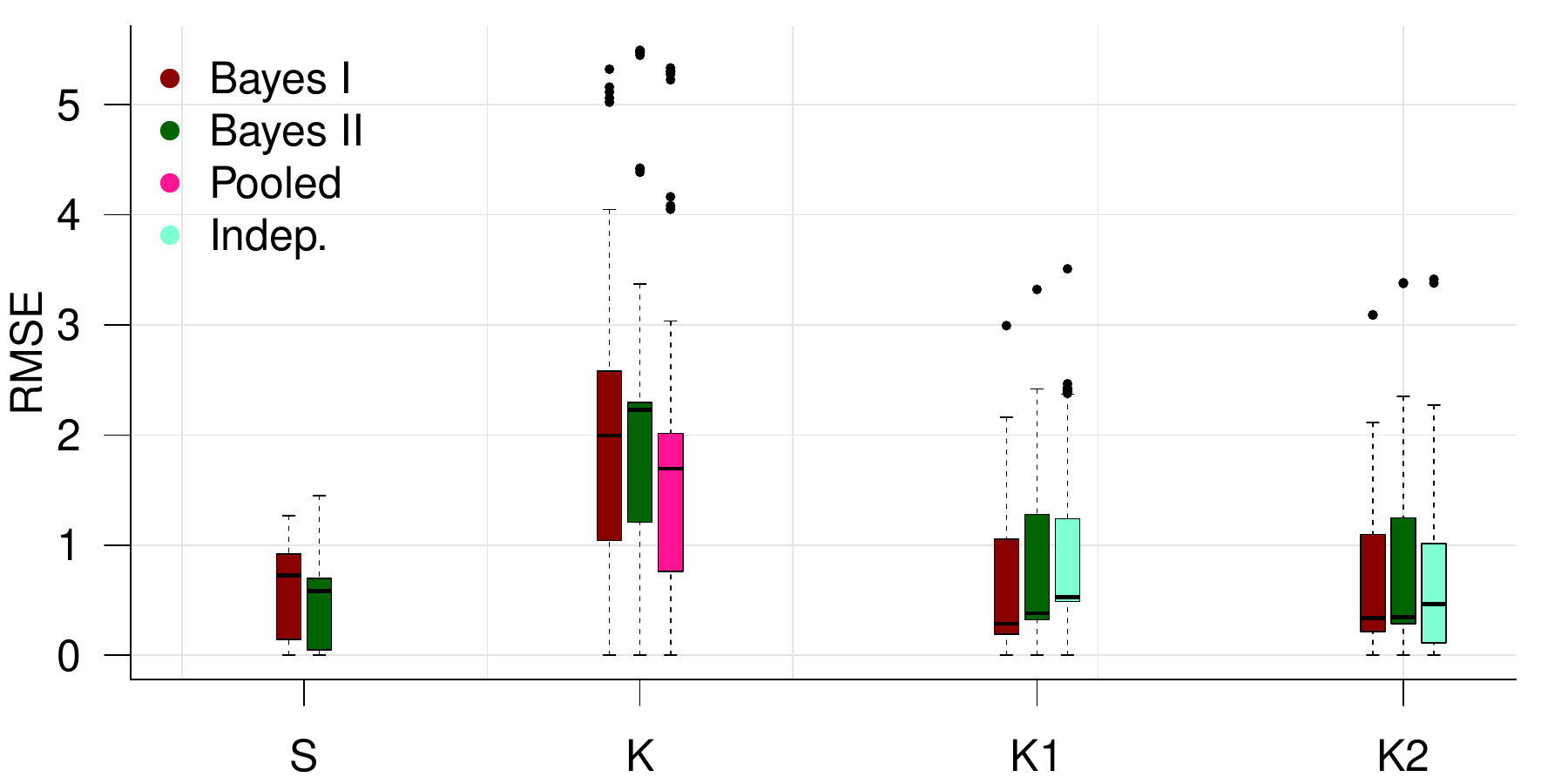}
    	\hfill
    	\includegraphics[width=0.48\linewidth]{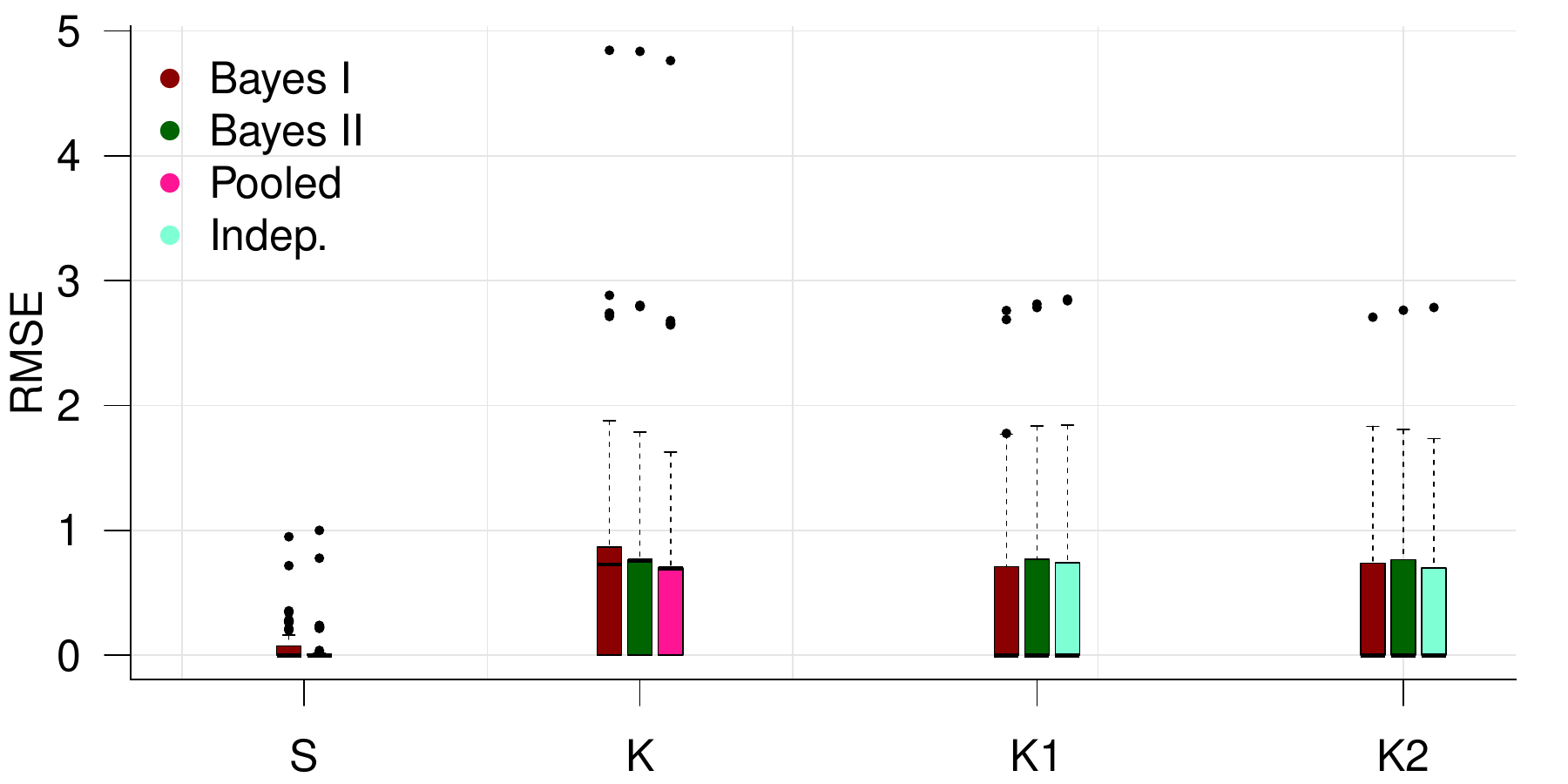}\\
        \caption{RMSE of out-of-sample predictions for new shared species, new global distinct species, and new local distinct species under different training-test splits. Training set proportions are $10\%$ (top-left), $30\%$ (top-right), $70\%$ (bottom-left), and $90\%$ (bottom-right) of the full dataset.}
    	\label{fig:Ants_application_Pred_rest}
    \end{figure}

    \begin{figure}[ht!]
    	\centering
    	\includegraphics[width=0.48\linewidth]{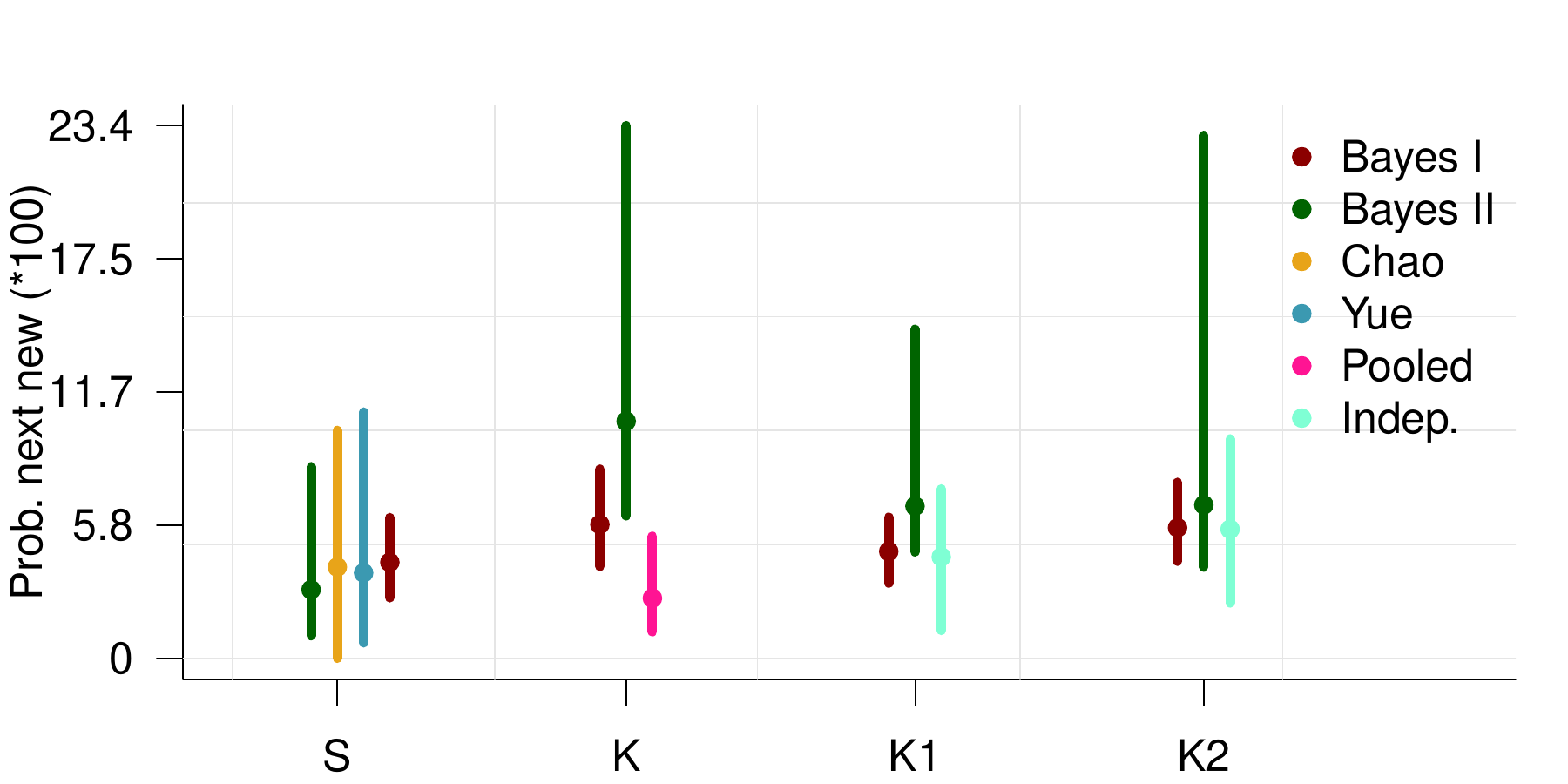}
    	\hfill
    	\includegraphics[width=0.48\linewidth]{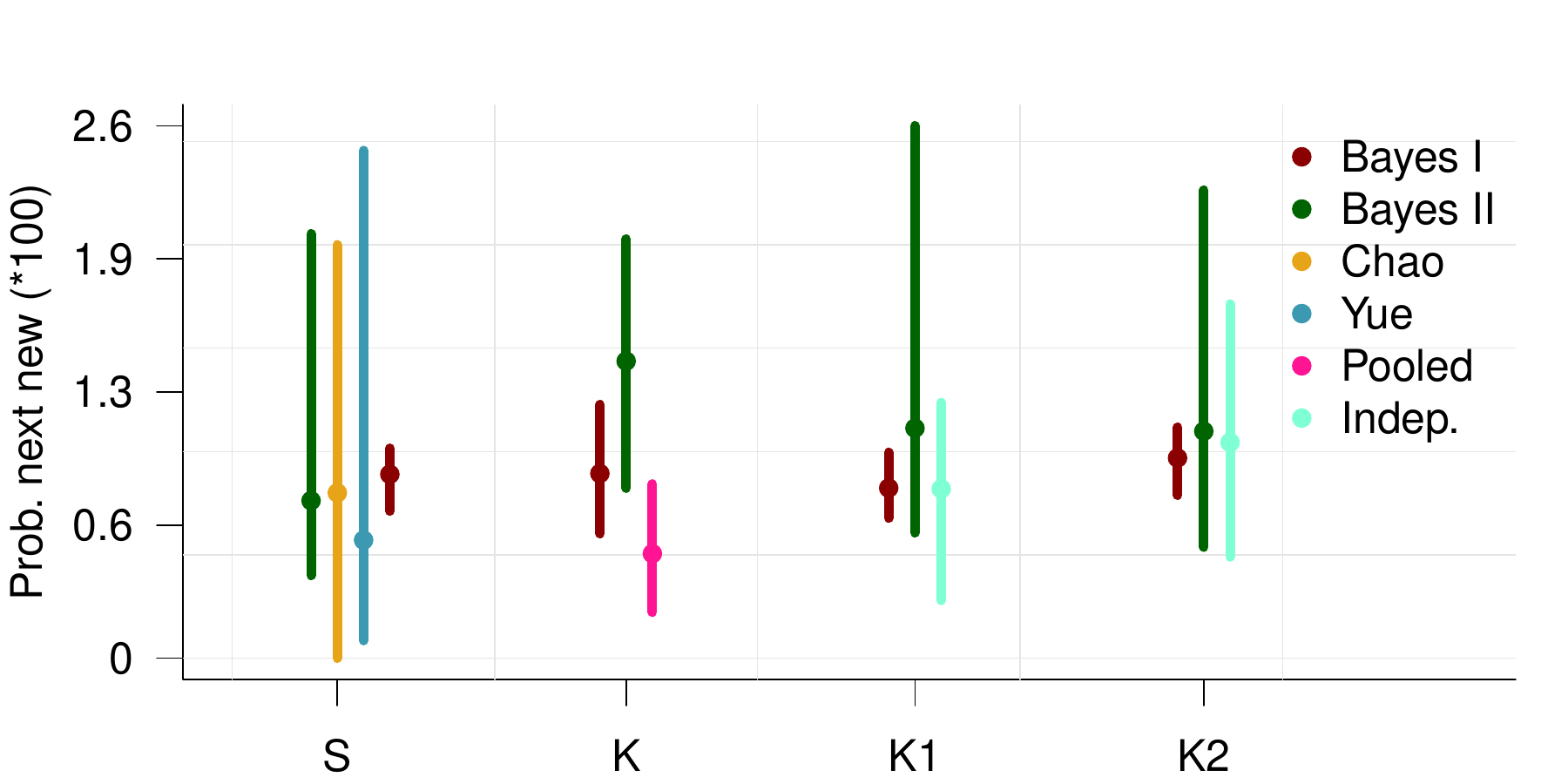}\\
    	\includegraphics[width=0.48\linewidth]{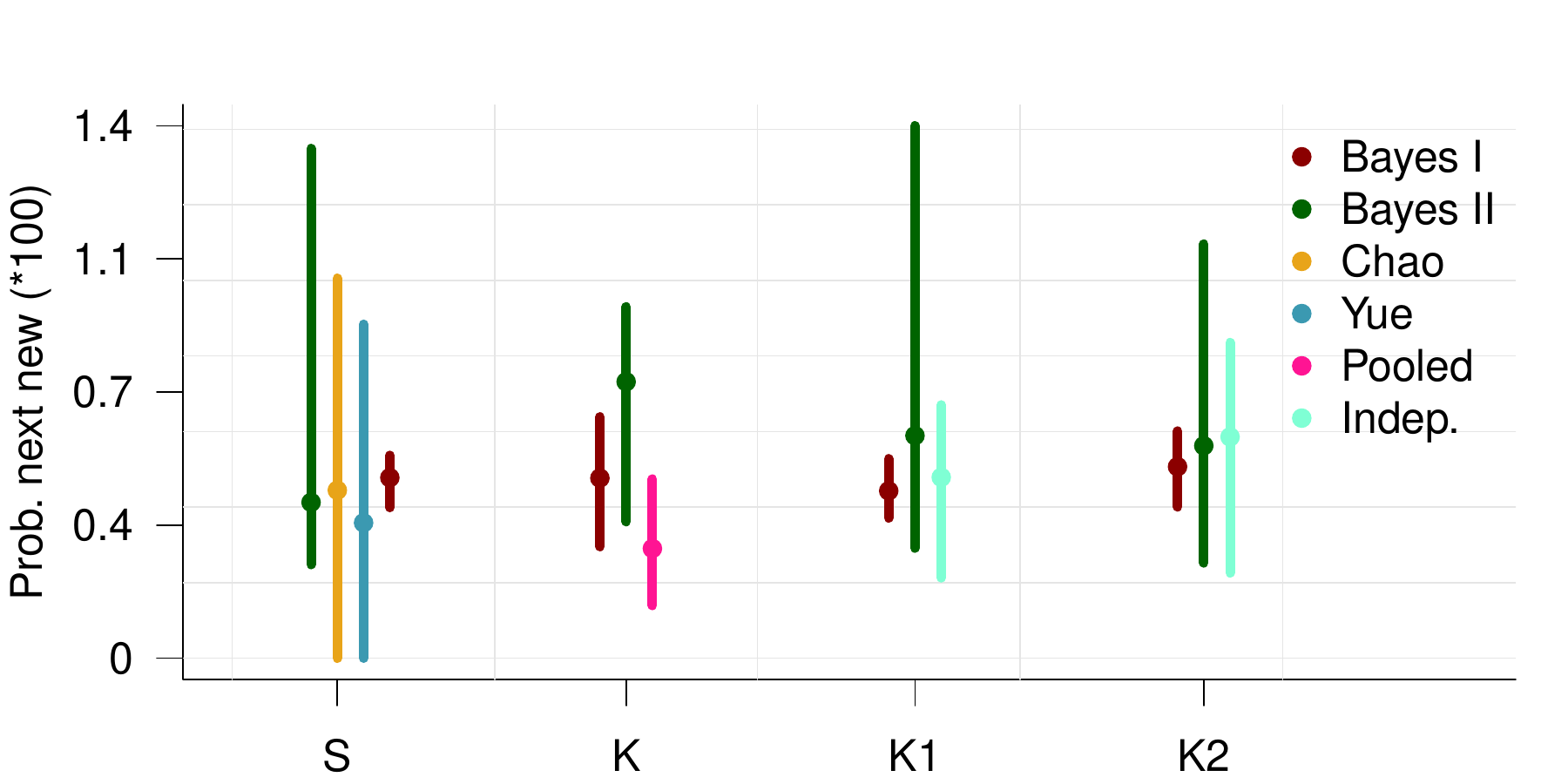}
    	\hfill
    	\includegraphics[width=0.48\linewidth]{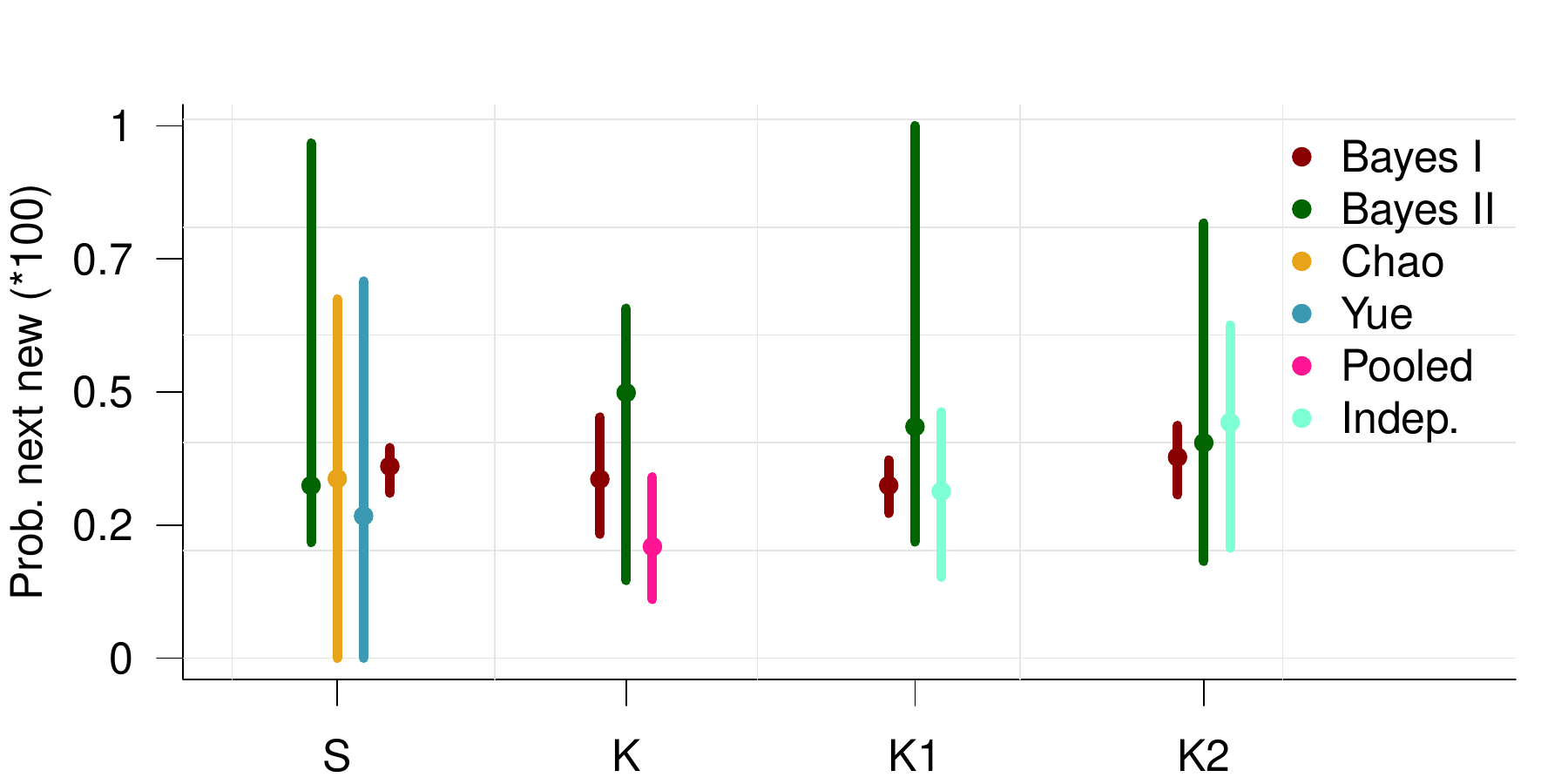}\\
        \caption{One-step-ahead discovery probability for new shared species, new global distinct species, and new local distinct species.
        Each panel represents a different sample size, $n=50$ (top-left), $n=250$ (top-right), $n=450$ (bottom-left) and $n=600$ (bottom-right).
        Probabilities on the rightmost plot have been multiplied by $100$ to improve readability.   }
    	\label{fig:Ants_application_Pr1step_rest}
    \end{figure}

\clearpage
\bibliographystyle{chicago}
\bibliography{references}

@article{LijoiNipotiPrunster2014,
  author  = {Lijoi, Antonio and Nipoti, Bernardo and Pr{\"u}nster, Igor},
  title   = {Bayesian inference with dependent normalized completely random measures},
  journal = {Bernoulli},
  year    = {2014},
  volume  = {20},
  pages   = {1260--1291}
}

@article {camerlenghi2019distribution,
    AUTHOR = {Camerlenghi, Federico and Lijoi, Antonio and Orbanz, Peter and
              Pr\"{u}nster, Igor},
     TITLE = {Distribution theory for hierarchical processes},
  JOURNAL = {The Annals of Statistics},
    VOLUME = {47},
      YEAR = {2019},
    NUMBER = {1},
     PAGES = {67--92}
}

@article{teh2006hierarchical,
    AUTHOR = {Teh, Yee Whye and Jordan, Michael I. and Beal, Matthew J. and
              Blei, David M.},
     TITLE = {Hierarchical {D}irichlet processes},
  JOURNAL = {Journal of the American Statistical Association},
    VOLUME = {101},
      YEAR = {2006},
    NUMBER = {476},
     PAGES = {1566--1581}
 }

@article {argiento2022annals,
    AUTHOR = {Argiento, Raffaele and De Iorio, Maria},
     TITLE = {Is infinity that far? {A} {B}ayesian nonparametric perspective
              of finite mixture models},
  JOURNAL = {The Annals of Statistics},
    VOLUME = {50},
      YEAR = {2022},
    NUMBER = {5},
     PAGES = {2641--2663}
}

@article{CAM2021,
author = {Francesco Denti and Federico Camerlenghi and Michele Guindani and Antonietta Mira},
title = {{A common atoms model for the Bayesian nonparametric analysis of nested data}},
journal = {Journal of the American Statistical  Assocciation},
volume = {118},
number = {541},
pages = {405--416 },
year  = {2023},
publisher = {Taylor & Francis}
}

@article {casarin2020,
    AUTHOR = {Bassetti, Federico and Casarin, Roberto and Rossini, Luca},
     TITLE = {Hierarchical species sampling models},
  JOURNAL = {Bayesian Analysis},
    VOLUME = {15},
      YEAR = {2020},
    NUMBER = {3},
     PAGES = {809--838}
}

@book{chara2002,
  title={Enumerative combinatorics},
  author={Charalambides, Charalambos A},
  year={2002},
  publisher={CRC Press}
}

@article{millerharrison,
author = {Jeffrey W. Miller and Matthew T. Harrison},
title = {Mixture models with a prior on the number of components},
journal = {Journal of the American Statistical Association},
volume = {113},
number = {521},
pages = {340-356},
year  = {2018},
publisher = {Taylor \& Francis},
}

@ARTICLE{deblasi2015,
  author={De Blasi, Pierpaolo and Favaro, Stefano and Lijoi, Antonio and Mena, Ramsés H. and Prünster, Igor and Ruggiero, Matteo},
  journal={IEEE Transactions on Pattern Analysis and Machine Intelligence}, 
  title={Are {G}ibbs-type priors the most natural generalization of the {D}irichlet process?}, 
  year={2015},
  volume={37},
  number={2},
  pages={212-229}
}

@article{ferguson73,
 ISSN = {00905364},
 author = {Thomas S. Ferguson},
 journal = {The Annals of Statistics},
 number = {2},
 pages = {209--230},
 publisher = {Institute of Mathematical Statistics},
 title = {A {B}ayesian analysis of some nonparametric problems},
 urldate = {2023-08-08},
 volume = {1},
 year = {1973}
}

@article{GnedinPitman2006,
  title={Exchangeable {G}ibbs partitions and {S}tirling triangles},
  author={Gnedin, Alexander and Pitman, Jim},
  journal={Journal of Mathematical Sciences},
  volume={138},
  pages={5674--5685},
  year={2006},
  publisher={Springer}
}

@article{colombi2023mixture,
author = {Alessandro Colombi and Raffaele Argiento and Federico Camerlenghi and Lucia Paci},
title = {Hierarchical mixture of finite mixtures},
journal = {Bayesian Analysis},
publisher = {International Society for Bayesian Analysis},
pages = {1 -- 29},
volume = {20},
number = {4},
doi = {10.1214/24-BA1501},
year = {2024}
}

@article{pitman1996,
 author = {Jim Pitman},
 journal = {Statistics, Probability and Game Theory. Papers in honor of David Blackwell},
 Edition = {Lecture Notes, Monograph Series},
 pages = {245--267},
 publisher = {Institute of Mathematical Statistics, Hayward},
 title = {Some developments of the {B}lackwell-{M}acqueen URN Scheme},
 volume = {30},
 year = {1996}
 }

@article{favaro2009,
author = {Favaro, Stefano and Lijoi, Antonio and Mena, Ramsés H. and Prünster, Igor},
title = {Bayesian non-parametric inference for species variety with a two-parameter {P}oisson–{D}irichlet process prior},
journal = {Journal of the Royal Statistical Society Series B: Statistical Methodology},
volume = {71},
number = {5},
pages = {993-1008},
year = {2009}
}

@article{gnedin2010,
author = {Alexander Gnedin},
title = {A species sampling model with finitely many types},
volume = {15},
journal = {Electronic Communications in Probability},
publisher = {Institute of Mathematical Statistics and Bernoulli Society},
pages = {79 -- 88},
year = {2010},
}

@article{zito2023,
author = {Zito, Alessandro and Rigon, Tommaso and Ovaskainen, Otso and David B. Dunson},
title = {Bayesian modeling of sequential discoveries},
journal = {Journal of the American Statistical Association},
volume = {118},
number = {544},
pages = {2521--2532},
year = {2023},
publisher = {Taylor \& Francis},
}

@article{cam2017_multisample,
  title = {Bayesian prediction with multiple-samples information},
  journal = {Journal of Multivariate Analysis},
  volume = {156},
  pages = {18-28},
  year = {2017},
  issn = {0047-259X},
  author = {Federico Camerlenghi and Antonio Lijoi and Igor Prünster}
}

@article{MasoeroCamScaled,
  author = {Federico Camerlenghi and Stefano Favaro and Lorenzo Masoero and Tamara Broderick},
  title = {Scaled process priors for {B}ayesian nonparametric estimation of the unseen genetic variation},
  journal = {Journal of the American Statistical Association},
  volume = {119},
  number = {545},
  pages = {320--331},
  year = {2024},
  publisher = {Taylor \& Francis}
}

@book{abram,
    AUTHOR = {Abramowitz, Milton and Stegun, Irene A.},
     TITLE = {Handbook of mathematical functions with formulas, graphs, and
              mathematical tables},
    SERIES = {National Bureau of Standards Applied Mathematics Series},
    VOLUME = {No. 55},
 PUBLISHER = {U. S. Government Printing Office, Washington, DC},
      YEAR = {1964}
}

@article{goodturing53,
    author = {Good, I. J.},
    title = {The population frequencies of species and the estimation of population parameters},
    journal = {Biometrika},
    volume = {40},
    number = {3-4},
    pages = {237-264},
    year = {1953},
    month = {12}
}

@article{Chao1est,
 ISSN = {03036898, 14679469},
 author = {Anne Chao},
 journal = {Scandinavian Journal of Statistics},
 number = {4},
 pages = {265--270},
 publisher = {[Board of the Foundation of the Scandinavian Journal of Statistics, Wiley]},
 title = {Nonparametric estimation of the number of classes in a population},
 volume = {11},
 year = {1984}
}

@article{ChaoLB2009,
 author = {H.-Y. Pan and Anne Chao and Wilhelm Foissner},
 journal = {Journal of Agricultural, Biological and Environmental Statistics},
 number = {4},
 pages = {452--468},
 publisher = {[International Biometric Society, Springer]},
 title = {A nonparametric lower bound for the number of species shared by multiple communities},
 volume = {14},
 year = {2009}
}

@article{Chao2006indici,
author = {Chao, Anne and Chazdon, Robin L. and Colwell, Robert K. and Shen, Tsung-Jen},
title = {Abundance-based similarity indices and their estimation when there are unseen species in samples},
journal = {Biometrics},
volume = {62},
number = {2},
pages = {361-371},
year = {2006}
}

@article{Chao2000,
 ISSN = {10170405, 19968507},
 author = {Anne Chao and Wen-Han Hwang and Y-C Chen and C-Y Kuo},
 journal = {Statistica Sinica},
 number = {1},
 pages = {227--246},
 publisher = {Institute of Statistical Science, Academia Sinica},
 title = {Estimating the number of shared species in two communities},
 volume = {10},
 year = {2000}
}

@article{SpadeR,
  title={Package {SpadeR}},
  author={Chao, Anne and Ma, KH and Hsieh, TC and Chiu, Chun-Huo and Chao, Maintainer Anne},
  journal={Species-richness prediction and diversity estimation with {R}},
  year={2016}
}

@article{Chao2006Laplace,
author = {Chao, Anne and Shen, Tsung-Jen and Hwang, Wen-Han},
title = {Application of {L}aplace's boundary-mode approximations to estimate species and shared species richness},
journal = {Australian \& New Zealand Journal of Statistics},
volume = {48},
number = {2},
pages = {117-128},
year = {2006}
}

@article{Chuang2015,
author = {Chuang, CJ. and Shen, TJ. and Hwang, WH.},
title = {Estimating the number of shared species by a jackknife procedure},
journal = {Environmental and Ecological Statistics},
volume = {22},
pages = {759-778},
year = {2015}
}

@article{Yue2012,
title = {Sequential sampling in the search for new shared species},
journal = {Journal of Statistical Planning and Inference},
volume = {142},
number = {5},
pages = {1031-1039},
year = {2012},
issn = {0378-3758},
author = {Jack C. Yue and Murray K. Clayton}
}

@article{bacallado2015,
title = {Bayesian nonparametric inference for shared species richness in multiple populations},
journal = {Journal of Statistical Planning and Inference},
volume = {166},
pages = {14-23},
year = {2015},
issn = {0378-3758},
author = {Sergio Bacallado and Stefano Favaro and Lorenzo Trippa}
}

@article{chao2017,
author = {Chao, Anne and Chiu, Chun-Huo and Colwell, Robert K. and Magnago, Luiz Fernando S. and Chazdon, Robin L. and Gotelli, Nicholas J.},
title = {Deciphering the enigma of undetected species, phylogenetic, and functional diversity based on Good-Turing theory},
journal = {Ecology},
volume = {98},
number = {11},
pages = {2914-2929},
year = {2017}
}

@article{GoodToulmin,
    author = {Good, I. J. and Toulmin, G. H.},
    title = {The number of new species, and the increase in population coverage, when a sample is increased},
    journal = {Biometrika},
    volume = {43},
    number = {1-2},
    pages = {45-63},
    year = {1956},
    month = {06}
}

@article{Mao2004,
author = {Chang Xuan Mao},
title = {Predicting the conditional probability of discovering a new class},
journal = {Journal of the American Statistical Association},
volume = {99},
number = {468},
pages = {1108--1118},
year = {2004},
publisher = {ASA Website}
}

@article{Rasmussen79,
author = {Shelley L. Rasmussen and Norman Starr},
title = {Optimal and adaptive stopping in the search for new species},
journal = {Journal of the American Statistical Association},
volume = {74},
number = {367},
pages = {661--667},
year = {1979},
publisher = {ASA Website}
}

@article{pitman95,
  title={Exchangeable and partially exchangeable random partitions},
  author={Pitman, Jim},
  journal={Probability Theory and Related Fields},
  volume={102},
  number={2},
  pages={145--158},
  year={1995},
  publisher={Springer}
}

@article{Fisher43,
 author = {Fisher, R. A. and Corbet, A. Steven and Williams, C. B. },
 journal = {Journal of Animal Ecology},
 number = {1},
 pages = {42-58},
 title = {The relation between the number of species and the number of individuals in a random sample of an animal population},
 urldate = {2025-02-06},
 volume = {12},
 year = {1943}
}

@article{BalocchiCamFavaro24,
  title = {A {B}ayesian nonparametric approach to species sampling problems with ordering},
  year = {2024},
  journal = {Bayesian Analysis},
  pages = {1--26},
  publisher = {International Society for Bayesian Analysis},
  volume = {1},
  author = {Balocchi, Cecilia and Camerlenghi, Federico and Favaro, Stefano}
}

@article{BalocchiFavaro24,
  title = {Bayesian nonparametric inference for “species-sampling” problems.},
  author = {Balocchi, Cecilia and Favaro, Stefano and Naulet, Zacharie},
  year = {2024},
  journal = {Statistical Science },
  volume = {forthcoming},
  publisher = {Institute of Mathematical Statistics}
}

@article{franzolini25,
      title={Multivariate species sampling models}, 
      author={Beatrice Franzolini and Antonio Lijoi and Igor Prünster and Giovanni Rebaudo},
      year={2025},
      journal = { arXiv:2503.24004 }
}

@article{Gotelli2001,
author = {Gotelli, Nicholas J.  and Colwell, Robert K. },
title = {Quantifying biodiversity: procedures and pitfalls in the measurement and comparison of species richness},
journal = {Ecology Letters},
volume = {4},
number = {4},
pages = {379-391},
year = {2001}
}

@article{favaro2012,
 author = {Stefano Favaro and Antonio Lijoi and Igor Prünster},
 journal = {Biometrics},
 number = {4},
 pages = {1188--1196},
 publisher = {International Biometric Society},
 title = {A new estimator of the discovery probability},
 urldate = {2024-11-05},
 volume = {68},
 year = {2012}
}

@article{Quintana2022,
  title={The dependent {D}irichlet process and related models},
  author={Quintana, Fernando A and M{\"u}ller, Peter and Jara, Alejandro and MacEachern, Steven N},
  journal={Statistical Science},
  volume={37},
  number={1},
  pages={24--41},
  year={2022},
  publisher={Institute of Mathematical Statistics}
}

@article{colwell2004,
  title={Biodiversity: concepts, patterns, and measurement},
  author={Colwell, Robert K and others},
  journal={The Princeton guide to ecology},
  volume={663},
  pages={257--263},
  year={2009},
  publisher={Princeton University Press Princeton}
}

@article{Simpson49,
  title={Measurement of diversity},
  author={Simpson, E.H.},
  journal={Nature},
  volume={688},
  pages={163},
  year={1949}
}

@article{Morisita59,
  title={Measuring of dispersion of individuals and analysis of the distributional patterns.},
  author={Morisita, Masaaki},
  journal={Memories of the Faculty of Science, Kyushu University. Series E: Biology},
  pages={215--235},
  year={1959}
}

@article{Orlitsky16,
  title={Optimal prediction of the number of unseen species},
  author={Orlitsky, Alon and Suresh, Ananda Theertha and Wu, Yihong},
  journal={Proceedings of the National Academy of Sciences of the United States of America},
  volume={113},
  number={47},
  pages={13283--13288},
  year={2016},
  publisher={National Acad Sciences}
}

@article{Zara21,
title = {Cross-taxon relationships in Mediterranean urban ecosystem: A case study from the city of Trieste},
journal = {Ecological Indicators},
volume = {125},
pages = {107538},
year = {2021},
issn = {1470-160X},
author = {Laura Zara and Enrico Tordoni and Silvia Castro-Delgado and Andrea Colla and Simona Maccherini and Michela Marignani and Francesco Panepinto and Michele Trittoni and Giovanni Bacaro},
}

@article{nayak89,
title = {A note on estimating the number of errors in a system by recapture sampling},
journal = {Statistics \& Probability Letters},
volume = {7},
number = {3},
pages = {191-194},
year = {1988},
issn = {0167-7152},
author = {Tapan K Nayak}
}

@article{efron76,
 author = {Bradley Efron and Ronald Thisted},
 journal = {Biometrika},
 number = {3},
 pages = {435--447},
 publisher = {[Oxford University Press, Biometrika Trust]},
 title = {Estimating the number of unseen species: how many words did Shakespeare know?},
 urldate = {2024-12-16},
 volume = {63},
 year = {1976}
}

@article{chao93,
 author = {Anne Chao and Mark C. K. Yang},
 journal = {Biometrika},
 number = {1},
 pages = {193--201},
 publisher = {[Oxford University Press, Biometrika Trust]},
 title = {Stopping rules and estimation for recapture debugging with unequal failure rates},
 urldate = {2024-12-16},
 volume = {80},
 year = {1993}
}

@article{archer2014,
  author  = {Evan Archer and Il Memming Park and Jonathan W. Pillow},
  title   = {Bayesian entropy estimation for countable discrete distributions},
  journal = {Journal of Machine Learning Research},
  year    = {2014},
  volume  = {15},
  number  = {81},
  pages   = {2833--2868}
}

@article{Muller2004,
  author    = {M\"uller, P. and Quintana, F. A. and Rosner, G.},
  title     = {A method for combining inference across related nonparametric {B}ayesian models},
  journal   = {Journal of the Royal Statistical Society: Series B (Statistical Methodology)},
  volume    = {66},
  pages     = {735--749},
  year      = {2004}
}

@article{Battiston2018MultiArmed,
  author  = {Battiston, M. and Favaro, S. and Teh, Y. W.},
  title   = {Multi-armed bandit for species discovery: {A} {Bayesian} nonparametric approach},
  journal = {Journal of the American Statistical Association},
  volume  = {113},
  number  = {521},
  pages   = {455--466},
  year    = {2018}
}

\end{document}